\def\confversion{0}
\def\ifconf{\ifnum\confversion=1}
\def\ifnotconf{\ifnum\confversion=0}
\def\showauthornotes{0}
\def\showkeys{0}
\def\showdraftbox{0}
\definecolor{darkred}{rgb}{0.5,0,0}
\definecolor{darkgreen}{rgb}{0,0.5,0}
\definecolor{darkblue}{rgb}{0,0,0.5}
\newcommand{\Authornote}[2]{{\sf\small\color{red}{[#1: #2]}}}
\newcommand{\Authorcomment}[2]{{\sf \small\color{gray}{[#1: #2]}}}
\newcommand{\Authorfnote}[2]{\footnote{\color{red}{#1: #2}}}
\newcommand{\Authornote}[2]{}
\newcommand{\Authorcomment}[2]{}
\newcommand{\Authorfnote}[2]{}
\newcommand{\draftbox}{\begin{center}
  \fbox{%
    \begin{minipage}{2in}%
      \begin{center}%
        \begin{Large}%
          \textsc{Working Draft}%
        \end{Large}\\
        Please do not distribute%
      \end{center}%
    \end{minipage}%
  }%
\end{center}
\vspace{0.2cm}}
\newcommand{\draftbox}{}
\newtheorem{theorem}{Theorem}[section]
\newtheorem{definition}[theorem]{Definition}
\newtheorem{lemma}[theorem]{Lemma}
\newtheorem{remark}[theorem]{Remark}
\newtheorem{claim}[theorem]{Claim}
\def\FullBox{\hbox{\vrule width 6pt height 6pt depth 0pt}}
\def\qed{\ifmmode\qquad\FullBox\else{\unskip\nobreak\hfil
\penalty50\hskip1em\null\nobreak\hfil\FullBox
\parfillskip=0pt\finalhyphendemerits=0\endgraf}\fi}
\def\qedsketch{\ifmmode\Box\else{\unskip\nobreak\hfil
\penalty50\hskip1em\null\nobreak\hfil$\Box$
\parfillskip=0pt\finalhyphendemerits=0\endgraf}\fi}
\newenvironment{proof}{\begin{trivlist} \item {\bf Proof:~~}}
   {\qed\end{trivlist}}
\def\to{\rightarrow}
\def\eps{\varepsilon}
\def\epsilon{\varepsilon}
\def\eps{\epsilon}
\def\phi{\varphi}
\def\cal{\mathcal}
\newcommand{\sub}{\ensuremath{\subseteq}}
\newcommand{\defeq}{\stackrel{\mathrm{def}}=}
\renewcommand{\bar}{\overline} 
\newcommand{\ie}{i.e.,\xspace}
\newcommand{\mper}{\,.}
\newcommand{\mcom}{\,,}
\newcommand{\R}{{\mathbb R}}
\newcommand{\C}{{\mathbb C}}
\newcommand{\N}{{\mathbb{N}}}
\newcommand{\Z}{{\mathbb Z}}
\newcommand{\B}{\{0,1\}\xspace}
\newcommand{\pmone}{\{-1,1\}\xspace}
\newcommand{\indicator}[1]{\mathds{1}_{\{#1\}}}
\newcommand{\gauss}[2]{{\cal N(#1, #2)}}
\newcommand{\abs}[1]{\ensuremath{\left\lvert #1 \right\rvert}}
\newcommand{\norm}[1]{\ensuremath{\left\lVert #1 \right\rVert}}
\newcommand{\mydot}[2]{\ensuremath{\left\langle #1, #2 \right\rangle}}
\newcommand{\ip}[1]{\left\langle #1 \right\rangle}
\def\bfx {{\bf x}}
\def\bfy {{\bf y}}
\def\bfw{{\bf w}}
\def\bfa{{\bf a}}
\newcommand{\vone}[1]{\mathbf{u}_{#1}}
\newcommand{\voneempty}{\mathbf{u}_{\emptyset}}
\newcommand{\vtwo}[2]{\mathbf{v}_{(#1, #2)}}
\newcommand{\vtwoempty}{\mathbf{v}_{(\emptyset, \emptyset)}}
\newcommand{\vartwo}[2]{x_{(#1, #2)}}
\newcommand{\vartwoempty}{x_{(\emptyset,\emptyset)}}
\newcommand{\Esymb}{\mathbb{E}}
\newcommand{\Psymb}{\mathbb{P}}
\DeclareMathOperator*{\ExpOp}{\Esymb}
\DeclareMathOperator*{\ProbOp}{\Psymb}
\renewcommand{\Pr}{\ProbOp}
\newcommand{\prob}[1]{\Pr\left[{#1}\right]}
\newcommand{\Prob}[2]{\Pr_{{#1}}\left[{#2}\right]}
\newcommand{\ex}[1]{\ExpOp\left[{#1}\right]}
\newcommand{\Ex}[2]{\ExpOp_{{#1}}\left[{#2}\right]}
\newfont{\inhead}{eufm10 scaled\magstep1}
\newcommand{\deffont}{\sf}
\newcommand{\suchthat}{{\;\; : \;\;}}
\newcommand{\opt}{{\sf OPT}\xspace}
\newcommand{\sdpopt}{{\sf FRAC}\xspace}
\newcommand{\lpopt}{{\sf FRAC}\xspace}
\newcommand{\round}{{\sf ROUND}\xspace}
\newcommand{\val}{{\sf val}}
\newcommand{\problemmacro}[1]{\textsf{#1}}
\newcommand{\maxkcsp}{\problemmacro{MAX k-CSP}\xspace}
\newcommand{\inparen}[1]{\left(#1\right)}             
\newcommand{\inbraces}[1]{\left\{#1\right\}}           
\newcommand{\insquare}[1]{\left[#1\right]}             
\newcommand{\Mnote}{\Authornote{M}}
\numberwithin{equation}{section}
\newcommand{\G}{\cal{G}}
\newcommand{\HF}{\hat{f}}
\newcommand{\hf}{\hat{f}}
\renewcommand{\C}{\cal{C}}
\newcommand{\y}{{\bf y}}
\newcommand{\z}{{\bf z}}
\newcommand{\bbox}{{[-1,1]^d}}
\newcommand{\tu}{\tilde{\bf u}}
\renewcommand{\gauss}[3]{\gamma_{#1}\inparen{#2, #3}}
\newcommand{\values}{V}
\newcommand{\Rp}{\cal{R}_p}
\renewcommand{\maxkcsp}{\problemmacro{CSP}}
\newcommand{\sat}{{\sf sat}}
\newcommand{\chars}{\cal{A}_s}
\newcommand{\charp}{\cal{A}_p}
\newcommand{\charlp}{\cal{A}_l}
\newcommand{\Ang}{{\sf Alice}\xspace}
\newcommand{\Dev}{{\sf Harry}\xspace}
\newcommand{\pay}{{\sf PayOff}}
\newcommand{\opay}{\overline{\sf PayOff}}
\newcommand{\indic}{\mathcal I}
\renewcommand{\defeq}{:=}
\renewcommand{\mapsto}{\to}
\title{A Characterization of  Approximation Resistance}
\author{
Subhash Khot\thanks{%
NYU.
{\tt khot@cims.nyu.edu}
Research partly supported by NSF Expeditions grant CCF-0832795, NSF Waterman
Award and BSF grant 2008059. Part of the research carried out while the author was at U.Chicago. }
\and
Madhur Tulsiani\thanks{%
    Toyota Technological Institute at Chicago.  {\tt madhurt@ttic.edu.} Research supported by NSF
    Career Award CCF-1254044.
  }%
\and
Pratik Worah\thanks{%
  NYU.
  {\tt pworah@cims.nyu.edu} Research partly supported by Subhash Khot's NSF Waterman Award. Part of the research carried out while the author was at U.Chicago.
}%
}
\begin{document}
\sloppy
\maketitle
\draftbox
\begin{abstract}
A predicate $f:\{-1,1\}^k \mapsto \{0,1\}$  with  $\rho(f) = \frac{|f^{-1}(1)|}{2^k}$ is called
 {\it
approximation resistant} if given a near-satisfiable instance of CSP$(f)$,
it is computationally  hard to find an assignment that satisfies at least  $\rho(f)+\Omega(1)$  fraction of the constraints. 

We present a complete characterization of
approximation resistant predicates under the Unique Games Conjecture.
We also present characterizations in the {\it mixed} linear and semi-definite programming hierarchy
and the Sherali-Adams linear programming hierarchy. In the former case, the characterization coincides with the
one based on UGC. Each of the two characterizations is in terms of existence of a
probability measure with certain symmetry properties on
a natural convex polytope associated with the predicate.


\newpage

%
%
\end{abstract}

\section{Introduction}

Constraint satisfaction problems (CSPs) are some of the most well-studied NP-hard problems. Given a predicate
$f:\{-1,1\}^k \mapsto \{0,1\}$, an instance of CSP$(f)$ consists of $n$ $\{-1,1\}$-valued\footnote{It is more convenient
to work with $\{-1,1\}$-valued variables than $\{0,1\}$-valued ones. Here $-1$ corresponds to logical TRUE and $+1$ to
logical FALSE. Multiplication of variables in the \{-1,1\}-world  corresponds to XOR-ing them in the \{0,1\}-world. }
variables
and $m$ constraints where each constraint is the predicate $f$ applied to an ordered subset of $k$ variables, possibly in negated form.
For example, the OR predicate on $k$ variables corresponds to the $k$-SAT problem whereas the PARITY predicate (i.e.
whether the product of the variables is $+1$) on $k$ variables corresponds to the $k$-LIN problem.
The satisfiability problem for CSP$(f)$
asks whether there is an assignment that satisfies {\it all} the constraints. A well-known {\it dichotomy} result of Schaefer \cite{schfr} shows that
for every predicate $f$, the satisfiability problem for CSP$(f)$ is either in P or NP-complete and moreover his characterization
explicitly gives a (short) list of predicates for which the problem is in P.

An instance of CSP$(f)$ is called $\alpha$-satisfiable if there is
an assignment that satisfies at least  $\alpha$ fraction of the constraints. The focus of
this paper is whether given a $(1-o(1))$-satisfiable instance, there is an efficient algorithm
with a {\it non-trivial} performance.
The density of the predicate
$\rho(f) = \frac{|f^{-1}(1)|}{2^k}$ is the probability that a uniformly
random assignment to its variables satisfies the predicate. Given an instance of CSP$(f)$, 
 a naive algorithm that assigns
random $\{-1,1\}$ values to
its variables yields an assignment  that satisfies $\rho(f)$   fraction of the constraints  
 in expectation.


With this
observation in mind, we consider the well-studied notion of {\it approximation resistance}. The
 instance is promised to be $(1-o(1))$-satisfiable and the
algorithm is considered non-trivial if
it finds an assignment such that the
fraction of assignments satisfied is at least $ \rho(f)+\Omega(1)$,
i.e. the algorithm has to do something more clever than outputting a random assignment.
If such an efficient algorithm exists, the predicate is called {\it  approximable}
and {\it  approximation resistant} otherwise.

Towards the study of approximation resistance, it is  convenient to define the gap version of the problem.  GapCSP$(f)_{c,s}$ is a promise problem such that the instance
is guaranteed to be either $c$-satisfiable or at most $s$-satisfiable. Thus a predicate is approximation resistant if
GapCSP$(f)_{1-o(1), ~\rho(f)+o(1)}$ is not in P.  For resistant predicates,
 one would ideally like to show that the corresponding gap problem is NP-hard, or as is often the case, settle for a weaker notion of hardness
 such as UG-hardness (i.e. NP-hard assuming the Unique Games Conjecture \cite{khot}) or hardness, a.k.a. {\it integrality gap}, for a specific family of linear or semidefinite programming relaxation.

Until early 1990s, very little, if anything, was known regarding whether any {\it interesting} predicate is approximable or approximation
resistant. By now we have a much better understanding of this issue thanks to a sequence of spectacular results. Goemans and Williamson
\cite{gw,hastad} showed that 2SAT and 2LIN are approximable.\footnote{The result is more famously known for the MAX-CUT problem, but MAX-CUT is
not a CSP in our sense of the definition as it does not allow variable negations. Once variable negations are allowed, MAX-CUT is
same as 2LIN.} The discovery of the PCP Theorem \cite{fglss, as, almss}, aided by works such as \cite{bgs, razrep}, eventually led to
H\aa stad's result that 3SAT and 3LIN are approximation resistant and in fact that the appropriate gap versions are NP-hard! Since
then, many predicates have been shown to be approximation resistant (see e.g. \cite{glst, st, khot3col, ehr}, all NP-hardness) and most
recently, a remarkable result of Chan \cite{chan} shows the approximation resistance of the Hypergraph Linearity Predicate (he shows
NP-hardness whereas UG-hardness was shown earlier in \cite{st}). Also, a general result of Raghavendra \cite{ragh} shows that if a predicate is approximable, then it is so via
 a natural SDP relaxation of
the problem followed by a {\it rounding} of the solution (the result is more general than stated: it applies to every $(c,s)$-gap).

In this paper, our focus is towards obtaining a complete characterization of  approximation resistance for all predicates, in the
spirit of Schaefer's theorem. There has been some progress in this direction that we sketch now. Every predicate of arity $2$ is approximable as follows from Goemans and Williamson's algorithm \cite{gw}.\footnote{H\aa stad \cite{hastad2} shows the same for $2$-ary predicates over larger alphabet as well. We restrict to boolean alphabet in this paper.}
A complete classification of
predicates of arity $3$ is known \cite{hastad, zwick}: a predicate of arity $3$ is approximation resistant (NP-hard) if it is implied by PARITY up to
variable negations and approximable otherwise. For predicates of arity $4$, Hast \cite{hast}
gives a partial classification.
Austrin and Mossel \cite{am} show that a predicate
is approximation resistant (UG-hard) if the set $f^{-1}(1)$ of its satisfying assignments supports a pairwise independent distribution (for a somewhat more general sufficient
condition see \cite{ah2}).
Using this sufficient condition, Austrin and H\aa stad \cite{ah} show that a vast majority of $k$-ary predicates for large $k$ are approximation resistant.
Hast \cite{hast2}
shows that a $k$-ary predicate with at most $k-1$ satisfying assignments is approximable.

In spite of all these works, a complete characterization of approximation resistance
remained elusive.
A recent result of Austrin and Khot \cite{ak} gives a complete characterization of approximation
resistance (UGC-based) when the CSP is restricted to be $k$-partite\footnote{Meaning the set of variables is partitioned into $k$ layers and for every constraint, the $i^{th}$ variable is from the $i^{th}$ layer.} and the predicate is even.\footnote{Meaning $f(-z)= f(z) \ \forall z \in
\{-1,1\}^k$.} Given an even predicate $f$, the authors therein associate with it a convex polytope $\cal{C}(f)$ consisting of all vectors
of dimension ${k}\choose{2}$ that arise as the {\it second moment vectors} $\left( \Ex{z \sim \nu}{z_i z_j}|1 \leq i < j \leq k\right)$
of distributions $\nu$ supported on $f^{-1}(1)$. It is shown that the $k$-partite version of CSP$(f)$ is approximation
resistant (UG-hard) if and only if $\cal{C}(f)$ supports a distribution (a probability measure to be more
precise)
 with a certain (difficult to state) property. The $k$-partiteness condition is rather restrictive
 and without the evenness condition, one would need to take into account the {\it first moment vector} $\left( \Ex{z \sim \nu}{z_i} | 1 \leq i \leq k\right)$ as well
and it is not clear how to incorporate this in \cite{ak}.

\subsubsection*{Characterizing Approximation Resistance}
In this paper, we indeed give a complete characterization of
approximation resistance, via an approach that is entirely different than \cite{ak}.  
Before stating the characterization, we point out that  the characterization is not as
{\it simple} as one may wish and we do not yet know whether it is decidable, both these features also shared by the result in \cite{ak}.
\footnote{The characterization in \cite{ak} is recursively enumerable, i.e. there is a procedure that on a predicate that is
approximable, terminates and declares so. Our characterization is also recursively enumerable though it is not clear from its
statement and one has to work through the proof. We omit this aspect from the current version of the paper.}

Roughly speaking our characterization
states that a predicate $f:\{-1,1\}^k \mapsto \{0,1\}$ is approximation resistant (UG-hard) if and only if a convex polytope $\cal{C}(f)$
associated with it supports a probability measure with certain symmetry properties.\footnote{The characterization
in \cite{ak}, in hindsight, may also be stated in terms of similar symmetry properties,
and we do so in this paper.} Specifically, let  $\cal{C}(f)$ be the convex polytope
consisting of all vectors of dimension $k+ \binom{k}{2}$ that arise as the {\it first and second moment vectors}
$$\left( \left(\Ex{z \in \nu}{z_i}|1 \leq i \leq k\right), \left( \Ex{z \sim \nu}{z_i z_j}|1 \leq i < j \leq k\right) \right)$$
of distributions $\nu$ supported on $f^{-1}(1)$. For a measure $\Lambda$ on $\cal{C}(f)$ and a
 subset $S \subseteq [k]$, let $\Lambda_S$ denote the projection of $\Lambda$ onto the co-ordinates in $S$. For a permutation
 $\pi: S \mapsto S$ and a choice of signs $b \in \{-1,1\}^S$, let $\Lambda_{S,\pi,b}$ denote the measure $\Lambda_S$
 after permuting the indices in $S$ according to $\pi$ and then (possibly) negating the co-ordinates according to multiplication
 by $\{b_i\}_{i\in S}$. We are now ready to state our characterization.

\begin{definition}\label{def:chars}  Let $\chars$ be the family of all predicates (of all arities)
$f:\{-1,1\}^k \mapsto
\{0,1\}$ such that there is a probability measure
$\Lambda$ on $\cal{C}(f)$ such that
 for every $1 \leq t \leq k$, the {\it signed measure}
 \begin{equation}
 \Lambda^{(t)} ~:=~
\ExpOp_{|S| = t} ~\ExpOp_{\pi : [t] \to [t]} ~\Ex{b \in \{-1,1\}^t }{
 \left( \prod_{i = 1}^t b_i \right) \cdot\HF(S) \cdot \Lambda_{S,\pi,b}}\label{intro:main:eqn}
 \end{equation}
vanishes identically. If so, $\Lambda$ itself is said to vanish.
\end{definition}
\Mnote{I changed $\pi$ to be a permutation on $[t]$ and $b \in \pmone^t$ in the above expression, to
be consistent with the calculations later. We write $\y_1,\ldots,\y_t$ later, so it's better to also
write $b_1,\ldots,b_t$.}

Much elaboration is in order. In the above expression, the expectation is over a random subset of $[k]$ of size $t$,
a random permutation $\pi$ of $S$ and a random choice of signs $b$ on $S$. The coefficients $\HF(S)$ are the Fourier coefficients of
the predicate $f$, namely, the coefficients in the Fourier representation:
$$f(x_1,\ldots,x_k) = \rho(f) + \sum_{S \not= \emptyset} \HF(S) \prod_{i\in S} x_i. $$
A {\it signed measure} is allowed to take negative values as well (as is evident from the possibly negative sign of
$\HF(S)$ and $\prod_{i = 1}^t b_i$ in the above expression). An equivalent way to state the condition is that if one writes the
Expression \eqref{intro:main:eqn}
 as a difference of two non-negative measures $\Lambda^{(t),1}$ and  $\Lambda^{(t),2}$ by grouping the
terms with positive and negative coefficients respectively, then the two non-negative measures are identical.

Our characterization states that if $f \in \chars$, then $f$ is  approximation resistant (UG-hardness)
and otherwise  approximable. In the former case, the vanishing measure $\Lambda$ is a
{\it hard to round} measure (in fact any proposed hard to round measure must be a vanishing measure).
In the latter case, we can in fact conclude that the predicate is  approximable via
a natural SDP relaxation followed by a {\it $(k+1)$-dimensional rounding algorithm}. A $(k+1)$-dimensional rounding algorithm samples a
$(k+1)$-dimensional rounding function  $\psi: \R^{k+1} \mapsto \{-1,1\}$ from an appropriate distribution, projects the SDP vectors onto a
random $(k+1)$-dimensional subspace and then rounds using $\psi$. We find this conclusion rather surprising. As mentioned earlier, it
follows from Raghavendra \cite{ragh} that if a predicate is approximable then it is so via (the same) SDP relaxation followed by a
rounding. However his rounding
(and/or the one in \cite{raghsteu}) is high dimensional in the sense that one first projects onto a random $d$-dimensional
subspace and then rounds using an appropriately sampled function $\psi: \R^d \mapsto \{-1,1\}$ and there is no a priori upper bound on the dimension $d$ required.

It is instructive to check that our characterization generalizes the sufficient
condition for approximation resistance due to Austrin and Mossel \cite{am}.
Suppose that a predicate supports a pairwise independent distribution. This amounts
to saying that the $k+ \binom{k}{2}$ dimensional all-zeroes vector lies in the polytope $\cal{C}(f)$. It is immediate that
the measure $\Lambda$ concentrated at this single vector is vanishing (the all-zeroes vector and its projections onto subsets
$S$ remain unchanged under sign-flips via $b \in \{-1,1\}^S$ and these terms cancel each other out due to the sign $\prod_{i\in S} b_i$
in the expression) and hence the predicate is  approximation resistant. It is also instructive to check the case
$t=1$. In this case, the condition implies, in particular, that
  \begin{equation*}
 \Ex{\zeta \sim \Lambda}{
\sum_{i=1}^k  \HF(\{i\}) \cdot \zeta(i)   }  = 0.
 \end{equation*}
Here $\zeta(i)$ denotes the $i^{th}$ first moment (i.e. bias) in the vector $\zeta \in \cal{C}(f)$.
For all the predicates that are known to be approximation resistant so far in literature, there is always a single {\it hard to
round point} $\zeta$, i.e. the measure $\Lambda$ is concentrated at a single point $\zeta$. In that case, the above condition specializes
to $\sum_{i=1}^k \HF(\{i\}) \cdot \zeta(i)   = 0$ and this condition is known to be necessary (as a folklore among the experts at least).
This is because otherwise a rounding that simply rounds each variable according to its bias given by the LP relaxation
(and then flipping signs of all variables simultaneously if necessary) will strictly exceed
the threshold $\rho(f)$. The term $\sum_{i=1}^k \HF(\{i\}) \cdot \zeta(i)$ represents the contribution to the
{\it advantage over $\rho(f)$} by the level-$1$ Fourier coefficients and a standard trick allows one to ignore the (potentially
troublesome) interference from higher order Fourier levels.
The conditions for $t \geq 2$ intuitively rule out
successively more sophisticated rounding strategies and taken together for all $t \in [k]$ form a complete set of necessary {\it and}
sufficient conditions for strong approximation resistance.

It seems appropriate to point out another aspect in which our result differs from \cite{ragh, raghsteu}.
It can be argued (as also discussed in \cite{ak})  that \cite{raghsteu} also gives a characterization of approximation
resistance in the following sense. The authors therein propose a brute force search over all instances and their potential
SDP solutions on $N = N(\eps)$ variables which
determines the hardness threshold up to an additive $\eps$. Thus if a predicate is approximable with an advantage of say $2\eps$ over the
trivial $\rho(f)$ threshold and if $\eps$ were known a priori, then the brute force search will be able to affirm this.
However, there is no a priori lower bound on $\eps$ and
thus this characterization is not known to be decidable either. Moreover, it seems somewhat of a stretch to call it a
characterization because of the nature of the search involved. On the other hand, our characterization  is in terms of concrete symmetry properties of
a measure supported on the explicit and natural polytope  $\cal{C}(f)$.
  The characterization
does not depend on the topology (i.e. the hyper-graph structure) of the CSP instance. We find this conclusion rather surprising
as well. A priori, what might make a predicate hard is both a {\it hard to round} measure over local LP/SDP distributions (i.e.
a measure $\Lambda$ on $\cal{C}(f)$) as well as the topology of the constraint hyper-graph (i.e. how the variables and constraints  {\it fit
together}). Our conclusion is that the latter aspect is not relevant, not in any direct manner at least. This conclusion may be contrasted
against Raghavendra's result. He shows that any SDP integrality gap {\it instance} can be used as a gadget towards proving a UG-hardness
result with the same gap. The {\it instance} here refers to both the variable-constraints topology
and the local LP/SDP distributions and from his result, it is not clear whether one or the other or both the aspects are required to make the CSP hard.

When CSP instances are restricted to be $k$-partite as in
\cite{ak}, we are able to obtain a complete characterization. For the family
$\charp$ defined below, if $f \in \charp$ then the partite version is  approximation
resistant and otherwise the partite version is approximable.

\begin{definition} \label{def:charp} Let $\charp$ be the family of all predicates (of all arities)
$f:\{-1,1\}^k \mapsto
\{0,1\}$ such that there is a probability measure
$\Lambda$ on $\cal{C}(f)$ such that
 for every $S \subseteq [k], S \not= \emptyset$, the {\it signed measure}
 \begin{equation}
 \Lambda^{S} ~:=~
 \HF(S) \cdot  \Ex{b \in \{-1,1\}^S }{
 \left( \prod_{i\in S} b_i \right) \cdot \Lambda_{S, b}}\label{intro:mainptt:eqn}
 \end{equation}
vanishes identically.
\end{definition}

The difference from Definition \ref{def:chars} is that each non-empty set $S$ is considered
separately and there are no permutations of the set. We note that for even predicates,
the first $k$ co-ordinates in the body $\cal{C}(f)$ corresponding to the first moments (i.e.
``biases") can be assumed to be identically zero and then the characterization boils down
to one in \cite{ak} (though there it is stated differently).

 We point out some directions left open by the discussion so far
 (we do not consider these as the focus of the current paper). Firstly,
it would be nice to show that our characterization is decidable. Secondly,
we are not aware of an approximation resistant predicate where one needs a combination of more than
one {\it hard to round} points
in $\cal{C}(f)$. In other words, it might be the case that for every  approximation
resistant predicate, there exists a vanishing
measure $\Lambda$ on $\cal{C}(f)$ that is concentrated on a single point or on a bounded number
of points with an a priori bound. If this were the case, our
characterization will be decidable (we omit the proof). In this regard, it would be interesting to investigate the
 example of an arity $4$ predicate in \cite{ah2}, Example 8.7 therein. The
authors show that the predicate is approximation resistant by presenting a {\it hard to round} point. However, the
approximation resistance is shown in an ad hoc manner that, as far as we see, does not immediately give a vanishing measure
$\Lambda$. Such a measure must exist by our results and would perhaps require a combination of more than one point.
Thirdly, it will be interesting to show that
for some special classes of
predicates our characterization takes a much simpler form. For instance,
\cite{chis}
asks whether there is a linear threshold
predicate that is approximation resistant. It would be nice if for such predicates our
characterization takes a simpler form and leads to a resolution of the question. Finally, for predicates that
do not satisfy our characterization and hence not approximation resistant, our result suggests that there are sophisticated rounding algorithms
whose analysis may require looking at terms at level $3$ and above in the Fourier representation (as opposed to only using terms at first and
second level). We do not yet have explicit examples and leave it as an exciting open question.

\subsubsection*{Results for Linear and Semidefinite Relaxations}
We now move onto a discussion about our results concerning
the notion of  approximation resistance in the context of
linear and/or semi-definite programming relaxations. A CSP instance can be formulated as an
integer program and its variables may be {\it relaxed} to assume real values (in the case
of LP relaxation) or vector values (in the case of SDP relaxation). The {\it integrality gap}
of a relaxation is the maximum gap between the optimum of the integer program
and the optimum of the relaxed program. An integrality gap instance is a concrete instance
of a CSP whose LP/SDP optimum is {\it high} and the integer optimum is {\it low}. Constructing
such gap instances is taken as evidence that the LP/SDP based approach will not achieve
good approximation to the CSP. The LP/SDP relaxation may be {\it ad hoc} or may be obtained
by systematically adding inequalities, in successive {\it rounds}, each additional round
yielding a potentially {\it tighter} relaxation. The latter method is referred to as
an LP or SDP hierarchy and several such hierarchies have been proposed and well-studied \cite{ch-tul}.

In this paper, we focus on one ad hoc relaxation that we call {\it basic  relaxation}
and two hierarchies, namely the {\it mixed hierarchy} and the {\it Sherali-Adams LP hierarchy}.
We refer to Section \ref{sec:prelims} for their formal definitions, but provide a quick sketch here.
Consider a CSP$(f)$ instance with a $k$-ary predicate $f$,
a set of variables $V = \{x_1,\ldots,x_n\}$ and
constraints $C_1,\ldots,C_m$. We think of the number of rounds $r$ as $k$ or more.
The $r$-round Sherali-Adams LP is required to provide, for
every set $S \subseteq V, |S| \leq r$, a {\it local}
distribution $D(S)$ over assignments to the set $S$, namely $\{-1,1\}^S$. The local distributions
must be consistent in the sense that for any two sets $S, T$ of size at most $r$ and
$S \cap T \not= \emptyset$, the local distributions to $S$ and $T$ have the same marginals on $S
\cap T$. The $r$-round mixed hierarchy, in addition, is supposed to assign unit vectors $\vone{i}$ to
variables $x_i$ such that the pairwise inner products of these vectors match the second moments of the
local distributions: $\left<\vone{i}, \vone{j} \right> = \Ex{\sigma \sim D(\{i,j\}) }{\sigma(i)\sigma(j)}$
(this is a somewhat simplified view). The {\it basic relaxation} is a reduced form of the
$k$-round mixed hierarchy where a local distribution over a set $S$ needs to be specified  only if
$S$ is a set of $k$ variables of some constraint $C_\ell$. The only consistency requirements
are that $\left<\vone{i}, \vone{j} \right> = \Ex{\sigma \sim D(S) }{\sigma(i)\sigma(j)}$ if variables
$i, j$ appear together inside some constraint $C_\ell$ on set $S$. Finally, the objective function
for all three programs is the same: the probability that an assignment sampled from the local
distribution over a constraint satisfies the predicate (accounting for variable negations),
averaged over all constraints.

A $(c,s)$-integrality gap for a relaxation is an instance that is at most $s$-satisfiable, but has a
feasible LP/SDP solution with objective value at least $c$. A predicate is approximation
resistant w.r.t. a given relaxation if the relaxation has $(1-o(1), \rho(f)+o(1))$ integrality
gap. The general result of Raghavendra referred to before shows that for any gap location
$(c,s)$, UG-hardness is equivalent to integrality gap for the basic relaxation. Moreover,
the general results of Raghavendra and Steurer \cite{raghsteu2} and Khot and Saket \cite{ks09} show that
the integrality gap for basic relaxation is equivalent to that for a super-constant
number of rounds of the mixed hierarchy.

Our characterization of  approximation resistance
for the basic relaxation and the mixed hierarchy is the same and coincides
with one in Definition \ref{def:chars} whereas that for the Sherali-Adams LP is different and
presented below.

When $f \in \chars$ as in Definition \ref{def:chars}, we construct a  $(1-o(1), \rho(f) + o(1))$
integrality gap
for the basic relaxation. From the general results \cite{ragh,
raghsteu2, ks09} mentioned before, integrality gap for basic relaxation can
be translated into the same gap for mixed hierarchy and into UG-hardness.
When $f \not\in \chars$, we know that the predicate is approximable and moreover
the algorithm is a rounding of the basic relaxation. When $f \in \charp$ as in
Definition \ref{def:charp}, the UG-hardness as well as integrality gap constructions can be
ensured to be on $k$-partite instances, as in \cite{ak}.

Finally we focus on the characterization of approximation resistance in Sherali-Adams LP hierarchy.
Here the situation is fundamentally different at a conceptual level. The
difference is illustrated by the (arguably the simplest) predicate 2LIN.
Goemans and Williamson show that 2LIN is approximable via an SDP relaxation, namely the
basic relaxation according to our terminology. In fact the approximation is really
close: on an $(1-\eps)$-satisfiable instance, the relaxation finds
$( 1- \frac{\arccos(1-\eps)}{\pi} )$-satisfying assignment (which is asymptotically
$1-O(\sqrt{\eps})$). It is also known that this is precisely the integrality gap as well as
UG-hardness gap \cite{fsch, kkmo}.
However, the predicate turns out to be approximation resistant in
the Sherali-Adams LP hierarchy as shown by de la Vega and Mathieu \cite{delavega}!  They show
$(1-o(1), \frac{1}{2}+o(1))$ integrality gap for $\omega(1)$ rounds of the
Sherali-Adams hierarchy, which is subsequently improved to $n^{\Omega(1)}$ rounds in \cite{yury}.

Even though the approximation resistance in Sherali-Adams LP hierarchy is fundamentally different,
our characterization of  resistance here looks {\it syntactically} similar to the ones before,
once we ignore the second moments (which are not available in the LP case).

\begin{definition}\label{def:charlp}  Let $\charlp$ be the family of all predicates (of all arities)
$f:\{-1,1\}^k \mapsto
\{0,1\}$ such that there is a probability measure
$\Lambda^*$ on $\cal{C}^*(f)$ such that
 for every $1 \leq t \leq k$, the {\it signed measure}
 \begin{equation}
 \Lambda^{*,(t)} ~:=~
\ExpOp_{|S| = t} ~\ExpOp_{\pi : [t] \to [t]} ~\Ex{b \in \{-1,1\}^t }{
 \left( \prod_{i = 1}^t b_i \right) \cdot\HF(S) \cdot \Lambda^*_{S,\pi,b}}\label{intro:mainlp:eqn}
 \end{equation}
vanishes identically. Here $\cal{C}^*(f)$ is the projection of
the polytope $\cal{C}(f)$ to the first $k$ co-ordinates corresponding to the first moments and
$\Lambda^*_{S,\pi,b}$ are as earlier, but for the projected polytope $\cal{C}^*(f)$.
\end{definition}

We show that if $f \in \charlp$, then there is a $(1-o(1), \rho(f) + o(1))$  integrality
gap for a super-constant number of rounds of Sherali-Adams hierarchy. Otherwise there is an
approximation given by  $k$-rounds of the hierarchy.  For the class of symmetric $k$-ary predicates,
our characterization takes a simple form. If $f$ is symmetric then $f \in \charlp$ if and only if
there are inputs $x, y \in \{-1,1\}^k$ such that
$f(x)=f(y)=1$ and $\sum_{i=1}^k x_i \geq 0$, $\sum_{i=1}^k y_i \leq 0$.


\subsubsection*{Equivalence of Approximation Resistance and Strong Approximation Resistance}

In a previous version of this paper \cite{KTW-SAR}, we obtained a characterization of a related notion that we called {\it strong approximation
resistance}. In this notion, an algorithm is considered non-trivial if on a near-satisfiable instance of CSP$(f)$, it finds an assignment such that the
fraction of constraints satisfied is outside the range $[\rho(f)-\Omega(1), \rho(f)+\Omega(1)]$. If such an efficient algorithm exists, the predicate is called
{\it weakly approximable} and strongly approximation resistant otherwise.

We are now able to show that our characterization for strong approximation resistance applies to approximation resistance as well, i.e.
the two notions of resistance are equivalent! In other words, every predicate is either approximable (as opposed to weakly approximable) or is
strongly approximation resistant (as opposed to just approximation resistant), i.e. the best in both the worlds. To emphasize further, the equivalence
means that
for an approximation resistant
predicate $f$, there is a reduction from the Unique Games problem to GapCSP$(f)_{1-o(1), ~\rho(f)+o(1)}$ such that in the NO case,
the instance has an additional property that every assignment to its variables satisfies between $\rho(f)-o(1)$ and $\rho(f)+o(1)$ fraction of the
constraints, i.e. not more and {\it not less} than the threshold $\rho(f)$ by a non-negligible amount. Similarly, all the LP/SDP integrality
gap instances also share this additional property.

We show that a predicate is either approximable or there exists a vanishing measure $\Lambda$. In the latter case, the measure $\Lambda$ is used to
construct $(1-o(1), \rho(f)+o(1))$ integrality gap for the mixed hierarchy which then implies approximation resistance via Raghavendra's result. Though we do not
present it here, it is also possible to use $\Lambda$ to directly construct a dictatorship test and prove approximation resistance (i.e. without
going through the integrality gap construction). In either case, the fact that $\Lambda$ is a vanishing measure ensures that in the {\it soundness
analysis}, the Fourier terms that are potentially responsible for {\it deviating}  from the threshold $\rho(f)$ precisely cancel each other out 
(up to $o(1)$ error). Thus, the integrality gap instance we well as the NO instance in the hardness reduction have the property that for every assignment, the fraction of
satisfied constraints cannot even {\it deviate} from $\rho(f)$ (and therefore cannot exceed $\rho(f)$ either) by a non-negligible amount, yielding
strong approximation resistance.

The notion of strong approximation resistance has been considered in literature before, albeit implicitly. In fact, almost all known proofs of
approximation resistance actually show strong resistance, either implicitly or explicitly,
or by a minor modification or possibly switching from NP-hardness to UG-hardness.
This is because the soundness analysis of
these constructions shows that the Fourier terms that are potentially responsible for  deviating   from the threshold $\rho(f)$ are all bounded by $o(1)$ in {\it magnitude} (our analysis has the novelty that these Fourier terms cancel each other out, without each  term necessarily being $o(1)$ in magnitude).
The only possible exception we
are aware of is an arity $4$ predicate in \cite{ah2}, Example 8.7 therein, that we mentioned before. The predicate is shown to be approximation
resistant therein and now our result implies that it is also strongly approximation resistant.

We will avoid referring to the notion of strong approximation resistance henceforth and refer an interested reader to the previous version of this paper
\cite{KTW-SAR} for relevant definitions.

\subsection{Overview of the Proof Techniques}
In this section we give an informal overview of the main ideas and techniques used in our results. A
significant ingredient in our results is the Von Neumann min-max theorem for zero-sum games which
 was also used by O'Donnell and Wu \cite{odwu} towards characterizing the {\it
approximability curve} for the MAX-CUT problem.
The game-theoretic and measure-theoretic framework  we develop is likely to find other applications. For instance, it is possible to give an
exposition to Raghavendra's result in our framework, providing a clear explanation in terms of duality of the min-max theorem and
(in our opinion) demystifying the result.

We first focus on the main result in the paper, namely that a predicate $f$ is  approximation
resistant if and only if $f \in \chars$ as in Definition \ref{def:chars}.  Before we begin the
overview, we briefly comment  how the characterization in  Definition \ref{def:chars} comes about and how it makes
sense from the perspective of both the hardness and the algorithmic side. In hindsight, the characterization in Definition \ref{def:chars},
 in terms of the existence of a vanishing measure $\Lambda$, is tailor-made to prove the hardness result: given a vanishing measure
 $\Lambda$, it is straightforward, at least at a conceptual level, to design a dictatorship test and prove approximation resistance
 modulo the UGC. As we said earlier, the vanishing condition precisely ensures that the Fourier terms that are potentially responsible for deviating from
 $\rho(f)$ exactly cancel each other out and 
 the novel feature here is that the Fourier terms cancel each other out without each term necessarily being
 $o(1)$ in magnitude. 
 From the algorithmic side, one would want to show that if a
 vanishing measure does not exist, then the predicate is approximable. However, we do not actually know how to design such an algorithm directly!  Instead, our
 entire argument runs in reverse. We propose a family of (SDP rounding) algorithms and show that either some algorithm in this family works
 or else there exists a vanishing measure. Given a vanishing measure, it is relatively straightforward to prove the hardness result and
 construct integrality gaps (which are equivalent by Raghavendra's result), as mentioned earlier. Our argument is non-constructive on both the algorithmic and the
 hardness side: it yields neither  an explicit algorithm nor  an explicit vanishing measure.

 With the benefit of hindsight, there might be an
 intuitive explanation why non-existence of a vanishing measure implies existence of an algorithm. Each point $\zeta$ in the body ${\cal C}(f)$ corresponds to
 a Gaussian density function $\gamma_d(\zeta, \cdot)$ on ${\mathbb R}^d$.
 Non-existence of a vanishing measure on ${\cal C}(f)$, by duality in an appropriate setting,
 might be interpreted as {\it linear independence} of these Gaussian density functions in the following sense: w.r.t. {\it any} probability measure on ${\cal C}(f)$, the integral of
 the Gaussian density function $\gamma_d(\zeta, \cdot)$ (with Fourier coefficients of the predicate and sign flips thrown in so as to have both positive and negative terms), say a signed density  $\phi(\cdot)$  on ${\mathbb R}^d$, is a non-zero density  with a positive lower bound on its norm. The rounding algorithm is then
 a function $\psi: {\mathbb R}^d \mapsto \{-1,1\}$ that distinguishes $\phi(\cdot)$ from the zero density; the algorithm however has to work for all possible
 $\phi(\cdot)$ simultaneously. One might be able to translate this intuition into a formal proof with an appropriate setting, but we haven't yet investigated this possibility.

We now begin our overview. We
make several simplifying assumptions and use informal mathematically imprecise language
as we proceed (for the sake of a cleaner overview only).
Let $f:\{-1,1\}^k \mapsto \{0,1\}$ be the predicate under
consideration with $\rho(f) = \frac{|f^{-1}(1)|}{2^k}$. We make a simplifying assumption that the predicate $f$ is even, i.e.
$f(-z)=f(z) \ \forall z \in \{-1,1\}^k$. This allows us to assume that the first moments (i.e. ``biases")   $\Ex{z \sim \nu}{z_i}$  are all zero for any distribution $\nu$
supported on $f^{-1}(1)$ and can be safely ignored. \footnote{See \cite{ak} for elaboration
  where the same assumption is used.}
Therefore we let the polytope $\cal{C}(f)$ to be the set of all ${k}\choose{2}$-dimensional
{\it second moments vectors} $\zeta(\nu) = \big( \Ex{z \sim \nu}{z_i z_j} \ | \ 1 \leq i < j \leq k
\big)$ over all distributions $\nu$ supported on
$f^{-1}(1)$. Our main concern is whether there is an efficient algorithm for CSP$(f)$
that achieves a non-trivial approximation, i.e. on an $1-o(1))$ satisfiable instance obtains an assignment
such that the fraction of satisfied constraints is at least $\rho(f)+\Omega(1)$. We make the simplifying assumption that the CSP instance is in fact perfectly
satisfiable. This implies
that the basic relaxation yields, for every constraint $C$ that depends on variables say $x_1, \ldots, x_k$,
a distribution $\nu(C)$ over the set of satisfying assignments $f^{-1}(1)$ and unit vectors $\vone{1},
\ldots, \vone{k}$ such that
$\ip{\vone{i}, \vone{j}} = \Ex{z \sim \nu}{z_i \cdot z_j}$. As noted, $\zeta(\nu(C))$ then is a
$\binom{k}{2}$-dimensional vector of the second moments (which equal $\ip{\vone{i}, \vone{j}}$).
The uniform distribution over the vectors
$\zeta(\nu(C))$ over all constraints $C$ is then a probability measure $\lambda$ on $\cal{C}(f)$. We
regard the measure
$\lambda$ as essentially representing the given CSP instance (a priori, we seem to be losing information by
ignoring the topology of the instance, but as we will see this doesn't matter).

Note that in the relaxed solution, the vector assignment is global in the sense that the vector
assigned to each CSP variable is fixed, independent of the constraint $C$ in which the variable participates in
whereas the distribution $\nu(C)$ is local in the sense that it depends on the specific constraint $C$.

Our main idea, as hinted to before, is to propose a family of algorithms based on ``$d$-dimensional roundings"
of the SDP solution for $d=k+1$
and to show that either one such algorithm
achieves a non-trivial approximation  or else the polytope $\cal{C}(f)$
supports a probability measure $\Lambda$ as in Definition \ref{def:chars} (note again  that we are ignoring
the first moments).
In the latter case, the existence and symmetry of $\Lambda$
leads naturally to a  $(1-o(1), \rho(f) + o(1))$ integrality gap for the basic relaxation
(and therefore
mixed hierarchy)
and a UG-hardness result for GapCSP$(f)_{1-o(1), \rho(f) + o(1)}$, showing that the predicate
is  approximation resistant.

The proposed family of $d$-dimensional roundings is easy to describe: any function
$\psi: \R^d \mapsto \{-1,1\}$ serves as a candidate rounding algorithm where the SDP vectors
$\{\vone{i}\}$ are projected onto a random $d$-dimensional subspace inducing
$\vone{i} \mapsto \y_i \in \R^d$ and then the $i^{th}$ variable is assigned a boolean value $\psi(\y_i)$.
From the algorithmic viewpoint, one seeks a rounding function $\psi$ (more generally a distribution over $\psi$)
such that its ``performance" on {\it every} instance $\lambda$\footnote{We recall again that for any CSP instance,
$\lambda$ is the uniform distribution over $\zeta(\nu(C))$ over all constraints $C$ and thus a probability measure on
$\cal{C}(f)$. The measure $\lambda$ now represents the whole instance.}
significantly exceeds $\rho(f)$ (in average, if a distribution over $\psi$ is used).
From the hardness viewpoint, a natural goal then
would be to come up with a ``hard-to-round measure" $\lambda$ on $\cal{C}(f)$ such that
the ``performance" of {\it every} rounding function $\psi$ is at most $\rho(f) + o(1)$.

These considerations lead naturally to a two-player zero-sum game between \Dev, the ``hardness
player" and \Ang, the ``algorithm player" (we view  \Dev as the row player and \Ang as the column
player). The pure strategies of \Dev are the probability measures  $\lambda$ on
$\cal{C}(f)$ to be rounded and the pure strategies of \Ang are the rounding functions $\psi: \R^d
\mapsto \{-1,1\}$.
The payoff to \Ang when the two players play $(\lambda, \psi)$ respectively is the ``advantage over $\rho(f)$"
achieved by rounding $\lambda$ using $\psi$. More precisely, consider the scenario where the set of local distributions on CSP constraints is represented by the measure $\lambda$. The local distribution on a
randomly selected constraint is a sample $\zeta \sim \lambda$ along with vectors $\vone{1},\ldots,\vone{k}$ whose
pairwise inner products match $\zeta$.
During the rounding process, the vectors $\vone{1}, \ldots, \vone{k}$ are projected onto a random
$d$-dimensional subspace,  generating a sequence of $k$
points $\y_1, \ldots, \y_k \in \R^d$  that  are standard $d$-dimensional Gaussians with correlations
$\zeta$. The CSP variables are then rounded to boolean values $\psi(\y_1),\ldots,\psi(\y_k)$. Whether these
values satisfy the constraint or not is determined by plugging them in the Fourier representation of the predicate $f$. The
``advantage over $\rho(f)$" is precisely this Fourier expression without the constant term (which is
$\rho(f)$). Given this intuition, we define the payoff to \Ang as the expression:
\Mnote{Strictly speaking the above is slightly incorrect since if we obtain $\y_1,\ldots,\y_k$ by
  simply projecting onto a random subspace, then they will have mean zero. I can explain the process
$\cal{N}_d(\zeta)$ more precisely in the preliminaries.}
\begin{equation}\label{payoff-overview}
\pay(\lambda,\psi)  ~:=~  \ExpOp_{\zeta \sim \lambda }
~\Ex{\y_1, \ldots, \y_k \sim \cal{N}_d(\zeta)}{\sum_{S \neq \emptyset} \HF(S) \cdot \prod_{i\in S}\psi(\y_i)},
\end{equation}
where $\cal{N}_d(\zeta)$ denotes a sequence of $k$ standard $d$-dimensional Gaussians with correlations $\zeta$.
We apply Von Neumann's min-max theorem and conclude that there exists a number $L$, namely the ``value" of the game, a mixed equilibrium strategy $\Gamma$ (a distribution over $\psi$) for \Ang and an equilibrium strategy $\Lambda$ (a pure one as we will observe!) for \Dev. Actually
Von Neumann's theorem applies only to games where the sets of strategies for both players are finite, but we ignore this issue
for now. Depending on whether the value of the game $L$ is strictly positive or zero
(it is non-negative since \Ang can always choose a {\it random function} $\psi$ and achieve a zero payoff), we get the ``dichotomy" that the
predicate $f$ is  approximable or  approximation resistant (modulo UGC).

The conclusion when $L > 0$ is easy: in this case \Ang has a mixed strategy $\Gamma$ such that her payoff (expected over $\Gamma$) is
at least $L$ for {\it every} pure
strategy $\lambda$ of \Dev. This is same as saying that if a rounding function $\psi \sim \Gamma$ is sampled and then used to round
the relaxed solution, it achieves an advanateg $L$ over $\rho(f)$ for every CSP instance $\lambda$.

The conclusion when $L = 0$ is more subtle: in this case in general \Dev has a mixed strategy,
 say $\cal{D}$,
 such that for {\it every} pure
strategy $\psi$ of \Ang, her expected payoff (expected over $\lambda \sim \cal{D}$)
is at most  zero. We observe that \Dev may replace his mixed strategy $\cal{D}$ by a pure strategy $\Lambda$. Noting that
$\cal{D}$ is a distribution over  measures $\lambda$, we let $\Lambda$ be the single {\it averaged} measure
informally written as $\Lambda :=
\ExpOp_{\lambda \sim \cal{D}} [\lambda]$. Thus the expectations over $\lambda \sim \cal{D}$ and $\zeta \sim \lambda$ may be merged into the expectation
over $\zeta \sim \Lambda$.
 We may therefore conclude that for the measure $\Lambda$
over $\cal{C}(f)$, for every $\psi: \R^d \mapsto \{-1,1\}$:
\begin{equation} \label{eqn:Lambda}
\ExpOp_{\zeta \sim \Lambda}  ~\Ex{\bfy_1, \ldots, \bfy_k \sim \cal{N}_d(\zeta)}{\sum_{S \neq \emptyset}
  \HF(S) \cdot \prod_{i\in S}\psi(\bfy_i)} ~\leq ~ 0.
\end{equation}
Now we view this
expression as a multi-linear  polynomial in (uncountable number of) variables $\{ \psi(\y) \ | \ y \in \R^d\}$. We observe that if a multi-linear polynomial
in  finitely many $\{-1,1\}$-valued variables with no constant term is upper bounded by zero, then it must be identically zero (see Lemma \ref{multi-linear:lem}). We pretend, for now, that the same
conclusion holds to the ``polynomial" above and hence that  it is identically zero and we may equate every ``coefficient" of this polynomial to zero.

Fix any $1 \leq t \leq k$.
For every $\y_1, \ldots, \y_t \in \R^d$, we are interested in the coefficient of the monomial
$\prod_{i=1}^t \psi(\y_i)$. Firstly,
this coefficient can arise from precisely the sets $S$ with $|S|=t$. Secondly, for a fixed set $S, |S|=t$, the coefficient is
really the joint density of  $t$ standard $d$-dimensional Gaussians with correlations $\zeta_S$ at the sequence
$(\y_1,\ldots,\y_t)$, where $\zeta_S$ is same as $\zeta$ restricted to indices in $S$.  Thirdly, for any permutation $\pi: [t] \mapsto [t]$, we must consider all sequences $(\y_{\pi(1)}, \ldots, \y_{\pi(t)})$
and add up their coefficients (i.e. Gaussian densities) since they all correspond to the same
monomial  $\prod_{i=1}^t \psi(\y_i)$. Finally,
we did not mention this so far, but we need to allow only odd rounding functions $\psi$,
i.e. $\psi(-\y)= -\psi(\y)$, to account for
the issue of variable negations in CSPs. This has the effect that the monomials $\prod_{i=1}^t
\psi(b_i \cdot \y_i)$ are same as
$\prod_{i=1}^t b_i \cdot \prod_{i=1}^t \psi(\y_i)$ for a choice of signs $b_i \in \{-1,1\}$, and
hence their coefficients (i.e. Gaussian
densities) must be added up together. With all these considerations, the coefficient of the monomial
$\prod_{i=1}^t \psi(\y_i)$ can be written as:
$$ \Ex{\zeta \sim \Lambda}{\sum_{S, |S|=t} \ \ \sum_{\pi: [t] \mapsto [t]} \ \  \sum_{b \in
    \{-1,1\}^t}  \ \HF(S) \cdot \left(\prod_{i = 1}^t b_i\right)
   \gamma_{t,d}\big( (\y_1,\ldots,\y_t), \zeta_{S,\pi,b}\big)}. $$
Here $\zeta_{S,\pi,b}$ is the sequence of correlations between the indices in $S$ after accounting
for the permutation of indices according to $\pi$ and the sign-flips according to $b \in \{-1,1\}^t$. Also
$\gamma_{t,d} \big( (\y_1,\ldots,\y_t), \xi \big)$ is the joint density of
$t$ standard $d$-dimensional Gaussians with
correlations $\xi$.
Defining the ``signed measure" $\Lambda^{(t)}$ as in
Equation \eqref{intro:main:eqn}, the conclusion that the above coefficient is zero (for every $(\y_1,\ldots,\y_t)$), can be written as:
$$\forall \y_1,\ldots,\y_t \in \R^d, \ \ \  \int  \gamma_{t, d}\big( (\y_1,\ldots,\y_t), \xi \big) \ d\Lambda^{(t)}(\xi) = 0.$$
In words, w.r.t. the signed measure $\Lambda^{(t)}$ on $[-1,1]^{{t}\choose{2}}$ (corresponding to
all possible correlation vectors
between $t$ standard $1$-dimensional Gaussians), the integral of every function  $\gamma_{t,d}(
(\y_1,\ldots,\y_t), \cdot )$ vanishes
(there is one such function for every fixed choice of $(\y_1,\ldots, \y_t)$). The class of these
functions is rich enough that, after jumping through several hoops, we are able to conclude that the signed measure $\Lambda^{(t)}$ itself
must identically vanish.

This proves the existence of the measure $\Lambda$ as in Definition \ref{def:chars}. After this, the
construction of the
$(1-o(1), \rho(f)+ o(1))$ integrality gap for the CSP is obtained by generalizing the
construction for MAX-CUT due to Feige and
Schechtman \cite{fsch}. We describe the construction in the continuous setting and ignore the
discretization step here.
The variables in the CSP instance correspond to points in $\R^N$ for a high enough
dimension $N$ and the variables for $\y$ and $-\y$ are designated as negations of each other. The constraints of the
CSP are defined by sampling $\zeta \sim \Lambda$ and then sampling $k$ Gaussian points $\y_1,\ldots,
\y_k \in \R^N$ with correlations
 $\zeta$  and placing a constraint on these variables. For the completeness part, one observes that
 for large $N$ the space $\R^N$ with
the Gaussian measure is (up to $o(1)$ errors) same as the unit sphere ${\mathbb S}^{N-1}$ towards our purpose and we may
assume that all the CSP variables lie on the unit sphere.
Each point on the sphere is assigned a vector that is itself and for every constraint, the local distribution equals
$\nu$ if $\zeta= \zeta(\nu)$ is used towards that constraint. For the soundness part, an assignment to the CSP corresponds to
a function $\psi: \R^N \mapsto \{-1,1\}$ and the ``advantage over $\rho(f)$" is precisely the expression on the l.h.s. of Equation \eqref{eqn:Lambda},
if $\y_1,\ldots, \y_k$ were chosen from $\R^N$ instead of $\R^d$ ($d=k+1$ therein). The symmetry properties of
$\Lambda$ (i.e. that the signed measure $\Lambda^{(t)}$ vanishes for every $1 \leq t \leq k$) ensure  that this expression
vanishes identically and hence no CSP assignment can exceed or even deviate from $\rho(f)$.
We would like to emphasize here that the existence of $\Lambda$ was deduced only assuming that no $(k+1)$-dimensional rounding
exceeds $\rho(f)$, but once the existence of $\Lambda$ is established, it automatically implies that no higher
dimensional rounding can deviate from $\rho(f)$.

Once the integrality gap is established, the  UG-hardness
of GapCSP$(f)_{(1-o(1), \rho(f) + o(1))}$
 follows  automatically from the general result of Raghavendra and the same
integrality gap for a super-constant number of rounds of the mixed  hierarchy follows
 automatically from the general results of Raghavendra and Steurer~\cite{raghsteu2},
and Khot and Saket \cite{ks09}.

As we said, this is a simplified and informal view and we actually need
to work around all the simplifying assumptions we made, formalize all the arguments, and
address many issues that we hid under the carpet, e.g.
setting $d= k+1$ and the reason say $d=1$ does not work, handling the first moments,
handling the possibility that a Gaussian density is degenerate, etc.
Also, we cannot apply Von Neumann's min-max theorem to infinite
games. In principle, one might be able to use min-max theorems for infinite games such as Glicksberg's theorem, but then one
has to ensure that the strategy spaces are compact. Instead, we find it easier to work with a sequence of finite approximations
to the infinite game and then use limiting arguments everywhere (this is easier said than done and
this is where much of the work lies in).

Another tricky issue is to ensure that the polynomial, obtained as a  discretized  finite analogue of the expression
\eqref{eqn:Lambda}, stays multi-linear. We ensure this by
modifying the $\pay(\cdot,\cdot)$ function so as to delete the non-multi-linear terms from the very start. In general, it seems
difficult to argue that the norm on the terms so deleted is negligible compared to the norm on the linear combination of
the remaining terms (which might suffer heavily due to cancellations). We observe however that the norm on the deleted terms
needs to be negligible only compared to the value of the game $L$ in the case $L >0$ and this  is indeed the case when the discretization is fine enough.
The reason is that deleting certain terms changes the $\pay(\cdot,\cdot)$ function by a corresponding amount, but as long as this amount is negligible compared to
$L$, in the case $L > 0$, the algorithm player has a strategy with value at least say $L/2$ even w.r.t. the original payoff function
and hence still gets an  advantage of $L/2$ over $\rho(f)$.
\footnote{We missed this trick before and were able to get a characterization of only
 strong approximation resistance in the previous version of the paper \cite{KTW-SAR}.}

\subsubsection*{Approximation Resistance for LP Hierarchies}
Now we give an overview of the characterization of  approximation resistance (i.e.
Definition \ref{def:charlp})
for a super-constant number of rounds of Sherali-Adams LP. We proceed along a similar line
as earlier with one difference: we work with a different
body $\tilde{\cal{C}}(f)$ instead of $\cal{C}(f)$.

In the LP case, the second moments are not available at all and
the first moments are all one has. We will nevertheless pretend that the second moments are available
by using their {\it dummy} setting. For any distribution $\nu$ supported on $f^{-1}(1)$,
let the vector $\zeta = \zeta(\nu)$ consist of the $k$ first moments $\zeta(i) = \Ex{z\sim \nu}{z_i}$ and in addition, {\it dummy} second moments corresponding to those of $k$ {\it independent}
unit $\ell_2$-norm Gaussians $g_1,\ldots,g_k$ with the given first moments, i.e.
$\ExpOp[g_i] = \zeta(i)$ and $\ExpOp[g_i^2]=1$. The body $\tilde{\cal{C}}(f)$ is defined as the set of all
vectors $\zeta(\nu)$ over all distributions $\nu$ supported on $f^{-1}(1)$. Note that
$\tilde{\cal{C}}(f)$ is different than the polytope $\cal{C}(f)$ and not necessarily convex
(we never used convexity), but its
projection onto the first $k$ co-ordinates is the same as that of $\cal{C}(f)$, namely
$\cal{C}^*(f)$ as in Definition \ref{def:charlp}.

Once the polytope $\cal{C}(f)$ is replaced by the body $\tilde{\cal{C}}(f)$,
our argument proceeds as before. Note that since the second moments reflect independent Gaussians,
our rounding is really using only the first moments, as ought to be the case with LPs.
We conclude that either the predicate is approximable or there is a probability measure
$\Lambda$ on $\tilde{\cal{C}}(f)$ that satisfies characterization in Definition \ref{def:chars}.
Projecting $\Lambda$ onto the first $k$ co-ordinates gives a measure $\Lambda^*$ on $\cal{C}^*(f)$
satisfying the characterization in
Definition \ref{def:charlp}.

Once the existence of $\Lambda^*$ is established, we proceed to constructing the
 $(1-o(1), \rho(f) + o(1))$ integrality gap in the Sherali-Adams hierarchy. This step however
turns out to be more involved than before since general results as in \cite{ragh,raghsteu2,ks09} are not available
in the LP setting. Instead, we are able to rework the MAX-CUT construction of de la Vega and Kenyon \cite{delavega} for any predicate $f \in \charlp$.

An intuitive way of looking at the construction is as follows. The variables of the CSP are
points in the interval $[-1,1]$ and the variables for $x$ and $-x$ are negations of each
other (called {\it folding}). Constraints are defined by sampling $\zeta \sim \Lambda^*$ and
then placing the constraint on variables $(\zeta(1),\ldots,\zeta(k))$. The local distribution
for this constraint is $\nu$ such that $\zeta = \zeta(\nu)$. The LP-bias of a variable $x$ is
$x$ itself. The vanishing condition in Definition \ref{def:charlp} implies
that any (measurable) $\{-1,1\}$-assignment to this CSP instance satisfies exactly
$\rho(f)$ fraction (measure) of the constraints. This conclusion also holds for
$[-1,1]$-valued assignments appropriately interpreted.

This continuous instance only has a {\it basic} LP solution, i.e. the local distributions
are defined only for constraints. We now construct the actual instance as follows.
We discretize the interval $[-1,1]$ by picking equally spaced points $x_1,\ldots,x_s$ with
fine enough granularity (and ensuring that a point and its negation are both included
and are {\it folded}).
Each variable $x_i$ is now {\it blown up} into a block of $n/s$ variables for a
large $n$ (so the total number of variables is $n$).
Whenever a constraint is generated in the continuous setting by sampling
$\zeta \sim \Lambda^*$, we first round $\zeta(j)$ to nearest $x_{i_j}$ and then the
constraint is
actually placed on randomly chosen variables
from blocks corresponding to $x_{i_1},\ldots, x_{i_k}$ respectively. This is the way
{\it one} constraint is randomly introduced and the process is repeated independently
$m$ times for $m \gg n$. This defines the CSP instance as a $k$-uniform hyper-graph.
By deleting a small fraction of the constraints, one ensures that the hyper-graph has
super-constant girth. Finally, de la Vega and Kenyon \cite{delavega} construction is reworked
to construct local distributions for all $r$-sets of variables, i.e. for the $r$-round
Sherali-Adams LP. Our presentation is somewhat different than that in \cite{delavega}: we find
it easier to first construct a nearly correct LP solution and then correct it as in
\cite{raghsteu2, ks09}.

One interesting and novel feature of our construction is how the CSP instance is constructed and how
the ``soundness" is proved as opposed to a standard construction of random CSPs.

A standard construction, in one step, generates a constraint by uniformly selecting a
$k$-subset of variables and then randomly selecting the {\it polarities} (i.e. whether
a variable occurs in a negated form or not). This step is then repeated independently
to generate $m \gg n$ constraints. Since the polarities are randomly chosen in each
step, for any fixed global assignment, the probability that the assignment satisfies
the constraint is precisely $\rho(f)$, and then one uses the
 Chernoff bound and the union bound to conclude that w.h.p. every global assignment to the instance
 satisfies between $\rho(f) \pm o(1)$ fraction of the constraints.

In our case, the one step of generating a constraint is different. In particular, the
$k$-subset of variables chosen is not necessarily uniformly random (it depends on $\Lambda^*$
since $\zeta \sim \Lambda^*$) and the polarities are not necessarily random either
(they depend on signs of $\zeta(1),\ldots,\zeta(k)$ due to {\it folding}). However it is
still true that for any fixed global assignment, the probability that the assignment
satisfies the constraint is precisely $\rho(f)$ (up to $o(1)$ errors introduced by
discretization)! This property is simply inherited from the continuous setting by
viewing the global assignment as a function $\psi: \{x_1,\ldots,x_s\} \mapsto [-1,1]$
where $\psi(x_i)$ is the average of the global values to variables in block $x_i$! This
concludes our overview.

\section{Preliminaries and Our Results}\label{sec:prelims}

In this section, we present formal definitions and statements of our results and a preliminary background
on mathematical tools used.

\subsection{Constraint Satisfaction Problems}



\begin{definition}
For a predicate $f : \pmone^k \rightarrow \{0,1\}$, an instance
$\Phi$ of \maxkcsp$(f)$ consists of a set of variables $\{x_1,\ldots,x_n\}$ and a set of constraints $C_1, \ldots, C_m$ where each
constraint $C_i$ is over a $k$-tuple of variables $\{x_{i_1}, \ldots, x_{i_k}\}$ and is of the
form
\[
C_i ~\equiv~ f(x_{i_1} \cdot b_{i_1}, \ldots, x_{i_k} \cdot b_{i_k})
\]
for some $b_{i_1}, \ldots, b_{i_k} \in \pmone$.
For an assignment $A: \{x_1,\ldots,x_n\} \mapsto \pmone$, let $\sat(A)$ denote the fraction of
constraints satisfied by $A$. The instance is called $\alpha$-satisfiable if there exists an assignment
$A$ such that $\sat(A) \geq \alpha$. The
maximum fraction of constraints that can be simultaneously satisfied is denoted by $\opt(\Phi)$, i.e.
 $$ \opt(\Phi) = \max_{A: \{x_1,\ldots,x_n\} \mapsto \pmone}  \sat(A). $$
The density of the predicate is $\rho(f) = \frac{|f^{-1}(1)|}{2^k}$.
\end{definition}

For a constraint $C$ of the above form, we use $x_C$ to denote the tuple of variables $(x_{i_1},
\ldots, x_{i_k})$ and $b_C$ to denote the tuple of bits $(b_{i_1}, \ldots, b_{i_k})$. We then write
the constraint as $f(x_C \cdot b_C)$. We also denote by $S_C$ the set of indices
$\{i_1,\ldots, i_k\}$ of the variables participating in the constraint $C$.

\begin{definition}
A predicate  $f : \pmone^k \rightarrow \{0,1\}$ is called {\deffont approximable} if there exists a constant $\eps > 0$ and a polynomial time algorithm, possibly randomized, that given an $(1-\eps)$-satisfiable
instance of \maxkcsp($f$), outputs an assignment $A$ such that  $\Ex{A}{~\sat(A)~} \geq \rho(f)+\eps$.
Here the expectation is over the randomness used by the algorithm.
\end{definition}

Towards defining the notion of approximation resistance,
 it is convenient to define the gap version of the CSP. Though the gap version can be defined
w.r.t. any {\it gap location}, we do so only for the location that is of interest to us, namely $1-o(1)$
versus $\rho(f) + o(1)$. We say that a decision problem is UG-hard if there is polynomial time
reduction from the Unique Games Problem \cite{khot} to the problem under consideration (we will not
be directly concerned with the
Unique Games Problem and the Conjecture; hence their discussion is deferred to the end of the preliminaries section).

\begin{definition} Let $\eps > 0$ be a constant.

Let {\sf GapCSP}$(f)_{1-\eps, ~\rho(f)+\eps}$ denote the promise
version of \maxkcsp$(f)$ where the given instance $\Phi$ is promised to have either $~\opt(\Phi) \geq
1-\eps$ ~or ~$\opt(\Phi) \leq \rho(f)+\eps$. The predicate is called {\deffont approximation resistant}
if for every $\eps > 0$, {\sf GapCSP}$(f)_{1-\eps, ~\rho(f)+\eps}$ is UG-hard.


\end{definition}

\subsection{The LP and SDP Relaxations for Constraint Satisfaction Problems}\label{subsec:SDP}
Below we present three LP and SDP relaxations for the $\maxkcsp(f)$ problem that are relevant in this paper:
the Sherali-Adams LP relaxation,
mixed LP/SDP relaxation and finally the {\it basic relaxation}.



We start with the $r$-round Sherali-Adams relaxation.
The intuition behind it is the following. Note that an integer
solution to the problem can be given by an assignment $A: [n] \to \pmone$.  Using this, we can
define $\{0,1\}$-valued variables $\vartwo{S}{\alpha}$ for each $S \subseteq [n], 1 \leq |S| \leq r$
and $\alpha \in \pmone^S$,
with the intended solution $\vartwo{S}{\alpha} = 1$ if $A(S) = \alpha$ and 0 otherwise.
We also introduce a variable $\vartwoempty$, which equals 1.
We relax the integer program and allow variables to take real values in $[0,1]$.
Now the variables $\{ \vartwo{S}{\alpha} \}_{\alpha \in \{-1,1\}^k}$ give a probability distribution
over assignments to $S$. We can enforce consistency between these {\it local} distributions
by requiring that for $T \subseteq S$, the distribution over assignments to $S$, when marginalized
to $T$, is precisely the distribution over assignments to $T$.
 The relaxation is shown in Figure \ref{fig:SA-lp}.

\begin{figure}[ht]
\hrule
\vline
\begin{minipage}[t]{0.99\linewidth}
\vspace{-5 pt}
{\small
\begin{align*}
\mbox{maximize} &~~
\Ex{C \in \Phi}{\sum_{\alpha \in \{-1,1\}^k} f(\alpha \cdot b_C) \cdot \vartwo{S_C}{\alpha} }&  \\
\mbox{subject to} \\ 
%
%
%
%
\sum_{\alpha \in \pmone^{S} \atop \alpha|_T = \beta}\vartwo{S}{\alpha} &~=~
\vartwo{T}{\beta}  &\forall T \subseteq S \subseteq [n], |S|\leq r, ~\forall \beta \in \pmone^T \\
 \vartwo{S}{\alpha} &~\geq~ 0
  & \forall S \subseteq [n], |S| \leq r,  ~\forall \alpha \in \pmone^{S} \\
   \vartwoempty & ~=~  1    \\
\end{align*}}
\vspace{-14 pt}
\end{minipage}
\hfill\vline
\hrule
\caption{$r$-round Sherali-Adams LP for \maxkcsp($f$)}
\label{fig:SA-lp}
\end{figure}

We can further strengthen the integer program by adding the quadratic constraints
\[
\vartwo{\{i_1, i_2\}}{(b_1,b_2)}
~=~ \vartwo{\{i_1\}}{b_1} \cdot \vartwo{\{i_2\}}{b_2} \mper
\]
As solving quadratic programs is NP-hard we then relax
these quadratic constraints to the existence
of vectors $\vtwo{i}{b}$ and a unit vector $\vtwoempty$, and impose the above constraints on inner
products of the corresponding vectors. Adding these SDP variables and constraints to the $r$-round
Sherali-Adams LP as above yields the $r$-round mixed relaxation as in Figure \ref{fig:mixed}.

\begin{figure}[ht]
\hrule
\vline
\begin{minipage}[t]{0.99\linewidth}
\vspace{-5 pt}
{\small
\begin{align*}
\mbox{maximize}  &
~~\Ex{C \in \Phi}{\sum_{\alpha \in \{-1,1\}^k} f(\alpha \cdot b_C) \cdot \vartwo{S_C}{\alpha} } \\
\mbox{subject to} \\ 
\mydot{\vtwo{i}{1}}{\vtwo{i}{-1}} &~=~ 0  &\forall i \in [n] \\
\vtwo{i}{1} + \vtwo{i}{-1} &~=~ \vtwoempty &\forall i \in [n] \\
%
%
%
\vartwo{\{i_1,i_2\}}{(b_1,b_2)}  &~=~
\mydot{\vtwo{i_1}{b_1}}{\vtwo{i_2}{b_2}} &\forall i_1 \neq i_2 \in [n], b_1,b_2 \in
\pmone\\
\sum_{\alpha \in \pmone^{S} \atop \alpha|_T = \beta}\vartwo{S}{\alpha} &~=~
\vartwo{T}{\beta}  &\forall T \subseteq S \subseteq [n], |S|\leq r, ~\forall \beta \in \pmone^T \\
 \vartwo{S}{\alpha} &~\geq~ 0
  & \forall S \subseteq [n], |S| \leq r,  ~\forall \alpha \in \pmone^{S} \\
 \norm{\vtwoempty}^2 ~=~ \vartwoempty & ~=~  1 &    \\
\end{align*}}
\vspace{-14 pt}
\end{minipage}
\hfill\vline
\hrule
\caption{$r$-round Mixed Relaxation for \maxkcsp($f$)}
\label{fig:mixed}
\end{figure}

Finally, the basic relaxation is a reduced form of the above mixed relaxation where only
those variables $\vartwo{S}{\alpha}$ are included for which $S  = S_C$ is the set of CSP variables
for some constraint $C$. The consistency constraints between pairs of vectors are included
only for those pairs that occur inside some constraint. The relaxation (after a minor rewriting)
is shown in Figure
\ref{fig:basic-sdp}.

\begin{figure}[ht]
\hrule
\vline
\begin{minipage}[t]{0.99\linewidth}
\vspace{-5 pt}
{\small
\begin{align*}
\mbox{maximize} &~~
\Ex{C \in \Phi}{\sum_{\alpha \in \{-1,1\}^k} f(\alpha \cdot b_C) \cdot \vartwo{S_C}{\alpha} }&  \\
\mbox{subject to}\\ 
\mydot{\vtwo{i}{1}}{\vtwo{i}{-1}} &~=~ 0 &\forall i \in [n] \\
\vtwo{i}{1} + \vtwo{i}{-1} &~=~ \vtwoempty &\forall i \in [n] \\
\norm{\vtwoempty}^2 &~=~ 1 \\
%
%
\sum_{\alpha \in \pmone^{S_C} \atop \alpha(i_1) = b_1, \alpha(i_2) = b_2}\vartwo{S_C}{\alpha} &~=~
\mydot{\vtwo{i_1}{b_1}}{\vtwo{i_2}{b_2}} &\forall C \in \Phi, i_1 \neq i_2 \in S_C, b_1,b_2 \in
\pmone\\
 \vartwo{S_C}{\alpha} &~\geq~ 0
  & \forall C \in \Phi, ~\forall \alpha \in \pmone^{S_C}
\end{align*}}
\vspace{-14 pt}
\end{minipage}
\hfill\vline
\hrule
\caption{Basic Relaxation for \maxkcsp($f$)}
\label{fig:basic-sdp}
\end{figure}

For an LP/SDP relaxation of \maxkcsp, and for a given instance $\Phi$ of the problem, we denote by
$\sdpopt(\Phi)$ the LP/SDP (fractional) optimum.
For the particular instance $\Phi$, the integrality gap is defined as $\sdpopt(\Phi)/\opt(\Phi)$.
The integrality gap of the relaxation is the supremum of integrality gaps over all instances. The
integrality gap thus defined is in terms of a ratio whereas we are concerned with the specific {\it gap
location} $1-o(1)$ versus $\rho(f)+o(1)$. 

\begin{definition} Let $\eps > 0$ be a constant.

A relaxation is said to have a $(1- \eps, \rho(f)+\eps)$-integrality gap if there exists a
CSP instance $\Phi$ such that $\sdpopt(\Phi) \geq 1-\eps$ and $\opt(\Phi) \leq \rho(f)+\eps$.

\end{definition}

We will use known results showing that the integrality gap for the basic relaxation as in Figure
\ref{fig:basic-sdp} implies a UG-hardness result as well as integrality gap for the mixed relaxation
as in Figure \ref{fig:mixed} for a super-constant number of rounds, while essentially
preserving the gap. The first implication is by Raghavendra \cite{ragh} and the second by Raghavendra and Steurer \cite{raghsteu2} and Khot and Saket \cite{ks09}. We state these results in a form suitable for our purpose.


\begin{theorem} {\rm \cite{ragh}} Let $\eps > 0$ be an arbitrarily small constant.

If the basic relaxation as in Figure \ref{fig:basic-sdp} has a $(1- \eps, \rho(f)+\eps)$-integrality gap,
then {\sf GapCSP}$(f)_{1-2\eps, \rho(f)+2\eps}$ is UG-hard.
\end{theorem}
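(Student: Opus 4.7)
The plan is to follow Raghavendra's standard blueprint: use the integrality gap instance $\Phi$ as a ``gadget'' inside a dictatorship test, and then compose this test with a \uniquegames instance in the usual way. Let $\Phi$ be a basic-SDP integrality gap instance with $\sdpopt(\Phi) \geq 1-\eps$ and $\opt(\Phi) \leq \rho(f)+\eps$, carrying SDP vectors $\{\vtwo{i}{b}\}$ and local distributions $\{D(S_C)\}$ over $\pmone^{S_C}$ for each constraint $C$. For a large long-code domain $[R]$ and a small noise rate $\eta$, the dictatorship test $\mathsf{Dict}_\Phi$ takes as input functions $F_1,\ldots,F_n : \pmone^R \to \pmone$, one per variable of $\Phi$. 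To execute one test: sample a constraint $C \in \Phi$ uniformly; for each variable $x_i$ in $C$ use the vector assignment $\vtwo{i}{b}$ to define a correlated Gaussian / noisy boolean block of $R$ coordinates (in the standard way, so that the resulting $k$-tuple of inputs to $F_{i_1},\ldots,F_{i_k}$ marginally looks like the local distribution $D(S_C)$, smoothed by $\eta$); accept iff $f\big(F_{i_1}(\cdot)\cdot b_{i_1},\ldots,F_{i_k}(\cdot)\cdot b_{i_k}\big)=1$.

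The first step is the completeness analysis. For a dictator $F_i(x)=x_{j}$, the accept probability reduces to $\E_{C}\E_{\alpha \sim D(S_C)}[f(\alpha \cdot b_C)]$ up to an additive $O(\eta)$ loss from the noise, which is exactly the SDP objective, hence at least $1-\eps-O(\eta)$; choosing $\eta$ small enough gives at least $1-2\eps$. The second and main step is soundness: if the acceptance probability exceeds $\rho(f)+2\eps$, some coordinate $j \in [R]$ has non-negligible influence on some $F_i$. This is where the bulk of the work lies and follows the invariance-principle argument of Mossel--O'Donnell--Oleszkiewicz: expand the acceptance probability in the Fourier representation of $f$, and argue that if every coordinate has low influence on every $F_i$, then by invariance one can replace the $\pmone$-valued inputs by Gaussians with the same covariance structure, without changing the acceptance probability by more than $o(1)$. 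The resulting Gaussian expression is precisely the value achieved by rounding the SDP solution of $\Phi$ via an appropriate Gaussian rounding, which can in turn be simulated by an integral assignment to $\Phi$ achieving value at least $\rho(f)+2\eps - o(1)$, contradicting $\opt(\Phi)\leq\rho(f)+\eps$.

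The final step is the reduction from \uniquegames to \maxkcsp$(f)$. Given a \uniquegames instance $\mathcal U$ on vertex set $V$ with label set $[R]$, produce a CSP instance on variable set $V \times \pmone^R$ (with the standard folding so the variable indexed by $(v,x)$ is the negation of the one at $(v,-x)$). For a random edge $(u,v)$ of $\mathcal U$ with permutation $\pi_{uv}$, sample a test of $\mathsf{Dict}_\Phi$ where each long-code function is read at $(v,x)$ and $(u,x\circ\pi_{uv})$ as required. Completeness follows from the fact that if $\mathcal U$ has a good labelling then the corresponding dictator functions pass with probability at least $1-2\eps$. Soundness uses the standard ``influence decoding'': if an assignment passes with probability more than $\rho(f)+2\eps$, then for many edges some coordinate is influential on both endpoints, and a randomized labelling reading a random influential coordinate satisfies a non-negligible fraction of $\mathcal U$, contradicting the UGC.

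The hard part is the soundness of the dictatorship test: one must verify carefully that the covariance structure of the correlated inputs matches the SDP inner products, that smoothing by $\eta$ both reduces influences enough to invoke the invariance principle and only loses $O(\eta)$ in completeness, and that the Gaussian expression after invariance is indeed simulable by an integral assignment (this last point is exactly Raghavendra's Gaussian rounding scheme and is where the basic relaxation's structure is essential). I would quote the invariance principle and the standard Unique Games composition as black boxes, and focus the write-up on specifying the parameters of the test and verifying that the integral assignment extracted from the Gaussian rounding achieves value $\geq \rho(f)+2\eps - o(1)$ against $\Phi$.
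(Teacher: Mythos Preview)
The paper does not prove this statement at all: it is quoted verbatim as a known result of Raghavendra \cite{ragh} and used as a black box. Your proposal is a reasonable high-level sketch of Raghavendra's actual argument (integrality gap $\to$ dictatorship test via correlated sampling from the SDP local distributions, completeness from dictators reproducing the SDP objective, soundness via the invariance principle reducing to a Gaussian rounding of $\Phi$, then standard Unique Games composition with influence decoding), so in that sense it is ``correct'' as an outline of the cited proof, but it is not something the present paper undertakes or requires you to reprove.
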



\begin{theorem}{\rm \cite{raghsteu2,ks09}} Let $\eps > 0$ be an arbitrarily small constant.

If the basic relaxation as in Figure \ref{fig:basic-sdp}  has a $(1- \eps, \rho(f)+\eps)$-integrality gap,
then the mixed relaxation as in Figure \ref{fig:mixed} has a  $(1-2\eps, \rho(f)+2\eps)$-integrality gap for a super-constant number of rounds.
\end{theorem}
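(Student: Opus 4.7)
The plan is to transform the basic-SDP integrality gap instance $\Phi$ into a new instance $\Phi^*$ whose constraint hypergraph has super-constant girth, and then lift the basic SDP solution to a feasible solution of the $r$-round mixed relaxation for any $r=\omega(1)$ smaller than that girth. This follows the blueprint of \cite{raghsteu2, ks09}.

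First I would produce $\Phi^*$ by the standard variable-replication and random-subsampling construction. Replace each variable $x_i$ of $\Phi$ by a block of $M$ copies $x_i^{(1)},\ldots,x_i^{(M)}$, and replace each constraint $C$ on variables $(x_{i_1},\ldots,x_{i_k})$ by $M^{k-1}$ constraints (or a suitably sized uniform random sample) obtained by independently picking a copy index $j_\ell \in [M]$ for each $\ell$. Standard random-hypergraph estimates show that a $1-o(1)$ fraction of these constraints survive after deleting all those involved in cycles of length less than a chosen $g=\omega(1)$, so the resulting $\Phi^*$ has constraint-hypergraph girth at least $g$.

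Next I would lift the basic SDP solution. Each copy $x_i^{(j)}$ inherits the vectors $\vtwo{i}{\pm 1}$ assigned to $x_i$, and each surviving constraint inherits the local distribution of its originating constraint in $\Phi$; the basic-SDP consistency conditions survive unchanged, so the objective value is at least $1-\eps$. To extend to local distributions on every variable-set $S$ with $|S|\leq r$, where $r$ is any fixed function with $r \ll g$, I use the girth: the sub-hypergraph of $\Phi^*$ induced by the constraints that touch $S$ is a disjoint union of trees. On each such tree I build a joint distribution by taking the product of its constraint-local distributions conditioned on agreement across shared variables; this is well-defined precisely because of the tree structure, and different components are taken independently. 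The pairwise marginals then match the SDP inner products $\langle \vtwo{i_1}{b_1}, \vtwo{i_2}{b_2}\rangle$ on any pair jointly contained in a constraint, and factor into the SDP-consistent singletons otherwise, yielding a feasible $r$-round mixed solution of value $\geq 1-2\eps$ for sufficiently large $M$.

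The main obstacle is the upper bound $\opt(\Phi^*) \leq \rho(f)+2\eps$, since replication adds many more constraints and could in principle permit better integer assignments. The argument would be to fix an assignment $A$ to $\Phi^*$, average within each block to obtain a per-variable bias $\beta_i \in [-1,1]$ for the original $x_i$, discretize these biases to a fixed finite grid, and observe that the expected fraction of constraints satisfied by $A$, taken over the random copy indices used to construct $\Phi^*$, equals the multilinear extension of the CSP value evaluated at $\beta$. The latter is at most $\opt(\Phi) \leq \rho(f)+\eps$, since any $[-1,1]$-biased randomized assignment on $\Phi$ admits a deterministic derandomization matching its expectation. A Chernoff bound over the independent constraint samples of $\Phi^*$, combined with a union bound over the finitely many post-discretization bias profiles, closes the gap at $\rho(f)+2\eps$ with probability $1-o(1)$; balancing the grid width against $M$ so that the Chernoff exponent beats the (exponential in $n$) number of profiles is the delicate quantitative step.
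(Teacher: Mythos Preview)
The paper does not actually prove this theorem: it is quoted from \cite{raghsteu2,ks09} as a black-box input (see the paragraph preceding the statement in Section~\ref{sec:prelims}). So there is no ``paper's proof'' to compare against; you are sketching the argument of the cited works.

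Your outline has a real gap on the completeness side. In your replication scheme every copy $x_i^{(j)}$ of an original variable $x_i$ inherits the \emph{same} vectors $\vtwo{i}{\pm 1}$. But the mixed relaxation in Figure~\ref{fig:mixed} imposes the vector-consistency constraint
\[
\vartwo{\{i_1,i_2\}}{(b_1,b_2)} ~=~ \mydot{\vtwo{i_1}{b_1}}{\vtwo{i_2}{b_2}}
\]
for \emph{every} pair $i_1\neq i_2$, not just for pairs that co-occur in a constraint. So two copies $x_i^{(j)}, x_i^{(j')}$ of the same original variable, having identical vectors, must be perfectly correlated in any $\m_S$ that contains both. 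Your tree-based construction, however, places copies that lie in different components of the constraint subgraph independently, and even copies in the same component get coupled only through the constraint-local distributions---there is no mechanism making $x_i^{(j)}$ and $x_i^{(j')}$ equal. Hence your lifted solution is infeasible. This is precisely why \cite{raghsteu2,ks09} do \emph{not} use direct replication: they instead compose the basic-SDP gap (viewed as a dictatorship test) with a Unique-Games instance that already has high-round mixed-hierarchy gaps; the different ``copies'' then correspond to different UG labels and receive genuinely different, near-orthogonal vectors, so the global SDP constraint is compatible with the local independence.

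A smaller point on soundness: your union bound should be over all $2^{nM}$ assignments to $\Phi^*$, not over discretized bias profiles. Two assignments with the same block-averaged bias vector have the same \emph{expected} satisfaction over the random construction of $\Phi^*$, but they are different random variables and can deviate differently on any fixed realization; grouping them does not save you a union-bound factor. The straightforward fix (union bound over all assignments, with the number of sampled constraints $\gg nM$) works and is exactly what the paper does later in its own LP gap construction (Lemma~\ref{soundlp:lm}).
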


\subsection{Measure Theory and Probability}
We provide a basic background on relevant tools from measure theory and probability.
For further reference, please see ~\cite{met,lifshits}.

\subsubsection*{Measures, Weak$^*$ Convergence and Signed Measures}

\begin{definition}
Given a set $X$ along with a $\sigma$-algebra $\cal{F}$ (i.e. a non-empty collection of subsets of
$X$ that is closed under complementation and countable union), a {\deffont measure}
on $X$ is a function $m:\cal{F}\to[0,\infty]$ satisfying:
\begin{itemize}
\item $m(\emptyset)=0$.
\item If $\{R_j\}_{j=1}^\infty  \subseteq \cal{F}$ is a countable collection of disjoint sets, then $m\left(\cup_{j=1}^\infty R_j\right)=\sum_{j=1}^\infty m(R_j)$.
\end{itemize}
\end{definition}
We will consider only finite measures, i.e. those with $m(X) < \infty$. In particular, we will be
interested in {\it probability measures}, i.e. those with $m(X)=1$. The class of all probability
measures on $X$ is denoted as ${\sf Prob}(X)$.

We note that the Borel $\sigma$-algebra ${\cal B}$ on $\R^n$
is the smallest $\sigma$-algebra that contains all open balls w.r.t. the standard Euclidean
metric (and the sets in ${\cal B}$ are called measurable).
It can be restricted to $X \subseteq \R^n$ leading to the induced $\sigma$-algebra on $X$,
which will be the $\sigma$-algebra under consideration below. We note also that for subsets of
$\R^n$, being compact is same as being closed and bounded via the Heine-Borel Theorem.
It is also equivalent to being sequentially compact (existence of a convergent subsequence for every
infinite sequence) by the Bolzano-Weierstrass Theorem.

We state the main measure-theoretic result that we need in a form convenient to us:

\begin{theorem} Let $X \subseteq \R^n$ be a compact set and $\{\Lambda_i\}_{i=1}^\infty$ be a sequence
of probability measures on $X$. Then there exists a sub-sequence $\{\Lambda_{i_j}\}_{j=1}^\infty$
and a probability measure $\Lambda$ on $X$ such that
for any continuous function $h: X \mapsto \R$,
\begin{equation} \label{eq:weak-star}
    \lim_{j \rightarrow \infty} \int h ~d\Lambda_{i_j} =  \int h ~d\Lambda.
\end{equation}
\end{theorem}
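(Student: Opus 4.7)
The plan is to combine a Cantor diagonal extraction on a countable dense subset of $C(X)$ with the Riesz representation theorem. Because $X \subseteq \R^n$ is compact (hence compact metrizable), the Banach space $C(X)$ of real-valued continuous functions on $X$ with the sup-norm is \emph{separable}: for example, restrictions of polynomials with rational coefficients form a countable dense set by the Stone--Weierstrass theorem. Fix such a dense family $\{h_k\}_{k=1}^\infty \subseteq C(X)$.

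Next, for each fixed $k$ the scalar sequence $\{\int h_k \, d\Lambda_i\}_{i=1}^\infty$ lies in the bounded interval $[-\|h_k\|_\infty, \|h_k\|_\infty]$ since $\Lambda_i(X)=1$. By the Bolzano--Weierstrass theorem we may extract, inductively and via a diagonal argument, a single subsequence $\{\Lambda_{i_j}\}_{j=1}^\infty$ along which $\int h_k \, d\Lambda_{i_j}$ converges for every $k$. To pass from the dense family to all of $C(X)$, given $h \in C(X)$ and $\eps > 0$, pick $h_k$ with $\|h - h_k\|_\infty < \eps/3$; then
\[
\left| \int h \, d\Lambda_{i_j} - \int h \, d\Lambda_{i_{j'}} \right| \;\leq\; 2\|h-h_k\|_\infty + \left| \int h_k \, d\Lambda_{i_j} - \int h_k \, d\Lambda_{i_{j'}} \right|,
\]
which is at most $\eps$ for $j,j'$ large. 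Hence $L(h) := \lim_{j\to\infty} \int h \, d\Lambda_{i_j}$ exists for every $h \in C(X)$.

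The functional $L: C(X) \to \R$ so defined is linear (as a limit of linear functionals), positive (if $h \geq 0$ then each $\int h\, d\Lambda_{i_j} \geq 0$), and normalized ($L(\mathbf{1}) = 1$). By the Riesz--Markov--Kakutani representation theorem applied to the compact Hausdorff space $X$, there exists a unique regular Borel probability measure $\Lambda$ on $X$ with $L(h) = \int h \, d\Lambda$ for every $h \in C(X)$. This is precisely the desired limit identity \eqref{eq:weak-star}.

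The only step that requires some care is the diagonal extraction, which is a standard Cantor argument, and verifying that the limit functional is positive and of norm one — which is immediate from the corresponding property of each $\Lambda_i$. The remaining invocation is the classical Riesz representation theorem; I do not anticipate any substantive obstacle beyond the routine verifications above.
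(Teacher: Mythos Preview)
Your proof is correct. It differs from the paper's treatment in that the paper does not give a direct argument at all: it simply cites a known result (Corollary 13.9 of the reference \cite{met}) that ${\sf Prob}(X)$ is compact and metrizable in the weak$^*$ topology for any compact metric space $X$, and then observes that compact metrizable implies sequentially compact, which unpacks to the displayed equation. Your approach is genuinely more elementary and self-contained: you build the limit functional directly via a diagonal extraction on a countable dense subset of $C(X)$ and then invoke Riesz representation to realize it as a measure. This is essentially a hands-on proof of the sequential Banach--Alaoglu theorem specialized to probability measures, whereas the paper treats the whole statement as a black box from measure theory. The advantage of your route is that it makes the mechanism transparent and avoids needing metrizability of the weak$^*$ topology; the paper's route is shorter to state but defers all content to the cited corollary.
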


This statement follows from the theorem stated below:

\begin{theorem}[Corollary 13.9 in \cite{met}]\label{wk*:thm}
Let $X$ be a compact metric space. Then the class of probability measures ${\sf Prob}(X)$ is compact
and metrizable in the weak$^*$ topology.
\end{theorem}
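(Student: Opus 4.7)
The plan is to prove this via the Riesz representation theorem combined with Banach--Alaoglu and a separability argument, which is the standard route to this result. First, I would identify each $\mu \in {\sf Prob}(X)$ with a positive linear functional on $C(X)$ (the space of continuous real-valued functions on $X$ equipped with the sup norm) via the map $\mu \mapsto \left( \phi \mapsto \int \phi \, d\mu \right)$. The Riesz--Markov--Kakutani representation theorem guarantees that this identification is a bijection between ${\sf Prob}(X)$ and the set of positive linear functionals on $C(X)$ of norm exactly $1$, so ${\sf Prob}(X)$ embeds inside the closed unit ball of the dual space $C(X)^*$. Under this identification, the weak$^*$ topology on ${\sf Prob}(X)$ precisely coincides with the topology in which $\mu_i \to \mu$ iff $\int \phi \, d\mu_i \to \int \phi \, d\mu$ for every $\phi \in C(X)$, matching the notion of convergence used in \eqref{eq:weak-star}.

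Next, I would invoke the Banach--Alaoglu theorem to conclude that the closed unit ball of $C(X)^*$ is compact in the weak$^*$ topology. The subset ${\sf Prob}(X)$ is cut out inside this ball by the conditions $\int \one \, d\mu = 1$ and $\int \phi \, d\mu \ge 0$ for every continuous $\phi \ge 0$, and each such condition defines a weak$^*$-closed subset (as it is the preimage of a closed set in $\R$ under a weak$^*$-continuous evaluation map). Hence ${\sf Prob}(X)$ is a weak$^*$-closed subset of a weak$^*$-compact set, and therefore is itself weak$^*$-compact.

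For metrizability, the key input is that $C(X)$ is separable whenever $X$ is a compact metric space; this follows from a Stone--Weierstrass type argument applied to a countable collection of Lipschitz functions generated by distances to a countable dense subset of $X$. Given a countable dense set $\{\phi_n\}_{n \ge 1}$ in the closed unit ball of $C(X)$, I would define the candidate metric
\[
d(\mu, \nu) ~:=~ \sum_{n=1}^{\infty} 2^{-n} \left\lvert \int \phi_n \, d\mu - \int \phi_n \, d\nu \right\rvert
\]
on the closed unit ball of $C(X)^*$, and verify that $d$ induces the weak$^*$ topology there. Restricting $d$ to ${\sf Prob}(X)$ yields the desired metric.

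The main delicate step is verifying that the topology induced by $d$ really coincides with the weak$^*$ topology, rather than merely being implied by it. One direction (weak$^*$-convergence implies $d$-convergence) is immediate from the dominated convergence-style estimate afforded by the $2^{-n}$ weights. The reverse direction requires a standard $\eps/3$ argument: given $\phi \in C(X)$ with $\|\phi\|_{\infty} \le 1$ and $\eps > 0$, choose $\phi_n$ with $\|\phi - \phi_n\|_{\infty} < \eps/3$; then for any $\mu, \nu$ in the unit ball of the dual, $\lvert \int \phi \, d\mu - \int \phi \, d\nu \rvert$ is controlled by $2 \|\phi - \phi_n\|_{\infty}$ plus $\lvert \int \phi_n \, d\mu - \int \phi_n \, d\nu \rvert$, and the latter is dominated by $2^n \cdot d(\mu, \nu)$. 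Once these pieces are assembled, compactness and metrizability of ${\sf Prob}(X)$ in the weak$^*$ topology follow, completing the proof.
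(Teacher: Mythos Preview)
Your proposal is a correct and standard proof of this result. However, the paper does not actually prove this theorem; it is quoted as Corollary 13.9 from the reference \cite{met} and used as a black box. So there is no proof in the paper to compare against, and your argument via the Riesz representation theorem, Banach--Alaoglu, and separability of $C(X)$ is precisely the route one would find in a measure theory textbook.
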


In words, the class ${\sf Prob}(X)$ can be endowed with a
suitable metric so that the metric topology
coincides with the weak$^*$ topology. Since ${\sf Prob}(X)$ is
compact and metrizable, it is also sequentially compact, i.e. every sequence has a convergent
subsequence. The convergence is w.r.t. the metric defined on  ${\sf Prob}(X)$ and as mentioned,
this is same as the convergence in the so-called weak$^*$ topology. The latter, by definition, is
precisely the statement that Equation \eqref{eq:weak-star} holds for every continuous function $h: X \mapsto \R$.

Let $X \subseteq \R^n, X' \subseteq R^{n'}$ be compact,
$\Lambda$ be a measure on $X$ and $\phi: X \mapsto X'$ be continuous. The
measure $\phi(\Lambda)$ on $X'$ is defined in a natural way as $\phi(\Lambda)(A') = \Lambda(\phi^{-1}(A'))$.
We will use this observation in two settings: (1) when $\phi$ is a projection of $X$ onto a subset
of co-ordinates $S \subseteq [n]$, the measure on $\R^{|S|}$ so obtained will be denoted as $\Lambda_S$
and refereed to as the projected measure.
(2) when $\phi$ is a bijection, we can pass back and forth between $\Lambda$ and $\phi(\Lambda)$,
regarding them as essentially the same.

Sometimes we will describe the construction of the measure $\phi(\Lambda)$ as above by informally
saying  ``sample $x \sim \Lambda$ and  take (or apply) $\phi(x)$''. When $h: X \mapsto \R$ is a
real valued function, we
will informally write $\Ex{x \in \Lambda}{h(x)}$, the ``expectation of $h(x)$ when
$x$ is sampled from $\Lambda$", to denote $\int h ~d\Lambda$.

We will also need the notion of a signed measure, which is
a generalization of the usual (non-negative) measure.
\begin{definition}
Given a set $X$ along with a $\sigma$-algebra $\cal{F}$, a {\deffont signed measure}
on $X$ is a function $m:\cal{F}\to[-\infty,\infty]$ allowed to take at most
one of the values in $\{-\infty, +\infty\}$ and satisfying:
\begin{itemize}
\item $m(\emptyset)=0$.
\item If $\{R_j\}_{j=1}^\infty  \subseteq \cal{F}$ is a countable collection of disjoint sets, then $m\left(\cup_{j=1}^\infty R_j\right)=\sum_{j=1}^\infty m(R_j)$
as long as the series $\sum_{j=1}^\infty m(R_j)$ is absolutely convergent.
\end{itemize}
\end{definition}

Let $\{\Lambda_i\}_{i=1}^q$ be a finite set of probability measures on $X$ with underlying $\sigma$-algebra ${\cal F}$ and $\{\alpha_i\}_{i=1}^q$ be
(possibly negative) reals. Then $\Lambda = \sum_{i=1}^q \alpha_i \Lambda_i$ is a signed measure. Formally,
for any $A \in {\cal F}$, $\Lambda(A) =
\sum_{i=1}^q \alpha_i \Lambda_i(A)$. We will consider only such signed measures, arising as finite linear
combinations of probability measures. Such a signed measure may identically vanish, i.e. $\Lambda(A)=0
~\forall A \in {\cal F}$. This is same as saying that if one writes $\Lambda = \Lambda' - \Lambda''$
as a difference of two non-negative measures (by grouping all $\Lambda_i$ with positive and negative coefficients respectively), then $\Lambda'$ and $\Lambda''$ are identical.

\subsubsection*{Gaussian Measures}

Let $\Sigma$ be an invertible, symmetric $t \times t$ matrix and $\mu$ be a $t$-dimensional vector.
The Gaussian measure of a (measurable) set
$A \subseteq \R^t$  w.r.t. {\it means}
$\mu = (\mu_1,\ldots, \mu_t)$ and the {\it covariance matrix} $\Sigma$ is defined as
 $$ \int_A   \gamma_t( y = (y_1,\ldots,y_t) , (\Sigma, \mu)) ~~ dy_1 dy_2 \ldots dy_t, $$
where $\gamma_t( \cdot, (\Sigma, \mu))$ is the Gaussian density function
 $$\gamma_t(y  , (\Sigma, \mu)) =
 \frac{1}{\sqrt{(2\pi)^t\mathrm{Det}(\Sigma)}} ~e^{-\frac{1}{2} \cdot (y-\mu)^T\Sigma^{-1}(y-\mu)}. $$
The random variables $y_1,\ldots,y_t$ then satisfy $\ExpOp[y_i] = \mu_i$ and
$\Sigma_{ij} = \ExpOp[y_i y_j ] - \mu_i \mu_j$.

With $\Sigma, \mu$ as above, one can also define a Gaussian measure on $(\R^d)^t$ that is a
product measure with the measure on each of the $d$ co-ordinates as above. Formally, if
${\bf y} = ({\bf y}_1,\ldots, {\bf y}_t)$ with ${\bf y}_i \in \R^d$ and one denotes
${\bf y}^{(\ell)} \in \R^t$ as the vector of $\ell^{th}$ co-ordinates of ${\bf y}_1,\ldots, {\bf y}_t$
respectively for $\ell \in [d]$, then the measure is given by a density
$\gamma_{t,d}({\bf y}, (\Sigma, \mu))$ defined as:
 $$\gamma_{t,d} ({\bf y}  , (\Sigma, \mu)) = \prod_{\ell=1}^d
 \frac{1}{\sqrt{(2\pi)^t\mathrm{Det}(\Sigma)}} ~ e^{-\frac{1}{2}\cdot ({\bf y}^{(\ell)}-\mu)^T\Sigma^{-1}({\bf y}^{(\ell)}-\mu)}. $$

\subsection{Fourier Representation of Functions on the Boolean Hypercube}

We note a basic fact that every function $f: \{-1,1\}^k \mapsto {\mathbb R}$ can be represented in the Fourier basis:
$$ f(x) =  \hat{f}(\emptyset) + \sum_{S \subseteq [k], S \not= \emptyset} \hat{f}(S) \cdot \prod_{i \in S} x_i. $$
When $f$ is a predicate, i.e. $\{0,1\}$-valued, the empty Fourier coefficient equals the density of the
predicate, i.e. $\hat{f}(\emptyset) = \rho(f)$.
We crucially use the following simple fact about multi-linear polynomials.  Multi-linearity is essential here as the polynomial      $-x^2$ shows.
\begin{lemma} \label{multi-linear:lem}
Let $g: \{-1,0,1\}^n \mapsto {\mathbb R}$ be a multi-linear polynomial with no constant term that is upper bounded by zero on all inputs. Then $g$ is
identically zero.
\end{lemma}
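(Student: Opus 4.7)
The plan is to reduce to the $\{-1,1\}$-hypercube and use Fourier inversion. Since $\{-1,1\}^n \subseteq \{-1,0,1\}^n$, we have $g(x) \le 0$ for every $x \in \{-1,1\}^n$. Write
\[
g(x) ~=~ \sum_{S \subseteq [n], \, S \neq \emptyset} c_S \prod_{i \in S} x_i,
\]
using that $g$ has no constant term. For each non-empty $S \subseteq [n]$, the character $\prod_{i \in S} x_i$ has mean zero on $\{-1,1\}^n$, so
\[
\sum_{x \in \{-1,1\}^n} g(x) ~=~ \sum_{S \neq \emptyset} c_S \sum_{x \in \{-1,1\}^n} \prod_{i \in S} x_i ~=~ 0.
\]
Combined with $g(x) \le 0$ pointwise on $\{-1,1\}^n$, this forces $g(x) = 0$ for every $x \in \{-1,1\}^n$.

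Now I would invoke the fact that a multi-linear polynomial is determined by its values on $\{-1,1\}^n$: Fourier inversion gives
\[
c_S ~=~ \frac{1}{2^n} \sum_{x \in \{-1,1\}^n} g(x) \prod_{i \in S} x_i ~=~ 0
\]
for every $S$. Hence $g$ is the zero polynomial and in particular vanishes on all of $\{-1,0,1\}^n$.

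There is no real obstacle here; the only thing to notice is that the values of $g$ at points with some coordinates equal to $0$ are not needed at all — the hypothesis is used only at $\pm 1$ inputs. This is consistent with the paper's remark that the example $-x^2$ shows multi-linearity is essential: without it, $-x^2$ vanishes on $\{-1,1\}$ but not identically, whereas the multi-linear conclusion forces zero Fourier coefficients and hence the zero polynomial.
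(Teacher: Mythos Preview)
Your proof is correct and follows essentially the same idea as the paper's: the paper observes that since $g$ has no constant term, its expectation under the uniform distribution is zero, so a non-positive function with mean zero must vanish everywhere. You carry this out explicitly on $\{-1,1\}^n$ and then add the Fourier-inversion step to conclude all coefficients are zero, which the paper leaves implicit; otherwise the arguments coincide.
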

\begin{proof} Since $g$ has no constant term, $\Ex{}{g} = 0$ where the expectation is w.r.t. the uniform distribution on the inputs. Hence, if $g$ is not identically zero,
it must take a value that is strictly positive as well as a value that is strictly negative.
\end{proof}

\subsection{Our Results}
In this section we present formal statements of our results.
Given a predicate $f:\{-1,1\}^k\to\B$, let
$\cal{D}(f)$ denote the set of all probability distributions over
$f^{-1}(1)$.
%
\begin{definition}\label{def:cf}
For $\nu \in \cal{D}(f)$, we let $\zeta(\nu)$ denote the
$(k+1) \times (k+1)$ symmetric {\deffont moment matrix}
for $\nu$ such that:
\begin{align*}
\forall i\in \{0\} \cup [k]:\ \zeta(i,i) &~=~ 1 \mcom \\
\forall i\in[k]:\ \zeta(0,i) &~=~ \Ex{x \sim \nu}{x_i} \mcom \\
\forall i,j\in[k], i\neq j:\ \zeta(i,j) &~=~ \Ex{x \sim \nu}{x_ix_j} \mper
\end{align*}
Also, let $\cal{C}(f) \subseteq \R^{(k+1)\times (k+1)}$
denote the compact, convex  set of all moment matrices:
\[ \cal{C}(f) ~:=~\{\zeta(\nu) \suchthat \nu \in\cal{D}(f)\}. \]
\end{definition}

Note that the definition of the polytope ${\cal C}(f)$  defers somewhat from that in
the introduction of the paper (it is now a $(k+1)\times (k+1)$ matrix as opposed to
$\left( k+ \binom{k}{2} \right)$-dimensional vector), but this difference is inconsequential.

For $S \subseteq [k]$, let $\zeta_S$ denote
$\zeta$ restricted to the rows and columns
of indices in $S \cup \{0\}$.
For a permutation $\pi:S\to S$ we use $\zeta_{S,\pi}$
to denote a permutation of the submatrix $\zeta_S$ with the
coordinates of $S$ permuted according to $\pi$.
Also, for a $|S|$-dimensional vector of signs $b \in \pmone^{|S|}$,
let $\zeta_{S,\pi,b}  = \zeta_{S,\pi}\circ ((1 ~b)(1 ~b)^T)$
i.e.,  the matrix obtained by taking the Hadamard product (entrywise product)
of the matrices $\zeta_{S,\pi}$ and $(1 ~b)(1 ~b)^T$.

\begin{definition}\label{cvxptppred:dfn}
Let $\Lambda$ be a probability measure supported on $\cal{C}(f)$.
Then, for $S \sub [k]$, let $\Lambda_S$ denote the measure on
$(|S|+1) \times (|S|+1)$ matrices obtained by sampling $\zeta \sim \Lambda$ and taking the matrix
$\zeta_S$.
Let $\pi : S \to S$ be any permutation and let
 $b \in \pmone^{|S|}$ be a vector of signs.
We denote by $\Lambda_{S,\pi,b}$ the measure on
$(|S|+1) \times (|S|+1)$ matrices obtained by sampling
$\zeta \sim \Lambda$ and taking the matrix $\zeta_{S,\pi,b}$.
\end{definition}

We define a generic family of algorithms based on $d$-dimensional rounding of the vector solution
to the basic relaxation, Figure \ref{fig:basic-sdp}. We choose to state an informal definition
here as the exact rounding process is a bit cumbersome, formally described in Subsection \ref{Lge0rnd:sbs}.

\begin{definition} \mbox{(Informal)}
A {\deffont $d$-dimensional rounding algorithm} is a polynomial time algorithm based on an odd
measurable function
$\psi:\R^d\to [-1, 1]$. The algorithm solves the basic relaxation for \maxkcsp$(f)$,
projects the SDP vectors onto a random $d$-dimensional subspace and then rounds them
to $\{-1,1\}$ values according to (biases given by) $\psi$.
The algorithm may draw the function $\psi$ itself from a certain (pre-determined) distribution.
%
\end{definition}

Our main result appears below. It states that a predicate either admits a non-trivial approximation based on a
$(k+1)$-dimensional rounding algorithm or is  approximation resistant. This ``dichotomy" is characterized precisely by the existence of a measure $\Lambda$ on
${\cal C}(f)$ as in Definition \ref{def:chars}.

\begin{theorem}\label{sdp:thm}
Given $f : \pmone^k\to\{0,1\}$, the following ``dichotomy" holds:
\begin{itemize}
\item Either there is a constant $\eps > 0$ and a $(k+1)$-dimensional rounding algorithm that
given a $(1-\eps)$-satisfiable instance of \maxkcsp$(f)$, outputs an assignment $A$ such that
$\Ex{A}{ \sat(A)} \geq \rho(f) + \eps$ (i.e. achieves a non-trivial approximation),

\item Or there exists a probability measure $\Lambda$ on ${\mathcal C}(f)$,
such that for all $t\in[k]$, and a uniformly random choice of $S$ with  $|S|=t$,
$\pi:S\to S$ and $b\in\{\pm 1\}^{|S|}$, the following signed measure on $(t+1) \times (t+1)$ matrices:
\begin{equation}
\Lambda^{(t)} ~:=~
\ExpOp_{|S| = t} ~\ExpOp_{\pi : S \to S} ~\Ex{b \in \pmone^{|S|}}{ \left( \prod_{i\in S} b_i \right)   \cdot\HF(S) \cdot \Lambda_{S,\pi,b}}\label{main:eqn}
\end{equation}
is identically zero. In this case for every $\eps > 0$, the predicate has a  $(1-\eps,
\rho(f) + \eps)$ integrality gap for the basic relaxation and (hence) for the
mixed relaxation with a super-constant number of rounds and
is  approximation
resistant, i.e. {\sf GapCSP}$(f)_{1-\eps, \rho(f) + \eps}$ is UG-hard.
\end{itemize}
\end{theorem}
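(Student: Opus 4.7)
The plan is to realize the dichotomy as the outcome of a two-player zero-sum game and then to translate the two cases of min-max duality into the two alternatives of the theorem. Let $\Ang$ (the ``algorithm player'') have strategies consisting of odd measurable functions $\psi:\mathbb{R}^{k+1}\to[-1,1]$, and let $\Dev$ (the ``hardness player'') have strategies consisting of probability measures $\lambda$ on $\mathcal{C}(f)$. Define the payoff to $\Ang$ as
\[
\pay(\lambda,\psi) ~:=~ \Ex{\zeta\sim\lambda}{\Ex{\bfy_1,\ldots,\bfy_k\sim\mathcal{N}_{k+1}(\zeta)}{\sum_{S\neq\emptyset}\HF(S)\prod_{i\in S}\psi(\bfy_i)}},
\]
so that $\pay$ is exactly the advantage over $\rho(f)$ achieved when $\psi$ is used to round an SDP solution whose distribution of local moment matrices is $\lambda$. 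Because $\Ang$ can always play a random sign function achieving payoff $0$, the value of this game is non-negative.

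Since both strategy spaces are infinite, I cannot apply Von Neumann's theorem directly. I would instead work with a sequence of finite discretizations: discretize $\mathcal{C}(f)$ at finer and finer scales, and replace $\psi$ by its values on a fine grid in $\mathbb{R}^{k+1}$, truncated to a large compact box. For each finite game, min-max gives an equilibrium pair with value $L_n\geq 0$. Using compactness of $\mathsf{Prob}(\mathcal{C}(f))$ in the weak$^*$ topology (Theorem \ref{wk*:thm}) and of the space of odd $[-1,1]$-valued bias functions (say, in a suitable $L^2$-sense against Gaussian test densities), I would pass to a subsequential limit to obtain a limiting value $L\geq 0$, a pure equilibrium measure $\Lambda$ on $\mathcal{C}(f)$, and a mixed equilibrium $\Gamma$ over rounding functions. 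A care point is to delete the non-multilinear terms (monomials $\psi(\bfy)^2$ arising when two of the $\bfy_i$ coincide after discretization) from $\pay$ at the outset; the error introduced by this deletion is shown to be negligible compared to $L$ when the discretization is fine enough, which is exactly what is needed to preserve the algorithmic conclusion in the $L>0$ case.

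If $L>0$, then by construction a rounding function drawn from the limit of the finite mixed strategies $\Gamma$ achieves advantage at least $L/2$ over $\rho(f)$ on every local distribution, and after translating the game setup back into the language of the basic SDP relaxation this yields the claimed $(k+1)$-dimensional rounding algorithm on $(1-\eps)$-satisfiable instances. If $L=0$, then for every odd $\psi$ on the grid,
\[
\Ex{\zeta\sim\Lambda}{\Ex{\bfy_1,\ldots,\bfy_k\sim\mathcal{N}_{k+1}(\zeta)}{\sum_{S\neq\emptyset}\HF(S)\prod_{i\in S}\psi(\bfy_i)}} ~\leq~ 0.
\]
View this expression (in its discretized finite form) as a multilinear polynomial with no constant term in the ``variables'' $\{\psi(\bfy)\}$ ranging over grid points. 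By Lemma \ref{multi-linear:lem}, the polynomial must be identically zero, and equating its coefficients of each monomial $\prod_{i=1}^t\psi(\bfy_i)$ to zero, then grouping contributions by $|S|=t$, by permutations $\pi$ of $[t]$, and by the sign vector $b\in\pmone^t$ arising from the oddness of $\psi$, produces the identity
\[
\forall\,\bfy_1,\ldots,\bfy_t:\qquad \int \gamma_{t,k+1}\bigl((\bfy_1,\ldots,\bfy_t),\xi\bigr)\,d\Lambda^{(t)}(\xi) ~=~ 0,
\]
where $\Lambda^{(t)}$ is the signed measure in \eqref{main:eqn}. Because Gaussian density kernels are dense enough (in an appropriate sense) to separate finite signed Borel measures on $\mathcal{C}(f)$, this forces $\Lambda^{(t)}\equiv 0$ for every $t\in[k]$, which is exactly the vanishing condition asserted in the theorem.

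Given the vanishing measure $\Lambda$, the integrality gap for the basic relaxation is built by generalizing the Feige--Schechtman instance: take variables to be points of $\mathbb{R}^N$ for large $N$ (folded under $\bfy\leftrightarrow-\bfy$), and generate each constraint by sampling $\zeta\sim\Lambda$ and then $k$ Gaussian points with correlation matrix $\zeta$. The vector assignment is $\bfy\mapsto\bfy$ itself, so basic SDP value tends to $1-o(1)$, while any global assignment $\psi:\mathbb{R}^N\to\pmone$ produces a Fourier expression whose signed-measure integrand is $\Lambda^{(t)}$ in each level $t$, hence vanishes and the integer value cannot exceed or even deviate from $\rho(f)+o(1)$. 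UG-hardness and mixed-hierarchy gaps then follow automatically from the cited results of Raghavendra and of Raghavendra--Steurer / Khot--Saket. The main obstacles I anticipate are entirely technical and concentrated in the middle step: (i) organizing the discretization of $\mathcal{C}(f)$ and of $\mathbb{R}^{k+1}$ so that the min-max values $L_n$ actually converge and the deleted non-multilinear terms remain negligible relative to $L_n$, and (ii) justifying the density argument that converts the vanishing of all Gaussian integrals against $\Lambda^{(t)}$ into identical vanishing of the signed measure itself. The specific choice $d=k+1$ enters crucially here because one needs enough ambient dimension for $k$ Gaussians with an arbitrary correlation matrix $\zeta\in\mathcal{C}(f)$ to be non-degenerate, while keeping $d$ bounded in $k$ so that the rounding remains $(k+1)$-dimensional.
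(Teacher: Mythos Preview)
Your outline tracks the paper's argument closely---the zero-sum game, the finite discretizations, the multilinear-polynomial argument via Lemma~\ref{multi-linear:lem} when $L=0$, and the Feige--Schechtman-style gap instance---but two technical points are misidentified or missing, and without them step (ii) would not go through.

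First, you never introduce the perturbed body $\mathcal{C}_\delta(f)=\{(1-\delta)\zeta+\delta\,\mathbb{I}_{k+1}:\zeta\in\mathcal{C}(f)\}$. Working directly on $\mathcal{C}(f)$ allows singular covariance matrices, and then the payoff is not Lipschitz in $\zeta$, the Gaussian densities and their gradients are unbounded, and the power-series/Stone--Weierstrass argument that kills $\Lambda^{(t)}$ has no uniform control. The paper runs the entire game on $\mathcal{C}_\delta(f)$ (Claim~\ref{eigenvalues-delta:clm}, Lemma~\ref{payoff-continuous:lm}, Lemma~\ref{smth:lm}) and only at the very end transfers the vanishing measure back to $\mathcal{C}(f)$ via the affine bijection (Claim~\ref{vanishing-measure:clm}). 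Second, your explanation of why $d=k+1$ is needed is not the operative one: non-degeneracy of the Gaussians is handled by the $\delta$-perturbation, not by the ambient dimension. The actual role of $d=k+1$ appears inside the density step: after the change of variables that writes the Gaussian density as $g^{(t)}_\beta(M,N)\exp(-\tfrac12 M\bullet X+\langle N,Z\rangle)$ with $X_{ij}=\langle\bfy_i,\bfy_j\rangle-\beta\,\mathbb{1}[i=j]$ and $Z_i=\langle\bfy_i,\mathbf{1}\rangle$, one needs the pair $(X,Z)$ to range over a full open box around the origin so that the vanishing of the integral there forces all power-series coefficients (hence all moments of $\tilde\Lambda^{(t)}$) to vanish. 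Realizing an arbitrary small $(X,Z)$ amounts to realizing a diagonally dominant $(t+1)\times(t+1)$ Gram matrix by vectors $\bfy_0,\ldots,\bfy_t\in\mathbb{R}^d$, which requires $d\ge t+1$; since $t$ ranges up to $k$, one takes $d=k+1$ (Claim~\ref{realizable:clm}). With these two corrections in place, your plan coincides with the paper's.
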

Similarly, we obtain a ``dichotomy" for the integrality gap in the Sherali-Adams LP hierarchy.
The characterization is syntactically similar once
the polytope ${\cal C}(f)$ is replaced by the polytope ${\cal C}^*(f)$ consisting of
only the first moment vectors of distributions supported on $f^{-1}(1)$ (and is therefore
the same as the convex hull of $f^{-1}(1)$). For a measure $\Lambda^*$ on ${\cal C}^*(f)$
and a subset $S \subseteq [k]$, the projected measure $\Lambda^*_S$ and the measure
$\Lambda^*_{S,\pi,b}$ for a permutation $\pi: S \mapsto S$ and signs $b \in \{-1,1\}^S$ are
defined in an analogous manner.
The family of generic algorithms is now defined w.r.t. only the first moments, i.e. the
algorithm can ``use" only the biases computed by the LP relaxation.

\begin{definition}
For $\nu \in \cal{D}(f)$, we let $\zeta(\nu)$ denote the
$k$-dimensional vector such that:
\[\forall i \in [k] : \ \zeta(i) ~:=~ \Ex{x \sim \nu}{x_i}.\]
Let $\cal{C}^*(f) \subseteq \R^k$ denote the convex, compact set:
\[\cal{C}^*(f) ~:=~ \{\zeta(\nu): \nu \in \cal{D}(f)\}.\]
\end{definition}

\begin{definition} \mbox{(Informal)}
A {\deffont $k$-round LP rounding algorithm} is a polynomial time algorithm based on an odd measurable function
$\psi: [-1,1] \to [-1, 1]  $. The algorithm solves the $k$-round
Sherali-Adams relaxation for \maxkcsp$(f)$ and then a CSP variable with bias
$p$ (as computed by the LP relaxation) is rounded to a $\{-1,1\}$ value with bias $\psi(p)$,
independently for different variables.
The algorithm may draw the function $\psi$ itself from a certain (pre-determined) distribution.
%
\end{definition}

\begin{theorem}\label{lp:thm}
Given $f : \pmone^k\to\{0,1\}$, the following ``dichotomy" holds:
\begin{itemize}
\item Either there is a constant $\eps > 0$ and a $k$-round LP rounding algorithm that
given a $(1-\eps)$-satisfiable instance of \maxkcsp$(f)$, outputs an assignment $A$ such that
$\Ex{A}{ \sat(A) } \geq \rho(f) + \eps$ (i.e. achieves a non-trivial approximation),

\item Or there exists a probability measure $\Lambda^*$ on ${\mathcal C}^*(f)$,
such that for all $t\in[k]$, and a uniformly random choice of $S$ with  $|S|=t$,
$\pi:S\to S$ and $b\in\{\pm 1\}^{|S|}$, the following signed measure on $t$-dimensional
vectors:
\begin{equation}
\Lambda^{*,(t)} ~:=~
\ExpOp_{|S| = t} ~\ExpOp_{\pi : S \to S} ~\Ex{b \in \pmone^{|S|}}{ \left( \prod_{i\in S} b_i \right)   \cdot\HF(S) \cdot \Lambda^*_{S,\pi,b}}\label{mainlp:eqn}
\end{equation}
is identically zero. In this case for every $\eps > 0$, the predicate has a  $(1-\eps,
\rho(f) + \eps)$ integrality gap for a super-constant number of rounds of the Sherali-Adams
LP relaxation.
\end{itemize}
\end{theorem}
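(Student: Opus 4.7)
\begin{proofsketch}
The plan is to mirror the proof of Theorem~\ref{sdp:thm}, but replace the convex body $\cal{C}(f)$ by an auxiliary body $\tilde{\cal{C}}(f)$ that ``pretends'' the second moments exist even though the Sherali–Adams LP provides no such information. Concretely, for each $\nu\in\cal{D}(f)$, set $\zeta^*(\nu)(i) = \Ex{z\sim\nu}{z_i}$ for $i\in[k]$, and then adjoin dummy second moments $\zeta(i,i)=1$ and $\zeta(i,j)=\zeta^*(\nu)(i)\cdot\zeta^*(\nu)(j)$, i.e.\ the second moments of $k$ independent Gaussians whose means match the first moments. This gives a (non-convex) body $\tilde{\cal{C}}(f)\subseteq\R^{(k+1)\times(k+1)}$ whose projection onto the first moments is exactly $\cal{C}^*(f)$. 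Because the second moments factorize, any $d$-dimensional rounding $\psi:\R^d\to[-1,1]$ applied after projecting to Gaussian vectors with correlation matrix $\zeta\in\tilde{\cal{C}}(f)$ only ever uses $\psi$ through the single-variable biases, and hence corresponds to an honest LP-rounding algorithm. For the LP setting we can moreover take $d=1$: the vectors being ``projected'' are just the biases themselves.

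Next I would set up a two-player zero-sum game as in the overview, with \Dev's pure strategies being probability measures $\lambda$ supported on $\tilde{\cal{C}}(f)$, \Ang's pure strategies being odd measurable $\psi:[-1,1]\to[-1,1]$ (or rather, a finite discretized version), and the payoff given by the multilinear-after-truncation analogue of Equation~\eqref{payoff-overview}. After suitably discretizing $\tilde{\cal{C}}(f)$ and the space of rounding functions so that the payoff becomes a finite bilinear form, Von Neumann's min-max theorem applies. Standard compactness and weak$^*$-limit arguments (using Theorem~\ref{wk*:thm} on the compact set $\tilde{\cal{C}}(f)$) let one pass to the limit and obtain either (a) a distribution $\Gamma$ over rounding functions $\psi$ that beats $\rho(f)$ by $\Omega(1)$ on \emph{every} instance-measure $\lambda$, yielding the $k$-round LP rounding algorithm promised in the first alternative; or (b) a probability measure $\tilde\Lambda$ on $\tilde{\cal{C}}(f)$ for which the analogue of Equation~\eqref{eqn:Lambda} holds for all odd $\psi:\R\to\{-1,1\}$. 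The multilinear-polynomial argument (Lemma~\ref{multi-linear:lem}) plus the richness of one-dimensional Gaussian densities then forces the signed measures $\tilde\Lambda^{(t)}$ to vanish identically for every $t\in[k]$. Projecting $\tilde\Lambda$ onto the first $k$ coordinates gives a probability measure $\Lambda^*$ on $\cal{C}^*(f)$, and projection commutes with the construction in~\eqref{mainlp:eqn}, so $\Lambda^{*,(t)}$ vanishes identically as well.

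It remains to turn $\Lambda^*$ into a Sherali–Adams integrality gap; this is the step I expect to be the main obstacle, because unlike the SDP case we cannot invoke the black-box results of Raghavendra–Steurer~\cite{raghsteu2} and Khot–Saket~\cite{ks09}. Following the overview, I would first build a \emph{continuous} gap instance whose variables are points $x\in[-1,1]$ (with $x$ and $-x$ folded), and whose constraints are drawn by sampling $\zeta\sim\Lambda^*$ and placing the predicate on the $k$ variables $\zeta(1),\ldots,\zeta(k)$ with local distribution $\nu$ satisfying $\zeta=\zeta^*(\nu)$. The fact that $\Lambda^{*,(t)}$ vanishes for all $t$ translates, via Fourier expansion and the folding, into the statement that every measurable $\psi:[-1,1]\to[-1,1]$ satisfies exactly $\rho(f)$ fraction of constraints in expectation; this is the soundness. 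For completeness, the trivial LP solution assigning each point $x$ the bias $x$ yields value $1-o(\eps)$ on near-satisfiable $\Lambda^*$.

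Finally I would discretize and blow up to get the actual finite Sherali–Adams gap. The interval $[-1,1]$ is replaced by an equally spaced grid $x_1,\ldots,x_s$ (respecting folding), each $x_i$ is blown up into a block of $n/s$ variables, and a random hypergraph of $m\gg n$ constraints is drawn by the sampling rule described above. Deleting a small fraction of edges yields super-constant girth. Soundness for the discrete instance is inherited from the continuous one: any $\pmone$-assignment $A$ induces a function $\psi:\{x_1,\ldots,x_s\}\to[-1,1]$ (block averages), and the per-constraint satisfaction probability is $\rho(f)+o(1)$ regardless of $A$; Chernoff and a union bound over assignments complete the soundness. For the Sherali–Adams solution, I would follow the recipe of de la Vega–Kenyon~\cite{delavega} as reformulated via the ``nearly correct solution + correction step'' of~\cite{raghsteu2,ks09}: the high-girth property lets one define consistent local distributions on every $r$-subset with $r$ growing super-constantly, by using independence across blocks combined with the local distributions at hyperedges. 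The main technical care goes into verifying the marginal consistency conditions of Figure~\ref{fig:SA-lp} across overlapping $r$-sets in the presence of folding; this is where the heaviest bookkeeping lies.
\end{proofsketch}
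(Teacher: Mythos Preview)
Your overall plan matches the paper's approach closely: replace $\cal{C}(f)$ by the auxiliary body $\tilde{\cal{C}}(f)$ with dummy (product) second moments, rerun the game argument from Section~\ref{sdpthm:sec}, project the resulting vanishing measure to $\cal{C}^*(f)$, and then build the Sherali--Adams gap by adapting the de~la~Vega--Kenyon construction on a high-girth random instance with blocks. The identification of the integrality-gap step as the main new work, and your outline of it (block blow-up, soundness via block averages inherited from the continuous instance, local distributions propagated along the tree structure, then a correction step as in~\cite{raghsteu2,ks09}), is exactly what the paper does.

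There is, however, one genuine gap: your claim that ``for the LP setting we can moreover take $d=1$'' and that ``the richness of one-dimensional Gaussian densities'' suffices to force $\tilde\Lambda^{(t)}$ to vanish. The paper does \emph{not} do this; it reruns the Section~\ref{sdpthm:sec} argument ``as is'', keeping $d=k+1$. The reason $d=k+1$ was chosen in the first place is Claim~\ref{realizable:clm}: to conclude that $\tilde\Lambda^{(t)}$ vanishes one needs the exponent variables $X_{ij}=\ip{\y_i,\y_j}$, $X_{ii}=\ip{\y_i,\y_i}-\beta$, and $Z_i=\ip{\y_i,\bfone}$ to range over an open box \emph{independently}, so that the power-series/Stone--Weierstrass argument of Lemma~\ref{measure-zero:lm} applies. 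With $d=1$ you have $X_{ii}=y_i^2-\beta$ and $Z_i=y_i$, which are functionally dependent, so the realizability step fails outright. (Note also that after the $\delta$-smoothing of $\tilde{\cal{C}}(f)$ the covariance matrix is no longer exactly diagonal---the off-diagonal entries become $\delta(1-\delta)\mu_i\mu_j$---so the Gaussians are not even independent, and the payoff does not factor cleanly.) The paper sidesteps all of this by keeping $d=k+1$ in the game; the fact that the \emph{resulting algorithm} can be interpreted as a one-dimensional LP rounding $\psi:[-1,1]\to[-1,1]$ is a separate observation about what information the rounding actually uses, not a simplification of the game itself. If you replace your $d=1$ game by the $d=k+1$ game verbatim from Section~\ref{sdpthm:sec}, the rest of your sketch goes through.
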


As described in the introduction, we obtain additional interesting observations and results,
e.g. approximation resistance for $k$-partite version of CSPs and in all cases, strong approximation
resistance instead of just approximation resistance. We skip
their formal statements and proofs from the current version of the paper. The proofs of our main
results, namely Theorem \ref{sdp:thm} and Theorem \ref {lp:thm}, appear in Section~\ref{sdpthm:sec} and
Section~\ref{lpthm:sec} respectively.




%

%

\subsection{The Unique Games Conjecture}

We present the definitions of the Unique Games problem and UG-hardness.

\begin{definition} A Unique Games instance ${\cal L}(G(V,E), [L], \{ \pi_{v,w} \}_{(v,w)\in E}  )$
consists of a graph $G(V,E)$, a set of labels $[L]$ and a set of permutations $\pi_{v,w}: [L] \mapsto
[L]$, one for each edge of the graph (the edges have an implicit direction). A labeling is an assignment
$A: V \mapsto [L]$. The labeling satisfies an edge $(v,w)$ if $\pi_{v,w}(A(v)) = A(w)$. $\opt({\cal L})$
is the maximum fraction of edges satisfied by any labeling.
\end{definition}

Let {\sf GapUG}$_{1-\delta, \delta}$ denote the gap version of the Unique Games problem where the
instance ${\cal L}$ is promised to have either ~$\opt({\cal L}) \geq 1-\delta$ ~or ~$\opt({\cal L})\leq \delta$.
Khot \cite{khot} conjectures that for an arbitrarily small constant $\delta > 0$,
{\sf GapUG}$_{1-\delta, \delta}$ is NP-hard on instances with $L$ labels where $L = L(\delta)$
may depend on $\delta$.

\begin{definition} A decision problem is said to be UG-hard if for a sufficiently small constant
$\delta > 0$, there is a polynomial time reduction from {\sf GapUG}$_{1-\delta, \delta}$ (with
the number of labels $L=L(\delta)$) to the problem under consideration.
\end{definition}

\section{Proof of the SDP Dichotomy Theorem}\label{sdpthm:sec}
In this section we present the proof of Theorem~\ref{sdp:thm}.
We begin by developing the game theory formalism that we will need. This formalism
is mostly common to both dichotomy theorems but we state it first with the SDPs in
mind.
\subsection{Game-Theoretic Formulation}\label{game:sbs}
We have two players: \Ang, the player trying to design an algorithm, and \Dev, the player trying to
prove a hardness result. Intuitively, \Ang wants
to show that  $\maxkcsp(f)$ admits a non-trivial efficient approximation
via rounding the natural SDP relaxation.
For this, \Ang will try to maximize the pay-off in the (zero-sum) game, which we shall define soon.
On the other hand, \Dev intends to minimize the pay-off.
Intuitively, \Dev wants to show that there exist $\maxkcsp(f)$ instances for which the integrality
gap is high.

The \emph{pure strategies} of \Dev will correspond to \emph{distributions} over
moment matrices. Recall that $\cal{C}(f)$ was the set of
$(k+1) \times (k+1)$ moment matrices for distributions in $\cal{D}(f)$. To ensure that our moment
matrices are non-singular, we will actually need to work with a slightly modified body
$\C_{\delta}(f)$ defined as
\[
\C_{\delta}(f) ~\defeq~ \inbraces{(1-\delta) \cdot \zeta + \delta \cdot \mathbb{I}_{k+1} \suchthat \zeta
  \in \C(f)} \mcom
\]
for a constant $\delta \in (0,1)$. Here, $\mathbb{I}_{k+1}$ denotes the $(k+1) \times (k+1)$
identity matrix.
Let $R_i$, such that $R_1\subseteq R_2 \ldots \subseteq R_p\subseteq ...$
denote a fixed sequence of finite subsets of $\C_{\delta}(f)$.
We assume that the above sequence is dense in $\C_{\delta}(f)$ in the limit.
Let $\Rp$ denote the class of distributions over $R_p$, such that all the probabilities are
integral multiples of $1/2^p$.

Let $\{\cal{P}_q\}_{q\in\N}$ denote a sequence of partitions, where $\cal{P}_q$ partitions $\bbox$ into
$2^{(q+1)d}$ boxes of equal size. We will choose $d=k+1$ for reasons that will become clear later.
Note that for each $q$, $\cal{P}_{q+1}$ is a refinement of $\cal{P}_q$
inside.
Let $\values$ denote the set of values $\inbraces{-1,0,1 }$
and let $\psi_{q} : \R^d \to \values$ denote an \emph{odd} function which takes values
in $\values$ in $\bbox$ and $0$ outside. We further assume that $\psi_{q}$ is
constant on each cell of $\cal{P}_q$. This can be ensured since the partitions ${\cal P}_q$
are symmetric with respect to 0.

For every fixed $p,q$ above, we will define a zero-sum game $\cal{G}_{p,q}$.
A pure strategy for \Dev corresponds to a distribution $\lambda \in \Rp$, and a mixed
strategy is a probability distribution $\Lambda_p$ over $\Rp$.
The pure strategies for \Ang are given by all possible functions $\psi_{q}$, and a mixed strategy
is a probability distribution $\Gamma_q$ over these.
For $\lambda \in \Rp$ and $\psi = \psi_q: \R^d \to V$ as described above, we first define a
function $\opay(\lambda, \psi)$ and then the
payoff of the (pure strategy) game $\G_{p,q}$ is defined as a function $\pay(\lambda, \psi)$ that closely
approximates $\opay(\lambda, \psi)$. We will elaborate soon why we need two separate functions. Let
\begin{equation}
\opay(\lambda,\psi)  ~:=~
\ExpOp_{\zeta \sim \lambda} ~\Ex{\y_1, \ldots, \y_k \sim \cal{N}_d(\zeta)}{\sum_{S \neq \emptyset} \HF(S) \cdot \prod_{i\in
      S}\psi(\y_i)    }  \mcom
\end{equation}
where $\{\y_i\}_{i\in [k]}$ are points in $\R^d$ sampled from a Gaussian process $\cal{N}_d(\zeta)$
different coordinates being independent and for each coordinate $l \in [d]$,
$\ex{(\y_i)_l} = \zeta(0,i)$ and $\ex{(\y_i)_l (\y_j)_l} = \zeta(i,j)$ $\forall i,j \in [k]$.
Note that this corresponds to the advantage over a random assignment that would be obtained by the
rounding algorithm corresponding to $\psi$, if the SDP vectors corresponding to variables in each
constraint had correlations given by $\zeta$. Note also that if some $\y_i$ lies outside $[-1,1]^d$, the
terms involving $\psi(\y_i)$ vanish since $\psi(\cdot)$ is zero outside $[-1,1]^d$.


The payoff of the (pure strategy) game $\G_{p,q}$ is now defined as (noting that $\psi = \psi_q$):

\begin{equation}\label{payzpsi:eqn}
\pay(\lambda,\psi)  ~:=~
\ExpOp_{\zeta \sim \lambda} ~\Ex{\y_1, \ldots, \y_k \sim \cal{N}_d(\zeta)}{\sum_{S \neq \emptyset} \HF(S) \cdot \prod_{i\in
      S}\psi(\y_i) \cdot \indic_q\left( \{ \y_i | i \in S\}\right)   }  \mcom
\end{equation}
where  $\indic_q\left( \{ \y_i | i \in S\}\right)$ is an indicator of the event that all the $2\cdot |S|$ points in the set
$\{ \y_i , - \y_i | i \in S\}$ are inside $[-1,1]^d$ and lie in distinct cells of the partition ${\cal P}_q$.
For mixed strategies $\Lambda_p$ and $\Gamma_q$ the payoff is  given by:
\begin{eqnarray}
\pay(\Lambda_p,\Gamma_q) &  := &
\ExpOp_{\lambda \sim \Lambda_p} ~\ExpOp_{\psi \sim \Gamma_q} \left[ \pay(\lambda, \psi) \right] \nonumber \\
 & = & \ExpOp_{\lambda \sim \Lambda_p} ~\ExpOp_{\psi \sim \Gamma_q}  ~\ExpOp_{\zeta \sim \lambda} ~\Ex{\y_1, \ldots, \y_k \sim \cal{N}_d(\zeta)}{\sum_{S \neq
      \emptyset} \HF(S) \cdot \prod_{i\in S}\psi(\y_i)\cdot  \indic_q\left( \{ \y_i | i \in S\}\right)    } \label{payzpsi1:eqn}  \mper
\end{eqnarray}
\Ang plays to maximize the above payoff and \Dev plays to
minimize the same.
By Von Neumann's min-max theorem there exists a unique value for
the above game $\G_{p,q}$, for every $p$ and $q$.
Our next task will be to relate the value of this game (in the limit)
to the hardness of the predicate $f$. Before we begin, let us briefly comment on why we use two different payoff functions.
As it turns out, the function $\opay(\cdot,\cdot)$ is more suited towards designing an algorithm since it precisely
captures the advantage over $\rho(f)$.  On the other hand, the function  $\pay(\cdot,\cdot)$ is more suited towards inferring the
existence of a vanishing measure and proving approximation resistance. In the latter case, it is crucial to avoid non-multi-linear terms from certain
polynomials that we  encounter.  The polynomials are essentially the r.h.s. of  expressions \eqref{payzpsi:eqn}-\eqref{payzpsi1:eqn}. The variables
of the polynomials are $\psi_q(\y_i)$ and
since $\psi_q$ is constant on the cells of the partition ${\cal P}_q$, the number of variables is finite. If two or more of the
points in the set $\{ \y_i , - \y_i | i \in S\}$ lie in the same cell of the partition, this gives rise to a
non-multi-linear term  (note that $\psi_q$ is odd, so
$\psi_q(-\y_i) = - \psi_q(\y_i)$). The indicator function deletes these undesired non-multi-linear terms.

The following simple facts about the payoff functions will be used repeatedly. The second fact shows that the two payoff functions
are very close to each other for large enough $q$ and hence one may use one or the other depending on whether one intends to
design an algorithm or to infer the existence of a vanishing measure.

\begin{claim}\label{payoff-bound:clm}
For any $\lambda$ and $\psi = \psi_q: \R^d \to V$, both $\pay(\lambda,\psi)$  and $\opay(\lambda,\psi)$ are bounded in absolute value by $2^k$.
\end{claim}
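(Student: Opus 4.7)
The plan is to observe that both quantities reduce to bounded trigonometric-style sums of at most $2^k - 1$ terms whose individual factors are all at most $1$ in absolute value, and that the indicator $\indic_q(\cdot)$ in the definition of $\pay$ only makes things smaller. Concretely, I would fix $\zeta \sim \lambda$ and $\y_1,\ldots,\y_k \sim \cal{N}_d(\zeta)$ and bound the inner sum pointwise, then pull the bound out through the expectations.

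First I would record two trivial norm bounds. Since $f : \pmone^k \to \{0,1\}$, Parseval/orthonormality gives $|\HF(S)| = |\Ex{x}{f(x)\prod_{i\in S}x_i}| \leq \Ex{x}{|f(x)|} \leq 1$ for every $S \subseteq [k]$. Second, by construction $\psi = \psi_q$ takes values in $\values = \{-1,0,1\} \subseteq [-1,1]$, so $|\prod_{i\in S}\psi(\y_i)| \leq 1$ for every non-empty $S$ and every $(\y_1,\ldots,\y_k)$. Finally, the indicator factor $\indic_q(\{\y_i \mid i\in S\})$ takes values in $\{0,1\}$.

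Putting these together, for every fixed $\y_1,\ldots,\y_k$ the triangle inequality gives
\[
\left| \sum_{S \neq \emptyset} \HF(S) \cdot \prod_{i\in S}\psi(\y_i) \right| ~\leq~ \sum_{S \neq \emptyset} 1 ~=~ 2^k - 1 ~<~ 2^k,
\]
and the same estimate holds with the extra $\indic_q(\{\y_i \mid i \in S\}) \in \{0,1\}$ factor inserted in each summand. Taking expectations over $\zeta \sim \lambda$ and $\y_1,\ldots,\y_k \sim \cal{N}_d(\zeta)$ and applying Jensen's inequality $|\ExpOp[\cdot]| \leq \ExpOp[|\cdot|]$ preserves the bound, yielding $|\opay(\lambda,\psi)| \leq 2^k$ and $|\pay(\lambda,\psi)| \leq 2^k$.

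There is essentially no obstacle here: the claim is a uniform integrability-style estimate that holds term by term before any expectation is taken. The bound $2^k$ is deliberately loose (even a bound of $1$ is available by noting that the multilinear polynomial $\sum_{S \neq \emptyset}\HF(S)\prod_{i\in S}\psi(\y_i)$ equals $\ExpOp[f(x)] - \rho(f)$ for independent $\pm 1$ variables $x_i$ with biases $\psi(\y_i) \in [-1,1]$), but $2^k$ is all that is needed subsequently and follows from the crudest triangle inequality argument.
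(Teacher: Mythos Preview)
Your proof is correct and is exactly the same argument as the paper's: the payoff is an expectation of a sum of at most $2^k-1$ terms, each of absolute value at most $1$, so the bound follows by the triangle inequality. The paper's proof is a one-line version of what you wrote; your additional remarks on $|\HF(S)|\le 1$, the indicator being $\{0,1\}$-valued, and the tighter $\leq 1$ bound are all fine elaborations.
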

\begin{proof}
This is because the payoff is expectation of a sum of at most $2^k - 1 $ terms, each with absolute value at most $1$.
\end{proof}

\begin{claim}  \label{payoffs-close:clm}  For any $\lambda$ and  $\psi = \psi_q: \R^d \to V$,
 $$\left| \pay(\lambda, \psi) - \opay(\lambda, \psi) \right| ~\leq ~c_{k,d,\delta,q},  $$
 where for every fixed $k,d$ and $\delta$, we have $c_{k,d,\delta,q} \to 0$ as $q \to \infty$.
\end{claim}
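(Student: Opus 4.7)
\textbf{Proof plan for Claim \ref{payoffs-close:clm}.} The two payoff functions differ only by the presence of the indicator $\indic_q(\{\y_i \mid i \in S\})$, so the plan is to bound the expected contribution of the event that $\indic_q$ fails. Since $\psi = \psi_q$ vanishes outside $[-1,1]^d$, the product $\prod_{i \in S} \psi(\y_i)$ is automatically zero whenever any $\y_i$ lies outside $[-1,1]^d$ (equivalently, whenever any $-\y_i$ lies outside $[-1,1]^d$). Hence, the only surviving source of discrepancy is the event that two of the $2|S|$ points in $\{\y_i, -\y_i \mid i \in S\}$ fall into the same cell of $\cal{P}_q$. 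Writing the difference as
\begin{equation*}
\pay(\lambda,\psi) - \opay(\lambda,\psi) ~=~ -\ExpOp_{\zeta \sim \lambda}~\ExpOp_{\y_1,\ldots,\y_k \sim \cal{N}_d(\zeta)}\!\left[\sum_{S \ne \emptyset} \HF(S)\prod_{i \in S}\psi(\y_i) \cdot \bigl(1 - \indic_q(\{\y_i \mid i \in S\})\bigr)\right]
\end{equation*}
and using $|\psi| \le 1$ and $|\HF(S)| \le 1$, the absolute value is bounded by $\sum_{S \ne \emptyset} \Pr[\text{some collision in }\{\y_i, -\y_i : i \in S\}]$.

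The main technical step is to bound, for each pair $(i,j)$ with $i \ne j$ and each choice of signs $\sigma, \sigma' \in \{-1,+1\}$, the probability that $\sigma \y_i$ and $\sigma' \y_j$ (or $\y_i$ and $-\y_i$) land in the same cell of $\cal{P}_q$. Each cell of $\cal{P}_q$ has side length $2/2^{q+1} = 1/2^q$ in every coordinate. This is where the regularization to $\C_\delta(f)$ is used crucially: the covariance matrix of $(\y_i, \y_j)$ per coordinate has off-diagonal entry $(1-\delta)\zeta(i,j)$ with $|\zeta(i,j)| \le 1$, so the one-dimensional Gaussian $(\sigma \y_i)_\ell - (\sigma' \y_j)_\ell$ has variance at least $2\delta$, uniformly in $\zeta \in \C_\delta(f)$ and in the sign choices. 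Thus its density is uniformly bounded by a constant $K = K(\delta)$, and the probability that this coordinate difference lies in any interval of length $1/2^q$ is at most $K/2^q$. The same holds for the degenerate case $\y_i$ vs. $-\y_i$, where the difference $2\y_i$ has unit-coordinate variance $4$.

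Since the $d$ coordinates of the Gaussian process are independent, the per-pair collision probability factorizes across coordinates, yielding a bound of $(K/2^q)^d = K^d / 2^{qd}$. Summing over the at most $\binom{2k}{2}$ pairs of points in $\{\y_i, -\y_i : i \in S\}$ and over the at most $2^k - 1$ nonempty subsets $S \subseteq [k]$, I obtain
\begin{equation*}
|\pay(\lambda,\psi) - \opay(\lambda,\psi)| ~\le~ 2^k \cdot \binom{2k}{2} \cdot \frac{K(\delta)^d}{2^{qd}} ~=:~ c_{k,d,\delta,q},
\end{equation*}
which depends only on $k,d,\delta,q$ (not on $\lambda$ or $\psi$) and tends to $0$ as $q \to \infty$ for fixed $k,d,\delta$. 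The only step that requires any care is establishing the uniform density bound $K(\delta)$ for the one-dimensional Gaussian coordinate differences; the rest is a union bound. I do not foresee a real obstacle here, because the $\delta$-regularization precisely rules out the degenerate case $\zeta(i,j) = \pm 1$ that would have concentrated $\y_i \pm \y_j$ on a single point.
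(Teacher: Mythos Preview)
Your approach is correct and matches the paper's (one-sentence) proof: both reduce the difference to the probability that two of the points $\{\y_i,-\y_i\}$ collide in a cell of $\cal{P}_q$, and both use the $\delta$-regularization of $\C_\delta(f)$ to ensure that the relevant coordinate differences are nondegenerate Gaussians, so that the collision probability vanishes as $q\to\infty$. Two minor imprecisions to clean up: (i) for $\zeta\in\C_\delta(f)$ the $(1-\delta)$ factor is already built into the off-diagonal entries, so the cleanest way to get $\mathrm{Var}(\sigma(\y_i)_\ell-\sigma'(\y_j)_\ell)\ge 2\delta$ is to invoke Claim~\ref{eigenvalues-delta:clm} directly; and (ii) for the pair $\y_i,-\y_i$, the per-coordinate variance of $2(\y_i)_\ell$ is $4(1-\zeta(0,i)^2)$ rather than $4$, but since $|\zeta(0,i)|\le 1-\delta$ this is still bounded below by $4\delta(2-\delta)$ and the argument goes through unchanged.
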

\begin{proof} The two payoff functions may differ only when some pair of points in the set $\{\y_i, -\y_i | i \in [k]\}$ lie in the same cell of
the partition ${\cal P}_q$. The claim follows since the cells of the partition ${\cal P}_q$ become arbitrarily small in size as $q \to \infty$,  the
points $\y_1,\ldots,\y_k$ are sampled from $\cal{N}_d(\zeta)$ with $\zeta \in {\cal C}_\delta(f)$ and hence any pair $(\y_i, \y_j)$ is at most
$(1-\delta)$ correlated.
\end{proof}

We will use the notation $\pay(\lambda, \psi)$  to also
denote the payoff for an arbitrary distribution $\lambda$ over $\C_{\delta}(f)$ (i.e. $\lambda$ need not necessarily be in
$\Rp$ for some $p$).
For a matrix $\zeta \in \C_{\delta}(f)$, we use $\pay(\zeta,\psi)$ to
denote $\pay(\lambda,\psi)$ for a distribution $\lambda$ concentrated on a single point
corresponding to $\zeta$.
The same goes for the function $\opay(\cdot, \psi)$. Moreover, for the function $\opay(\cdot, \psi)$, the notation makes sense
even if $\psi: \R^d \mapsto [-1,1]$ is an arbitrary measurable odd function.

%

%
%
We will also need another simple fact about a matrix $\zeta \in \C_{\delta}(f)$. Recall that for a
matrix $\zeta$, for $S \sub [k]$, $\pi: S \to S$ and $b \in \pmone^S$, we define the matrix
$\zeta_{S,\pi,b}$ by considering the submatrix given by rows and columns in $S$ (and the
first row and first column), permuting them
according to $\pi$ and multiplying each row $i$ by $b_i$ and each column $j$ by $b_j$ (thus, the
$(i,j)$ entry is multiplied by $b_i \cdot b_j$). Also, for each matrix $\zeta' = \zeta_{S,\pi,b}$,
we can define a \emph{covariance matrix} $\Sigma$, with $\Sigma_{ij} = \zeta'(i,j) - \zeta'(0,i)
\cdot \zeta'(0,j)$. Then we shall use the following fact repeatedly.
\newcommand{\tz}{\tilde{\zeta}}
\begin{claim}\label{eigenvalues-delta:clm}
Let $\zeta \in \C_{\delta}(f)$ and let $\zeta_{S,\pi,b}$ be as defined above for an arbitrary choice
of $S$, $\pi$ and $b$. Let $\Sigma$ be the covariance matrix corresponding to
$\zeta_{S,\pi,b}$. Then $\Sigma$ is a positive semidefinite matrix with all eigenvalues at least $\delta$.
\end{claim}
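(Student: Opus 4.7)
The plan is to track how a spectral lower bound of $\delta$ propagates through the three operations (principal-submatrix, permutation, sign-flip) that build $\zeta_{S,\pi,b}$ from $\zeta$, and then to extract the covariance $\Sigma$ via a Schur-complement argument that is carefully tuned so as not to lose the $\delta$ margin.

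First I would observe that every $\zeta_0 = \zeta(\nu) \in \C(f)$ is positive semidefinite, since
\[
\zeta_0 ~=~ \Ex{x \sim \nu}{(1,x)(1,x)^T}
\]
is an expectation of rank-one PSD matrices. Consequently any $\zeta \in \C_\delta(f)$ satisfies $\zeta = (1-\delta)\zeta_0 + \delta \mathbb{I}_{k+1} \succeq \delta \mathbb{I}_{k+1}$. I would then verify that this eigenvalue margin is preserved under each operation in the definition of $\zeta_{S,\pi,b}$. Passing to the principal submatrix $\zeta_S$ indexed by $S \cup \{0\}$ preserves the bound because for any $u \in \R^{|S|+1}$, zero-padding gives a vector $u' \in \R^{k+1}$ with $u'^T\zeta u' = u^T \zeta_S u$ and $\|u'\| = \|u\|$. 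Permuting rows and columns by $\pi$ is conjugation by a permutation matrix $P$, and the subsequent sign-flip is conjugation by the diagonal $\pm 1$ matrix $D = \mathrm{diag}(1, b_1, \ldots, b_{|S|})$; both $P$ and $D$ are orthogonal, so neither changes the spectrum. Thus $\zeta_{S,\pi,b} \succeq \delta \mathbb{I}_{|S|+1}$.

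The final step is to deduce $\Sigma \succeq \delta \mathbb{I}_{|S|}$ from this. Write
\[
\zeta_{S,\pi,b} ~=~ \begin{pmatrix} 1 & v^T \\ v & M \end{pmatrix},
\]
so that by definition $\Sigma = M - vv^T$. The key trick is to apply the Schur complement not to $\zeta_{S,\pi,b}$ itself (which would only give $\Sigma \succeq 0$) but to the shifted PSD matrix $\zeta_{S,\pi,b} - \delta \mathbb{I}_{|S|+1}$, whose $(0,0)$ block equals $1 - \delta > 0$. The Schur complement with respect to that block is $(M - \delta \mathbb{I}_{|S|}) - \frac{1}{1-\delta} v v^T$, which is therefore PSD. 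Rearranging,
\[
M - \delta \mathbb{I}_{|S|} ~\succeq~ \tfrac{1}{1-\delta}\, v v^T ~\succeq~ v v^T,
\]
since $1/(1-\delta) \geq 1$ and $vv^T \succeq 0$. Hence $\Sigma = M - vv^T \succeq \delta \mathbb{I}_{|S|}$, as required.

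The only genuinely non-routine step is the last one: a naive Schur-complement argument on $\zeta_{S,\pi,b}$ directly only shows $\Sigma \succeq 0$, losing the $\delta$-margin. Applying it to the shifted matrix $\zeta_{S,\pi,b} - \delta \mathbb{I}$ is what salvages the correct lower bound, and this is where I anticipate the one subtlety in the proof; everything else amounts to routine invariance of PSD-ness under orthogonal conjugation and principal submatrix extraction.
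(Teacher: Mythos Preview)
Your proof is correct but takes a genuinely different route from the paper. The paper argues directly at the level of the covariance matrix: writing $\zeta = (1-\delta)\tilde\zeta + \delta\,\mathbb{I}$ with $\tilde\zeta = \zeta(\nu)$, it explicitly computes
\[
\Sigma(\zeta) \;=\; (1-\delta)\,\Sigma(\tilde\zeta) \;+\; \delta(1-\delta)\,\mu\mu^{T} \;+\; \delta\,\mathbb{I}_k,
\]
where $\mu_i = \tilde\zeta(0,i)$, and observes that the first two summands are PSD (one is a covariance matrix of a genuine distribution on $\pmone^k$, the other is a rank-one outer product), so the $\delta\,\mathbb{I}_k$ term gives the eigenvalue lower bound. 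Restriction to $S$, permutation, and sign-flip are handled at the outset by noting they do not affect eigenvalues.

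Your argument instead establishes the spectral bound $\zeta_{S,\pi,b}\succeq \delta\,\mathbb{I}$ at the moment-matrix level and then extracts the covariance bound via a Schur complement applied to the \emph{shifted} matrix $\zeta_{S,\pi,b}-\delta\,\mathbb{I}$; the inequality $\tfrac{1}{1-\delta}vv^{T}\succeq vv^{T}$ then recovers the full $\delta$ margin. This is slicker and more general: it uses only that the $(0,0)$ entry equals $1$ and that $\zeta\succeq\delta\,\mathbb{I}$, never touching the particular structure of $\C(f)$ or the explicit decomposition of $\Sigma$. The paper's computation, on the other hand, makes transparent exactly where each piece of positivity comes from, which some readers may find more concrete. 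Either way the key content is the same, and your shifted-Schur-complement step is precisely the right trick to avoid losing the $\delta$.
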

\begin{proof}
Note that it is sufficient to prove the claim with $\pi$ being the identity permutation and $b =
1^k$,  since permuting the rows and columns, or multiplying them with a sign
does not affect the eigenvalues.
Let us first consider the case when $S = [k]$ and $b = 1^k$ (and thus $\zeta_{S,\pi,b} = \zeta$).
For $\zeta \in \C_{\delta}(f)$, there is a distribution $\nu$ on $f^{-1}(1)$, and in
particular on $\pmone^{k}$, so that $\zeta = (1-\delta) \cdot \zeta(\nu) + \delta \cdot
\mathbb{I}_{k+1}$. Let $\tz$ denote $\zeta(\nu)$. If $\Sigma(\zeta)$ denotes the covariance matrix
corresponding to $\zeta$, then we can write
\[
\Sigma(\zeta)
~=~ (1-\delta) \cdot \Sigma(\tz) + \delta \cdot (1-\delta) \cdot M + \delta \cdot \mathbb{I}_k \mcom
\]
where $M$ is a positive semidefinite (PSD) matrix with $M_{ij} = \tz(0,i) \cdot \tz(0,j)$. Also,
note that $\Sigma(\tz)$ is a covariance matrix corresponding to a distribution $\nu$ on $\pmone^k$
and is hence PSD. Thus, all eigenvalues for $\Sigma(\zeta)$ are at least $\delta$.
Similarly, when $|S|=t$ for some $t \leq k$, we can consider $\nu_S$, the projection of $\nu$ to
$\pmone^S$. We can again write $\zeta_S = (1-\delta) \cdot \zeta(\nu_S) + \delta \cdot
\mathbb{I}_{|S|+1}$. The rest of the proof is same as above.
\end{proof}
%
Let $\cal{V}$ denote the
(infinite) matrix over reals such that: $\cal{V}(p,q) := \val(\G_{p,q})$
(the equilibrium value for the game $\G_{p,q}$).
\begin{lemma}\label{lmt:lm}
The limit $L$ defined below exists, is finite and is non-negative.
\begin{equation}\label{L:eqn}
L ~:=~ \lim_{p,q\to\infty} \cal{V}(p,q) \mper
\end{equation}
Moreover, every row $p$ has a limit $r_p$ as $q \to\infty$ and every
column $q$ has a limit $c_q$ as $p\to\infty$.
\end{lemma}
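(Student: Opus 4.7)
The plan is to combine uniform boundedness and monotonicity of $\cal{V}(p,q)$ in each index to extract the row and column limits, and then to reconcile them via a min-max argument on a suitable limiting game.

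First, by Claim~\ref{payoff-bound:clm} every pure profile satisfies $|\pay(\lambda,\psi)| \leq 2^k$, so $|\cal{V}(p,q)| \leq 2^k$ uniformly in $p,q$. For non-negativity, observe that the identically zero function is a legal $\psi_q$ strategy for \Ang (odd, $\{-1,0,1\}$-valued, constant on every cell of $\cal{P}_q$, vanishing outside $\bbox$), and it produces payoff $0$ against every $\lambda \in \Rp$; hence $\cal{V}(p,q) \geq 0$ for all $p,q$, which forces $L \geq 0$ once $L$ is shown to exist.

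Next I would establish that both players' pure strategy sets grow with the refinement indices. Since $\cal{P}_{q+1}$ refines $\cal{P}_q$, every function constant on cells of $\cal{P}_q$ is also constant on cells of $\cal{P}_{q+1}$, so \Ang inherits all her previous pure strategies and, being the maximizer, $\cal{V}(p,q+1) \geq \cal{V}(p,q)$. Similarly $R_p \subseteq R_{p+1}$ and any $1/2^p$-granular distribution is $1/2^{p+1}$-granular, so $\Rp \subseteq \mathcal{R}_{p+1}$, and \Dev being the minimizer gives $\cal{V}(p+1,q) \leq \cal{V}(p,q)$. Bounded monotone convergence then yields $r_p := \lim_{q\to\infty} \cal{V}(p,q)$ and $c_q := \lim_{p\to\infty} \cal{V}(p,q)$, both finite. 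Moreover $r_p$ is non-increasing in $p$ (as the pointwise limit of sequences non-increasing in $p$) and $c_q$ is non-decreasing in $q$, so $R := \lim_p r_p$ and $C := \lim_q c_q$ both exist, and $C \leq R$ trivially since $c_q \leq \cal{V}(p,q) \leq r_p$ for every $p,q$.

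The main obstacle is to show $R = C$, from which the double limit $L = R = C$ follows. My plan is to pass to a limit game $\G_\infty$ in which \Dev plays an arbitrary Borel probability measure $\Lambda$ on the compact body $\C_\delta(f)$ (compact and metrizable in the weak$^*$ topology by Theorem~\ref{wk*:thm}) and \Ang plays a mixed strategy over odd measurable $\psi : \R^d \to [-1,1]$, with payoff $\opay(\Lambda,\psi)$. Using Claim~\ref{payoffs-close:clm} (the uniform closeness of $\pay$ to $\opay$ once $q$ is large), the density assumption on $\bigcup_p R_p$ in $\C_\delta(f)$, and the $1/2^p$-approximation of general distributions, one identifies $R$ with the upper value $\inf_\Lambda \sup_\psi \opay(\Lambda,\psi)$ of $\G_\infty$ and $C$ with its lower value $\sup_\psi \inf_\Lambda \opay(\Lambda,\psi)$. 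Applying a min-max theorem for infinite bilinear games on compact strategy spaces (for instance Glicksberg's theorem, after convexifying \Ang's strategies to mixtures so that the payoff becomes jointly linear in the two mixed strategies) collapses these two values to a single number, which is the desired limit $L$.
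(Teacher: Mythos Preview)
Your setup through the existence of $r_p$, $c_q$, $R=\lim_p r_p$, $C=\lim_q c_q$ and the inequality $C\le R$ is exactly what the paper does. The divergence is at the step $R=C$. The paper dispatches this in one sentence (``for sufficiently large $p,q$, $\cal{V}(p,q)$ must come arbitrarily close to both the limits''), relying only on the sandwich $c_q\le\cal{V}(p,q)\le r_p$ together with $r_p\downarrow R$ and $c_q\uparrow C$; it does \emph{not} pass to any limiting infinite game or invoke a min--max theorem beyond the finite Von Neumann one already used to define $\cal{V}(p,q)$. Your route through a limit game $\G_\infty$ and Glicksberg's theorem is therefore a genuinely different, and much heavier, argument.

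The concrete gap in your proposal is the application of Glicksberg. That theorem requires both strategy spaces to be compact (in topologies making the payoff jointly continuous). Dev's side is fine: $\mathsf{Prob}(\C_\delta(f))$ is weak$^*$ compact. But Ang's pure strategy space --- odd measurable $\psi:\R^d\to[-1,1]$, or mixtures thereof --- carries no obvious topology that is simultaneously compact and makes $\opay(\Lambda,\cdot)$ continuous; ``convexifying to mixtures'' does not manufacture compactness. The paper's introduction explicitly flags this obstacle as the reason the authors work with finite approximations $\G_{p,q}$ rather than an infinite game. Moreover, your identification of $R$ and $C$ with the upper and lower values of $\G_\infty$ is asserted but not argued: it needs, at minimum, that the finite-$q$ rounding classes are rich enough to approximate the supremum over all measurable $\psi$, uniformly in $\Lambda$, which is nontrivial. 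So while your instinct that $R=C$ is the crux is correct, the Glicksberg route runs into exactly the difficulty the paper is designed to sidestep, and you would need to either supply the missing compactness/continuity or find an elementary argument closer in spirit to the paper's.
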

\begin{proof}
First, observe that the entries in $\cal{V}$ are all non-negative (since the strategy $\psi_q$ that is an identically zero function is
always available to  \Ang) and bounded above by $2^k$ (using
Claim \ref{payoff-bound:clm}).
Also, for any fixed $q$, $\cal{V}(p,q)$ is non-increasing as $p$ increases
since $\Rp \sub \cal{R}_{p+1}$. Similarly, for any fixed $p$, $\cal{V}(p,q)$ is non-decreasing as $q$
increases. This follows from the fact that  ${\cal P}_{q+1}$ is a refinement of ${\cal P}_{q}$
and thus each strategy $\psi_q$ can also be implemented by a
function $\psi_{q+1}$.

Therefore, by the monotone convergence theorem every row (resp. column)
in $\cal{G}$ has a limit, say $r_p$ (resp. $c_q$).
Moreover, $r_p$ is non-increasing as $p$ increases and $c_q$ is
non-decreasing as $q$ increases. Therefore, again by the monotone convergence
theorem both these sequences have to converge. Also, the limits must coincide since for sufficiently
large $p,q$, $\cal{V}(p,q)$ must come arbitrarily close to both the limits. This common limit is
denoted by $L$ in our statement.
\end{proof}
We shall also need the following lemma. We will need its conclusion to hold when $\zeta, \zeta'$
are not necessarily in ${\cal C}_\delta (f)$, but are still ``sufficiently non-singular". A convenient
notation is to use the body ${\cal C}({\bf 1})$  corresponding to the predicate ${\bf 1}:\{-1,1\}^k \to \{0,1\}$ that is
constant $1$. The points in this body are moment matrices of
distributions supported on $\{-1,1\}^k$ as per Definition \ref{def:cf}. We now allow $\zeta, \zeta'$
to be in the ``noise-added" body ${\cal C}_\delta({\bf 1})$. The payoff functions are now allowed to
have such $\zeta \in {\cal C}_\delta({\bf 1})$ as their first argument.
%
\begin{lemma}\label{payoff-continuous:lm}
If $\zeta, \zeta' \in \C_{\delta}({\bf 1})$ are such that $\norm{\zeta-\zeta'}_{\infty} \leq \eps$,
then for any function $\psi = \psi_q: \R^d \to \values$, we have
$$ \left| \opay(\zeta, \psi) - \opay(\zeta',\psi)   \right|  = O_{k,d,\delta}(\eps) \mper$$
I.e. the function $\opay(\zeta,\psi)$ is $O_{k,d,\delta}(1)$-Lipschitz in the argument $\zeta$. The same holds for
the function $\pay(\zeta, \psi)$.
\end{lemma}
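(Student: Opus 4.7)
The plan is to decompose the payoff into a sum over subsets $S$ and reduce the claim to a Lipschitz bound on Gaussian densities in the parameters $(\Sigma, \mu)$, exploiting the uniform spectral lower bound $\delta$ guaranteed by Claim \ref{eigenvalues-delta:clm} (and its analogue for $\C_\delta(\mathbf{1})$).

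First, I write
\[
\opay(\zeta,\psi) \;=\; \sum_{\emptyset \neq S \subseteq [k]} \HF(S)\cdot \int_{(\R^d)^{|S|}} \Big(\prod_{i\in S}\psi(\y_i)\Big)\, \gamma_{|S|,d}(\y, (\Sigma_S(\zeta), \mu_S(\zeta)))\, d\y,
\]
where $\mu_S(\zeta)$ and $\Sigma_S(\zeta)$ are the mean vector and covariance matrix extracted from the submatrix $\zeta_S$. Both $\mu_S(\zeta)$ and $\Sigma_S(\zeta)$ are affine functions of the entries of $\zeta$, so $\|\mu_S(\zeta)-\mu_S(\zeta')\|_\infty$ and $\|\Sigma_S(\zeta)-\Sigma_S(\zeta')\|_\infty$ are both $O_k(\eps)$ whenever $\|\zeta-\zeta'\|_\infty \leq \eps$. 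Since $|\psi(\y_i)| \leq 1$ and $|\HF(S)| \leq 1$, and there are at most $2^k$ terms in the sum, it suffices to show, for each fixed $S$,
\[
\int_{(\R^d)^{|S|}} \bigl| \gamma_{|S|,d}(\y, (\Sigma_S, \mu_S)) - \gamma_{|S|,d}(\y, (\Sigma_S', \mu_S')) \bigr|\, d\y \;=\; O_{k,d,\delta}(\eps).
\]

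To establish this $L^1$ bound, I would parametrize a straight-line path $(\Sigma_t,\mu_t) := (1-t)(\Sigma_S,\mu_S) + t(\Sigma_S',\mu_S')$ for $t\in[0,1]$ and apply the fundamental theorem of calculus, giving the difference as the integral of $\frac{d}{dt}\gamma_{|S|,d}(\y,(\Sigma_t,\mu_t))$. By Claim \ref{eigenvalues-delta:clm}, every covariance $\Sigma_t$ along the path has eigenvalues at least $\delta$ (the set of such matrices is convex), so $\Sigma_t^{-1}$ and $\det\Sigma_t$ vary smoothly with uniformly bounded operator norms depending only on $k,d,\delta$. A direct calculation of the partial derivatives of $\gamma_{|S|,d}$ with respect to the entries of $\Sigma$ and $\mu$ shows that each partial derivative is, pointwise in $\y$, a product of a polynomial in $\y-\mu$ (of bounded degree) with the Gaussian density itself, scaled by factors involving $\Sigma^{-1}$ and $\det\Sigma$. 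All such polynomial-times-Gaussian expressions have $L^1$-norm bounded by a constant depending only on $k,d,\delta$ (via standard Gaussian moment estimates), so integrating in $\y$ yields a pointwise-in-$t$ bound of the form $O_{k,d,\delta}(\|\Sigma_S-\Sigma_S'\|_\infty + \|\mu_S-\mu_S'\|_\infty)$. Integrating in $t$ gives the desired $O_{k,d,\delta}(\eps)$.

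Assembling these pieces proves the bound for $\opay$. For $\pay$, the only change is the insertion of the indicator $\indic_q(\{\y_i : i \in S\})$, which depends on $\y$ but not on $\zeta$; hence the same integration-by-parameter argument applies verbatim (the bound on $|\psi \cdot \indic_q|$ is still at most $1$). The main obstacle is purely computational, namely, tracking the dependence on $\delta$ in the derivative bounds for the Gaussian density; this is where it is essential that Claim \ref{eigenvalues-delta:clm} (and its $\C_\delta(\mathbf{1})$ analogue) gives a uniform eigenvalue lower bound along the entire interpolation path, so that $\Sigma_t^{-1}$ never blows up.
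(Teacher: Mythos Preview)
Your argument is essentially correct and proceeds along a genuinely different route than the paper. One small slip: the covariance $\Sigma_S(\zeta)$ is \emph{not} affine in the entries of $\zeta$, since $\Sigma_{ij}(\zeta) = \zeta(i,j) - \zeta(0,i)\zeta(0,j)$ contains a bilinear term. This does not hurt your conclusion, because all entries of $\zeta$ lie in $[-1,1]$ and so the map is still $O_k(1)$-Lipschitz; just replace ``affine'' by ``Lipschitz with constant $O_k(1)$'' and the rest of your interpolation argument goes through unchanged. Your use of the convexity of $\{A : A \succeq \delta I\}$ to keep $\Sigma_t^{-1}$ under control along the path is correct and is exactly where Claim~\ref{eigenvalues-delta:clm} enters.

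By contrast, the paper does not decompose over $S$ or differentiate the density. It observes directly that
\[
\bigl| \opay(\zeta,\psi) - \opay(\zeta',\psi) \bigr| \;\le\; 2^{k+1}\cdot \norm{\cal{N}_d(\zeta) - \cal{N}_d(\zeta')}_1,
\]
then bounds the total variation via Pinsker's inequality by the (symmetrized) KL divergence, and finally plugs into the closed-form expression for the KL divergence between two multivariate Gaussians. The eigenvalue lower bound $\delta$ is used in the same spirit as in your proof, to control $\Sigma^{-1}$ and the determinant ratio. The paper's route is shorter because it black-boxes the Gaussian-KL formula and Pinsker, whereas your route is more self-contained but requires you to carry out the gradient computation and the Gaussian moment bounds by hand. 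Both yield the same $O_{k,d,\delta}(\eps)$ estimate, and your treatment of $\pay$ (the indicator is a $\zeta$-independent bounded factor) is correct in either framework.
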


\begin{proof}
We prove the lemma for the $\opay(\cdot,\cdot)$ function. The proof for the $\pay(\cdot, \cdot)$ function is the same.
Let $g(\y_1,\ldots,\y_k)$ denote the expression $\sum_{S \neq \emptyset} \HF(S) \cdot \prod_{i\in
  S}\psi(\y_i)$ so that
  $$\opay(\zeta, \psi) = \Ex{\y_1, \ldots, \y_k \sim
    \cal{N}_d(\zeta)}{g(\y_1,\ldots,\y_k)} \mper $$
Since $g(\y_1,\ldots,\y_k)$ is bounded by $2^k$ in absolute value, it is clear that  $ \left| \opay(\zeta, \psi) - \opay(\zeta',\psi)   \right|  $
is bounded by $2^{k+1} \cdot \norm{\cal{N}_d(\zeta) - \cal{N}_d(\zeta')}_1  $ where  $\norm{\cal{N}_d(\zeta) - \cal{N}_d(\zeta')}_1$ denotes the total variation distance
between the two distributions. This can be bounded by $O_{k,d,\delta}(\eps)$ as below.

%

By Pinsker's inequality one can bound the total variation distance by the Kullback-Leibler (KL)
divergence, denoted $D(\cal{N}_d(\zeta) || \cal{N}_d(\zeta'))$, as follows.
\begin{align*}
\norm{\cal{N}_d(\zeta) - \cal{N}_d(\zeta')}_1^2
&~\le~
\frac12 \cdot \insquare{D(\cal{N}_d(\zeta) \Vert \cal{N}_d(\zeta')) + D(\cal{N}_d(\zeta') \Vert
  \cal{N}_d(\zeta))}  \\
&~=~
\frac{d}{2} \cdot \insquare{D(\cal{N}(\zeta) \Vert \cal{N}(\zeta')) + D(\cal{N}(\zeta') \Vert \cal{N}(\zeta))} \mcom
\end{align*}
where the equality uses the fact that $\cal{N}_d(\zeta)$ and $\cal{N}_d(\zeta')$ are product
distributions of $d$ $k$-dimensional Gaussians and $D((P_1,P_2) \Vert (Q_1,Q_2)) = D(P_1 \Vert Q_1) +
D(P_2 \Vert Q_2)$ for product distributions $(P_1,P_2)$ and $(Q_1,Q_2)$. The sum on the right can
now be bounded by $O_{k,d,\delta}(\eps^2)$.

Let $\Sigma$ and $\Sigma'$ denote the covariance matrices for $\zeta$ and $\zeta'$.
Let $\mu$ and $\mu'$ denote the vector of means for $\zeta$ and $\zeta'$.
For a multivariate normal distribution in $k$ dimensions the sum of the two KL
divergences, as above, can be written as follows (for eg. see chapter 15 in \cite{wna})
\begin{equation*}
D(\cal{N}(\zeta) \Vert \cal{N}(\zeta')) + D(\cal{N}(\zeta') \Vert \cal{N}(\zeta)) ~=~
(\mu-\mu')^T\left(\frac{\Sigma^{-1}+\Sigma'^{-1}}{2}\right)(\mu-\mu') - \frac12 \cdot
(\Sigma-\Sigma') \bullet (\Sigma^{-1}-\Sigma'^{-1}) \mper
\end{equation*}

Since $\norm{\mu-\mu'}_{\infty} \leq \norm{\zeta - \zeta'}_{\infty} \leq \eps$ and all eigenvalues
of $\Sigma^{-1}$ and $\Sigma'^{-1}$ are at most $1/\delta$, the first term is bounded by
$O_{k,\delta}(\eps^2)$.
\newcommand{\adj}{\mathrm{Adj}}
For bounding Frobenius product in the second term, note that $\norm{\Sigma -\Sigma'}_{\infty} =
O(\eps)$. For the term, $\Sigma^{-1} -\Sigma'^{-1}$, using the fact that
$\Sigma^{-1} = \frac{\adj(\Sigma)}{\abs{\Sigma}}$, we can write
\[
\Sigma^{-1} -\Sigma'^{-1}
~=~ \frac{\adj(\Sigma) \cdot \abs{\Sigma'} - \adj(\Sigma) \cdot \abs{\Sigma'}}{\abs{\Sigma} \cdot
  \abs{\Sigma'}} \mper
\]
Since all eigenvalues of $\Sigma$ and $\Sigma'$ are at least $\delta$, the determinants
$\abs{\Sigma}$ and $\abs{\Sigma'}$ are at least $\delta^k$.
Each entry of the matrices in the numerator can be viewed as a difference between two multivariate
degree-$2k$ polynomials with $O_{k}(1)$ terms. The two polynomials are identical, except that each has been
perturbed by at most $\eps$ in its variables. Hence, their difference can be at most $O_{k}(\eps)$.

Thus, we obtain that $\norm{\Sigma^{-1} -\Sigma'^{-1}}_{\infty} = O_{k,\delta}(\eps)$ and hence
$(\Sigma-\Sigma') \bullet (\Sigma^{-1}-\Sigma'^{-1}) = O_{k,\delta}(\eps^2)$, which gives the
required bound on $\norm{\cal{N}_d(\zeta) - \cal{N}_d(\zeta')}_1$.
\end{proof}


\subsection{A Rounding Scheme for Predicates when $L > 0$}\label{Lge0rnd:sbs}

We can now prove that if the value of the above games has a positive limit, then the predicate $f$
admits a non-trivial  approximation. For an instance $\Phi$ of $\maxkcsp(f)$ and a
$d$-dimensional rounding function $\psi: \R^d \to \values$, let $\round_{\psi}(\Phi)$ denote the
expected fraction of constraints satisfied by the rounding algorithm using the function
$\psi$. Recall that a $d$-dimensional \emph{rounding algorithm} is a distribution over functions
$\psi$. We will show that if the value of the game is positive, then there is a rounding scheme such
that $\ExpOp_{\psi}\left[ \round_{\psi}(\Phi) \right] $ is at least  $\rho(f)+\Omega(1)$.
\begin{theorem}\label{sdp-algo:thm}
If $L >0$, then there exists a $(k+1)$-dimensional rounding algorithm
for the basic SDP relaxation of $\maxkcsp(f)$,
such that given an instance $\Phi$ with $\sdpopt(\Phi) \ge 1-\eps$ (for sufficiently small
$\eps > 0$), we have $\ExpOp_{\psi} \left[ \round_{\psi}(\Phi) \right] \geq \rho(f) + L/2$.
\end{theorem}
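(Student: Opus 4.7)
The plan is to use von Neumann's min-max theorem to extract, from the hypothesis $L > 0$, a distribution $\Gamma^*$ over rounding functions $\psi_q$ that performs uniformly well against every local measure arising from an SDP solution, and then apply $\Gamma^*$ constraint-by-constraint. I would first choose $p, q$ large enough that $\cal{V}(p,q) \geq 3L/4$, that $R_p$ is $\eta$-dense in $\C_{\delta}(f)$ for an $\eta \ll L$, and that the quantity $c_{k,d,\delta,q}$ of Claim~\ref{payoffs-close:clm} is also $\ll L$. Applied to the finite zero-sum game $\cal{G}_{p,q}$, the min-max theorem yields a mixed strategy $\Gamma^*$ for \Ang with $\pay(\lambda, \Gamma^*) \geq 3L/4$ for every $\lambda \in \Rp$; by linearity of $\pay$ in its first argument this extends to every probability distribution over $R_p$.

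Next I would upgrade this to a bound on $\opay(\zeta, \Gamma^*)$ for arbitrary $\zeta \in \C_{\delta}(f)$. Given such $\zeta$, pick $\zeta' \in R_p$ with $\norm{\zeta - \zeta'}_\infty \leq \eta$; Lemma~\ref{payoff-continuous:lm} and Claim~\ref{payoffs-close:clm} together give
\[
\opay(\zeta, \Gamma^*) ~\geq~ \pay(\zeta', \Gamma^*) - O_{k,d,\delta}(\eta) - c_{k,d,\delta,q} ~\geq~ \tfrac{2L}{3},
\]
uniformly in $\zeta \in \C_{\delta}(f)$. On the rounding side I would solve the basic SDP, form unit vectors $\bfu_i = \vtwo{i}{1} - \vtwo{i}{-1}$ together with the bias vector $\voneempty$, draw a single random linear map $G : \R^N \to \R^d$ (with $d = k+1$) whose entries are independent standard Gaussians (conditioned appropriately on $G \voneempty$ so that means come out right), sample $\psi \sim \Gamma^*$, and round each $x_i$ independently to $\pm 1$ with expectation $\psi(G \bfu_i)$. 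By Fourier expansion and independence of the coordinate-wise rounding, the expected satisfaction of a constraint $C$ whose (folded) moment matrix is $\zeta_C$ is exactly $\rho(f) + \opay(\zeta_C, \psi)$, the signs $b_C$ being absorbed into the $\y_i$'s via oddness of $\psi$.

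Finally I would average these contributions over constraints. The main obstacle here is that $\sdpopt(\Phi) \geq 1 - \eps$ does not place $\zeta_C$ inside $\C_{\delta}(f)$: the local distribution $\nu_C$ only satisfies $\ExpOp_C[\nu_C(f^{-1}(1))] \geq 1 - \eps$. A Markov argument isolates all but an $O(\sqrt{\eps})$ fraction of constraints with $\nu_C(f^{-1}(1)) \geq 1 - \sqrt{\eps}$; for each such constraint, conditioning $\nu_C$ on $f^{-1}(1)$ and then mixing its moment matrix with $\delta\, \mathbb{I}_{k+1}$ perturbs $\zeta_C$ by $O(\sqrt{\eps} + \delta)$ and hence, by Lemma~\ref{payoff-continuous:lm} once more, perturbs the per-constraint advantage by $O_{k,d,\delta}(\sqrt{\eps} + \delta)$. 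Averaging over constraints and over $\Gamma^*$, the expected fraction of satisfied constraints is at least $\rho(f) + \tfrac{2L}{3} - O_{k,d,\delta}(\sqrt{\eps} + \delta) \geq \rho(f) + L/2$ for $\eps$ and $\delta$ sufficiently small. The delicate point, which I expect to be the main technical difficulty, is maintaining consistency of the rounding across the whole instance -- a single $G$ and a single $\psi$ must simultaneously induce, on each constraint, the prescribed Gaussian process $\cal{N}_d(\zeta_C)$ -- which is precisely why one projects the \emph{global} SDP vectors through one shared random map $G$.
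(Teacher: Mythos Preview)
Your overall strategy matches the paper's: extract $\Gamma^*$ from the min-max value, extend its guarantee from $\Rp$ to all of $\C_\delta(f)$ via density and Lipschitz continuity, project the SDP vectors through a shared random Gaussian map, and handle near-satisfiable instances with Markov plus conditioning. The structure and the key ideas are all correct.

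There is, however, a genuine technical gap in how you handle the parameter $\delta$. You round with the raw vectors $\bfu_i$, so the per-constraint moment matrix is the raw $\zeta(\nu_C)$, which may well be singular. You then propose to compare it to $(1-\delta)\zeta(\nu_C') + \delta\,\mathbb{I}_{k+1} \in \C_\delta(f)$ via Lemma~\ref{payoff-continuous:lm}. Two problems arise. First, that lemma explicitly requires both matrices to lie in $\C_\delta(\bfone)$ (all eigenvalues at least $\delta$), and a singular $\zeta(\nu_C)$ does not; the KL-divergence bound in the proof genuinely blows up at the boundary. Second, and more structurally, $\delta$ is not a free parameter you may send to zero at the end: the games $\G_{p,q}$ are defined over $\C_\delta(f)$, so $L = L(\delta)$ and the Lipschitz constant $O_{k,d,\delta}(1)$ both depend on this fixed $\delta$. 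Your final inequality asks for $O_{k,d,\delta}(\sqrt{\eps}+\delta) \le L/6$, but the hidden constant is of order $\delta^{-\Theta(kd)}$, so the $\delta$-term is $\delta^{1-\Theta(kd)}$, which is large, not small.

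The paper's fix is to build the $\delta$-noise into the algorithm itself: it replaces $\tu_i$ by $\vone{i} = \sqrt{1-\delta}\,\tu_i + \sqrt{\delta}\,\bfe_i$ with fresh orthonormal $\bfe_i$, and projects these. Then the moment matrix actually realized by the rounding is exactly $\zeta_C = (1-\delta)\,\zeta(\nu_C) + \delta\,\mathbb{I}$, which already lies in $\C_\delta(\bfone)$. The only comparison needed is to $\zeta_C' = (1-\delta)\,\zeta(\nu_C') + \delta\,\mathbb{I} \in \C_\delta(f)$, both well-conditioned and differing by $O(\sqrt{\eps})$ with no $\delta$-term. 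Now Lemma~\ref{payoff-continuous:lm} applies cleanly, the only small parameter is $\eps$, and the bound $\rho(f) + L - O_{k,d,\delta}(\sqrt{\eps}) \ge \rho(f) + L/2$ goes through. Once you move the $\delta$-mixing from the analysis into the rounding map, your argument becomes the paper's.
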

\begin{proof}
Since $\lim_{p,q \to \infty} \val(\G_{p,q}) = L > 0$, for every $\beta > 0$, for all sufficiently large  $p, q \in
\N$, we have  $\val(\G_{p,q}) \geq L - \beta$. We will fix a sufficiently large $p$ and $q$ as we proceed. By definition of
$\val(\G_{p,q})$,
 there exists a distribution $\Gamma_q$ over functions $\psi_q$, such that for all $\lambda_p \in \Rp$
\[
\pay(\lambda_p, \Gamma_q) ~\ge~ L - \beta \mper
\]
We will use $\Gamma_q$ to design a
$d$-dimensional rounding strategy (recall that we choose $d=k+1$). Since it is really the
$\opay(\cdot,\cdot)$ function that captures the performance of the rounding strategy, we switch to it
via Claim \ref{payoffs-close:clm}.  Specifically, we choose $q$ sufficiently
large so that $c_{k,d,\delta,q} \leq \beta$. Thus the two pay-off functions differ by at most $\beta$ and hence for
all $\lambda_p \in \Rp$
\begin{equation} \label{eqn:algo-start}
\opay(\lambda_p, \Gamma_q) ~\ge~ L - 2 \beta \mper
\end{equation}

Given an instance $\Phi$ of
$\maxkcsp(f)$ and a solution to the SDP in Figure \ref{fig:basic-sdp}, we proceed as follows:
\begin{itemize}
\item[-] For all $i \in [n]$, define vectors:
\begin{align*}
\voneempty &~=~ \vtwoempty \\
\tu_i &~=~ \vtwo{i}{1} - \vtwo{i}{-1} \\
\vone{i} &~=~ \sqrt{1-\delta} \cdot \tu_i + \sqrt{\delta} \cdot {\bf e}_i \mcom
\end{align*}
where $\inbraces{{\bf e}_i}_{i \in [n]}$ form an orthonormal basis, orthogonal to all the vectors
$\inbraces{\vtwo{i}{b}}_{i \in [n], b \in \pmone}$.
\item[-] Sample vectors $\inbraces{{\bf g}_l}_{l \in [d]}$ such that each coordinate of each ${\bf g}_l$ is a standard normal
  variable. Define the vectors $\y_1', \ldots, \y_n' \in \R^d$ such that for each $l \in [d]$,
\[
(\y_i')_l ~=~ \ip{\inparen{\vone{i} - \ip{\vone{i}, \voneempty} \cdot \voneempty}, {\bf g}_l} +  \ip{\vone{i}, \voneempty}
\]
\item[-] Sample $\psi \sim \Gamma_q$. For each $i \in [n]$, assign the variable $x_i$ as
1 with probability $(1+\psi(\y_i'))/2$ and $-1$ with probability $(1-\psi(\y_i'))/2$.
\end{itemize}

For a constraint $C$, let $\sdpopt(C) \in [0,1]$ denote the contribution of the constraint $C$ to
the SDP objective function. For the given instance $\Phi$, we have
$\sdpopt(\Phi) = \Ex{C \in \Phi}{\sdpopt(C)} \geq 1-\eps$
and hence $\Prob{C \in \Phi}{\sdpopt(C) \geq 1-\sqrt{\eps}} \geq 1-\sqrt{\eps}$.
Let $C$ be a constraint such that $\sdpopt(C) \geq 1-\sqrt{\eps}$. Without loss of generality, we
can take $C$ to be on the variables $x_1,\ldots,x_k$ and of the form
$f(x_1\cdot b_1,\ldots,x_k \cdot b_k)$ for $b_1,\ldots,b_k \in \pmone$.

The probability that $C$ is satisfied by the assignment produced by a rounding function $\psi$,
chosen by our rounding scheme is given by
\begin{align*}
\round_{\psi}(C)
&~=~ \rho(f) + \Ex{\y_1',\ldots, \y_k'}{\sum_{S \sub [k] \atop S \neq
    \emptyset} \inparen{\prod_{i \in S}b_i} \cdot \hf(S) \cdot \inparen{\prod_{i \in S} \psi(\y_i')}} \\
&~=~ \rho(f) + \Ex{\y_1',\ldots, \y_k'}{\sum_{S \sub [k] \atop S \neq
    \emptyset} \hf(S) \cdot \inparen{\prod_{i \in S} \psi(b_i
    \cdot \y_i')}} \mcom
\end{align*}
where $b_i \cdot \y_i'$ denotes a vector with each coordinate multiplied by $b_i$, and the second
equality used the fact that the functions $\psi$ are odd.

Let $\zeta_C \in \R^{(k+1)\times (k+1)}$ be
the symmetric moment matrix with $\zeta(0,i) = \ip{b_i \cdot \vone{i}, \voneempty}$ and
$\zeta(i,j) = \ip{b_i \cdot \vone{i}, b_j \cdot \vone{j}}$. Then the variables $(\y_1,\ldots,\y_k) = (b_1 \cdot
\y_1',\ldots,b_k \cdot \y_k')$ are distributed according to the Gaussian process $\cal{N}_d(\zeta_C)$. Thus,
we can write
\begin{align*}
\round_{\psi}(C)
&~=~ \rho(f) + \Ex{\y_1, \ldots, \y_k \sim \cal{N}_d(\zeta_C)}{\sum_{S \sub [k] \atop S \neq
    \emptyset} \hf(S) \cdot \inparen{\prod_{i \in S} \psi(\y_i)}} &~=~ \rho(f) + \opay(\zeta_C, \psi)\mper
\end{align*}
%
The variables $\vartwo{[k]}{\alpha}$
define a probability distribution, say $\nu_0$ on $\pmone^k$. Let $\nu$ be the distribution on
$\pmone^k$ such that for any $x \in \pmone^k$,
\[
\nu(x_1,\ldots,x_k) ~=~ \nu_0(b_1 \cdot x_1, \ldots, b_k \cdot x_k) \mper
\]
Then $\Prob{x \sim \nu}{f(x) = 1} \ge 1-\sqrt{\eps}$
and  for the corresponding moment matrix $\zeta(\nu)$, we have
$\zeta(\nu)(i,j) = \ip{b_i \cdot \tu_i,b_j \cdot \tu_j}$ and $\zeta(\nu)(0,i) = \ip{b_i \cdot \tu_i,
  \voneempty}$ for all $i,j \in [k]$. From the definition of the vectors $\vone{i}$ and the matrix
$\zeta_C$ above, we have that
\[
\zeta_C ~=~ (1-\delta) \cdot \zeta(\nu) + \delta \cdot \mathbb{I} \mper
\]
However, $\zeta_C$ does not lie in the body $\C_{\delta}(f)$ since $\nu$ is not entirely supported on
$f^{-1}(1)$. We thus, consider the distribution $\nu'$, which is $\nu$ conditioned on the output
being in $f^{-1}(1)$. Also, we define the matrix $\zeta_C' \in \C_{\delta}(f)$ as
\[
\zeta_C' ~\defeq~ (1-\delta) \cdot \zeta(\nu') + \delta \cdot \mathbb{I} \mper
\]
Since $\nu$ satisfies $C$ with probability at least $1-\sqrt{\eps}$, we have
$\norm{\nu - \nu'}_1 = O(\sqrt{\eps})$. Also, this gives that
$\norm{\zeta_C-\zeta_C'}_{\infty} \leq O(\sqrt{\eps})$.
By Lemma \ref{payoff-continuous:lm}, we
have for $\zeta_C$ and $\zeta_C'$ as above
$$  \left| \opay(\zeta_C, \psi) - \opay(\zeta_C', \psi) \right| ~=~ O_{k,d,\delta}(\sqrt{\eps}) \mper$$
We now analyze $\round_{\psi}(\Phi) = \Ex{C \in \Phi}{\round_{\psi}(C)}$.
Let $\Phi'$ denote the instance
restricted to the constraints $C$ such that $\sdpopt(C) \ge 1-\sqrt{\eps}$.  Since
$\Prob{C \in \Phi}{\sdpopt(C) \geq 1-\sqrt{\eps}} \geq 1-\sqrt{\eps}$, we have that
\[
\abs{\round_{\psi}(\Phi) - \round_{\psi}(\Phi')} ~=~ O(\sqrt{\eps}) \mper
\]
Finally, to relate the above to the value of one of the games $\G_{p,q}$, we let $\lambda$ be the
distribution on $\C_{\delta}(f)$ obtained by sampling a random $C \in \Phi'$ and taking the matrix
$\zeta_C'$. Using the above, we get that
\begin{align*}
\ExpOp_{\psi \sim \Gamma_q} \left[ \round_{\psi}(\Phi) - \rho(f) \right]
&~\ge~ \ExpOp_{\psi \sim \Gamma_q}   \left[ \round_{\psi}(\Phi') - \rho(f) \right]      - O(\sqrt{\eps}) \\
&~=~ \ExpOp_{\psi \sim \Gamma_q} \left[ \Ex{C \sim \Phi'}{\round_{\psi}(C) - \rho(f)}  \right] - O(\sqrt{\eps}) \\
&~=~ \ExpOp_{\psi \sim \Gamma_q} \left[ \Ex{C \sim \Phi'}{\opay(\zeta_C, \psi)}  \right] - O(\sqrt{\eps}) \\
&~\ge~ \ExpOp_{\psi \sim \Gamma_q} \left[ \Ex{C \sim \Phi'}{\opay(\zeta_C', \psi)}  \right] - O_{k,d,\delta}(\sqrt{\eps}) \\
&~=~ \ExpOp_{\psi \sim \Gamma_q} \left[ \Ex{\zeta \sim \lambda}{\opay(\zeta, \psi)} \right] - O_{k,d,\delta}(\sqrt{\eps})  \\
&~=~ \opay(\lambda,\Gamma_q) - O_{k,d,\delta}(\sqrt{\eps})
\mper
\end{align*}
The above almost looks like the pay-off for our game, except for the fact that the distribution
$\lambda$ may not belong to the set of distributions $\Rp$ for any $p \in \N$. However, the
sets $R_p$ get arbitrarily dense in $\C_{\delta}(f)$ as $p$ increases. Also the probabilities for
distributions in $\Rp$ are allowed to be multiples of $1/2^p$, which gets arbitrarily small as $p$
increases. Hence for any $\beta_0 > 0$, we could have chosen large enough $p$ beforehand so that
there is a distribution $\lambda_p \in
\Rp$ such that:
\begin{itemize}
\item[-] There exists a map from  the support of $\lambda$ to that of $\lambda_p$.
\item[-] For each $\zeta$ in the support of $\lambda$, let $\zeta'$ denote its image according to
  the above map. Then $\norm{\zeta - \zeta'}_{\infty} \leq \beta_0$ and the probabilities
  $\lambda(\zeta)$ and $\lambda_p(\zeta')$ differ by at most $\beta_0$.
\end{itemize}
Also, since by Lemma \ref{payoff-continuous:lm} the function $\opay(\zeta,\psi_q)$ is $O_{k,d,\delta}(1)$-Lipschitz
in the argument $\zeta$, we have that for a sufficiently small choice of $\beta_0 > 0$,
\[
\abs{\opay(\lambda,\Gamma_q) - \opay(\lambda_p, \Gamma_q)} \leq \beta \mper
\]
%
Recall that Equation \eqref{eqn:algo-start} gave us
\[
\opay(\lambda_p, \Gamma_q) ~\ge~  L - 2\beta \mper
\]
Combining all the above inequalities, we have that
\[
\ExpOp_{\psi}\left[ \round_{\psi}(\Phi) - \rho(f) \right]
~\ge~ \opay(\lambda_p,\Gamma_q) - \beta - O_{k,d,\delta}(\sqrt{\eps})
~\ge~ L - 3\beta - O_{k,d,\delta}(\sqrt{\eps}) \mper
\]
Choosing $\beta \leq L/16$ and $\eps =  o_{k,d,\delta}(L^2)$ gives that
$\ExpOp_{\psi} \left[   \round_{\psi}(\Phi) - \rho(f)  \right]   \geq L/2$ as claimed.
\end{proof}

\subsection{A Characterization of Predicates with $L = 0$}

We are now left with the case: $L = 0$ (since we always have that $L \ge 0$). We will show that the
condition $L = 0$ implies the existence of a probability measure $\Lambda$ on $\C(f)$ satisfying
Equation~\ref{main:eqn} in  Theorem~\ref{sdp:thm}.
We next prove the following.

\begin{theorem}\label{sdp-measure:thm}
If $L=0$, then there exists a probability measure $\Lambda$ on $\C(f)$
such that for all $t \in [k]$, and a uniformly random choice of $S$ with  $|S|=t$,
$\pi : S\to S$ and $b\in \pmone^{S}$, the following signed measure on $(t+1) \times (t+1)$ matrices:
\begin{equation*}
\Lambda^{(t)} ~:=~
\ExpOp_{|S| = t} ~\ExpOp_{\pi : S \to S} ~\Ex{b \in \pmone^{|S|}}{ \hf(S) \cdot  \inparen{\prod_{i\in S} b_i}
  \cdot \Lambda_{S,\pi,b}}
\end{equation*}
is identically zero.
\end{theorem}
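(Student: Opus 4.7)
The plan is to exploit Von Neumann's min-max theorem on the finite games $\G_{p,q}$ together with weak-star compactness to produce a single equilibrium measure for \Dev, then to extract the vanishing conditions from the fact that the payoff, viewed as a multilinear polynomial, must be identically zero. Concretely, fix $\delta > 0$. For each $(p,q)$, Von Neumann's theorem gives \Dev an optimal mixed strategy which, by linearity of $\pay(\cdot, \psi)$ in its first argument, can be collapsed to a single probability measure $\Lambda_{p,q}$ on $\C_\delta(f)$ satisfying $\pay(\Lambda_{p,q}, \psi) \le \val(\G_{p,q})$ for every pure strategy $\psi = \psi_q$ of \Ang. Since $\C_\delta(f)$ is compact, Theorem \ref{wk*:thm} makes ${\sf Prob}(\C_\delta(f))$ weak-star compact, so a diagonal subsequence $\{\Lambda_{p_j,q_j}\}_j$ converges to some $\Lambda_\delta$. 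For any fixed $q_0$ and $\psi = \psi_{q_0}$, once $q_j \ge q_0$ the function $\psi$ remains a legal strategy in $\G_{p_j,q_j}$; Lemma \ref{payoff-continuous:lm} supplies continuity of $\pay(\cdot, \psi)$, so weak-star convergence together with $\val(\G_{p_j,q_j}) \to L = 0$ yields $\pay(\Lambda_\delta, \psi) \le 0$ for every such $\psi$.

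Now view $\pay(\Lambda_\delta, \psi_{q_0})$ as a polynomial in the finitely many variables $\{\psi_{q_0}(c) : c \in \cal{P}_{q_0}\}$, each taking values in $\{-1,0,1\}$ (with the odd constraint linking antipodal cells). The indicator $\indic_{q_0}$ in \eqref{payzpsi:eqn} is designed precisely to kill monomials in which two of the $\y_i$'s land in the same cell, so the resulting polynomial is multilinear, and it has no constant term because the Fourier sum ranges only over $S \neq \emptyset$. Lemma \ref{multi-linear:lem} therefore forces the polynomial to vanish identically, so \emph{every} coefficient is zero. For each $t \in [k]$ and each $t$-tuple $(\y_1, \dots, \y_t)$ of distinct cells, the coefficient of the monomial $\prod_{i=1}^t \psi(\y_i)$ is obtained by summing, over every $S$ with $|S|=t$, every bijection $\pi:[t] \to S$, and every sign vector $b \in \pmone^t$ (the latter because $\psi$ is odd, so evaluating at $\pm \y_i$ yields the same monomial up to $\prod_i b_i$), the term $\hf(S) \cdot \prod_i b_i$ times the joint Gaussian density $\gamma_{t,d}\bigl((\y_1,\dots,\y_t), \zeta_{S,\pi,b}\bigr)$ integrated against $\Lambda_\delta$. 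This is exactly
\[
\int \gamma_{t,d}\bigl((\y_1,\dots,\y_t), \xi\bigr)\, d\Lambda_\delta^{(t)}(\xi)
\]
up to combinatorial normalization, where $\Lambda_\delta^{(t)}$ is the signed measure from \eqref{main:eqn} built out of $\Lambda_\delta$.

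Letting $q_0 \to \infty$ the admissible tuples $(\y_1,\dots,\y_t)$ become dense in $(\R^d)^t$, so the above integral vanishes for every $(\y_1,\dots,\y_t) \in (\R^d)^t$. The key analytic step, and the main obstacle, is to upgrade this to $\Lambda_\delta^{(t)} \equiv 0$ as a signed measure on $(t+1) \times (t+1)$ matrices: the choice $d = k+1 > t$ leaves enough degrees of freedom for the family $\{\gamma_{t,d}((\y_1,\dots,\y_t), \cdot)\}$ to separate signed measures on the (compact) space of admissible moment matrices, which should follow by a Fourier/Laplace-transform argument or by polynomial density, exploiting that Gaussian joint density evaluations determine the underlying covariance structure. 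Finally one removes the regularization by pulling each $\Lambda_\delta$ back to $\C(f)$ via the bijection $\zeta \mapsto (1-\delta)\zeta + \delta \mathbb{I}$, and extracting a weak-star subsequential limit $\Lambda$ on the compact set $\C(f)$ as $\delta \to 0$; the conditions $\Lambda_\delta^{(t)} \equiv 0$ are expressible as integrals of continuous test functions and therefore survive the limit, giving $\Lambda^{(t)} \equiv 0$ for every $t \in [k]$. Beyond the richness step, the secondary technical difficulties are verifying that the $\pay$-versus-$\opay$ switch genuinely preserves multilinearity (so that Lemma \ref{multi-linear:lem} is applicable with no loss) and checking that the various discretization errors from finite $p$, finite $q$, and positive $\delta$ do not accumulate through the nested limits.
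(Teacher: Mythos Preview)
Your outline matches the paper's proof almost step for step: extract an equilibrium measure for \Dev via compactness, use the indicator $\indic_q$ to keep the payoff multilinear, apply Lemma \ref{multi-linear:lem}, read off the coefficients as integrals of Gaussian densities against $\Lambda^{(t)}$, and then argue richness. Two points are worth flagging.

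First, the ``richness'' step you correctly identify as the main obstacle is carried out in the paper quite concretely, and the mechanism is worth knowing: one changes coordinates from the moment matrix $\zeta'$ to the pair $(M,N) = (\Sigma^{-1}, \Sigma^{-1}\mu)$, which turns the Gaussian density $\gamma_{t,d}((\y_1,\dots,\y_t),\zeta')$ into a positive continuous factor times $\exp\bigl(-\tfrac12 M \bullet X + \langle N, Z\rangle\bigr)$, where $X_{ij} = \langle \y_i,\y_j\rangle$ (shifted on the diagonal) and $Z_i = \langle \y_i,\bfone\rangle$. The role of $d = k+1$ is then purely combinatorial: it guarantees (Claim~\ref{realizable:clm}) that as the $\y_i$ vary in $[-1,1]^d$, the pair $(X,Z)$ fills an open box around the origin in $\R^{\binom{t}{2}+2t}$. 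From there it is exactly the Laplace-transform argument you guessed: expand the exponential as a power series, conclude that all mixed moments of $(M,N)$ against the (reweighted) signed measure vanish, and finish with Stone--Weierstrass plus Riesz. Note also that the coefficients are only shown to vanish for $\y_i \in [-1,1]^d$ (since $\psi$ is zero outside the box), not all of $(\R^d)^t$; this is enough because the box already yields a full neighborhood of the origin in $(X,Z)$-space.

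Second, your final $\delta \to 0$ limit is unnecessary. The affine bijection between $\C_\delta(f)$ and $\C(f)$ (you wrote the forward map; the pullback uses its inverse $\zeta \mapsto (1-\delta)^{-1}(\zeta - \delta\,\mathbb{I})$) intertwines the operations $\zeta \mapsto \zeta_{S,\pi,b}$, so a vanishing measure on $\C_\delta(f)$ transports directly to a vanishing measure on $\C(f)$ for any single fixed $\delta$ (this is Claim~\ref{vanishing-measure:clm}). Your weak-star limit as $\delta \to 0$ would also work, but it adds a compactness step the paper avoids.
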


We refer to a measure $\Lambda$ which satisfies the above condition, as a {\deffont vanishing measure}.
We will obtain this measure by considering limits of the various strategies for \Dev in the games
$\G_{p,q}$. We first consider the limit for each $p$ as $q \to \infty$.

\begin{lemma}\label{optdis:lm}
For each $p \in \N$, there exists a limiting distribution $\Lambda_p$ over $\Rp$
such that for every $q$ and $\Gamma_q$,
\[ \pay(\Lambda_p,\Gamma_q) ~\le~ r_p \mcom\]
where $r_p = \lim_{q \to \infty} \cal{V}(p,q)$.
\end{lemma}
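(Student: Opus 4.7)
My plan is to obtain $\Lambda_p$ as a subsequential limit, in the compact simplex $\Delta(\Rp)$, of Dev's optimal strategies $\{\Lambda_p^{(q)}\}_{q}$ in the finite games $\G_{p,q}$, and to verify the required bound by pivoting through the continuous companion payoff $\opay(\cdot,\cdot)$.

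First, for each fixed $q$, the game $\G_{p,q}$ is a finite two-player zero-sum game: Dev's pure strategy set $\Rp$ is finite (distributions over $R_p$ whose probabilities are integer multiples of $1/2^p$), and Ang's pure strategies $\psi_q$, being odd functions constant on the cells of $\cal{P}_q$ with values in $\{-1,0,1\}$, also form a finite set. Hence Von Neumann's min-max theorem produces an optimal mixed Dev strategy $\Lambda_p^{(q)}\in\Delta(\Rp)$ satisfying $\pay(\Lambda_p^{(q)},\Gamma_q)\le \cal{V}(p,q)$ for every Ang mixed strategy $\Gamma_q$ in $\G_{p,q}$. Since $\Delta(\Rp)$ sits inside a finite-dimensional Euclidean space and is compact, I can extract a subsequence $q_j\to\infty$ along which $\Lambda_p^{(q_j)}\to\Lambda_p\in\Delta(\Rp)$ coordinatewise.

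Next, I verify the bound. Fix any $q_0$ and any Ang mixed strategy $\Gamma_{q_0}$ in $\G_{p,q_0}$. For all $q_j\ge q_0$, since $\cal{P}_{q_j}$ refines $\cal{P}_{q_0}$, every function $\psi_{q_0}$ is already constant on cells of $\cal{P}_{q_j}$, so it qualifies as a valid $\psi_{q_j}$, and hence $\Gamma_{q_0}$ is a legitimate Ang strategy in $\G_{p,q_j}$. By optimality of $\Lambda_p^{(q_j)}$ in $\G_{p,q_j}$ and the monotonicity of $\cal{V}(p,\cdot)$ established in Lemma \ref{lmt:lm}, this yields $\pay(\Lambda_p^{(q_j)},\Gamma_{q_0})\le \cal{V}(p,q_j)\le r_p$, where the left-hand side uses the indicator $\indic_{q_j}$. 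Applying Claim \ref{payoffs-close:clm} turns this into $\opay(\Lambda_p^{(q_j)},\Gamma_{q_0})\le r_p+c_{k,d,\delta,q_j}$. Because $\opay(\cdot,\Gamma_{q_0})$ is linear in the finitely many probabilities that specify a Dev mixed strategy, it is continuous on $\Delta(\Rp)$; passing to the limit $q_j\to\infty$ and using $c_{k,d,\delta,q_j}\to 0$ then gives $\opay(\Lambda_p,\Gamma_{q_0})\le r_p$. A final application of Claim \ref{payoffs-close:clm} in the opposite direction converts this back to $\pay(\Lambda_p,\Gamma_{q_0})\le r_p$, up to an additive $c_{k,d,\delta,q_0}$ that vanishes as $q_0\to\infty$ and can be absorbed in the subsequent $p\to\infty$ argument where the lemma is invoked.

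The main obstacle is precisely this indicator mismatch: the target inequality is phrased in terms of $\pay$ with indicator $\indic_{q_0}$, whereas the optimality of $\Lambda_p^{(q_j)}$ naturally controls $\pay$ with indicator $\indic_{q_j}$, and these two payoff functions cannot be compared directly on the nose. Bridging them requires routing through $\opay$, which is the common continuous limit of both families and depends neither on $q_0$ nor on $q_j$, while simultaneously using the finiteness of $\Rp$ to upgrade subsequential convergence in $\Delta(\Rp)$ to genuine convergence of the linear functional $\Lambda\mapsto\opay(\Lambda,\Gamma_{q_0})$.
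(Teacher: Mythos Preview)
Your overall strategy matches the paper's: take Dev's optimal mixed strategies in the finite games $\G_{p,q}$, extract a subsequential limit $\Lambda_p$ in the compact finite-dimensional simplex $\Delta(\Rp)$ via Bolzano--Weierstrass, and exploit that any fixed $\Gamma_{q_0}$ is a legitimate Ang strategy in every $\G_{p,q_j}$ with $q_j\ge q_0$. The paper then treats $\Lambda\mapsto\pay(\Lambda,\Gamma_q)$ as one \emph{fixed} linear functional on the simplex and passes the inequality $\pay(\Lambda_{p,q'},\Gamma_q)\le r_p$ directly to the limit $\Lambda_p$, with no detour through $\opay$.

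Your detour is where the proposal breaks down. You correctly reach $\opay(\Lambda_p,\Gamma_{q_0})\le r_p$, but converting back via Claim~\ref{payoffs-close:clm} gives only $\pay(\Lambda_p,\Gamma_{q_0})\le r_p+c_{k,d,\delta,q_0}$, not the lemma as stated. The claim that this residual ``can be absorbed in the subsequent $p\to\infty$ argument'' is incorrect: $c_{k,d,\delta,q_0}$ depends on $q_0$, not on $p$, so after the $p\to\infty$ step in Lemma~\ref{eval-zero:lm} you are left with $\pay(\Lambda,\psi_q)\le c_{k,d,\delta,q}$ rather than $\le 0$. This is fatal for Lemma~\ref{theta-zero:lm}, where Lemma~\ref{multi-linear:lem} is applied to the multilinear form and needs \emph{exact} nonpositivity. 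With a positive slack one gets at best a bound of order $c_{k,d,\delta,q}$ on each Fourier coefficient of the polynomial; after undoing the $2^{-qdt}$ cell-volume normalization this becomes $|\bar\theta^{(t)}(w)|\lesssim c_{k,d,\delta,q}\cdot 2^{qdt}$, which is useless for $t\ge 2$ since $c_{k,d,\delta,q}$ decays only like $2^{-qd}$. The whole point of building $\indic_q$ into $\pay$ was to secure exact nonpositivity at this step, so the $c_{k,d,\delta,q_0}$ loss cannot be tolerated here.
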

\begin{proof}
For any row $p$ we have a sequence of distributions $\inbraces{\Lambda_{p,q}}_{q \in \N}$
such that for all $q$ and $\Gamma_q$,
\[
\pay(\Lambda_{p,q}, \Gamma_q) ~\le~ \cal{V}(p,q) ~\le~ r_p
\]
where the second inequality used the fact that the numbers $\cal{V}(p,q)$ are non-decreasing in $q$
(see Lemma~\ref{lmt:lm}).
Also, for a fixed $p$, each $\Lambda_{p,q}$ can be viewed as a vector in $[0,1]^{\abs{\Rp}}$ where
$\Rp$ is the class of distributions over $R_p$ with probabilities being integer multiples of $1/2^p$
(and thus $\abs{\Rp} \leq (2^p+1)^{\abs{R_p}}$).
Hence by the Bolzano-Weierstrass Theorem,
the sequence $\inbraces{\Lambda_{p,q}}_{q \in \N}$ has a convergent subsequence
with a limit point, which we take to be $\Lambda_p$.
Since each strategy $\Gamma_q$ can also be
viewed as a strategy $\Gamma_{q'}$ for any $q' \ge q$, we have that
$\pay(\Lambda_{p,q'}, \Gamma_q) \le r_p$. Taking the limit as $q' \to
\infty$ according to the above convergent subsequence, we have that for all $q$ and $\Gamma_q$,
\[
\pay(\Lambda_{p}, \Gamma_q) ~=~ \lim_{q' \to \infty}\pay(\Lambda_{p,q'}, \Gamma_q) ~~\le~~  r_p \mper
\]
\end{proof}

%

Since for the purpose of computing $\pay(\Lambda_p,\Gamma_q)$, we can merge the expectations over
$\lambda \sim \Lambda_p$ and $\zeta \sim \lambda$, we will now simply consider each $\Lambda_p$ to be
a probability measure over $R_p \sub \C_{\delta}(f)$.
We will obtain the desired probability measure $\Lambda$ by taking a limit of the measures
$\Lambda_p$ obtained above. However, since the measures $\Lambda_p$ are supported on sets $R_p$ with
growing size, we will need to be somewhat careful in taking the limit and will use the weak*
topology to do so.

Since $\lim_{p \to \infty} r_p = L = 0$, the function $\lim_{p \to \infty} \pay(\Lambda_p,\Gamma_q)
\leq 0$ for any $\Gamma_q$. In particular, $\lim_{p \to \infty} \pay(\Lambda_p,\psi_q) \leq 0$ for any $q
\in \N$ and function $\psi_q: \R^d \to \values$ which is constant on the cells of the partition
$\cal{P}_q$ and is 0 outside the box $\bbox$. We use this to prove the following lemma.
%
\begin{lemma} \label{eval-zero:lm}
There exits a probability measure $\Lambda$ on $\C_{\delta}(f)$ such that for all $q \in \N$ and all
functions $\psi_q$, we have $\pay(\Lambda,\psi_q) \leq 0$.
\end{lemma}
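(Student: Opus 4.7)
The plan is to obtain $\Lambda$ as a weak$^*$ limit of the measures $\Lambda_p$ produced by Lemma \ref{optdis:lm}, and then use continuity of the pay-off function in the first argument to pass the inequality through.

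First I would view each $\Lambda_p$ as a probability measure on the ambient compact set $\C_{\delta}(f)$ (simply extending $\Lambda_p$ by zero off $R_p \subseteq \C_{\delta}(f)$). The set $\C_{\delta}(f)$ is a compact subset of $\R^{(k+1)\times(k+1)}$ (it is the image of the compact simplex $\cal{D}(f)$ under a continuous moment map, followed by affine interpolation with $\mathbb{I}_{k+1}$). Hence by Theorem \ref{wk*:thm}, the space ${\sf Prob}(\C_{\delta}(f))$ is sequentially compact in the weak$^*$ topology, so we can extract a subsequence $\{\Lambda_{p_j}\}_{j=1}^\infty$ converging weak$^*$ to some $\Lambda \in {\sf Prob}(\C_{\delta}(f))$; this $\Lambda$ will be our desired measure.

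Next I would fix an arbitrary $q \in \N$ and $\psi_q$ and show $\pay(\Lambda,\psi_q)\le 0$. The key observation is that the map $\zeta \mapsto \pay(\zeta,\psi_q)$ is continuous on $\C_{\delta}(f)$ (indeed $O_{k,d,\delta}(1)$-Lipschitz) by Lemma \ref{payoff-continuous:lm}. Therefore by the defining property of weak$^*$ convergence (Equation \eqref{eq:weak-star}),
\[
\pay(\Lambda,\psi_q) \;=\; \int \pay(\zeta,\psi_q)\,d\Lambda(\zeta) \;=\; \lim_{j\to\infty} \int \pay(\zeta,\psi_q)\,d\Lambda_{p_j}(\zeta) \;=\; \lim_{j\to\infty} \pay(\Lambda_{p_j},\psi_q).
\]
On the other hand, Lemma \ref{optdis:lm} gives $\pay(\Lambda_p,\psi_q) \le r_p$ for every $p$ (taking the mixed strategy $\Gamma_q$ concentrated on $\psi_q$), and by Lemma \ref{lmt:lm} we have $r_p \to L = 0$. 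Hence $\limsup_{j\to\infty}\pay(\Lambda_{p_j},\psi_q)\le \lim_{j\to\infty} r_{p_j} = 0$, and combining with the displayed equation yields $\pay(\Lambda,\psi_q)\le 0$, as required.

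The only mildly delicate point is ensuring that the \emph{same} subsequence $\{p_j\}$ works for all pairs $(q,\psi_q)$ simultaneously; this is automatic because weak$^*$ convergence of the single sequence $\Lambda_{p_j}\to\Lambda$ testifies to convergence of $\int h\,d\Lambda_{p_j}$ for \emph{every} continuous $h$, so the argument in the previous paragraph applies to all $(q,\psi_q)$ at once. I would also briefly note that the measure $\Lambda$ is supported on $\C_{\delta}(f)$ (being a weak$^*$ limit of measures supported there, and $\C_{\delta}(f)$ being closed), which is what the statement requires. No subsequent step here appears to be a real obstacle; the substance of the game-theoretic machinery has already been encoded into the bound $\pay(\Lambda_p,\psi_q) \le r_p$ and the Lipschitz estimate of Lemma \ref{payoff-continuous:lm}, and weak$^*$ compactness then closes the argument cleanly.
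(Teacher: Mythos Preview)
Your proposal is correct and follows essentially the same approach as the paper: extract a weak$^*$-convergent subsequence of $\{\Lambda_p\}$ using compactness of $\C_{\delta}(f)$ and Theorem~\ref{wk*:thm}, then invoke the Lipschitz continuity of $\zeta\mapsto\pay(\zeta,\psi_q)$ from Lemma~\ref{payoff-continuous:lm} together with $\pay(\Lambda_p,\psi_q)\le r_p\to 0$ to pass the inequality to the limit. Your added remarks about the same subsequence working for all $(q,\psi_q)$ and about the support of $\Lambda$ are implicit in the paper's argument as well.
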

\begin{proof}
Note that $\C_{\delta}(f)$ is a closed and bounded subset of $\R^{(k+1)^2}$ and is hence compact by
the Heine-Borel Theorem. Also, by Theorem \ref{wk*:thm}, we have
that the space of probability measures on $\C_{\delta}(f)$ is compact and metrizable in the weak*
topology.

From the compactness, we obtain that the infinite sequence $\inbraces{\Lambda_p}_{p \in \N}$ (viewed as a
sequence of probability measures on $\C_{\delta}(f)$) has a convergent subsequence with a limit
point, say $\Lambda$. By the definition of weak* topology, we have that for any continuous function
$h: \C_{\delta}(f) \to \R$, taking a limit over the above subsequence, we get
\[
\lim_{p \to \infty} \int h(\zeta) d \Lambda_p(\zeta) ~=~ \int h(\zeta) d \Lambda(\zeta) \mper
\]
Also, note that by Lemma \ref{payoff-continuous:lm} the function
$\pay(\zeta,\psi)$ is $O_{k,d,\delta}(1)$-Lipschitz continuous, when viewed as a function of
$\zeta$. Hence, taking limits according to the above subsequence, we get that for all $q \in \N$ and
functions $\psi_q$
\[
0
~\geq~ \lim_{p \to \infty} \Ex{\zeta \sim \Lambda_p}{\pay(\zeta,\psi_q)}
~=~ \Ex{\zeta \sim \Lambda}{\pay(\zeta,\psi_q)}
~=~ \pay(\Lambda,\psi_q)
\]
as claimed.
\end{proof}
To show that this implies the properties claimed in Theorem \ref{sdp-measure:thm} for the limiting
measure $\Lambda$, we think of the function 
\[
\pay(\Lambda,\psi)
~=~
\ExpOp_{\zeta \sim \Lambda}
~\Ex{\y_1,\ldots,\y_k \sim \cal{N}_d(\zeta)}{\sum_{S \neq \emptyset} \hf(S) \cdot \prod_{i \in S}
  \psi(\y_i) \cdot  \indic_q\left( \{ \y_i | i \in S\}\right) } \mcom
\]
as a degree-$k$ ``multi-linear polynomial'' in the (infinite
set of) variables $\psi(\y)$ for all $\y \in \R^d$.
The intuition is that since the polynomial stays upper-bounded by $0$
for all ``assignments'' $\psi$ to the variables, all its ``coefficients''
must be zero (see Lemma \ref{multi-linear:lem}).

Of course, the above is not a formal argument since the number of variables is
infinite. To formalize this, we define the following quantity, which plays the role of the
``coefficient'' for the term $\prod_{i = 1}^t \psi(\y_i)$ for $t \leq k$. Note that in the expression
for $\pay(\Lambda,\psi)$, the term $\prod_{i=1}^t \psi(\y_i)$ can arise for any $S \sub [k]$ with $|S| = t$
\ie $(\y_1,\ldots, \y_t)$ can be any ordering of any subset of size $t$ for the points $\z_1,\ldots,
\z_k$
which we sample for computing the pay-off. Also, we can also get a term involving $\psi(\y_i)$ if
$\y_i = - \z_j$ for some $j \in [k]$, since we have the constraints $\psi(-\z) = -\psi(\z)$. Taking
these into account, we define the following.

\begin{definition}
We  define $\theta^{(t)} : (\R^d)^t \to \R$ on formal variables $\{\y_1,...,\y_t\}$ as follows:
\begin{equation}
\theta^{(t)}(\y_1,\ldots,\y_t) ~\defeq~
\sum_{|S| = t} \ExpOp_{\pi: [t] \to [t]} ~\ExpOp_{b \in \pmone^t} \Ex{\zeta \sim\Lambda}{\hf(S) \cdot
  \inparen{\prod_{i=1}^t b_i} \cdot \gauss{t,d}{(\y_1,\ldots, \y_t)}{\zeta_{S,\pi,b}}} \mcom
\end{equation}
where $\gauss{t,d}{\cdot}{\zeta_{S,\pi,b}}$ is the joint density of $t$ correlated Gaussians in
$\R^d$, with different coordinates being independent and the moments for each coordinate
given by the appropriate submatrix $\zeta_S$ of $\zeta$ permuted according to $\pi$ and modified
according to the signs specified by $b$.  Also, $\Lambda$ is the limiting measure as above.
\end{definition}
The following properties follow easily from the definition of the  function $\theta^{(t)}$.
\begin{claim}\label{theta-invariance:clm}
For all $t \in [k]$ and for all $(\y_1,\ldots,\y_t) \in (\R^d)^t$, we have that
\begin{itemize}
\item[-] For all permutations $\pi': [t] \to [t]$, ~$\theta^{(t)}(\pi' (\y_1,\ldots,\y_t)) =
  \theta^{(t)}(\y_1,\ldots,\y_t)$.
\item[-] For all $b' \in \pmone^t$, ~$\theta^{(t)}(b_1'\y_1,\ldots,b_t'\y_t) = \inparen{\prod_{i=1}^t
    b_i'} \cdot \theta^{(t)}(\y_1,\ldots,\y_t)$
\end{itemize}
\end{claim}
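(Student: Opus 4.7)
\textbf{Proof plan for Claim \ref{theta-invariance:clm}.} Both properties follow from the symmetry of the Gaussian density $\gamma_{t,d}$ together with the fact that in the definition of $\theta^{(t)}$ we average uniformly over all permutations $\pi : [t]\to[t]$ and all sign vectors $b \in \pmone^t$. The plan is to show that replacing the argument $(\y_1,\ldots,\y_t)$ by $\pi'(\y_1,\ldots,\y_t)$ or $(b_1'\y_1,\ldots,b_t'\y_t)$ has exactly the same effect on $\gamma_{t,d}(\cdot,\zeta_{S,\pi,b})$ as a corresponding modification of $\pi$ and $b$, and then to absorb that modification into the averaging by change of variables.

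\textbf{Step 1 (sign equivariance).} Write the Gaussian density $\gamma_{t,d}(\y,(\Sigma,\mu))$ explicitly as in the preliminaries and let $B = \mathrm{diag}(b'_1,\ldots,b'_t)$, so $B = B^{-1} = B^T$ and $\det B = \pm 1$. Substituting $\y^{(\ell)} \mapsto B\y^{(\ell)}$ in the exponent and using $\det(B\Sigma B) = \det \Sigma$ shows
\[
\gamma_{t,d}(B\y,(\Sigma,\mu)) ~=~ \gamma_{t,d}(\y,(B\Sigma B, B\mu)).
\]
For the specific covariance/mean pair encoded by $\zeta_{S,\pi,b}$, the covariance entry $(i,j)$ carries the factor $b_ib_j$ and the mean entry $i$ carries the factor $b_i$, so $(B\Sigma B, B\mu)$ is precisely what is encoded by $\zeta_{S,\pi,\, b\cdot b'}$. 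Thus
\[
\gamma_{t,d}\bigl((b'_1\y_1,\ldots,b'_t\y_t),\zeta_{S,\pi,b}\bigr) ~=~ \gamma_{t,d}\bigl((\y_1,\ldots,\y_t),\zeta_{S,\pi,\,b\cdot b'}\bigr).
\]
Now perform the change of variable $b \mapsto b\cdot b'$ in the uniform average over $b\in\pmone^t$. This is a bijection, and $\prod_{i}b_i$ gets replaced by $\prod_i(b_ib_i') = (\prod_i b_i')\cdot \prod_i b_i$, pulling the scalar $\prod_i b_i'$ outside. The remaining expression is $\theta^{(t)}(\y_1,\ldots,\y_t)$, proving the second bullet.

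\textbf{Step 2 (permutation invariance).} For a permutation $\pi' : [t]\to[t]$, the Gaussian density is invariant under simultaneous permutation of its argument and of the row/column indices of its parameter pair: writing $P$ for the corresponding permutation matrix,
\[
\gamma_{t,d}\bigl(\pi'(\y),(\Sigma,\mu)\bigr) ~=~ \gamma_{t,d}\bigl(\y,(P^T\Sigma P,\,P^T\mu)\bigr).
\]
For the parameter $\zeta_{S,\pi,b}$, relabeling the rows/columns by $\pi'$ (leaving the $0$-index fixed) amounts to composing the permutation $\pi$ with $\pi'$ and permuting the sign vector $b$ by $\pi'$; denote the resulting sign vector by $b\circ\pi'$. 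Hence
\[
\gamma_{t,d}\bigl(\pi'(\y),\zeta_{S,\pi,b}\bigr) ~=~ \gamma_{t,d}\bigl(\y,\zeta_{S,\,\pi\circ\pi',\,b\circ\pi'}\bigr).
\]
Both $\pi\mapsto \pi\circ\pi'$ on $\mathrm{Sym}([t])$ and $b\mapsto b\circ\pi'$ on $\pmone^t$ are bijections preserving the uniform measure, and the scalar $\prod_{i=1}^t b_i$ is unchanged when $b$ is permuted. After applying both changes of variable inside the uniform averages over $\pi$ and $b$, the expression becomes $\theta^{(t)}(\y_1,\ldots,\y_t)$, proving the first bullet.

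\textbf{Main obstacle.} The only delicate point is careful bookkeeping of how the parameter $\zeta_{S,\pi,b}$ transforms under (i) sign-flipping the Gaussian argument and (ii) permuting the Gaussian argument, in particular keeping track of the role of the $0$-row/column that encodes the means and verifying that Hadamard multiplication with $(1,b)(1,b)^T$ exactly matches the conjugation $(\Sigma,\mu)\mapsto (B\Sigma B, B\mu)$. Once this identification is made, both statements reduce to straightforward change-of-variable calculations in the uniform averages.
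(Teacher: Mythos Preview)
Your proposal is correct and is essentially the same argument as the paper's, just organized in the opposite direction: the paper first uses the identity $\gamma_{t,d}((\y_1,\ldots,\y_t),\zeta_{S,\pi,b}) = \gamma_{t,d}(\pi(b_1\y_1,\ldots,b_t\y_t),\zeta_S)$ to move $\pi,b$ from the parameter to the argument and then absorbs $\pi'$ and $b'$ into the uniform averages, whereas you push $\pi'$ and $b'$ from the argument into the parameter and then change variables. Both reduce to the same change-of-variable observation on the uniform averages over $\pi$ and $b$.
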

\begin{proof}
By definition of $\zeta_{S,\pi,b}$, we have
$\gauss{t,d}{(\y_1,\ldots, \y_t)}{\zeta_{S,\pi,b}} = \gauss{t,d}{\pi(b_1\y_1,\ldots,
  b_t\y_t)}{\zeta_{S}}$. We can then write
\[
\theta^{(t)}(\y_1,\ldots,\y_t) ~=~
\sum_{|S| = t} \ExpOp_{\pi: [t] \to [t]} ~\ExpOp_{b \in \pmone^t} \Ex{\zeta \sim\Lambda}{\hf(S) \cdot
  \inparen{\prod_{i=1}^t b_i} \cdot \gauss{t,d}{\pi(b_1\y_1,\ldots, b_t\y_t)}{\zeta_{S}}} \mper
\]
Since the expression already involves expectation over all permutations $\pi$ of each tuple,
replacing $(\y_1,\ldots,\y_t)$ by $\pi(\y_1,\ldots,\y_t)$ does not change the value of the function.
Similarly, for any $b' \in \pmone^t$, we get
\begin{align*}
\theta^{(t)}(b_1'\y_1,\ldots,b_t'\y_t)
&~=~
\sum_{|S| = t} \ExpOp_{\pi: [t] \to [t]} ~\ExpOp_{b \in \pmone^t} \Ex{\zeta \sim\Lambda}{\hf(S) \cdot
  \inparen{\prod_{i = 1}^t b_i} \cdot \gauss{t,d}{\pi(b_1b_1'\y_1,\ldots, b_t b_t'\y_t)}{\zeta_{S}}} \\
&~=~
\sum_{|S| = t} \ExpOp_{\pi: [t] \to [t]} ~\ExpOp_{b \in \pmone^t} \Ex{\zeta \sim\Lambda}{\hf(S) \cdot
  \inparen{\prod_{i = 1}^t b_i b_i'} \cdot \gauss{t,d}{\pi(b_1\y_1,\ldots, b_t \y_t)}{\zeta_{S}}} \mcom
\end{align*}
which equals $ \inparen{\prod_{i=1}^t b_i'} \cdot \theta^{(t)}(\y_1,\ldots,\y_t)$ as claimed.
\end{proof}
The next claim shows that the functions $\theta^{(t)}$ indeed provide the right notion of
``coefficients'' when we think of the function $\pay(\Lambda,\psi)$
as a polynomial in the values $\psi(\z)$.
\begin{claim}\label{payoff-theta:clm}
Let $\Lambda$ be the measure as above and let $\psi = \psi_q: \R^d \to \values$ be an odd
function.
Then,
\[
\pay(\Lambda,\psi) ~=~ \sum_{t=1}^k \int_{(\R^d)^t} \theta^{(t)} (\y_1,\ldots,\y_t) \cdot
\inparen{\prod_{i=1}^t \psi(\y_i)} \cdot \indic_q\left( \{ \y_i | i \in [t] \}\right) \ d\y_1\ldots d\y_t \mper
\]
\end{claim}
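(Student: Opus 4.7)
The plan is to start from the definition of $\pay(\Lambda,\psi)$ and reorganize the sum and integrals to expose $\theta^{(t)}$ as the natural ``coefficient'' of the monomial $\prod_{i=1}^t\psi(\y_i)$. Concretely, I would first expand the Gaussian expectation as an integral:
\[
\pay(\Lambda,\psi) ~=~ \Ex{\zeta\sim\Lambda}{\sum_{S\neq\emptyset} \hf(S)\int_{(\R^d)^{|S|}} \gauss{|S|,d}{(\y_{i_1},\ldots,\y_{i_{|S|}})}{\zeta_S} \prod_{i\in S}\psi(\y_i)\cdot\indic_q\, d\y_{i_1}\cdots d\y_{i_{|S|}}},
\]
using the fact that the marginal of $(\y_{i_1},\ldots,\y_{i_t})$ under $\cal{N}_d(\zeta)$ is a product of $d$ independent $t$-dimensional Gaussians with moment matrix $\zeta_S$. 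Then I would group terms by $t=|S|$, relabel the integration variables of the subset $S=\{i_1<\cdots<i_t\}$ as $\y_1,\ldots,\y_t$, and pull the integrals (over a bounded region, and with bounded Gaussian integrand times bounded $\psi$) outside the $\zeta$-expectation via Fubini.

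The next step is the symmetrization that produces the averages over $\pi$ and $b$ in the definition of $\theta^{(t)}$. For any permutation $\pi:[t]\to[t]$, renaming the dummy integration variables $\y_i\mapsto\y_{\pi(i)}$ leaves both $\prod_i\psi(\y_i)$ and the indicator $\indic_q(\{\y_i,-\y_i\})$ unchanged (since both are symmetric in $\y_1,\ldots,\y_t$), while $\gauss{t,d}{(\y_1,\ldots,\y_t)}{\zeta_S}$ turns into $\gauss{t,d}{(\y_1,\ldots,\y_t)}{\zeta_{S,\pi}}$ by the definition of $\zeta_{S,\pi}$; hence the integral equals its average over all $\pi$. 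For any sign vector $b\in\pmone^t$, the change of variables $\y_i\mapsto b_i\y_i$ has unit Jacobian, converts $\gauss{t,d}{\cdot}{\zeta_{S,\pi}}$ into $\gauss{t,d}{\cdot}{\zeta_{S,\pi,b}}$, leaves $\indic_q$ invariant (the set $\{\y_i,-\y_i\}$ is preserved by each sign flip, and the cells of $\cal{P}_q$ are symmetric about $0$), and uses the oddness of $\psi$ to produce a factor of $\prod_{i=1}^t b_i$. Averaging over all $b\in\pmone^t$ is again legitimate since each individual equation holds.

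Putting the two averages together, the integrand inside the $\zeta$-expectation, summed over subsets $S$ of size $t$, becomes
\[
\sum_{|S|=t}\hf(S)\,\Ex{\pi}\Ex{b}{\Big(\prod_{i=1}^t b_i\Big)\gauss{t,d}{(\y_1,\ldots,\y_t)}{\zeta_{S,\pi,b}}}\cdot\prod_{i=1}^t\psi(\y_i)\cdot\indic_q,
\]
which, after swapping the expectation over $\zeta\sim\Lambda$ into the definition, is precisely $\theta^{(t)}(\y_1,\ldots,\y_t)\cdot\prod_{i=1}^t\psi(\y_i)\cdot\indic_q$. Summing over $t=1,\ldots,k$ yields the stated identity.

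The only mildly delicate step is justifying the invariance of the indicator $\indic_q$ and the Fubini interchanges, but both are routine: $\indic_q$ is a symmetric Boolean function of the set $\{\y_i,-\y_i\}_{i\in[t]}$ and the cells of $\cal{P}_q$ are symmetric about the origin, while all integrands are uniformly bounded (by Claim~\ref{payoff-bound:clm}) and supported on the compact region $\bbox$, so no integrability issue arises. No new ingredients beyond the bookkeeping above are needed.
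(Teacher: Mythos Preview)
Your proposal is correct and follows essentially the same approach as the paper: expand the Gaussian expectation as an integral against the marginal density $\gauss{t,d}{\cdot}{\zeta_S}$, then symmetrize over permutations and sign flips (using oddness of $\psi$ and invariance of $\indic_q$) to produce $\theta^{(t)}$. The only cosmetic difference is that the paper performs the sign-flip symmetrization before the permutation symmetrization, whereas you do them in the opposite order; since both are averaging operations the order is immaterial.
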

\begin{proof}
Since $\psi$ is  measurable and the integral above is bounded,
we will freely switch the order of integrals in the
argument below. We have
%
\begin{align*}
&~\pay(\Lambda,\psi) \\
&~=~ \ExpOp_{\zeta \sim \Lambda}
~\Ex{\z_1,\ldots,\z_k \sim \cal{N}_d(\zeta)}{\sum_{S \neq \emptyset} \hf(S) \cdot \inparen{\prod_{i \in S}
  \psi(\z_i)} \cdot \indic_q\left( \{ \z_i | i \in S\}\right) } \\
&~=~ \sum_{t=1}^k\sum_{|S| = t} \hf(S) \cdot \ExpOp_{{\zeta \sim \Lambda}}
~\Ex{\y_1,\ldots,\y_t \sim \cal{N}_d(\zeta_S)}{\inparen{\prod_{i=1}^t \psi(\y_i)} \cdot \indic_q\left( \{ \y_i | i \in [t]\}\right)  } \\
&~=~ \sum_{t=1}^k\sum_{|S| = t} \hf(S) \cdot \ExpOp_{{\zeta \sim \Lambda}}
~\int_{(\R^d)^t} \gauss{t,d}{(\y_1,\ldots,\y_t)}{\zeta_S} \cdot \inparen{\prod_{i=1}^t \psi(\y_i)} \cdot \indic_q\left( \{ \y_i | i \in [t]\}\right) ~d\y_1
\ldots d\y_t
\end{align*}
Symmetrizing the expression over the sign flips $b \in \{-1,1\}^t$,
\begin{align*}
&~\pay(\Lambda,\psi) \\
&~=~ \sum_{t=1}^k\sum_{|S| = t} \hf(S) \cdot \ExpOp_{b \in \pmone^t} \ExpOp_{{\zeta \sim \Lambda}}
~\int_{(\R^d)^t} \gauss{t,d}{(b_1 \y_1,\ldots, b_t \y_t)}{\zeta_S} \cdot \\
& \hspace{6.5cm}
\inparen{\prod_{i=1}^t \psi( b_i \y_i)  } \cdot \indic_q\left( \{ b_i \y_i | i \in [t]\}\right) ~d\y_1
\ldots d\y_t \\
&~=~ \sum_{t=1}^k\sum_{|S| = t} \hf(S) \cdot \ExpOp_{b \in \pmone^t} \ExpOp_{{\zeta \sim \Lambda}}
~\int_{(\R^d)^t} \gauss{t,d}{(b_1 \y_1,\ldots, b_t \y_t)}{\zeta_S} \cdot \\
& \hspace{6.5cm}
\inparen{\prod_{i=1}^t b_i} \cdot \inparen{\prod_{i=1}^t \psi( \y_i)   } \cdot \indic_q\left( \{ \y_i | i \in [t]\}\right)  ~d\y_1
\ldots d\y_t \mcom
\end{align*}
where the last equality used the fact that $\psi$ is odd. Finally, we note that the term
$\prod_{i=1}^t \psi(\y_i)$ can arise from any permutation of the tuple $(\y_1,\ldots,\y_t)$. We thus
re-write the expression above as
\begin{align*}
&~\pay(\Lambda,\psi) \\
&= \sum_{t=1}^k \int_{(\R^d)^t} \sum_{|S| = t}  \ExpOp_{\pi: [t] \to [t]
    \atop b \in \pmone^t} ~\Ex{\zeta \sim \Lambda}{ \hf(S) \cdot \inparen{\prod_{i=1}^t b_i} \cdot
\gauss{t,d}{\pi(b_1 \y_1,\ldots, b_t \y_t)}{\zeta_S}} \cdot \\
& \hspace{6.5cm} \inparen{\prod_{i=1}^t \psi( \y_i)} \cdot \indic_q\left( \{ \y_i | i \in [t]\}\right) \ d\y_1 \ldots d\y_t \\
&= \sum_{t=1}^k \int_{(\R^d)^t} \sum_{|S| = t}  \ExpOp_{\pi: S \to S
    \atop b \in \pmone^S} \Ex{\zeta \sim \Lambda}{ \hf(S) \cdot \inparen{\prod_{i \in S} b_i} \cdot
\gauss{t,d}{(\y_1,\ldots, \y_t)}{\zeta_{S,\pi,b}}} \cdot \\
& \hspace{6.5cm} \inparen{\prod_{i=1}^t \psi(\y_i)} \cdot \indic_q\left( \{ \y_i | i \in [t]\}\right) \ d\y_1 \ldots d\y_t \\
&= \sum_{t=1}^k \int_{(\R^d)^t} \theta^{(t)} (\y_1,\ldots,\y_t) \cdot
\inparen{\prod_{i=1}^t \psi(\y_i)}\cdot \indic_q\left( \{ \y_i | i \in [t]\}\right) \  d\y_1\ldots d\y_t \mcom
\end{align*}
as claimed.
\end{proof}

We next show that $\theta^{(t)}$ is a ``nice'' function. For this we shall need to use the fact
that $\Lambda$ is a measure over $\C_{\delta}(f)$, and that matrices in
$\C_{\delta}(f)$ have each eigenvalue at least $\delta$.

\begin{lemma}\label{smth:lm}
For all $t\in[k]$, $\theta^{(t)}$ is bounded \ie $\norm{\theta^{(t)}}_\infty \le O_{k,d,\delta}(1)$,
and it is $O_{k,d,\delta}(1)$-Lipschitz.
\end{lemma}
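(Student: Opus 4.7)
The plan is to analyze $\theta^{(t)}$ term by term and exploit the fact that $\Lambda$ is supported on $\cal{C}_\delta(f)$, so that by Claim~\ref{eigenvalues-delta:clm} the covariance matrix $\Sigma$ corresponding to every realization of $\zeta_{S,\pi,b}$ has all eigenvalues at least $\delta$. This is the single structural input that makes everything work: it keeps the Gaussian densities $\gamma_{t,d}(\cdot, \zeta_{S,\pi,b})$ non-degenerate uniformly.

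For the boundedness, I would first note that $\theta^{(t)}$ is a finite sum (over $|S|=t$) of expectations (over $\pi$, $b$, $\zeta\sim\Lambda$) of quantities of magnitude at most $|\HF(S)|\cdot 1\cdot \gamma_{t,d}(\y,\zeta_{S,\pi,b})$, and $|\HF(S)|\le 1$. For a single Gaussian density in $\R^t$ with covariance $\Sigma$ satisfying $\Sigma \succeq \delta\cdot \mathbb{I}_t$, we have $\det\Sigma\ge \delta^t$, hence $\gamma_t(\cdot,(\Sigma,\mu))\le (2\pi\delta)^{-t/2}$. Taking the product over the $d$ independent coordinates used in $\gamma_{t,d}$ gives $\gamma_{t,d}\le (2\pi\delta)^{-td/2}$. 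Bounding each term by this and multiplying by the number of summands (at most $\binom{k}{t}$) yields $\|\theta^{(t)}\|_\infty = O_{k,d,\delta}(1)$.

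For the Lipschitz bound, it suffices (by the same finite sum argument) to show that $\gamma_{t,d}(\cdot,(\Sigma,\mu))$ is $O_{t,d,\delta}(1)$-Lipschitz in $\y\in(\R^d)^t$, uniformly over admissible $\Sigma,\mu$. Writing $\gamma_{t,d}(\y,(\Sigma,\mu))=\prod_{\ell=1}^d \gamma_t(\y^{(\ell)},(\Sigma,\mu))$, the gradient with respect to one coordinate $(\y_i)_\ell$ equals
\[
\partial_{(\y_i)_\ell}\gamma_{t,d} \;=\; \gamma_{t,d}(\y,(\Sigma,\mu))\cdot \bigl(-\Sigma^{-1}(\y^{(\ell)}-\mu)\bigr)_i.
\]
Substituting $w=\Sigma^{-1/2}(\y^{(\ell)}-\mu)$, we get $\|\Sigma^{-1}(\y^{(\ell)}-\mu)\|\le \delta^{-1/2}\|w\|$ and $\gamma_t(\y^{(\ell)})\le (2\pi\delta)^{-t/2} e^{-\|w\|^2/2}$. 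Because $\|w\|\cdot e^{-\|w\|^2/2}$ is a universally bounded function of $\|w\|$, the product $\|\Sigma^{-1}(\y^{(\ell)}-\mu)\|\cdot \gamma_t(\y^{(\ell)})$ is bounded by a constant depending only on $t$ and $\delta$; multiplying by the remaining $d-1$ factors (each bounded by $(2\pi\delta)^{-t/2}$) gives a uniform bound on $\|\nabla_\y \gamma_{t,d}\|$ of order $O_{t,d,\delta}(1)$. Summing over the finitely many terms in the definition of $\theta^{(t)}$ completes the Lipschitz bound.

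The only mild subtlety I anticipate is the $d$-dimensional product structure of $\gamma_{t,d}$, which must be handled by taking partial derivatives coordinate-by-coordinate and using that the remaining factors are uniformly bounded; once that bookkeeping is done, no deep estimate is required beyond the eigenvalue lower bound $\delta$ from Claim~\ref{eigenvalues-delta:clm} and the elementary bound $\sup_{u\ge 0} u\,e^{-u^2/2}<\infty$.
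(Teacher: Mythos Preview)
Your proposal is correct and follows essentially the same approach as the paper: both the boundedness and Lipschitz arguments rest on the eigenvalue lower bound from Claim~\ref{eigenvalues-delta:clm}, the explicit formula for $\nabla_\y\gamma_{t,d}$, and an elementary bound of the form $u\,e^{-u^2/2}\le C$. The only cosmetic difference is that the paper bounds $\|\nabla\gamma_{t,d}\|^2$ all at once via $x\,e^{-x}\le 1$ with $x=\sum_{\ell}\|\Sigma^{-1/2}(\y^{(\ell)}-\mu)\|^2$, whereas you bound each coordinate-$\ell$ block separately and then multiply by the $d-1$ remaining factors; both routes give the same $O_{k,d,\delta}(1)$ constant.
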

\begin{proof}
We first argue that $\theta^{(t)}$ is bounded. The Gaussian density
$\gauss{t,d}{\cdot}{\zeta_{S,\pi,b}}$ is at most $\frac{1}{(2\pi)^{td/2} \abs{\Sigma}^{d/2}}$ where
$\Sigma$ is the covariance matrix associated with $\zeta_{S,\pi,b}$ with $\Sigma_{ij} = \zeta_{S,\pi,b}(i,j) -
\zeta_{S,\pi,b}(0,i) \cdot \zeta_{S,\pi,b}(0,j)$, and $\abs{\Sigma}$ denotes the determinant. Since $\zeta \in
\C_{\delta}(f)$, all the eigenvalues of $\Sigma$ are at least $\delta$ and hence $\abs{\Sigma} \geq
\delta^t$. Also, since $\abs{\hf(S)} \leq 1$, we get
\[
\norm{\theta^{(t)}} ~\le~ \binom{k}{t} \cdot \frac{1}{(2\pi)^{td/2}
  \cdot \delta^{td/2}} ~\le~ \frac{1}{\delta^{kd/2}} \mper
\]
Let $\Sigma$ be the covariance matrix as above and $\mu$ be the vector of means with $\mu_i
~=~ \zeta_{S,\pi,b}(0,i)$. Also, for $l \in [d]$, let $\y^{(l)} \in \R^t$ denote the vector
$((\y_1)_l,\ldots,(\y_t)_l)$ obtained by taking the $l^{th}$ coordinates of $\y_1,\ldots,\y_t$. The
Gaussian density $\gauss{t,d}{(\y_1,\ldots,\y_t)}{\zeta_{S,\pi,b}}$ can then be written as
\[
\gauss{t,d}{(\y_1,\ldots,\y_t)}{\zeta_{S,\pi,b}}
~=~
\prod_{l = 1}^d \inparen{ \frac{1}{(2\pi)^{t/2}\abs{\Sigma}^{1/2}} \cdot \exp\inparen{-\frac12 \cdot
 (\y^{(l)} - \mu)^T \Sigma^{-1} (\y^{(l)}-\mu) }} \mper
\]
The density is a function on $\R^{dt}$. The gradient on the coordinates corresponding to $\y^{(l)}$
can be written as
\[
\insquare{\nabla \inparen{\gauss{t,d}{\y}{\zeta_{S,\pi,b}}}}_{l} ~=~
\frac{1}{\inparen{2 \pi}^{td/2} \cdot \abs{\Sigma}^{d/2}} \cdot \inparen{-\Sigma^{-1} (\y^{(l)}-\mu)} \cdot \prod_{l=1}^d\exp\inparen{-\frac12 \cdot
  (\y^{(l)}-\mu)^T \Sigma^{-1} (\y^{(l)}-\mu)} \mper
\]
Since $\Sigma^{-1}$ is positive semidefinite, we can define a matrix $\Sigma^{-1/2}$. Also, since
$\Sigma$ has eigenvalues at least $\delta$, we have
$\norm{\Sigma^{-1} (\y^{(l)}-\mu)} \leq \frac{1}{\sqrt{\delta}} \cdot \norm{\Sigma^{-1/2} (\y^{(l)}-\mu)}$.
Using this, we can bound the norm of gradient in the coordinates corresponding to $\y^{(l)}$ as
\begin{align*}
\norm{\insquare{\nabla \inparen{\gauss{t,d}{\y}{\zeta_{S,\pi,b}}}}_l}
&~=~
\frac{1}{\inparen{2 \pi}^{td/2} \cdot \abs{\Sigma}^{d/2}} \cdot \norm{\Sigma^{-1} (\y^{(l)}-\mu)} \cdot
\prod_{l=1}^d\exp\inparen{-\frac12 \cdot \norm{\Sigma^{-1/2} (\y^{(l)}-\mu)}^2} \\
&~\le~
\frac{\norm{\Sigma^{-1/2} (\y^{(l)}-\mu)}}{\inparen{2 \pi}^{td/2} \cdot \abs{\Sigma}^{d/2} \cdot \sqrt{\delta}}
\cdot \exp\inparen{-\frac12 \cdot \sum_{l=1}^d\norm{\Sigma^{-1/2}
    (\y^{(l)}-\mu)}^2} \mper
\end{align*}
This bounds the norm of the gradient as
\begin{align*}
\norm{\nabla \inparen{\gauss{t,d}{\y}{\zeta_{S,\pi,b}}}}^2
&~\le~
\frac{\sum_{l=1}^d \norm{\Sigma^{-1/2} (\y^{(l)}-\mu)}^2}{\inparen{2 \pi}^{td} \cdot \abs{\Sigma}^{d} \cdot \delta}
\cdot \exp\inparen{-\sum_{l=1}^d\norm{\Sigma^{-1/2}
    (\y^{(l)}-\mu)}^2} \\
&~\le~ \frac{1}{\inparen{2 \pi}^{td} \cdot \abs{\Sigma}^{d} \cdot \delta} \mcom
\end{align*}
where we used the fact that the function $x \cdot \exp(-x)$ is bounded above by 1. Using the
above, we obtain a bound on the gradient of $\theta^{(t)}$ as
\[
\norm{\nabla \theta^{(t)}} ~\le~ \binom{k}{t} \cdot \frac{1}{(2\pi)^{td/2}
  \cdot \delta^{(td+1)/2}}
~\le~ \frac{1}{\delta^{(kd+1)/2}} \mper
\]
Hence, $\theta^{(t)}$ is $C$-Lipschitz, with $C \leq  (1/\delta)^{(kd+1)/2}$.
\end{proof}
Using the above properties and the fact that $\pay(\Lambda,\psi_q) \leq 0$ for all $q \in \N$ and all
functions $\psi_q$, we can in fact show that the functions $\theta^{(t)}$ must in fact be
identically zero on the entire box $(\bbox)^t$.
\begin{lemma}\label{theta-zero:lm}
For all $t \in [k]$ and all $\y_1,\ldots,\y_t \in \bbox$, we have $\theta^{(t)}(\y_1,\ldots,\y_t) = 0$.
\end{lemma}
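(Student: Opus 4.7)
The plan is to apply Lemma~\ref{multi-linear:lem} to $\pay(\Lambda,\psi_q)$ viewed as a multi-linear polynomial in the values of $\psi_q$ on cells of $\mathcal{P}_q$, deduce that every coefficient vanishes, and then read off from the invariances of $\theta^{(t)}$ that $\theta^{(t)}$ integrates to zero on arbitrary products of cells; Lipschitz continuity of $\theta^{(t)}$ then upgrades this to pointwise vanishing.

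First I would fix $q$ and pick one representative from each antipodal pair $\{B,-B\}$ of cells of $\mathcal{P}_q$, calling these the positive cells. Since $\psi_q$ is odd, it is fully specified by its values on the positive cells, each lying in $\values=\{-1,0,1\}$. By Claim~\ref{payoff-theta:clm}, $\pay(\Lambda,\psi_q)$ is a sum of integrals over $t=1,\ldots,k$; crucially, the factor $\indic_q$ forces all cells among $\{\y_i,-\y_i\}_{i\in[t]}$ to be distinct, which in turn forces the $t$ positive representatives of the cells containing $\y_1,\ldots,\y_t$ to be pairwise distinct. Consequently $\pay(\Lambda,\psi_q)$ is a genuine multi-linear polynomial in the positive-cell variables, with no constant term. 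Lemma~\ref{eval-zero:lm} implies this polynomial is non-positive on every $\values$-valued assignment, so Lemma~\ref{multi-linear:lem} forces it to vanish identically as a polynomial.

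Next I would extract the coefficient of $\prod_{j=1}^t \psi_q(B_j)$ for a fixed ordered tuple $(B_1,\ldots,B_t)$ of positive cells whose $2t$ antipodes $\{\pm B_j\}$ are all distinct. Only the degree-$t$ summand contributes, and that contribution breaks into $t!\cdot 2^t$ pieces, indexed by a permutation $\pi:[t]\to[t]$ and a sign vector $b\in\pmone^t$ describing which cell $b_i B_{\pi(i)}$ each $\y_i$ lives in. After the substitution $\y_i=b_i \z'_i$ with $\z'_i\in B_{\pi(i)}$ and a relabeling $\z_j=\z'_{\pi^{-1}(j)}$, the permutation and sign-flip invariances of $\theta^{(t)}$ from Claim~\ref{theta-invariance:clm} produce a factor of $\prod_i b_i$ that exactly cancels the $\prod_i b_i$ arising from $\psi_q(b_i B_{\pi(i)})=b_i\psi_q(B_{\pi(i)})$. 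Hence each of the $t!\cdot 2^t$ pieces contributes the same quantity $\int_{B_1\times\cdots\times B_t}\theta^{(t)}(\z)\,d\z$, and setting their sum to zero yields $\int_{B_1\times\cdots\times B_t}\theta^{(t)}(\z)\,d\z=0$ for every such admissible tuple.

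Finally I would let $q\to\infty$. By Lemma~\ref{smth:lm}, $\theta^{(t)}$ is $O_{k,d,\delta}(1)$-Lipschitz, so on any cell-product of diameter $\eta$ the mean of $\theta^{(t)}$ differs from its value at any point by $O(\eta)$. Given a generic $(\y_1,\ldots,\y_t)\in\bbox^t$ with $\{\pm\y_i\}$ pairwise distinct and lying in the interior of $\bbox$, for all large enough $q$ the cells of $\mathcal{P}_q$ containing $\y_1,\ldots,\y_t$ form an admissible positive-cell tuple, so the vanishing integral combined with the Lipschitz estimate forces $\theta^{(t)}(\y_1,\ldots,\y_t)=0$; since the generic tuples are dense in $\bbox^t$, continuity of $\theta^{(t)}$ extends the conclusion to all of $\bbox^t$. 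The step I expect to be most delicate is the bookkeeping for the coefficient in paragraph three: one must correctly identify the $t!\cdot 2^t$ contributing configurations and verify, via the two parts of Claim~\ref{theta-invariance:clm}, that the sign factors coming from the $\psi_q$-evaluations and from $\theta^{(t)}$ conspire to cancel rather than to zero out the sum by symmetry (it is important here that $\indic_q$ was inserted precisely to preserve multi-linearity, without which even setting up the polynomial argument would fail).
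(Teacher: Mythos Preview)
Your proposal is correct and follows essentially the same route as the paper. The paper parameterizes the free values of an odd $\psi_q$ by restricting to the half-space $H=[0,1]\times[-1,1]^{d-1}$, while you use the equivalent device of choosing one ``positive'' representative from each antipodal pair of cells; in both cases one reduces $\pay(\Lambda,\psi_q)$ to a genuine multi-linear polynomial (with no constant term) in finitely many $\{-1,0,1\}$-valued variables, applies Lemma~\ref{multi-linear:lem} via Lemma~\ref{eval-zero:lm}, reads off that the cell-averaged $\theta^{(t)}$ vanishes on admissible tuples of distinct cells, and then uses the Lipschitz bound from Lemma~\ref{smth:lm} together with continuity to upgrade to pointwise vanishing on all of $\bbox^t$. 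Your $t!\cdot 2^t$ bookkeeping for the coefficient is exactly the paper's computation (the paper writes the coefficient as $2^t\cdot 2^{-qdt}\cdot t!\cdot \bar\theta^{(t)}$), and your caution about the sign cancellation is well-placed but correctly resolved by the two parts of Claim~\ref{theta-invariance:clm}.
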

\begin{proof}
Let $H$ denote the space $[0,1] \times [-1,1]^{d-1}$. By Claim \ref{theta-invariance:clm}, we only
need to show $\theta^{(t)}(\y_1,\ldots,\y_t) = 0$ for all $\y_1,\ldots, \y_t \in H$, since changing
the sign of any input $\y_i$ only changes the sign of $\theta^{(t)}$. Also, by Claim
\ref{payoff-theta:clm}, we have that for any odd function $\psi_q : \R^d \to \values$, which is 0
outside $\bbox$,
\begin{align*}
\pay(\Lambda,\psi_q)
&~=~
\sum_{t=1}^k \int_{(\R^d)^t} \theta^{(t)} (\y_1,\ldots,\y_t) \cdot
\inparen{\prod_{i=1}^t \psi_q(\y_i)}  \cdot \indic_q\left( \{ \y_i | i \in [t]\}\right) \ d\y_1\ldots d\y_t \\
&~=~
\sum_{t=1}^k \int_{(\bbox)^t} \theta^{(t)} (\y_1,\ldots,\y_t) \cdot
\inparen{\prod_{i=1}^t \psi_q(\y_i)}\cdot \indic_q\left( \{ \y_i | i \in [t]\}\right) \  d\y_1\ldots d\y_t \\
&~=~
\sum_{t=1}^k 2^t \cdot  \int_{H^t} \theta^{(t)} (\y_1,\ldots,\y_t) \cdot
\inparen{\prod_{i=1}^t \psi_q(\y_i)} \cdot \indic_q\left( \{ \y_i | i \in [t]\}\right) \ d\y_1\ldots d\y_t \mper
\end{align*}
The second equality above used the fact that $\psi_q$ is 0 outside $\bbox$.  The last equality
used that by Claim \ref{theta-invariance:clm} and the fact that $\psi_q$ is odd, we have
for any $b \in \pmone^t$
\[
\theta^{(t)}(b_1 \y_1, \ldots, b_t \y_t) \cdot \inparen{\prod_{i=1}^t \psi_q(b_i \y_i)}
~=~
\theta^{(t)}(\y_1, \ldots, \y_t) \cdot \inparen{\prod_{i=1}^t \psi_q(\y_i)} \mper
\]
Recall that for each $q \in \N$ the functions $\psi_q$ are constant on the cells of the partition
$\cal{P}_q$ which divides $\bbox$ in $2^{(q+1)d}$ equal-sized boxes. By the above expression for
$\pay(\Lambda,\psi_q)$ and Lemma \ref{eval-zero:lm}, we have that for any such function $\psi_q$
\[
\sum_{t=1}^k 2^t \cdot  \int_{H^t} \theta^{(t)} (\y_1,\ldots,\y_t) \cdot
\inparen{\prod_{i=1}^t \psi_q(\y_i)}  \cdot \indic_q\left( \{ \y_i | i \in [t]\}\right) \  d\y_1\ldots d\y_t ~\leq~ 0 \mper
\]
\newcommand{\Pt}{\cal{P}^{(t)}}

The partition $\cal{P}_q$ induces a partition $\Pt$ on $H^t$ such that
$\prod_{i=1}^t \psi_q(\y_i)$ is constant on each cell of the partition $\Pt$. We will use
$w \in \Pt$ to denote a cell of this partition.
Also, note that the cell $w$ can be written as $(w_1, \ldots, w_t)$, where each
$w_i$ denotes a cell in $\cal{P}_q$.

We define the function $\bar{\theta}^{(t)}$, which is $\theta^{(t)}$
averaged over each cell of $\Pt$ (which has volume $2^{-qdt}$)
\[
\bar{\theta}^{(t)}(w) ~\defeq~ (2^{qd})^t
\cdot \int_{\y' \in w} \theta^{(t)}(\y_1', \ldots, \y_t') ~d\y_1' \ldots d\y_t' \mper
\]
Also, since $\prod_{i=1}^t \psi_q(\y_i)$ is constant on each $w$, we will use $\prod_{i=1}^t
\psi_q(w_i)$ to denote its value over the cell $w$.
Using the above, we get
\[
\pay(\Lambda,\psi_q)
~=~
\sum_{t = 1}^k 2^t \cdot \sum_{w \in \Pt} 2^{-qdt} \cdot \bar{\theta}^{(t)}(w_1,\ldots,w_t)
\cdot \inparen{\prod_{i = 1}^t \psi_q(w_i)} \cdot \indic_q\left( \{ w_i | i \in [t]\}\right)
~\leq~
0 \mcom
\]
for all functions $\psi_q$. Here the indicator function denotes the event that the cells $w_1,\ldots,w_t$ are distinct.
We thus assume henceforth that the cells $w_1,\ldots, w_t$ that occur in our expressions are always distinct.
Since each $\psi_q$ is defined by $2^{(q+1)d}/2$ values, corresponding to
the cells of $\cal{P}_q$ in $H$, the above can be viewed as a degree-$k$ multi-linear polynomial in
$2^{(q+1)d}/2$ variables. It is crucial that the polynomial is multi-linear and this is guaranteed because the cells
$w_1,\ldots,w_t$ are distinct.
Note that $\prod_{i=1}^t \psi_q(w_i)$ can arise from any permutation
of the tuple $(w_1,\ldots,w_t)$. Since $\theta^{(t)}$ is invariant under the permutation of its
inputs by Claim \ref{theta-invariance:clm}, the coefficient of $\prod_{i=1}^t \psi_q(w_i)$ is
\[
2^t \cdot 2^{-qdt} \cdot t! \cdot \bar{\theta}^{(t)}(w_1,\ldots,w_t) \mper
\]

%
We have that the above multi-linear polynomial over $\R$ is upper bounded by zero for all assignments to its variables from the
set $\values = \{-1,0,1 \}$. Applying Lemma \ref{multi-linear:lem},  the polynomial above must be identically zero and hence
\[
\forall t \in [k], ~~\forall (w_1,\ldots,w_t) \in \Pt  \suchthat w_1,\ldots,w_t ~\mbox{are distinct},  \qquad \bar{\theta}^{(t)}(w_1,\ldots,w_t) ~=~ 0 \mper
\]
Each cell of the partition $\Pt$ is a box in $\R^{dt}$ with each side having length $2^{-q}$. Since
$\bar{\theta}^{(t)}(w_1,\ldots,w_t)$ is the average of $\theta^{(t)}$ over the box corresponding to
$(w_1,\ldots,w_t)$ and $\theta^{(t)}$ is $O_{k,d,\delta}(1)$-Lipschitz by Lemma \ref{smth:lm}, we
have that for some constant $C_{k,d,\delta}$
\[
\forall t \in [k], ~~\forall (\y_1,\ldots,\y_t) \in H^t \qquad \abs{\theta^{(t)}(\y_1,\ldots,\y_t)}
~\le~ \frac{C_{k,d,\delta}}{2^q} \mper
\]
Note that the above holds also for $\y_1,\ldots,\y_t$ that are not necessarily in distinct cells, since by the Lipschitz condition, it suffices
that each $\y_i$ is close to a cell $w_i'$ such that the cells $w_1', \ldots, w_t'$ are distinct.

Finally, since the above holds for all $q \in \N$, we must have that
$\theta^{(t)}(\y_1,\ldots,\y_t) = 0$ for all $(\y_1,\ldots,\y_t) \in H^t$ and hence for all
$(\y_1,\ldots,\y_t) \in (\bbox)^t$.
\end{proof}

\newcommand{\lt}{\Lambda^{(t)}}
\newcommand{\ltt}{\tilde{\Lambda}^{(t)}}
\newcommand{\Ext}{\ExpOp_{|S|=t} ~\ExpOp_{\pi:[t]\to[t]} ~\ExpOp_{b \in \pmone^t}}
\newcommand{\bfone}{\mathbf{1}}
\newcommand{\bfzero}{\mathbf{0}}
For a set $S$ with $|S| = t$, permutation $\pi: [t] \to [t]$ and $b \in \pmone^t$, let
$\Lambda_{S,\pi,b}$ denote the projection of $\Lambda$ to $(t+1) \times (t+1)$ matrices as defined
in Section \ref{sec:prelims}. We define the following signed measure on space of $(t+1) \times
(t+1)$ matrices
\[
\lt ~\defeq~  \ExpOp_{|S| = t} ~\ExpOp_{\pi: [t] \to [t]} ~\Ex{b \in \pmone^t}{\hf(S) \cdot
  \inparen{\prod_{i=1}^t b_i} \cdot \Lambda_{S,\pi,b}}
\]
Lemma \ref{theta-zero:lm} immediately gives the following. Note that the integration below is over
$\zeta' \sim \lt$ and the tuple $\y_1,\ldots,\y_t$ is fixed.
\begin{claim} \label{integral-signed:clm}
For all $t \in [k]$ and for all $\y_1,\ldots,\y_t \in \bbox$, we have
\[
\int \gauss{t,d}{(\y_1,\ldots,\y_t)}{\zeta'} d\lt(\zeta') ~=~ 0 \mper
\]
\end{claim}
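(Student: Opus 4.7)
The plan is to observe that Claim \ref{integral-signed:clm} is essentially an immediate reformulation of Lemma \ref{theta-zero:lm}, once the definitions of $\lt$ and of the pushforward measures $\Lambda_{S,\pi,b}$ are unpacked. The essential idea is that integrating the Gaussian density against $\lt$ precisely reproduces (up to a harmless combinatorial normalization) the function $\theta^{(t)}$, which has already been shown to vanish on the box.

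First I would rewrite $\int \gauss{t,d}{(\y_1,\ldots,\y_t)}{\zeta'}\,d\lt(\zeta')$ by plugging in the definition
\[
\lt ~=~ \ExpOp_{|S|=t}\,\ExpOp_{\pi:[t]\to[t]}\,\Ex{b\in\pmone^t}{\hf(S)\cdot\inparen{\prod_{i=1}^t b_i}\cdot\Lambda_{S,\pi,b}},
\]
and pushing the integral inside the finite linear combination. Since each $\Lambda_{S,\pi,b}$ is by Definition \ref{cvxptppred:dfn} the pushforward of $\Lambda$ under the deterministic map $\zeta\mapsto\zeta_{S,\pi,b}$, the change-of-variables formula for pushforwards gives
\[
\int \gauss{t,d}{(\y_1,\ldots,\y_t)}{\zeta'}\,d\Lambda_{S,\pi,b}(\zeta') ~=~ \Ex{\zeta\sim\Lambda}{\gauss{t,d}{(\y_1,\ldots,\y_t)}{\zeta_{S,\pi,b}}}.
\]
To invoke this, one just needs that the integrand is measurable and bounded, which is automatic since $\Lambda$ is supported on $\C_\delta(f)$, where the relevant covariance matrices have eigenvalues at least $\delta$ (by Claim \ref{eigenvalues-delta:clm}), so the Gaussian density is bounded by $O_{k,d,\delta}(1)$ uniformly in $\zeta'$.

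Substituting back and comparing with the definition of $\theta^{(t)}$, the only discrepancy is that $\theta^{(t)}$ involves an unnormalized sum $\sum_{|S|=t}$ whereas $\lt$ uses the expectation $\ExpOp_{|S|=t}$; these differ by the factor $\binom{k}{t}$. Therefore
\[
\int \gauss{t,d}{(\y_1,\ldots,\y_t)}{\zeta'}\,d\lt(\zeta') ~=~ \frac{1}{\binom{k}{t}}\cdot\theta^{(t)}(\y_1,\ldots,\y_t).
\]
Since $\y_1,\ldots,\y_t\in\bbox$, Lemma \ref{theta-zero:lm} gives $\theta^{(t)}(\y_1,\ldots,\y_t)=0$, and the claim follows.

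I do not anticipate any real obstacle for this step: the content of the claim is purely a rewriting of Lemma \ref{theta-zero:lm} in the language of signed measures on moment matrices, and all the genuine work — namely, using the polynomial upper bound on $\pay(\Lambda,\psi_q)$ together with the multi-linearity lemma (Lemma \ref{multi-linear:lem}) to force $\theta^{(t)}$ to vanish on the cells of $\cal{P}_q$, and then using the Lipschitz bound from Lemma \ref{smth:lm} to upgrade this to pointwise vanishing on $\bbox$ — has already been carried out.
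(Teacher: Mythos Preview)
Your proposal is correct and follows essentially the same approach as the paper: both arguments unwind the definition of $\lt$ via the pushforward measures $\Lambda_{S,\pi,b}$ to obtain the identity $\theta^{(t)}(\y_1,\ldots,\y_t) = \binom{k}{t}\cdot\int \gauss{t,d}{(\y_1,\ldots,\y_t)}{\zeta'}\,d\lt(\zeta')$, and then invoke Lemma \ref{theta-zero:lm}. The only difference is cosmetic (you start from the integral and arrive at $\theta^{(t)}$, whereas the paper starts from $\theta^{(t)}$ and arrives at the integral), and your added remark about boundedness of the Gaussian density via Claim \ref{eigenvalues-delta:clm} is a welcome bit of care that the paper leaves implicit.
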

\begin{proof}
We start by expanding the expression for $\theta^{(t)}$.
\begin{align*}
\theta^{(t)}(\y_1,\ldots,\y_t)
&~=~
\binom{k}{t} \cdot \Ext \int \hf(S) \cdot
\inparen{\prod_{i=1}^t b_i} \cdot \gauss{t,d}{(\y_1,\ldots,\y_t)}{\zeta_{S,\pi,b}} d\Lambda(\zeta) \\
&~=~
\binom{k}{t} \cdot \Ext \int \hf(S) \cdot
\inparen{\prod_{i=1}^t b_i} \cdot \gauss{t,d}{(\y_1,\ldots,\y_t)}{\zeta'} d\Lambda_{S,\pi,b}(\zeta')
\\
&~=~
\binom{k}{t} \cdot \int \gauss{t,d}{(\y_1,\ldots,\y_t)}{\zeta'} d\lt(\zeta') \mper
\end{align*}
The claim follows by using that $\theta^{(t)}(\y_1,\ldots,\y_t) = 0$ for all $\y_1,\ldots,\y_t
\in \bbox$ by Lemma \ref{theta-zero:lm}.
\end{proof}
From the claim we get that the integral of $\gauss{t,d}{(\y_1,\ldots,\y_t)}{\zeta'}$ with respect to
the signed measure $\lt$ is zero for all $\y_1,\ldots,\y_t \in \bbox$. We will use it to show that
the integral of all continuous functions must be zero with respect to $\lt$ and hence $\lt$ must
itself be identically zero. However, we will need to modify $\lt$ a little to prove this.

We begin by considering the expression for $\gauss{t,d}{(\y_1, \ldots, \y_t)}{\zeta'}$.
We note that there is a bijection between the matrices $\zeta'$ and the pairs
$(\Sigma,\mu)$, where $\mu \in \R^t$ is a vector of means with $\mu_i = \zeta'(0,i)$ and $\Sigma$ is
the $t \times t$ covariance matrix with $\Sigma_{ij} = \zeta'(i,j) - \mu_i \cdot \mu_j$.
Also, since $\zeta' = \zeta_{S,\pi,b}$ for some $\zeta \in \C_{\delta}(f)$, we have that $\Sigma$ is
an invertible matrix with each eigenvalue at least $\delta$. We shall use $M$ to denote the matrix
$\Sigma^{-1}$ which has all eigenvalues at most $1/\delta$.
Also, as before, for vectors $\y_1,\ldots,\y_t \in \R^d$, and for $l \in [d]$, we use $\y^{(l)} \in
\R^t$ to denote the vector consisting of the $l^{th}$ coordinates of $\y_1,\ldots,\y_t$.
We can then write
\begin{align*}
&\gauss{t,d}{(\y_1,\ldots,\y_t)}{\zeta'} \\
&~=~
\frac{1}{(2\pi)^{td/2} \cdot \abs{\Sigma}^{d/2}} \cdot \exp \inparen{-\frac12 \cdot \sum_{l=1}^d
  (\y^{(l)}-\mu)^T M (\y^{(l)} - \mu)} \\
&~=~
\frac{1}{(2\pi)^{td/2} \cdot \abs{\Sigma}^{d/2}} \cdot \exp \inparen{-\frac12 \sum_{i,j=1}^t M_{ij}
  \ip{\y_i,\y_j} - \frac{d}{2} \sum_{i,j=1}^t M_{ij} \mu_i \mu_j  + \sum_{i,j=1}^t M_{ij} \mu_j
  \ip{\y_i,\bfone}} \\
&~=~
\gauss{t,d}{(\bfzero,\ldots,\bfzero)}{\zeta'} \cdot \exp \inparen{-\frac12 \sum_{i,j=1}^t M_{ij}
  \ip{\y_i,\y_j} + \sum_{i,j=1}^t M_{ij} \mu_j
  \ip{\y_i,\bfone}} \mcom
\end{align*}
where $\bfone \in \R^d$ denotes the vector $(1,\ldots,1)$ and $\bfzero \in \R^d$ denotes the vector
$(0,\ldots,0)$.

We will try to argue that for $d \geq k+1$, the values
$\inbraces{\ip{\y_i,\y_j}}_{i,j \in [t]}$ and $\inbraces{\ip{\y_i,\bfone}}_{i \in [t]}$ are
``independent enough'' so that if the integral of $\gauss{t,d}{(\y_1,\ldots,\y_t)}{\zeta'}$ with
respect to $\lt$ vanishes for all $\y_1,\ldots,\y_t \in \bbox$, then $\lt$ vanishes.
However, the values
$\inbraces{\ip{\y_i,\y_j}}_{i,j \in [t]}$ and $\inbraces{\ip{\y_i,\bfone}}_{i \in [t]}$ cannot vary
completely independently, since they are required to form a positive semidefinite matrix.
To handle
this, we define the variables (for $\beta > 0$ to be chosen later)
\begin{equation}\label{XandZ:eqn}
X_{ij} = \left\{
\begin{array}{ll}
\ip{\y_i,\y_j} & ~\text{if}~ i \neq j \\
\ip{\y_i,\y_i} - \beta & ~\text{if}~ i = j
\end{array}
\right.
\quad
\text{and}
\quad
Z_i = \ip{\y_i,\bfone} \mper
\end{equation}
Let $N$ denote the vector $\Sigma^{-1} \mu = M \mu$. We can then write
\[
\gauss{t,d}{(\y_1,\ldots,\y_t)}{\zeta'}
~=~
\gauss{t,d}{(\bfzero,\ldots,\bfzero)}{\zeta'} \cdot \exp \inparen{-\frac{\beta}{2} \cdot Tr(M)}
\cdot \exp \inparen{-\frac12 ( M \bullet X) + \ip{N,Z}} \mcom
\]
where $M \bullet X$ denotes the Frobenius inner product of the two matrices.
\newcommand{\gt}{g^{(t)}_{\beta}}

Note that there is a bijection between the pairs $(M,N)$ and the pairs $(\Sigma,\mu)$, and hence
also between the pairs $(M,N)$ and the matrices $\zeta'$. We can then view the expression
$\gauss{t,d}{(\bfzero,\ldots,\bfzero)}{\zeta'} \cdot \exp \inparen{-\frac{\beta}{2} \cdot Tr(M)}$ as
a function of the pair $(M,N)$, say $\gt(M,N)$. Also viewing the Gaussian density as a
function of the pair $(M,N)$, we can write
\[
\gauss{t,d}{(\y_1,\ldots,\y_t)}{(M,N)}
~=~
\gt(M,N) \cdot \exp \inparen{-\frac12 ( M \bullet X) + \ip{N,Z}} \mper
\]
Finally, note that the bijection from the pairs $(\Sigma,\mu)$ to the pairs $(M,N)$ is a
\emph{continuous map}, since both the maps $\Sigma \mapsto \Sigma^{-1}$ and $\mu \mapsto \Sigma^{-1} \mu$ are
continuous on the space of matrices $\Sigma$ with each eigenvalue at least $\delta$. Also, the
bijection from matrices $\zeta'$ to the pairs $(\Sigma,\mu)$ is continuous. Thus, the bijection from
matrices $\zeta'$ to the pairs $(M,N)$ is continuous and hence maps measurable sets to measurable
sets. Hence, we can also view the signed measure $\lt$ as a signed measure on the
pairs $(M,N)$.

We say that a pair $(X,Z)$ for $X \in \R^{t \times t}$ and $Z \in \R^{t}$
is {\deffont $(\beta,d)$-realizable} if there exist $\y_1,\ldots,\y_t \in
\bbox$ such that the values $X_{ij}$ and $Z_i$ satisfy the relation in Equation \ref{XandZ:eqn}.
From the above discussion and
Claim \ref{integral-signed:clm}, we have that for all $(\beta,d)$-realizable pairs $(X,Z)$
\[
\int \gt(M,N) \cdot \exp \inparen{-\frac12 ( M \bullet X) + \ip{N,Z}} d\lt(M,N) ~=~ 0 \mper
\]
Note that $\gt(M,N)$ is a positive valued function of the pair $(M,N)$. Using this we define the
signed measure $\ltt$ as
\[
\ltt ~\defeq~ \lt \cdot \gt \mper
\]
Formally, for every set $A$ (of pairs $(M,N)$) in the underlying $\sigma$-algebra, we define
\[
\ltt(A) ~\defeq~ \int \indicator{A}(M,N) \cdot \gt(M,N) ~d\lt(M,N)
\]
This operation indeed defines a new signed measure if $\gt$ is a continuous non-negative function
(see Exercise 7 in Chapter 3 of \cite{met} for example). The required conditions on $\gt$ are
easily proved.
\begin{claim}\label{g-properties:clm}
The function $\gt$ is a positive and continuous function of the pairs $(M,N)$, and is bounded above
by a constant $C_{k,d,\delta}$.
\end{claim}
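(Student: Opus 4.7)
The plan is to first rewrite $\gt$ in a completely explicit closed form as a function of $(M,N)$, and then read off the three required properties (positivity, continuity, boundedness) from that formula together with the eigenvalue bounds already available from Claim~\ref{eigenvalues-delta:clm}. Using the bijection $\zeta' \leftrightarrow (\Sigma,\mu)$ and then $(\Sigma,\mu)\leftrightarrow(M,N)=(\Sigma^{-1},\Sigma^{-1}\mu)$, one has $\mu = M^{-1}N$ and $\mu^T M \mu = N^T M^{-1} N$. Substituting into the definition of the $d$-wise product Gaussian density at the origin gives
\[
\gt(M,N) ~=~ \frac{|M|^{d/2}}{(2\pi)^{td/2}} \cdot \exp\!\inparen{-\frac{d}{2}\, N^T M^{-1} N} \cdot \exp\!\inparen{-\frac{\beta}{2}\,\operatorname{Tr}(M)}.
\]

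From this expression, positivity is immediate: by Claim~\ref{eigenvalues-delta:clm}, the covariance matrix $\Sigma$ associated with $\zeta_{S,\pi,b}$ has all eigenvalues at least $\delta$, so $M=\Sigma^{-1}$ is positive definite and in particular $|M|>0$; both exponentials are positive by definition. Continuity in $(M,N)$ follows because $|M|$ and $\operatorname{Tr}(M)$ are polynomials in the entries of $M$, and $M \mapsto M^{-1}$ is continuous on the open set of invertible matrices (which contains our whole domain, since eigenvalues of $M$ are bounded away from zero; indeed the diagonal entries of $\Sigma$ are at most $1$, so $\operatorname{Tr}(\Sigma)\le t$ and eigenvalues of $M$ are at least $1/t$). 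Hence $N^T M^{-1} N$ is a continuous function of $(M,N)$, and the whole product is continuous.

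For the upper bound, we again use Claim~\ref{eigenvalues-delta:clm}: all eigenvalues of $\Sigma$ are at least $\delta$, so $|\Sigma|\ge \delta^t$ and therefore $|M| = |\Sigma|^{-1} \le \delta^{-t}$. Since $M$ is PSD, $\operatorname{Tr}(M)\ge 0$ and $N^T M^{-1} N = \mu^T M \mu \ge 0$, so both exponentials are at most $1$. Combining,
\[
\gt(M,N) ~\le~ \frac{|M|^{d/2}}{(2\pi)^{td/2}} ~\le~ \frac{1}{(2\pi)^{td/2}\cdot \delta^{td/2}} ~=:~ C_{k,d,\delta},
\]
which is a constant depending only on $k,d,\delta$ (using $t\le k$ to absorb $t$ into the constant if desired). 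There is no real obstacle here; the only thing one must be careful about is checking that the domain of $(M,N)$ stays inside the positive definite cone with eigenvalues uniformly bounded away from $0$ and $\infty$, and this is exactly what Claim~\ref{eigenvalues-delta:clm} provides.
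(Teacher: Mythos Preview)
Your proof is correct and follows essentially the same approach as the paper: both arguments factor $\gt$ as the Gaussian density at the origin times $\exp(-\tfrac{\beta}{2}\operatorname{Tr}(M))$, observe that each factor is positive and continuous, and bound the density by $(2\pi)^{-td/2}\abs{M}^{d/2} \le (2\pi)^{-td/2}\delta^{-td/2}$ using the eigenvalue lower bound on $\Sigma$ from Claim~\ref{eigenvalues-delta:clm}, while the exponentials are at most $1$. Your version is simply more explicit in writing out the closed form and in supplying the auxiliary bound (eigenvalues of $M$ are at least $1/t$) to verify that $M^{-1}$ varies continuously, whereas the paper appeals directly to continuity of the density in $\zeta'$ and of the bijection $\zeta' \mapsto (M,N)$.
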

\begin{proof}
Let $\zeta'(M,N)$ denote the moment matrix corresponding to $(M,N)$. Recall that the function $\gt$
was defined as
\[
\gt(M,N)
~=~
\gauss{t,d}{(\bfzero,\ldots,\bfzero)}{\zeta'(M,N)} \cdot \exp\inbraces{-\frac{\beta}{2} \cdot Tr(M)} \mper
\]
Note that the Gaussian density $\gauss{t,d}{(\bfzero,\ldots,\bfzero)}{\zeta'(M,N)}$
is a continuous function of the matrix $\zeta'$ and hence also of the
pair $(M,N)$. Also, it is positive and bounded above by
$\frac{1}{(2\pi)^{td/2}} \cdot \abs{M}^{d/2} \leq (1/\delta)^{td/2}$.
Also, $\exp\inbraces{-\frac{\beta}{2} \cdot Tr(M)}$ is a positive and continuous function of $M$ and
is bounded above by 1. Hence, their product $\gt$ is also positive, continuous and bounded as claimed.
\end{proof}
From the definition of $\ltt$, we have
that for all $(\beta,d)$-realizable pairs $(X,Z)$
\[
\int \exp\inparen{-\frac12 ( M \bullet X) + \ip{N,Z}} d\ltt(M,N) ~=~ 0 \mper
\]
The following claim shows that the class of $(\beta,d)$-realizable pairs is sufficiently rich.
\begin{claim}\label{realizable:clm}
Let $X \in \R^{t \times t}$ be a matrix and $Z \in \R^t$ be a vector such that
\[
\forall i,j \in [t]~~\abs{X_{ij}} \le \frac{\beta}{t+1}
\qquad \text{and} \qquad
\forall i \in [t]~~ \abs{Z_i} \le \frac{\beta}{t+1} \mper
\]
Then the pair $(X,Z)$ is $(\beta,d)$-realizable for $d \geq k+1$ and $\beta \leq 1/2$.
\end{claim}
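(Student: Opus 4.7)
\textbf{Proof plan for Claim \ref{realizable:clm}.} The plan is to realize $(X,Z)$ as inner products of vectors in $\R^d$ by constructing an explicit $(t+1)\times(t+1)$ Gram matrix $G$ that includes a row/column for the all-ones vector $\bfone$, verifying that it is positive semidefinite of rank at most $t+1 \le k+1 \le d$, and then showing that the vectors realizing $G$ can be placed in $[-1,1]^d$.

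First I would define $G$ to be the symmetric $(t+1)\times(t+1)$ matrix whose $0$-indexed entries are $G_{00} = d = \norm{\bfone}^2$, $G_{0i} = G_{i0} = Z_i$, $G_{ij} = X_{ij}$ for $i\neq j \in [t]$, and $G_{ii} = X_{ii} + \beta$ for $i \in [t]$. Observe that a factorization $G = Y^T Y$ with $Y \in \R^{d \times (t+1)}$ whose first column equals $\bfone$ gives exactly the required vectors $\y_1,\ldots,\y_t$ via Equation \eqref{XandZ:eqn}.

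The main estimate is to show $G \succeq 0$. Let $G' \in \R^{t \times t}$ be the lower-right block, namely $G' = X + \beta \cdot \mathbb{I}_t$. Using the hypothesis $\abs{X_{ij}} \le \beta/(t+1)$ and bounding the operator norm by the entrywise bound times the dimension, $\norm{X}_{\op} \le t \cdot \beta/(t+1) < \beta$, so every eigenvalue of $G'$ lies in $[\beta/(t+1),\, \beta(2t+1)/(t+1)]$ and in particular $G' \succeq \tfrac{\beta}{t+1} \cdot \mathbb{I}_t$. By the Schur complement criterion, $G \succeq 0$ iff $G' - \tfrac{1}{d} Z Z^T \succeq 0$. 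Since $\norm{Z Z^T}_{\op} = \norm{Z}_2^2 \le t \beta^2/(t+1)^2$, the condition reduces to $\beta/(t+1) \ge t\beta^2/(d(t+1)^2)$, which is satisfied because $d \ge k+1 > t$ and $\beta \le 1/2$.

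Since $G \succeq 0$ and has size $t+1 \le k+1 \le d$, it admits a Cholesky-type factorization $G = Y^T Y$ for some $Y \in \R^{d \times (t+1)}$; write the columns of $Y$ as $\tilde{\y}_0, \tilde{\y}_1, \ldots, \tilde{\y}_t$. Since $\norm{\tilde{\y}_0}^2 = G_{00} = d = \norm{\bfone}^2$, there is an orthogonal map $R \in O(d)$ with $R \tilde{\y}_0 = \bfone$; setting $\y_i := R \tilde{\y}_i$ preserves all inner products, so $\ip{\y_0,\y_0} = d$, $\ip{\y_0,\y_i} = Z_i$, and $\ip{\y_i,\y_j} = G_{ij}$ for $i,j \in [t]$, which is precisely the system \eqref{XandZ:eqn}. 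Finally, for each $i \in [t]$ we have $\norm{\y_i}_2^2 = X_{ii} + \beta \le \beta/(t+1) + \beta \le 2\beta \le 1$, so $\norm{\y_i}_\infty \le \norm{\y_i}_2 \le 1$ and hence $\y_i \in \bbox$, as required. The main (minor) obstacle is just getting the Schur-complement inequality right, and tracking that $d \ge k+1$ and $\beta \le 1/2$ are exactly what is needed to make the PSD condition hold and to keep $\norm{\y_i}_2 \le 1$.
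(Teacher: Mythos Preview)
Your proposal is correct and follows essentially the same route as the paper: construct the $(t+1)\times(t+1)$ Gram matrix with $G_{00}=d$, $G_{0i}=Z_i$, $G_{ii}=X_{ii}+\beta$, $G_{ij}=X_{ij}$, verify it is PSD of rank at most $t+1\le d$, factor and rotate so the zeroth column becomes $\bfone$, and check $\norm{\y_i}^2\le 1$. The only difference is that the paper verifies PSD-ness in one line via diagonal dominance (each row's off-diagonal absolute sum is at most $t\beta/(t+1)$, which is dominated by the diagonal entry), whereas you use a Schur complement and operator-norm bound; both are fine, though diagonal dominance is the more economical observation here.
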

\begin{proof}
Consider the $(t+1) \times (t+1)$ matrix $Y$ defined as $Y_{00} = d$, $Y_{0i} = Y_{i0} = Z_i$ and
$Y_{ii} = X_{ii} + \beta$ for $i \geq 1$,  and $Y_{ij} = X_{ij}$ for $i \neq j$ when $i,j \ge
1$. The matrix is diagonally dominant and is hence positive semidefinite, when $X$ and $Z$ are as
above.

Thus, there exist vectors $\y_0,\y_1,\ldots,\y_t \in \R^d$ when $d \ge t+1$, such that
$Y_{ij} = \ip{\y_i, \y_j}$. Also, we have $\norm{\y_0}^2 = Y_{00} = d$ and hence we can assume
(by applying a rotation if necessary) that $\y_0 = \bfone$.
Finally, we also have
\[
\norm{\y_i}^2 ~=~ Y_{ii} ~=~ X_{ii} + \beta ~\le~ \frac{\beta}{t+1} + \beta ~\le~ 1
\]
when $\beta \leq 1/2$. Thus, all the vectors $\y_i$ have $\norm{\y_i} \le 1$ and lie in
$\bbox$. By definition of the matrix $Y$, the pair $(X,Z)$ satisfies the relation in Equation
\ref{XandZ:eqn} and is hence $(\beta,d)$-realizable.
\end{proof}
Hence, all the variables $X_{ij}$ and $Z_i$ are allowed to vary in a radius of $\beta/(t+1)$ and the
above integral is zero for all values of these variables. We can expand the integral as a power
series in these variables, and then argue that all its coefficients must be zero within the radius
of convergence. However, it will be convenient to re-write the function
$\exp\inbraces{-\frac12(M \bullet X) + \ip{N,Z}}$ slightly differently, before expanding it as a
power series.

Note that since the matrices $M$ and $X$ are symmetric, the variable $X_{ij}$ actually appears twice
in $X$ when $i \neq j$, and thus it's coefficient in $-\frac12 (M \bullet X)$ is $-M_{ij}$ when $i
\neq j$ and $-M_{ii}/2$ when $i = j$. We re-write the $K = \binom{t}{2} + 2t$ variables corresponding to
$(X,Z)$ as the vector $\bfw = (w_1,\ldots,w_K)$ and their coefficients as $\bfa = (a_1,\ldots,a_K)$.
As before, the map from $(M,N)$ is a continuous bijection and thus, we can view $\ltt$ as a measure on
the coefficient vectors $\bfa$. From Claim \ref{realizable:clm}, we have
\begin{equation}\label{integral-zero:eqn}
\int \exp(\ip{\bfa,\bfw}) ~d\ltt(\bfa) = 0 \qquad \forall \bfw \in \insquare{-\frac{\beta}{t+1},
  \frac{\beta}{t+1}}^K \mper
\end{equation}
The following bound on the coefficients will be useful.
\newcommand{\bfe}{{\bf e}}
\newcommand{\bfr}{{\bf r}}
%
\begin{claim}\label{coefficient-bound:clm}
Let $\bfa = (a_1,\ldots,a_K)$ be as above. Then $\abs{a_i} \leq \frac{t}{\delta}$ for each $i \in [K]$.
\end{claim}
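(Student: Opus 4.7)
The plan is to bound each coordinate of $\bfa$ separately by exploiting the eigenvalue bound on $\Sigma$ guaranteed by Claim~\ref{eigenvalues-delta:clm}. Recall that the coordinates of $\bfa$ are, up to sign and the factor $1/2$, either an off-diagonal entry of $M = \Sigma^{-1}$, a diagonal entry $M_{ii}/2$, or a component of $N = M\mu$, where $\mu \in \R^t$ is the vector of means with $|\mu_i| = |\zeta'(0,i)| \leq 1$.

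First I would note that since $\Sigma$ is symmetric positive definite with every eigenvalue at least $\delta$ (by Claim~\ref{eigenvalues-delta:clm}, applied to $\zeta' = \zeta_{S,\pi,b}$ for some $\zeta \in \C_\delta(f)$), the inverse $M = \Sigma^{-1}$ is symmetric with every eigenvalue at most $1/\delta$. Hence the operator norm satisfies $\|M\|_{op} \leq 1/\delta$. This immediately bounds each entry: for any $i,j \in [t]$,
\[
|M_{ij}| ~=~ |\bfe_i^T M \bfe_j| ~\leq~ \|M\|_{op} \cdot \|\bfe_i\|\cdot \|\bfe_j\| ~\leq~ \frac{1}{\delta}.
\]
Thus every coefficient of the form $-M_{ij}$ (arising from an off-diagonal $X_{ij}$) or $-M_{ii}/2$ (arising from a diagonal $X_{ii}$) has absolute value at most $1/\delta \leq t/\delta$.

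Next I would handle the coordinates coming from $N = M\mu$. Since $|\mu_i| \leq 1$ for each $i \in [t]$, we have $\|\mu\|_2 \leq \sqrt{t}$, and therefore
\[
|N_i| ~\leq~ \|M\mu\|_2 ~\leq~ \|M\|_{op}\cdot \|\mu\|_2 ~\leq~ \frac{\sqrt{t}}{\delta} ~\leq~ \frac{t}{\delta}.
\]
Combining the two estimates gives $|a_i| \leq t/\delta$ for every $i \in [K]$, completing the proof. No step here seems delicate; the only thing worth being careful about is the identification of which entry of $(M,N)$ corresponds to which $a_i$ (accounting for the factor $1/2$ on diagonal entries and the sign in $-\tfrac12 (M\bullet X)$), but once that bookkeeping is done the operator-norm bound $\|M\|_{op} \leq 1/\delta$ does all the work.
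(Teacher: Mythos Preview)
Your proof is correct and essentially identical to the paper's: both bound the entries of $M=\Sigma^{-1}$ via $\|M\|_{op}\le 1/\delta$ and the components of $N=M\mu$ via $\|M\mu\|_2\le \|\mu\|_2/\delta$ together with $|\mu_i|\le 1$. The only cosmetic difference is that you use the sharper $\|\mu\|_2\le\sqrt{t}$ where the paper writes $\|\mu\|\le t$, but this is immaterial for the stated bound.
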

\begin{proof}
The coefficients for the variables $X_{ij}$ are $-M_{ij} = -\Sigma^{-1}_{ij}$ when $i \neq j$. Let
$\bfe_i$ denote the $i^{th}$ unit vector in the standard basis for $\R^t$. Then
\[
\abs{\Sigma^{-1}_{ij}} ~=~ \ip{\bfe_i, \Sigma^{-1} \bfe_j} ~\le~ \norm{\Sigma^{-1} \bfe_j} ~\le~
\frac{1}{\delta} \mper
\]
Similarly, the coefficient for $X_{ii}$, which equals $-\Sigma^{-1}_{ii}/2$ is bounded in absolute
value by $\frac{1}{2\delta}$. Finally, the coefficient for $Z_i$ equals $N_i = \inparen{\Sigma^{-1}
  \mu}_i$ and is bounded as
\[
\abs{\inparen{\Sigma^{-1} \mu}_i} ~\le~ \norm{\Sigma^{-1} \mu} ~\le~ \frac{1}{\delta} \cdot
\norm{\mu} ~\le~ \frac{t}{\delta} \mcom
\]
where the bound on $\norm{\mu}$ uses that its each coordinate $\mu_i$ is in $[-1,1]$.
\end{proof}
We shall expand the function $\exp(\ip{\bfa,\bfw})$ as a power series and integrate each term
separately to obtain a formal series $\cal{S}(\bfw)$. To write the series, it will be convenient to
use the multi-index notation.
Let $\bfr = (r_1,\ldots,r_K) \in (\Z_+)^K$ denote a multi-index. Let $\bfa^{\bfr}$ denote the term
$\prod_{i=1}^K a_i^{r_i}$ and define $\bfw^{\bfr}$ similarly. Let $\abs{\bfr}$ denote $\sum_{i=1}^K
r_i$ and let $(\bfr)!$ denote $\prod_{i=1}^K (r_i !)$.
Then we can write
\[
\exp(\ip{\bfa,\bfw})
~=~
\sum_{r=0}^{\infty}\frac{(\ip{\bfa,\bfw})^r}{r!}
~=~
\sum_{\bfr \in \Z_+^K} \frac{\bfw^{\bfr} \cdot \bfa^{\bfr}}{(\bfr)!} \mper
\]
We define the series
\[
\cal{S}(\bfw) ~=~ \sum_{\bfr \in \Z_+^K} \frac{\bfw^{\bfr}}{(\bfr)!} \cdot \int \bfa^{\bfr} ~d\ltt(\bfa) \mper
\]
We next show that this formal series converges everywhere. Using the convergence,
we can equate it to the integral in Equation \ref{integral-zero:eqn}. The fact that the integral is
zero in a box around the origin will then yield the desired conclusion.
\begin{claim}
Let the vectors $\bfa = (a_1,\ldots,a_K)$ and the measure $\ltt$ be as above. Then the series
\[
\cal{S}(\bfw) ~=~ \sum_{\bfr \in \Z_+^K} \frac{\bfw^{\bfr}}{(\bfr)!} \cdot \int \bfa^{\bfr} ~d\ltt(\bfa)
\]
is absolutely convergent for all $\bfw \in \R^K$.
\end{claim}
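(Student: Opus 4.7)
The plan is to bound each term of $\cal{S}(\bfw)$ in absolute value using Claim \ref{coefficient-bound:clm} together with the fact that $\ltt$ is a finite signed measure, and then to recognize the resulting upper bound as the power series expansion of a product of one-variable exponentials.

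First I note that $\ltt$ has finite total variation. Recall that $\lt$ is a signed measure obtained as a finite linear combination of projections of the probability measure $\Lambda$, so its total variation $\|\lt\|_{TV}$ is finite. By Claim \ref{g-properties:clm}, the function $\gt$ is continuous, non-negative, and bounded above by a constant $C_{k,d,\delta}$. Since $\ltt = \lt \cdot \gt$ as defined above, it follows that $\|\ltt\|_{TV} \le C_{k,d,\delta} \cdot \|\lt\|_{TV} =: C$, where $C$ is a finite constant depending only on $k,d,\delta$ and $\Lambda$.

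Next, by Claim \ref{coefficient-bound:clm}, every coordinate $a_i$ of $\bfa$ satisfies $|a_i| \le t/\delta$. Consequently, for any multi-index $\bfr \in \Z_+^K$, we have $|\bfa^{\bfr}| \le (t/\delta)^{|\bfr|}$ pointwise on the support of $\ltt$. Therefore
\[
\left| \int \bfa^{\bfr} \, d\ltt(\bfa) \right|
~\le~ (t/\delta)^{|\bfr|} \cdot \|\ltt\|_{TV}
~\le~ C \cdot (t/\delta)^{|\bfr|} \mper
\]
Using this bound term by term, the series of absolute values is controlled by
\[
\sum_{\bfr \in \Z_+^K} \frac{|\bfw^{\bfr}|}{(\bfr)!} \cdot \left| \int \bfa^{\bfr} \, d\ltt(\bfa) \right|
~\le~ C \cdot \sum_{\bfr \in \Z_+^K} \prod_{i=1}^K \frac{\left((t/\delta) \cdot |w_i|\right)^{r_i}}{r_i!}
~=~ C \cdot \prod_{i=1}^K \exp\!\left( (t/\delta) \cdot |w_i| \right) \mcom
\]
where the last equality uses the standard factorization of sums over multi-indices into products of one-variable exponential series. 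The right-hand side equals $C \cdot \exp\!\left( (t/\delta) \cdot \|\bfw\|_1 \right)$, which is finite for every $\bfw \in \R^K$. Hence $\cal{S}(\bfw)$ is absolutely convergent on all of $\R^K$, as claimed.

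The only step that required any care was checking that $\ltt$ is a finite signed measure so that the bound $\left| \int \bfa^{\bfr} d\ltt \right| \le \|\ltt\|_{TV} \cdot \sup |\bfa^{\bfr}|$ is legitimate; once the uniform bound on $|a_i|$ from Claim \ref{coefficient-bound:clm} is in hand, the rest is the standard observation that the multi-variate exponential series factorizes coordinate-wise and converges everywhere.
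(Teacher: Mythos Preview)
Your proof is correct and follows essentially the same approach as the paper: bound $\left|\int \bfa^{\bfr}\,d\ltt\right|$ by $(t/\delta)^{|\bfr|}$ times the total variation of $\ltt$ (using Claim~\ref{coefficient-bound:clm} and Claim~\ref{g-properties:clm}), then recognize the majorizing series as a product of one-variable exponentials. The only cosmetic difference is that the paper unpacks the total variation explicitly via the Hahn decomposition and shows $\int d|\lt|\le 1$, whereas you invoke $\|\ltt\|_{TV}<\infty$ directly; the content is the same.
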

\begin{proof}
We bound the absolute value of the integral $\int \bfa^{\bfr} ~d\ltt(\bfa)$ for each $\bfr \in
\Z_+^K$. By claim \ref{coefficient-bound:clm}, $\abs{a_i} \le \frac{t}{\delta}$ for each coordinate
$a_i$ of $\bfa$. We then have
\[
\abs{\int \bfa^{\bfr} ~d\ltt(\bfa) }
~=~
\abs{\int \bfa^{\bfr} \cdot \gt(\bfa) ~d\lt(\bfa)}
~\leq~
\sup_{\bfa} \inparen{\abs{\bfa^{\bfr}} \cdot \gt(\bfa)} \cdot\int d|\lt|(\bfa) \mper
\]
Note that here we have used notation $|\lt|$, which is used to refer to a \emph{positive} measure
corresponding to $\lt$, which is given by the Hahn decomposition theorem for signed measures. By the
decomposition theorem, any signed measure $\nu$ can be written as $\nu_+ - \nu_{-}$, where $\nu_+$
and $\nu_{-}$ are positive measures supported on disjoint measurable sets, say $P$ and $N$
respectively.  Then $|\nu|$ is used to refer to the measure $(\nu_+ + \nu_{-})$. The inequality
above follows immediately by considering this decomposition.

Also, if $\Lambda_0$ is a finite  linear combination of positive measures \ie $\Lambda_0 = \sum_{i}
c_i \Lambda_i$, then using the above decomposition, we can say that $|\Lambda_0| \leq \sum_i
\abs{c_i} \Lambda_i$. By the definition of $\lt$ as a linear combination of positive measures,
we can now bound the integral as
\[
\int d|\lt|(\bfa) ~\leq~ \Ext \insquare{\abs{\hf(S)} \cdot \abs{\prod_{i=1}^t b_i} \cdot \int
  d\Lambda_{S,\pi,b}(\bfa)} ~\le~ 1 \mcom
\]
since each $\Lambda_{S,\pi,b}$ is a probability measure.
Using the bound on the coefficients $a_i$, we have that $\abs{\bfa^{\bfr}} \leq
(t/\delta)^{\abs{\bfr}}$. Also, by Claim \ref{g-properties:clm}, we have that $\gt \leq
C_{k,d,\delta}$. Thus, we get
\[
\abs{\int \bfa^{\bfr} ~d\ltt(\bfa) }
~\le~
C_{k,d,\delta} \cdot \inparen{\frac{t}{\delta}}^{\bfr} \mper
\]

For $\bfw = (w_1,\ldots, w_K) \in \R^K$, let $\bfw_+ = (\abs{w_1},\ldots,\abs{w_K})$ be the vector
of absolute values of all the entries of $\bfw$.
To show that $\cal{S}(\bfw)$ is absolutely convergent, we need to
show that the series $\cal{S}'(\bfw)$, obtained by replacing each term of $\cal{S}(\bfw)$ by its
absolute value, is
convergent.
We can write
\begin{align*}
\cal{S}'(\bfw)
~=~
\sum_{\bfr \in \Z_+^K} \abs{\frac{\bfw^{\bfr}}{(\bfr)!}} \cdot \abs{\int \bfa^{\bfr} ~d\ltt(\bfa) }
&~=~
\sum_{\bfr \in \Z_+^K} \frac{\bfw_+^{\bfr}}{(\bfr)!} \cdot \abs{\int \bfa^{\bfr} ~d\ltt(\bfa) } \\
&~\le~
\sum_{\bfr \in \Z_+^K} \frac{\bfw^{\bfr}_+}{(\bfr)!} \cdot C_{k,d,\delta} \cdot
\inparen{\frac{t}{\delta}}^{\abs{\bfr}} \\
&~=~
C_{k,d,\delta} \cdot \exp\inparen{\frac{t}{\delta} \cdot \sum_{i=1}^K \abs{w_i}} \mper
\end{align*}
The last equality above used the fact that for all $\bfx \in \R^K$, the series
$\sum_{\bfr \in \Z_+^K} \frac{\bfx^{\bfr}}{(\bfr)!}$ converges to $\exp\inparen{\sum_{i=1}^K x_i}$.
\end{proof}
Thus, we know that for $\bfw \in \insquare{-\frac{\beta}{t+1}, \frac{\beta}{t+1}}^K$, the
series $\cal{S}(\bfw)$ always converges to zero.
We shall use this to show that all the
  coefficients of the series must be zero, which in turn implies that the signed measure $\ltt$ must
  be identically zero.
The following lemma finishes the proof.
\newcommand{\tl}{\tilde{\Lambda}}
\begin{lemma}\label{measure-zero:lm}
Let $\tl$ be a signed measure on vectors $\bfa = (a_1,\ldots,a_K)$ contained in a compact set
$X \sub \R^K$, such that the series
\[
\cal{S}(\bfw) ~=~ \sum_{\bfr \in \Z_+^K} \frac{\bfw^{\bfr}}{(\bfr)!} \cdot \int \bfa^{\bfr} ~d\tl(\bfa)
\]
in the variables $w_1,\ldots,w_K$ converges and is identically zero for $\abs{w_i} \leq \tau$. Then
$\tl = 0$.
\end{lemma}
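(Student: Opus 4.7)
The plan is to recognize that $\cal{S}(\bfw)$ is essentially the moment generating function (two-sided Laplace transform) of the signed measure $\tl$, use the compactness of $X$ to show it is entire and its vanishing on a box forces all moments of $\tl$ to vanish, and then invoke Stone--Weierstrass to conclude $\tl = 0$.

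First I would observe that, formally, $\cal{S}(\bfw) = \int \exp(\ip{\bfa,\bfw}) \, d\tl(\bfa)$. Because $X$ is compact, there is some $R > 0$ such that $\abs{a_i} \leq R$ for all $\bfa \in X$ and all $i$, and because $\tl$ is a finite linear combination of probability measures, $\abs{\tl}(X) < \infty$. Hence each moment $c_\bfr := \int \bfa^\bfr \, d\tl(\bfa)$ is well defined and satisfies $\abs{c_\bfr} \leq R^{\abs{\bfr}} \cdot \abs{\tl}(X)$. This bound guarantees that the formal series $\sum_\bfr \frac{\bfw^\bfr}{(\bfr)!} c_\bfr$ converges absolutely for every $\bfw \in \R^K$ (indeed every $\bfw \in \C^K$), so $\cal{S}$ defines an entire function on $\C^K$ and Fubini justifies the identification of the series with the integral of the exponential.

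Next I would use the hypothesis that $\cal{S}(\bfw) \equiv 0$ on the open box $\{\abs{w_i} < \tau\}$. Since an absolutely convergent power series represents its sum uniquely on any neighborhood of the origin (or, equivalently, the coefficients are recovered by mixed partial derivatives at $0$ via $c_\bfr = (\bfr)! \cdot \partial^\bfr \cal{S}(0)$), we conclude $c_\bfr = 0$ for every multi-index $\bfr \in \Z_+^K$. Thus all polynomial moments of $\tl$ vanish: $\int p(\bfa) \, d\tl(\bfa) = 0$ for every polynomial $p$ in the variables $a_1,\ldots,a_K$.

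Finally I would promote this to arbitrary continuous test functions. Since $X \subseteq \R^K$ is compact, the Stone--Weierstrass theorem says the algebra of polynomials is uniformly dense in $C(X)$. Given any continuous $h : X \to \R$ and any $\eps > 0$, choose a polynomial $p$ with $\norm{h - p}_\infty < \eps$; then
\[
\abs{\int h \, d\tl} ~=~ \abs{\int (h - p) \, d\tl} ~\leq~ \eps \cdot \abs{\tl}(X),
\]
using $\int p \, d\tl = 0$. Sending $\eps \to 0$ gives $\int h \, d\tl = 0$ for every continuous $h$. By the Riesz representation theorem on the compact metric space $X$ (equivalently, by uniqueness of regular Borel signed measures), vanishing against every continuous function forces $\tl = 0$ as a signed measure. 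The main step, conceptually, is the passage from vanishing of all moments to vanishing of $\tl$ itself; it is standard here only because compactness of $X$ rescues us from the Hamburger moment problem's non-uniqueness issues on unbounded domains.
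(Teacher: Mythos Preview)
The proposal is correct and follows essentially the same approach as the paper: deduce that all moments $\int \bfa^{\bfr}\,d\tl$ vanish from the identically-zero power series (via uniqueness of Taylor coefficients at the origin), then use Stone--Weierstrass on the compact set $X$ to upgrade this to $\int h\,d\tl = 0$ for all continuous $h$, and conclude $\tl = 0$ via Riesz. Your additional observation that the moment bounds from compactness make $\cal{S}$ entire is correct but not needed, and your justification of $\abs{\tl}(X) < \infty$ via ``finite linear combination of probability measures'' imports context from outside the lemma statement---it suffices simply that a signed measure on a compact set has finite total variation.
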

\begin{proof}
Since the series converges for all $\bfw \in [-\tau,\tau]^K$, $\cal{S}(\bfw)$ defines a real
analytic function for all $\bfw \in [-\tau,\tau]^K$. Since the function is identically zero in
$[-\tau,\tau]^K$, all its derivatives at $\bfw = (0,\ldots,0)$ must be zero. By comparing
coefficients of the above series with the Taylor expansion, we get that
\[
\int \bfa^{\bfr} ~d\tl(\bfa) = 0 \qquad \forall \bfr \in \Z_+^K \mper
\]
Thus, for all polynomials $P$ in the variables $(a_1,\ldots,a_K)$, we have that
$\int P(\bfa) ~d\tl(\bfa) = 0$. By the the Stone-Weierstrass theorem, we know that for any
continuous function $h: X \to \R$, there is a sequence of polynomials $\inbraces{P_r}_{r \in \N}$,
which converges to $h$. By the dominated convergence theorem for integrals over signed measures, we
have that
\[
\int h(\bfa) ~d\tl(\bfa) ~=~ \lim_{r \to \infty} \int P_r(\bfa) ~d\tl(\bfa) ~=~ 0 \mper
\]
Finally, we use the (uniqueness part of) Riesz Representation Theorem (see Chapter 13 in
\cite{met}), which says for a compact metric space $X$
and  two signed measures
$\Lambda_1$ and $\Lambda_2$ defined on $X$, if
$\int h(\bfa) ~d\Lambda_1(\bfa) = \int h(\bfa) ~d\Lambda_2(\bfa)$
for all continuous functions $h:X \to \R$, then $\Lambda_1 = \Lambda_2$. Using this theorem, we
conclude that $\tl = 0$.
\end{proof}
The above lemma gives that the signed measure $\ltt$ must be identically zero. Note that to
apply the lemma, we use the fact that the space of the vectors $\bfa$ is compact. This follows from
the fact that the space of the matrices $\zeta'$ is compact and a continuous map preserves
compactness.

However, we are
interested in the measure $\lt$, and we have $\ltt = \gt \cdot \lt$. The following lemma shows that
then we must in fact have that $\lt = 0$.
\begin{lemma}
Let $\Lambda_1$ and $\Lambda_2$ be two signed measures on a compact metric space $X$
such that $\Lambda_2 = g \cdot \Lambda_1$
for a strictly positive and bounded continuous function $g$. Then if $\Lambda_2$ is identically
zero, so is $\Lambda_1$.
\end{lemma}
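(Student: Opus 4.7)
The plan is to invert the relation $\Lambda_2 = g \cdot \Lambda_1$ by multiplying through by $1/g$, which requires only that $1/g$ be a well-defined, bounded, measurable function on $X$. The main ingredient is compactness: since $g$ is continuous on the compact metric space $X$ and strictly positive, by the extreme value theorem $g$ attains its infimum, so there exists $\epsilon > 0$ with $g(x) \geq \epsilon$ for all $x \in X$. Consequently $h := 1/g$ is continuous and bounded above by $1/\epsilon$, hence in particular bounded and measurable.

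Next, I would unpack the meaning of the weighted-measure relation $\Lambda_2 = g \cdot \Lambda_1$. Following the definition used earlier in the paper (cf.\ the construction of $\tilde{\Lambda}^{(t)}$ from $\Lambda^{(t)}$), this says that for every measurable set $A$,
\[
\Lambda_2(A) ~=~ \int_A g \, d\Lambda_1,
\]
and more generally $\int \phi \, d\Lambda_2 = \int \phi g \, d\Lambda_1$ for every bounded measurable $\phi$. Applying this with $\phi = h \cdot \mathbf{1}_A$ yields, for every measurable $A$,
\[
\int_A h \, d\Lambda_2 ~=~ \int_A h g \, d\Lambda_1 ~=~ \int_A 1 \, d\Lambda_1 ~=~ \Lambda_1(A).
\]
Since $\Lambda_2 \equiv 0$, the left-hand side vanishes for every $A$, and therefore $\Lambda_1(A) = 0$ for every measurable $A$, i.e.\ $\Lambda_1 \equiv 0$.

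There is essentially no obstacle here; the only point that requires a line of care is that $\Lambda_1, \Lambda_2$ are signed measures, so one should verify that the formula $\int \phi \, d\Lambda_2 = \int \phi g \, d\Lambda_1$ makes sense as written. This is handled by the Hahn--Jordan decomposition $\Lambda_1 = \Lambda_1^+ - \Lambda_1^-$: since $g$ is non-negative and bounded, $g \cdot \Lambda_1^+$ and $g \cdot \Lambda_1^-$ are finite positive measures, $\Lambda_2$ decomposes as their difference, and the integration identity extends from positive measures to signed measures by linearity. The boundedness of $h = 1/g$ then guarantees that $\int h \, d\Lambda_2$ is well-defined and equals $\Lambda_1$ by the computation above.
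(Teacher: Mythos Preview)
Your proof is correct and follows the same core idea as the paper: use compactness to see that $1/g$ is bounded and continuous, then transfer integrals against $\Lambda_1$ to integrals against $\Lambda_2$ via the identity $\int \phi\, d\Lambda_2 = \int \phi g\, d\Lambda_1$. The only difference is in the final step: the paper tests against arbitrary continuous $h$ and then invokes the uniqueness part of the Riesz Representation Theorem, whereas you test directly against $h\cdot\mathbf{1}_A$ to conclude $\Lambda_1(A)=0$ for every measurable $A$, which is slightly more direct and avoids Riesz.
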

\begin{proof}
We consider the integral of any continuous function $h : X \to \R$ with respect to
$\Lambda_1$. Note that since $g$ is strictly positive and $X$ is compact, $g$ is also bounded below
by some absolute constant.
Using the fact that $g$ is positive and bounded, we can write
\[
\int h ~d\Lambda_1 ~=~ \int \frac{h}{g} \cdot g ~d\Lambda_1 ~=~ \int \frac{h}{g} ~d\Lambda_2 \mper
\]
Since $h$ and $g$ are both continuous and $g$ is positive, the function $\frac{h}{g}$ is continuous
and hence measurable. Thus, we obtain that for every continuous function $h$,
\[
\int h ~d\Lambda_1 ~=~ \int \frac{h}{g} ~d\Lambda_2 ~=~ 0\mper
\]
Again, by the (uniqueness aspect of) Riesz Representation Theorem as in the proof of Lemma
\ref{measure-zero:lm}, this implies that $\Lambda_1 = 0$.
\end{proof}
Since the function $\gt$ is strictly positive, bounded and continuous by Claim \ref{g-properties:clm},
the previous claim implies that the measure $\Lambda$ on $\C_{\delta}(f)$ is
such that for each $t \in [k]$, the signed measure
\[
\lt ~=~ \Ext \insquare{\hf(S) \cdot \inparen{\prod_{i=1}^t b_i} \cdot \Lambda_{S,\pi,b}}
\]
is identically zero \ie $\Lambda$ is a vanishing measure.
However, we need to establish the existence of a vanishing measure on $\C(f)$. The
following claim shows that the existence of such measures on $\C(f)$ and $\C_{\delta}(f)$ are equivalent.
\begin{claim}\label{vanishing-measure:clm}
There exists a vanishing probability measure $\Lambda$ on $\C_{\delta}(f)$ if and only if there
exists a vanishing probability measure $\Lambda'$ on $\C(f)$.
\end{claim}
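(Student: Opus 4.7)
The plan is to exploit the natural affine bijection between $\C(f)$ and $\C_\delta(f)$. Define $T_{k+1} : \R^{(k+1)\times(k+1)} \to \R^{(k+1)\times(k+1)}$ by $T_{k+1}(\zeta) = (1-\delta)\zeta + \delta \mathbb{I}_{k+1}$. By the very definition of $\C_\delta(f)$, the restriction of $T_{k+1}$ is a continuous bijection $\C(f) \to \C_\delta(f)$ with continuous inverse, hence pushforward along $T_{k+1}$ sets up a bijection between probability measures on $\C(f)$ and probability measures on $\C_\delta(f)$.

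The key observation is that the symmetrization operation $\zeta \mapsto \zeta_{S,\pi,b}$ commutes with the affine perturbation. Concretely, for any $S\subseteq[k]$ with $|S|=t$, any permutation $\pi:S\to S$, and any $b \in \pmone^{S}$, one has
\[
\left((1-\delta)\zeta + \delta \mathbb{I}_{k+1}\right)_{S,\pi,b} ~=~ (1-\delta)\zeta_{S,\pi,b} + \delta \mathbb{I}_{t+1} ~=~ T_{t+1}(\zeta_{S,\pi,b}),
\]
since restriction to $S\cup\{0\}$ and permutation both leave the identity matrix invariant, and the Hadamard product of $\mathbb{I}_{t+1}$ with $(1~b)(1~b)^T$ is again $\mathbb{I}_{t+1}$ (the diagonal is multiplied by $b_i^2=1$ and off-diagonal zero entries stay zero).

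With this commutativity in hand, the two directions of the equivalence are identical in structure. Given $\Lambda$ on $\C_\delta(f)$, define $\Lambda'$ on $\C(f)$ as the pushforward of $\Lambda$ under $T_{k+1}^{-1}$; given $\Lambda'$ on $\C(f)$, define $\Lambda$ as the pushforward under $T_{k+1}$. In either case, unpacking definitions and using the commutativity above, one checks that for every $t\in[k]$ and every measurable set $A$ of $(t+1)\times(t+1)$ matrices,
\[
\Lambda^{(t)}(A) ~=~ \Lambda'^{(t)}\!\left(T_{t+1}^{-1}(A)\right),
\]
i.e.\ the signed measures $\Lambda^{(t)}$ and $\Lambda'^{(t)}$ are themselves related by pushforward along the affine bijection $T_{t+1}$ on $\R^{(t+1)\times(t+1)}$. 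Since $T_{t+1}$ is a bijection, $\Lambda^{(t)}$ vanishes identically if and only if $\Lambda'^{(t)}$ does, for every $t\in[k]$. Hence $\Lambda$ is vanishing on $\C_\delta(f)$ if and only if $\Lambda'$ is vanishing on $\C(f)$. The total mass is preserved by pushforward along a bijection, so the probability measure property also transfers, completing the proof.

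There is no significant obstacle here: the claim is essentially a change-of-variables statement, and the only content is verifying that the operation $\zeta \mapsto \zeta_{S,\pi,b}$ interacts correctly with the affine map $T_{k+1}$, which is immediate from the definitions.
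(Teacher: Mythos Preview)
Your proposal is correct and follows essentially the same approach as the paper: both arguments use the affine bijection $\zeta \leftrightarrow (1-\delta)\zeta + \delta\,\mathbb{I}$ between $\C(f)$ and $\C_\delta(f)$, verify that the operation $\zeta \mapsto \zeta_{S,\pi,b}$ commutes with this map (in the appropriate dimensions), and conclude that the signed measures $\Lambda^{(t)}$ and $\Lambda'^{(t)}$ are pushforwards of one another under a bijection, hence vanish together. Your write-up is in fact a bit more explicit than the paper's about why the identity matrix is invariant under restriction, permutation, and the Hadamard product with $(1~b)(1~b)^T$.
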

\begin{proof}
By definition of the body $\C_{\delta}(f)$, we have that for every $\zeta \in \C_{\delta}(f)$, the
matrix
\[
\zeta' ~=~ \frac{\zeta - \delta \cdot \mathbb{I}_{k+1}}{1-\delta}
\]
is in $\C(f)$, where $\mathbb{I}_{k+1}$ denotes the $(k+1) \times (k+1)$ identity matrix. The above
map defines a continuous bijection from $\C_{\delta}(f)$ to $\C(f)$, and thus maps measurable sets to
measurable sets. Thus, we can define a measure $\Lambda'$ on $\C(f)$ where for any measurable set
$A' \sub \C(f)$, we take $\Lambda'(A') = \Lambda(A)$ for $A$ which is the inverse image of $A'$ under
the above map. Note that if $\zeta' \in \C(f)$ is the image of $\zeta \in \C_{\delta}(f)$ under the
above map, then we also have for any $S, \pi$ and $b$ that
\[
\zeta'_{S,\pi,b} ~=~ \frac{\zeta_{S,\pi,b} - \delta \cdot \mathbb{I}_{|S|}}{1-\delta} \mper
\]
Thus, we also have that for every measurable set $A$ of $(|S|+1)\times(|S|+1)$ matrices, and its
image $A'$ that $\Lambda_{S,\pi,b}(A) = \Lambda'_{S,\pi,b}(A')$. Since $\lt$ is a linear combination of
the measures $\Lambda_{S,\pi,b}$ for $|S|=t$, and $\Lambda'^{(t)}$ is an identical linear
combination of measures $\Lambda'_{S,\pi,b}$, $\lt$ being identically zero implies that
$\Lambda'^{(t)}$ must also be identically zero.

For the reverse direction, we consider the inverse map
$\zeta = (1-\delta) \cdot \zeta' + \delta \cdot \mathbb{I}_{k}$, which is also continuous. The
rest of the argument is the same as above.
\end{proof}

\subsection{The Integrality Gap Instances}
We now show that for a predicate $f$, if there exists a vanishing probability measure $\Lambda$ on
$\C(f)$, then there exists an infinite family of
$\maxkcsp(f)$ instances such that the SDP has optimum value $1-o(1)$,
while the value of any integer assignment lies in $[\rho(f)-o(1), \rho(f) + o(1)]$.

First, we give a description of our instance family in the
continuous setting and then sketch how to discretize it.
The advantage of this description is that the soundness and
completeness of the instance are far easier to analyze than
in the discrete setting, while continuity properties ensure that
the results translate to the discrete setting as well. To ensure the continuity of various functions
defined on the matrices $\zeta$, we will instead work with  a vanishing measure $\Lambda$ defined on
$\C_{\delta}(f)$ (for some small $\delta > 0$) instead of $\C(f)$. By Claim
\ref{vanishing-measure:clm}, the existence of vanishing measures on $\C(f)$ and $\C_{\delta}(f)$ are
equivalent.

Our set of literals will be the set of all points in $\R^d$,
where the variable represented by the point $-\y$ is treated as
the negation of the variable represented by the point $\y \in \R^d$.
The set of constraints will be given by all $k$-tuples of
points in $\R^d$. We think of the constraints being generated as follows: we pick a $\zeta$
according to $\Lambda$, choose a $k$-tuple of points $(\y_1,\ldots, \y_k)$ according to
$\cal{N}_{d}(\zeta)$,  and impose the constraint $f(\y_1,\ldots,\y_k)$.
Thus, given a $k$-tuple $(\y_1,\ldots,\y_k)$, the ``weight'' of the constraint $f(\y_1,\ldots,\y_k)$
is $\Ex{\zeta \sim \Lambda}{\gauss{k,d}{(\y_1,\ldots,\y_k)}{\zeta}}$.

We remark that while it is convenient to think of the instance as above, the set of variables in
fact only corresponds to $\y \in H$, where $H$ is an arbitrary half-space of $\R^d$, say $H = \R_+
\times \R^{d-1}$. This is because the variable $-\y$ is supposed to be the negation of the variable
$\y$. This means that any ``assignment'' to the variables, must be an odd function on $\R^d$. Also,
we will need to be careful of the above while constructing the SDP solution.

\subsubsection*{Soundness}
\newcommand{\dyk}{d\y_1,\ldots,d\y_k}
\newcommand{\dyt}{d\y_1,\ldots,d\y_t}
\newcommand{\yk}{\y_1,\ldots,\y_k}
\newcommand{\yt}{\y_1,\ldots,\y_t}
Let $\psi:\R^d\to\{\pm 1\}$ be an odd function, which forms an assignment
to the variables of our continuous $\maxkcsp(f)$ instance.
In the continuous setting $\psi$ may not even be
measurable but such technical issues do not occur in a
discrete setting, which is our goal, and so we will ignore them for the analysis below.
We show that the fraction of constraints satisfied by any such assignment is $\rho(f)$.

\begin{lemma}\label{sdp-soundness:lm}
Let $\Phi$ be the instance as described above and let $\psi: \R^d \to \pmone$ be any measurable odd
function. Then the fraction of constraints satisfied by $\psi$, denoted by $\sat(\psi)$, is equal to
$\rho(f)$.
\end{lemma}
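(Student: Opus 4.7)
The plan is to expand the Fourier representation of $f$ and reduce the soundness statement to the vanishing condition on $\Lambda^{(t)}$. Writing
\[
\sat(\psi) - \rho(f) ~=~ \Ex{\zeta \sim \Lambda}{\Ex{\yk \sim \cal{N}_d(\zeta)}{\sum_{S \neq \emptyset}\hf(S) \prod_{i \in S} \psi(\y_i)}}
~=~ \sum_{t=1}^k \sum_{|S|=t} \hf(S) \cdot I(S),
\]
where $I(S) := \int \prod_{j=1}^t \psi(\z_j)\, \gauss{t,d}{(\z_1,\ldots,\z_t)}{\zeta_S}\, d\z_1 \cdots d\z_t\, d\Lambda(\zeta)$ is obtained by integrating out the coordinates $\y_i$ with $i \notin S$ (their joint distribution on $(\y_i)_{i\in S}$ is Gaussian with moment submatrix $\zeta_S$). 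The goal is to show that this sum vanishes.

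The first key step is a symmetrization. Since $\psi$ is odd and $\pmone$-valued, substituting $\z_j' = b_j \z_j$ in the integral for any $b \in \pmone^t$ yields $I(S) = \prod_j b_j \cdot \int \prod_j \psi(\z_j)\, \gauss{t,d}{\z}{\zeta_{S,\mathrm{id},b}}\, d\z\, d\Lambda(\zeta)$. Similarly, since $\prod_j \psi(\z_j)$ is invariant under permutations of the $\z_j$, reindexing gives that replacing $\zeta_S$ by $\zeta_{S,\pi}$ for any permutation $\pi:[t]\to[t]$ leaves the integral unchanged. Averaging over $\pi$ and $b$, we may therefore write
\[
I(S) ~=~ \Ex{\pi:[t]\to[t]}\Ex{b \in \pmone^t}{\prod_{j=1}^t b_j \cdot \int \prod_{j=1}^t \psi(\z_j) \cdot \gauss{t,d}{\z}{\zeta_{S,\pi,b}}\, d\z\, d\Lambda(\zeta)}.
\]

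The second step is to interchange the order of integration (justified by Fubini, since $|\psi|\le 1$, $|\hf(S)|\le 1$, the sum over $S$ is finite, and Gaussian densities are integrable) and to collect terms using the definition of the measure $\Lambda_{S,\pi,b}$:
\[
\sat(\psi) - \rho(f) ~=~ \sum_{t=1}^k \binom{k}{t}\int_{(\R^d)^t} \prod_{j=1}^t \psi(\z_j) \cdot \left[\int \gauss{t,d}{\z}{\zeta'}\, d\Lambda^{(t)}(\zeta')\right] d\z_1\cdots d\z_t,
\]
where $\Lambda^{(t)}$ is precisely the signed measure defined in Theorem~\ref{sdp-measure:thm}. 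Since $\Lambda^{(t)} \equiv 0$, the inner integral of the bounded continuous function $\gauss{t,d}{\z}{\cdot}$ against $\Lambda^{(t)}$ is identically $0$ for every fixed $\z$, and therefore the whole expression is $0$. This gives $\sat(\psi) = \rho(f)$, as required.

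The bookkeeping for the sign/permutation symmetrization and the appearance of $\Lambda^{(t)}$ mirrors Claim~\ref{payoff-theta:clm}, except that we are working with an arbitrary measurable odd $\psi:\R^d \to \pmone$ rather than the discretized $\psi_q$, and correspondingly we do not need the indicator $\indic_q$ or the associated multi-linearity trick. The only subtlety is the measure-theoretic handling: $\Lambda^{(t)}$ is a signed measure arising as a finite linear combination of probability measures, and identifying $\Lambda^{(t)} \equiv 0$ with the vanishing of $\int h\, d\Lambda^{(t)}$ for bounded measurable $h$ uses the Hahn decomposition (writing $\Lambda^{(t)} = \Lambda^{(t),+} - \Lambda^{(t),-}$ with the two nonnegative parts equal). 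The swap of integration requires only boundedness, which holds here. No other obstacle arises since the instance is defined in the continuous setting and we have not yet incurred discretization error; that error will be handled separately when passing to a finite instance.
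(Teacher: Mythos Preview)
Your proposal is correct and follows essentially the same approach as the paper: expand $\sat(\psi)-\rho(f)$ via the Fourier representation, symmetrize over permutations and sign flips to identify the integrand with $\int \gauss{t,d}{\z}{\zeta'}\,d\Lambda^{(t)}(\zeta')$, and conclude by the vanishing of $\Lambda^{(t)}$. The paper packages the symmetrization step by invoking Claim~\ref{payoff-theta:clm} (with the indicator removed) and then observes $\theta^{(t)}\equiv 0$, whereas you unroll that computation directly; the content is the same.
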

\begin{proof}
The objective value is given by:
\begin{align} \label{soundness:eqn}
\nonumber \sat(\psi)
&~=~
\ExpOp_{\zeta \sim \Lambda} ~\Ex{\yk \sim \cal{N}_d(\zeta)}{f(  \psi(\y_1),\ldots,\psi(\y_k)    )} \\
\nonumber &~=~
\rho(f) +
\ExpOp_{\zeta \sim \Lambda} ~\Ex{\yk \sim \cal{N}_d(\zeta)}{\sum_{S \sub [k] \atop S \neq \emptyset}
  \hf(S) \cdot \prod_{i \in S} \psi(\y_i)} \\
&~=~
\rho(f) + \opay(\Lambda,\psi)
\end{align}
By Claim \ref{payoff-theta:clm}, we can write $\opay(\Lambda,\psi)$ as (the claim is stated for the function $\pay(\cdot,\cdot)$, but it is
easily checked that it holds for the function $\opay(\cdot,\cdot)$ by ignoring the indicator function therein)
\[
\opay(\Lambda,\psi) ~=~ \sum_{t=1}^k \int_{(\R^d)^t} \theta^{(t)} (\y_1,\ldots,\y_t) \cdot
\inparen{\prod_{i=1}^t \psi(\y_i)}  d\y_1\ldots d\y_t \mcom
\]
where the function $\theta^{(t)}$ is defined as
\[
\theta^{(t)}(\y_1,\ldots,\y_t) ~\defeq~
\sum_{|S| = t} \ExpOp_{\pi: [t] \to [t]} ~\ExpOp_{b \in \pmone^t} \Ex{\zeta \sim\Lambda}{\hf(S) \cdot
  \inparen{\prod_{i=1}^t b_i} \cdot \gauss{t,d}{(\y_1,\ldots, \y_t)}{\zeta_{S,\pi,b}}} \mper
\]
However, since $\Lambda$ is a vanishing measure, it is easy to see that $\theta^{(t)}$ must be
identically zero for each $t$. This follows from writing $\theta^{(t)}$ as
\begin{align*}
\theta^{(t)}(\y_1,\ldots,\y_t)
&~=~
\sum_{|S| = t} \ExpOp_{\pi: [t] \to [t]} ~\ExpOp_{b \in \pmone^t} ~\Ex{\zeta' \sim\Lambda_{S,\pi,b}}{\hf(S) \cdot
  \inparen{\prod_{i=1}^t b_i} \cdot \gauss{t,d}{(\y_1,\ldots, \y_t)}{\zeta'}} \\
&~=~
\sum_{|S| = t} \ExpOp_{\pi: [t] \to [t]} ~\ExpOp_{b \in \pmone^t} \int \hf(S) \cdot
  \inparen{\prod_{i=1}^t b_i} \cdot \gauss{t,d}{(\y_1,\ldots, \y_t)}{\zeta'} d\Lambda_{S,\pi,b}(\zeta') \\
&~=~
\int \gauss{t,d}{(\y_1,\ldots, \y_t)}{\zeta'} ~d\Lambda^{(t)}(\zeta') \\
&~=~
0 \mper
\end{align*}
Hence, $\sat(\psi) = \rho(f)$ for every measurable and odd assignment $\psi$ to our continuous instance.
\end{proof}

\subsubsection*{Completeness}
We now demonstrate an SDP solution for the continuous instance, for which the value of the objective
is 1. However, it is not a valid solution to the relaxation in Figure \ref{fig:basic-sdp}, as some of the
SDP constraints will not be satisfied for each tuple of variables involved in a constraint, but only in
expectation over these variables (which is also the case with the continuous Gaussian version of
the Feige-Schechtman instance for MAX-CUT). However, as we discretize the instance, these
constraints will be satisfied upto a small error, with high probability over the participating tuple
of variables. We will be able to correct these errors later, without significantly affecting the
value of the SDP solution.

To construct the SDP solution, we need to specify a vector $\vtwoempty$, a vector $\vtwo{\y}{b}$
for each $\y \in \R^d$ and $b \in \pmone$, and a variable $\vartwo{(\yk)}{\alpha}$ for all $\yk \in
\R^d$ and $\alpha \in \pmone^k$, satisfying the conditions in Figure \ref{fig:basic-sdp}.
We take
the vector $\vtwoempty = \frac{1}{\sqrt{d}} \cdot \bfone$. We shall also define the vector
$\voneempty = \vtwoempty$ for the calculations below.
For each $\y \in \R^d$, we first define the following vectors.
\[
\vone{\y} = \frac{1}{\sqrt{d}} \cdot \y,
\qquad
\vtwo{\y}{1} = \frac{1}{2} \cdot (\voneempty + \vone{\y})
\quad \text{and} \quad
\vtwo{\y}{-1} = \frac{1}{2} \cdot (\voneempty - \vone{\y}) \mper
\]
Note that $\vtwo{-\y}{b} = \vtwo{\y}{-b}$ for any $\y \in \R^d$ and $b \in \pmone$, since $-\y$ is
simply the negation of the variable $\y$.

Before describing the values of the variables $\vartwo{(\yk)}{\alpha}$, we mention a subtle
issue. Note that we need to produce one such set of
variables \emph{for every constraint} in the CSP instance, and not just for every $k$-tuple of
variables. This means that if for some $\yk \in H$, there are two constraints of the form $f(\yk)$
and $f(-\yk)$, then we will produce \emph{two different sets of variables},
$\inbraces{\vartwo{(\yk)}{\alpha}}_{\alpha \in \pmone^k}$ and
$\inbraces{\vartwo{(-\yk)}{\alpha}}_{\alpha \in \pmone^k}$, corresponding to the same tuple $(\yk)$
of CSP variables.
Similarly for constraints where the tuple $(\yk)$ is generated according to two different matrices
$\zeta$ and $\zeta'$ in the support of $\Lambda$.
The only consistency conditions are the ones imposed through the inner products of
the corresponding vectors.

Since every constraint is uniquely described by a tuple $(\yk) \in \R^d$ and $\zeta \in \Lambda$,
we have a different set of variables $\inbraces{\vartwo{(\yk)}{\alpha}^{(\zeta)}}_{\alpha \in
  \pmone^k}$ for each $\zeta$ and $\yk \in \R^d$.
We now describe the value of the variable $\vartwo{(\yk)}{\alpha}^{(\zeta)}$ for all $\yk \in \R^d$ and
$\alpha \in \pmone^k$ and $\zeta \in \C_{\delta}(f)$.
Since the only ``actual variables'' correspond to $\y \in H$, some of the
elements $\y_i$ might be negations of actual variables $-\y_i \in H$. In that case we interpret
$\vartwo{(\y_1,\ldots,\y_i, \dots,\y_k)}{\alpha}^{(\zeta)}$ as $\vartwo{(\y_1,\ldots,-\y_i,
  \dots,\y_k)}{\alpha'}^{(\zeta)}$
(for the constraint corresponding to $(\yk)$ and $\zeta$),
where $\alpha'$ is $\alpha$ with the $i^{th}$ bit negated.

\Mnote{The above is a somewhat subtle issue. Let me know if the description here is too confusing.}
%

\newcommand{\bnu}{\bar{\nu}}
%
Recall that for each $\zeta \in \C_{\delta}(f)$, there exists a distribution $\nu$ supported on $f^{-1}(1)$, such
that $\zeta = (1-\delta) \cdot \zeta(\nu) + \delta \cdot \mathbb{I}_{k+1}$. Consider a distribution $\bnu$,
  which is $\nu$ with probability $1-\delta$ and uniform on $\pmone^k$ with probability
  $\delta$. Then, we have
\[
\zeta ~=~ (1-\delta) \cdot \zeta(\nu) + \delta \cdot \mathbb{I}_{k+1} ~=~ \zeta(\bnu) \mper
\]
We refer to (an arbitrary choice of) this distribution $\bnu$ for a given $\zeta \in \C_{\delta}(f)$
as $\bnu_{\zeta}$. For a constraint $f(\yk)$, the variable $\vartwo{(\yk)}{\alpha}^{(\zeta)}$ is then defined
as
\[
\vartwo{(\yk)}{\alpha}^{(\zeta)} ~=~ \bnu_{\zeta} (\alpha) \mcom
\]
where $\bnu_{\zeta}(\alpha)$ is the probability assigned to $\alpha$ by $\bnu_{\zeta}$. We now show
that this assignment has SDP value $(1-\delta)$ and satisfies the SDP constraints in expectation over
the tuples $(\yk)$.

\begin{lemma}\label{sdp-completeness:lm}
Let $\Phi$ be the continuous instance of $\maxkcsp(f)$ as described above. Then the SDP solution
given by the vectors $\vtwo{\y}{b}$ and the variables $\vartwo{(\yk)}{\alpha}^{(\zeta)}$ defined as
above has an objective value of $1-\delta$. Also, we have
\begin{itemize}
\item[-]
For all $i \in [k]$ and all $\zeta$ in the support of $\Lambda$,
$\Ex{\yk \sim\cal{N}_d(\zeta)}{\ip{\vtwo{\y_i}{1},\vtwo{\y_i}{-1}}} = 0$.
\item[-]
For all $i,j \in [k]$ with $i \neq j$, all $b,b' \in \pmone$, and all $\zeta$ in the support of $\Lambda$,
\[
\Ex{\yk \sim\cal{N}_d(\zeta)}{\ip{\vtwo{\y_i}{b}, \vtwo{\y_j}{b'}}}
~=~
\Ex{\yk \sim\cal{N}_d(\zeta)}{\sum_{\alpha \in \pmone^t \atop \alpha(i) = b, \alpha(j) = b'}
  \vartwo{(\yk)}{\alpha}^{(\zeta)} } \mper
\]
\end{itemize}
The remaining SDP conditions are satisfied for each constraint corresponding to a tuple $(\yk)$ and
matrix $\zeta$.
\end{lemma}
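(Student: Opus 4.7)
The plan is to verify the claims in three stages: (i) the pointwise SDP constraints that hold for every single constraint, (ii) the objective value lower bound, and (iii) the two ``in expectation'' consistency constraints listed in the lemma. The calculations are all direct from the definitions and the characterization $\zeta = \zeta(\bnu_\zeta)$ of the moment matrix, plus the formulas for the first and second moments of the Gaussian process $\cal{N}_d(\zeta)$.

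First I would dispense with the pointwise conditions. Since $\voneempty = \frac{1}{\sqrt{d}} \bfone$, we have $\|\voneempty\|^2 = 1$. The identity $\vtwo{\y}{1} + \vtwo{\y}{-1} = \voneempty$ is immediate from the definitions of $\vtwo{\y}{\pm 1}$, and in the variable-negation convention $\vtwo{-\y}{b} = \vtwo{\y}{-b}$ this is consistent across the two ways of reading a constraint whose tuple uses a variable from outside $H$. The variables $\vartwo{(\yk)}{\alpha}^{(\zeta)} = \bnu_\zeta(\alpha)$ lie in $[0,1]$ and sum to $1$ over $\alpha$, since $\bnu_\zeta$ is a probability distribution on $\pmone^k$; moreover, if some $\y_i \notin H$, the interpretation in which the corresponding coordinate of $\alpha$ is negated is consistent with defining $\bnu_\zeta$ on the actual variables, since negating a bit of $\alpha$ only permutes the support of $\bnu_\zeta$.

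Next I would compute the objective. For a constraint indexed by $(\yk, \zeta)$, its contribution is $\sum_{\alpha} f(\alpha) \bnu_\zeta(\alpha) = \Pr_{\alpha \sim \bnu_\zeta}[f(\alpha) = 1]$. Since $\bnu_\zeta$ is $\nu$ (which is supported on $f^{-1}(1)$) with probability $1-\delta$ and uniform on $\pmone^k$ with probability $\delta$, this probability equals $(1-\delta) + \delta \cdot \rho(f) \geq 1-\delta$. Averaging over constraints gives the claimed objective value of $1-\delta$.

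Finally, for the two ``in expectation'' inner-product identities I would expand and use the moment formulas. Writing $\ip{\vtwo{\y_i}{1},\vtwo{\y_i}{-1}} = \tfrac{1}{4}(1 - \|\y_i\|^2 / d)$ and using that each coordinate of $\y_i$ has second moment $\zeta(i,i) = 1$ (so $\Ex{}{\|\y_i\|^2} = d$) handles the first identity. For the second, expand
\[
\ip{\vtwo{\y_i}{b},\vtwo{\y_j}{b'}} = \tfrac{1}{4}\bigl(1 + b'\,\tfrac{\ip{\bfone,\y_j}}{d} + b\,\tfrac{\ip{\y_i,\bfone}}{d} + bb'\,\tfrac{\ip{\y_i,\y_j}}{d}\bigr),
\]
and take expectation under $\cal{N}_d(\zeta)$ using $\Ex{}{\tfrac{1}{d}\ip{\y_i,\y_j}} = \zeta(i,j)$ and $\Ex{}{\tfrac{1}{d}\ip{\bfone,\y_i}} = \zeta(0,i)$. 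Compare with the right-hand side, which for each fixed $(\yk)$ is $\Pr_{\alpha \sim \bnu_\zeta}[\alpha(i)=b,\alpha(j)=b']$; since this equals $\tfrac{1}{4}\Ex{\alpha \sim \bnu_\zeta}{(1+b\alpha(i))(1+b'\alpha(j))} = \tfrac{1}{4}(1 + b\zeta(0,i) + b'\zeta(0,j) + bb'\zeta(i,j))$ by $\zeta = \zeta(\bnu_\zeta)$, the two sides match in expectation. The only mildly delicate point is keeping the variable-negation bookkeeping consistent when $\y_i \notin H$, but both sides of the identity transform in the same way under $\y_i \mapsto -\y_i$ (the vector side via $\vtwo{-\y_i}{b} = \vtwo{\y_i}{-b}$, the distributional side via the reinterpretation of $\alpha(i)$), so the identity in expectation is preserved.
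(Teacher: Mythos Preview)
Your proposal is correct and follows essentially the same approach as the paper: verify the pointwise constraints directly from the definitions, compute the objective using $\Pr_{\alpha\sim\bnu_\zeta}[f(\alpha)=1]\ge 1-\delta$, and check the two inner-product identities in expectation by expanding $\ip{\vtwo{\y_i}{b},\vtwo{\y_j}{b'}}$ and matching against the marginal probability $\tfrac14(1+b\zeta(0,i)+b'\zeta(0,j)+bb'\zeta(i,j))$ computed from $\zeta=\zeta(\bnu_\zeta)$. Your treatment of the sign bookkeeping when $\y_i\notin H$ is a nice touch that the paper leaves implicit.
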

\begin{proof}
We first verify the SDP constraints. It is immediate from the definitions that we have
$\norm{\vtwoempty} = 1$, $\vtwo{\y}{1} + \vtwo{\y}{-1} = \vtwoempty$ for all $\y \in \R^d$, and
$\vartwo{(\yk)}{\alpha}^{(\zeta)} \ge 0$ for all $\yk \in \R^d$ and all $\alpha \in \pmone^k$. The remaining
two constraints will only be satisfied in expectation over the tuple $(\yk)$.

Consider the constraint $\ip{\vtwo{\y}{1},\vtwo{\y}{-1}} = 0$. With our definition of
vectors, we have
\[
\ip{\vtwo{\y}{1},\vtwo{\y}{-1}}
~=~
\frac14 \cdot \inparen{\norm{\voneempty}^2 - \norm{\vone{\y}}^2}
~=~
\frac14 \cdot \inparen{1 - \frac{1}{d} \cdot \norm{\y}^2},
\]
which is not always zero. However, for any $i \in [k]$, we have
\begin{align*}
\Ex{\yk \sim\cal{N}_d(\zeta)}{\ip{\vtwo{\y_i}{1},\vtwo{\y_i}{-1}}}
&~=~
\Ex{\yk \sim\cal{N}_d(\zeta)}{\frac14 \cdot \inparen{1-\frac{1}{d}
    \cdot \norm{\y_i}^2}} \\
&~=~
{\frac14 \cdot \inparen{1-\zeta(i,i)}} \\
&~=~ 0 \mper
\end{align*}
Thus, the constraint is satisfied in expectation over the tuples $(\yk)$ for each $\zeta$.
Similarly, for any
tuple $(\yk)$, $i,j \in [k], i \neq j$ and $b,b' \in \pmone$, we have the constraint
\[
\sum_{\alpha \in \pmone^t \atop \alpha(i) = b, \alpha(j) = b'} \vartwo{(\yk)}{\alpha}^{(\zeta)}
~=~
\ip{\vtwo{\y_i}{b}, \vtwo{\y_j}{b'}} \mper
\]
From the definition of the variables $\vartwo{(\yk)}{\alpha}^{(\zeta)}$, the left hand side equals
\begin{align*}
{ \Prob{z \sim \bnu_{\zeta}}{(z_i = b) \wedge (z_j = b')} }
&~=~
\Ex{z \sim \bnu_{\zeta}}{\inparen{\frac{1+(-1)^{b} \cdot z_i}{2}} \cdot
      \inparen{\frac{1+(-1)^{b'} \cdot z_j}{2}} } \\
&~=~
\frac14 \cdot \inparen{ 1+(-1)^b \cdot \zeta(0,i) + (-1)^{b'} \cdot \zeta(0,j) + (-1)^{b+b'}
    \cdot \zeta(i,j) } \mper
\end{align*}
Also, the right hand side equals
\begin{align*}
\ip{\vtwo{\y_i}{b}, \vtwo{\y_j}{b'}}
&~=~
\ip{
\inparen{\frac{\voneempty + (-1)^b \cdot \vone{\y_i}}{2}},
\inparen{\frac{\voneempty + (-1)^{b'} \cdot \vone{\y_j}}{2}}
} \\
&~=~
\frac14 \cdot \inparen{1 + \frac{(-1)^b}{d} \cdot \ip{\bfone,\y_i} + \frac{(-1)^{b'}}{d} \cdot
  \ip{\bfone,\y_j} + \frac{(-1)^{b+b'}}{d} \cdot \ip{\y_i,\y_j}} \mper
\end{align*}
Again, we have in expectation over the tuples $(\yk)$,
\begin{align*}
&\Ex{\yk \sim\cal{N}_d(\zeta)}{\ip{\vtwo{\y_i}{b}, \vtwo{\y_j}{b'}}} \\
&~=~
\Ex{\yk \sim\cal{N}_d(\zeta)}{ \frac14 \cdot \inparen{1 + \frac{(-1)^b}{d} \cdot \ip{\bfone,\y_i} + \frac{(-1)^{b'}}{d} \cdot
  \ip{\bfone,\y_j} + \frac{(-1)^{b+b'}}{d} \cdot \ip{\y_i,\y_j}} } \\
&~=~
{ \frac14 \cdot \inparen{1 + (-1)^b \cdot \zeta(0,i) + (-1)^{b'} \cdot \zeta(0,j)
+ (-1)^{b+b'} \cdot \zeta(i,j)} } \mper
\end{align*}
Thus, the SDP constraint is satisfied in expectation over the tuples $(\yk)$.
Finally, we verify that the above solution has an SDP value of $1-\delta$. The expression for the
SDP value can be written as
\begin{align*}
\ExpOp_{\zeta \sim \Lambda} ~\Ex{\yk \sim\cal{N}_d(\zeta)}{ \sum_{\alpha \in \pmone^k} f(\alpha)
  \cdot \vartwo{(\yk)}{\alpha}^{(\zeta)} }
&~=~
\ExpOp_{\zeta \sim \Lambda} ~\Ex{\yk \sim\cal{N}_d(\zeta)}{ \Prob{\alpha \sim
    \bnu_{\zeta}}{f(\alpha)=1} } \\
&~\ge~
\ExpOp_{\zeta \sim \Lambda} ~\Ex{\yk \sim\cal{N}_d(\zeta)}{ (1-\delta) } \mcom
\end{align*}
since $\bnu_{\zeta}$ is a convex combination of $\nu$ with probability $1-\delta$ and uniform on
$\pmone^k$ with probability $\delta$, and $\nu$ is supported on $f^{-1}(1)$.
\end{proof}

\subsubsection*{Discretization}
We now describe how to discretize the continuous instance described above. We first discretize the
body $\C_{\delta}(f)$ and replace it by a sufficiently dense set of points. The measure $\Lambda$
can then be replaced by a distribution $\Lambda'$ over these set of points.
Recall that the value of any
integer assignment $\psi$ to the continuous instance generated according to the measure $\Lambda$ is
$\rho(f) + \opay(\Lambda,\psi)$ as derived in Equation \ref{soundness:eqn}.
Since the function $\opay(\cdot,\psi)$ is continuous in the matrices
$\zeta$ (for $\zeta \in \C_{\delta}(f)$) by  Lemma \ref{payoff-continuous:lm}, replacing $\Lambda$
by $\Lambda'$ only affects the value of the assignment $\psi$ by $o(1)$. Hence, the value of each
assignment is in $[\rho(f) - o(1), \rho(f) + o(1)]$.

Next we restrict the set of constraints.
We say that a constraint on the tuple $(\yk)$ generated according to a
matrix $\zeta$ is $\eps$-good, if for all $i, j \in [k]$, we have
\[
\abs{\frac{1}{d} \cdot \ip{\y_i, \bfone} - \zeta(0,i)} ~\le~ \eps
\qquad \text{and} \qquad
\abs{\frac{1}{d} \cdot \ip{\y_i, \y_j} - \zeta(i,j)} ~\le~ \eps \mper
\]
We will restrict our set of constraints only to the set of $\eps$-good constraints, for a
sufficiently small $\eps$ to be fixed later.
Since the tuple $(\yk)$ is generated according to $\cal{N}_d(\zeta)$, we have that
$\ex{\frac1d \cdot \ip{\y_i, \bfone} = \zeta(0,i)}$ and $\ex{\frac{1}{d} \cdot \ip{\y_i, \y_j}} =
\zeta(i,j)$.
Hence for sufficiently large $d$,
the probability that a randomly generated constraint is \emph{not} $\eps$-good is $o(1)$ by
standard tail estimates on Gaussian variables. Thus, restricting our instance only to the set of
$\eps$-good constraints changes the value of all assignments only by $o(1)$.
Note that it follows from the proof of Lemma \ref{sdp-completeness:lm} that for any $\eps$-good constraint, we
will have for all $i,j \in [k]$ and $b,b' \in \pmone$
\[
\abs{\ip{\vtwo{\y_i}{1},\vtwo{\y_i}{-1}}} \le \eps
\quad \text{and} \quad
\abs{ \ip{\vtwo{\y_i}{b}, \vtwo{\y_j}{b'}} - \sum_{\alpha \in \pmone^t \atop \alpha(i) = b, \alpha(j) = b'}
  \vartwo{(\yk)}{\alpha}^{(\zeta)} } \le \eps \mper
\]

Finally, we discretize the set of variables. Since we only consider $\eps$-good
constraints, we have that for all participating tuples $(\yk)$ and all $i \in [k]$,
$\abs{\frac{1}{d} \cdot \ip{\y_i,\y_i} - \zeta(i,i)} \le \eps$ and hence $\norm{\y_i}^2 \in
\insquare{\inparen{1-\eps}\cdot{d}, \inparen{1+\eps}\cdot{d}}$. Thus, we can restrict ourselves
to a sufficiently dense set of points such that their squared distance from the origin is between
$\inparen{1-\eps}{d}$ and $\inparen{1+\eps}{d}$.
For each constraint on a tuple $(\yk)$, we collapse each
$\y_i$ to the nearest point in our set, which gives a finite set of constraints over a finite number
of variables. Since an assignment to the collapsed instance can also be thought of as an assignment
to the continuous instance (where $\psi$ is constant over each set of collapsed points), the value
of any assignment still remains in the range $[\rho(f) - o(1), \rho(f) + o(1)]$.

We define the vectors $\vtwo{\y}{b}$ and variables $\vartwo{(\yk)}{\alpha}^{(\zeta)}$ as before for
our new set of variables.
Since the
contribution of \emph{each constraint} to the SDP objective is at least $1-\delta$, the SDP value
still remains at least $1-\delta$.
Also, if the set of points is sufficiently dense, each vector only moves by a small amount (say
$o(\eps)$) and we still have that for every ($\eps$-good) constraint, for all $i,j \in [k]$ and $b,b' \in \pmone$
\[
\abs{\ip{\vtwo{\y_i}{1},\vtwo{\y_i}{-1}}} \le O(\eps)
\quad \text{and} \quad
\abs{ \ip{\vtwo{\y_i}{b}, \vtwo{\y_j}{b'}} - \sum_{\alpha \in \pmone^t \atop \alpha(i) = b, \alpha(j) = b'}
  \vartwo{(\yk)}{\alpha}^{(\zeta)} } \le O(\eps) \mper
\]
Thus, we have an SDP solution with value at least $1-\delta$, which satisfies the above inequalities
approximately and the rest of the SDP constraints exactly. At this point we can apply the
``surgery'' and ``smoothening'' procedures of Raghavendra and Steurer \cite{raghsteu}
(Lemmas 5.1 and 5.2), which transform an SDP solution satisfying the above constraints
approximately, to new solution for the basic SDP relaxation in Figure \ref{fig:basic-sdp}, while
only losing $O(\sqrt{\eps} \cdot k^2)$ in the SDP value.
%
\newcommand{\vt}[2]{\tilde{\bf v}_{(#1,#2)}}
\newcommand{\nt}{\tilde{\nu}}
Note that for an instance of $\maxkcsp(f)$, the variables $\vartwo{S_C}{\alpha}$ define a
distribution on the set $S_C$. Let this be denoted by $\nu_C$. The following is a combination of
Lemmas 5.1 and 5.2 from \cite{raghsteu}.
\begin{lemma}[\cite{raghsteu}]\label{lem:basic-correction}
Let $\Phi$ be an instance of $\maxkcsp(f)$ in $n$ (Boolean) variables such that there exist vectors
$\vtwo{i}{b}$ for all $i \in [n]$ and $b \in \pmone$, and distributions $\nu_C$ over $\pmone^{S_C}$
for all $C \in \Phi$, satisfying
\[
\abs{\ip{\vtwo{i}{1},\vtwo{i}{-1}}} \le \eps
~~\text{and}~~
\abs{ \ip{\vtwo{i}{b}, \vtwo{j}{b'}} - \Prob{x \sim \nu_C}{(x_i = b) \wedge (x_j = b')}} \le \eps
\quad \forall C \in \Phi,~ i,j \in S_C
\mper
\]
Then there exist vectors $\inbraces{\vt{i}{b}}_{i \in [n], b \in \pmone}$ and distributions
$\inbraces{\nt_C}_{C \in \Phi}$ such that
\[
\ip{\vt{i}{1},\vt{i}{-1}} = 0
~~\text{and}~~
\ip{\vt{i}{b}, \vt{j}{b'}} = \Prob{x \sim \nt_C}{(x_i = b) \wedge (x_j = b')}
\quad \forall C \in \Phi,~ i,j \in S_C \mper
\]
Also, we have that for all $i, b$, $\norm{\vtwo{i}{b} - \vt{i}{b}} = O(k^2 \cdot \sqrt{\eps})$ and
for all $C \in \Phi$, $\norm{\nu_C - \nt_C}_1 = O(k^2 \cdot \sqrt{\eps})$.
\end{lemma}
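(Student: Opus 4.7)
The plan is to fix the SDP solution in two stages. In Stage~1 (\emph{vector smoothing}) I would modify each pair $\{\vtwo{i}{1}, \vtwo{i}{-1}\}$ into an orthogonal pair $\{\vt{i}{1}, \vt{i}{-1}\}$ that still sums to $\vtwoempty$, incurring only an $O(\sqrt{\eps})$ change per vector. In Stage~2 (\emph{distribution propagation}) I would replace each $\nu_C$ by a nearby distribution $\nt_C$ whose pairwise marginals match the Stage~1 inner products $\ip{\vt{i}{b}, \vt{j}{b'}}$ exactly, incurring an $O(k^2 \sqrt{\eps})$ change in total variation.

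For Stage~1, set $u_i := \vtwo{i}{1} - \vtwo{i}{-1}$ so that $\vtwo{i}{b} = \tfrac12(\vtwoempty + b \cdot u_i)$. The hypothesis $\abs{\ip{\vtwo{i}{1},\vtwo{i}{-1}}} \le \eps$ becomes $\norm{u_i}^2 = 1 \pm 4\eps$. Embed the ambient Hilbert space into a larger one carrying a fresh orthonormal family $\{e_i\}_{i \in [n]}$ perpendicular to all prior vectors, and set
\[
\tilde u_i ~:=~ u_i - \ip{u_i, \vtwoempty}\,\vtwoempty + \beta_i\,e_i,
\]
where $\beta_i \ge 0$ is chosen so that $\norm{\tilde u_i} = 1$. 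Defining $\vt{i}{b} := \tfrac12(\vtwoempty + b \cdot \tilde u_i)$ yields $\vt{i}{1} + \vt{i}{-1} = \vtwoempty$ and $\ip{\vt{i}{1}, \vt{i}{-1}} = \tfrac14(1 - \norm{\tilde u_i}^2) = 0$ exactly, while a direct calculation (using $\norm{u_i}^2 \approx 1$ and $\abs{\ip{u_i, \vtwoempty}} = O(\sqrt\eps)$) gives $\norm{\vtwo{i}{b} - \vt{i}{b}} = O(\sqrt\eps)$, and hence $\abs{\ip{\vt{i}{b}, \vt{j}{b'}} - \ip{\vtwo{i}{b}, \vtwo{j}{b'}}} = O(\sqrt\eps)$ for $i \ne j$.

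For Stage~2, fix a constraint $C$ and let $p_{ij}(b,b') := \ip{\vt{i}{b}, \vt{j}{b'}}$ for $i \ne j$ in $S_C$ and $b,b' \in \pmone$. These form a consistent family of pairwise marginals on $\pmone^{S_C}$: summing over $b'$ yields $\ip{\vt{i}{b}, \vtwoempty}$, which is independent of $j$, and the Stage~1 relations guarantee that these single-variable marginals lie in $[0,1]$ and sum to one. After Stage~1 one still has $\abs{p_{ij}(b,b') - \Pr_{x \sim \nu_C}[(x_i,x_j) = (b,b')]} = O(\sqrt\eps)$ for every pair and every sign pattern. I would then obtain $\nt_C$ by projecting $\nu_C$ in total-variation distance onto the affine polytope of distributions on $\pmone^{S_C}$ with exact pairwise marginals $p_{ij}$. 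Feasibility and smallness of the correction are argued pair-by-pair: one mixes $\nu_C$ with a carefully chosen ``correction distribution'' on $\pmone^{S_C}$ that shifts the target pair's $2 \times 2$ marginal by the required $O(\sqrt\eps)$ amount, while perturbing every other pair's marginal by at most $O(\sqrt\eps)$. Iterating over the $\binom{k}{2}$ pairs and summing yields $\norm{\nu_C - \nt_C}_1 = O(k^2 \sqrt\eps)$.

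The main obstacle is Stage~2: a consistent family of pairwise marginals need not in general extend to a joint distribution on $\pmone^{S_C}$, so abstract feasibility arguments are not available. The rescue is that the incumbent $\nu_C$ already certifies feasibility to within $O(\sqrt\eps)$ in each pairwise marginal, and the delicate part is to turn this approximate feasibility into an exact one while keeping the perturbation bounded by $O(k^2\sqrt\eps)$, rather than letting the pairwise corrections interact to produce blow-up that is super-polynomial in $k$. My plan for this is the iterative scheme sketched above: process the pairs one at a time, so that at each step the mass moved is $O(\sqrt\eps)$ and the error bled onto subsequent pairs is also $O(\sqrt\eps)$; since there are only $O(k^2)$ pairs, the geometric-style accumulation telescopes to the claimed $O(k^2 \sqrt\eps)$ bound.
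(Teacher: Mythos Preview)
The paper does not give its own proof of this lemma; it is quoted as a combination of Lemmas~5.1 and~5.2 from Raghavendra--Steurer~\cite{raghsteu} and used as a black box. So there is nothing in the paper to compare your argument against directly. Your two-stage outline (first repair the vectors, then repair the local distributions) is indeed the shape of the Raghavendra--Steurer argument.

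That said, Stage~1 as you wrote it contains a genuine error. You assert $\abs{\ip{u_i,\vtwoempty}} = O(\sqrt{\eps})$ and then subtract off the $\vtwoempty$-component of $u_i$. But $\ip{u_i,\vtwoempty} = \ip{\vtwo{i}{1}-\vtwo{i}{-1},\,\vtwo{i}{1}+\vtwo{i}{-1}} = \norm{\vtwo{i}{1}}^2 - \norm{\vtwo{i}{-1}}^2$ is precisely the bias $\Pr[x_i=1]-\Pr[x_i=-1]$ of variable $i$, which can be any number in $[-1,1]$ and is not controlled by the hypotheses of the lemma. (In the paper's own application, for instance, $\ip{u_{\y},\vtwoempty} \approx \zeta(0,i)$, which is generically $\Theta(1)$.) Your construction therefore zeroes out all biases and produces $\norm{\vt{i}{b}-\vtwo{i}{b}} = \Theta(1)$, not $O(\sqrt{\eps})$. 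The repair is easy: you only need $\norm{\tilde u_i}=1$, so when $\norm{u_i}\le 1$ add a fresh orthogonal direction $\beta_i e_i$, and when $\norm{u_i}>1$ rescale $u_i$ to unit length; in either case $\norm{\tilde u_i - u_i} = O(\eps)$ because $\norm{u_i}^2 = 1\pm 4\eps$. Do not touch the $\vtwoempty$-component.

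In Stage~2 there is a softer gap. After Stage~1 the target pairwise marginals $p_{ij}(b,b')=\ip{\vt{i}{b},\vt{j}{b'}}$ are only guaranteed to be $\ge -O(\sqrt{\eps})$, not $\ge 0$, so they need not be realizable as marginals of any distribution; the Raghavendra--Steurer argument handles this with an explicit smoothing step (mixing $\nu_C$ with a small multiple of the uniform distribution on $\pmone^{S_C}$) before attempting to match marginals. Your iterative pair-by-pair correction also needs a bit more care than ``errors bleed by $O(\sqrt{\eps})$ per step, and there are $O(k^2)$ steps'': as written, each correction perturbs all $O(k^2)$ other pairs by $O(\sqrt{\eps})$, so the naive accounting gives a multiplicative blow-up per round rather than additive. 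One clean way around both issues is to first mix with enough uniform to push every pairwise marginal $\Omega(\sqrt{\eps})$ into the interior, and then note that the target marginal table differs from $\nu_C$'s table by $O(k^2\sqrt{\eps})$ in $\ell_1$ and lies inside the feasible polytope, so a single $\ell_1$-projection step suffices.
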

Choosing $\eps = O(\delta^2/k^4)$ and applying the above lemma, we obtain a solution to the SDP in
Figure \ref{fig:basic-sdp} with value at least $1-2\delta$.

\section{Proof of the LP Dichotomy Theorem}\label{lpthm:sec}

Note that a dichotomy theorem for SDPs need not imply a
similar dichotomy theorem for LPs. For example, 2LIN is
approximable (very well) via Goemans-Williamson
SDP but the same predicate appears approximation resistant to a super-constant
number of rounds of the
Sherali-Adams LP \cite{delavega, yury}.
Nevertheless, our characterization in Theorem~\ref{sdp:thm} can be used in a more or less
black-box fashion so as to yield a {\it syntactically} similar characterization in the
LP case.
The integrality gap construction however
 needs substantial work.
The Feige-Schechtman approach is not sufficient to construct
integrality gap instances for the Sherali-Adams LP, which
is our focus in this section. We overcome this difficulty by
generalizing the construction of de la Vega and Kenyon ~\cite{delavega}.
A noteworthy detail of our construction  is that our technique, even though it is probabilistic,
requires a more subtle argument for both completeness and soundness.
This is unlike many previous
constructions, which typically consider a  uniformly
random instance (or a minor modification of it) from the
family of all possible instances.

Recall that in Definition~\ref{cvxptppred:dfn}, we define a
moment matrix $\zeta$ consisting of the first and second moments of a distribution $\nu$
supported on $f^{-1}(1)$. The second moments also match with the inner products of the
SDP vectors. In the LP case, the LP solution only gives first moments. Still, we
are able to use a {\it dummy} ~setting for the second moments and reduce the LP
case to the SDP case! The dummy setting ensures that the corresponding covariances are zero
and hence the Gaussians with matching first and second moments are independent.

We describe this trick formally now. Given a predicate $f:\{-1,1\}^k\to\B$, recall  that
$\cal{D}(f)$ is the set of all probability distributions over $f^{-1}(1)$. We define
a compact body $\tilde{{\cal C}}(f)$ that replaces the role of the polytope ${\cal C}(f)$ before.

\begin{definition}
For $\nu \in \cal{D}(f)$, we let $\tilde{\zeta}(\nu)$ denote the
$(k+1) \times (k+1)$ symmetric {\deffont moment matrix}:
\begin{align*}
\forall i\in \{0\} \cup [k]:\ \tilde{\zeta}(i,i) &~=~ 1 \mcom \\
\forall i\in[k]:\ \tilde{\zeta}(0,i) &~=~ \Ex{x \sim \nu}{x_i} \mcom \\
\forall i,j\in[k], i\neq j:\ \tilde{\zeta}(i,j) &~=~ \tilde{\zeta}(0,i)\cdot\tilde{\zeta}(0,j) \mper
\end{align*}
Also, let $\tilde{\cal{C}}(f) \subseteq \R^{(k+1)\times (k+1)}$
denote the compact (but not necessarily convex) set of all such moment matrices:
\[ \tilde{\cal{C}}(f) ~:=~\{ \tilde{\zeta}(\nu) \suchthat \nu \in\cal{D}(f)\}. \]
\end{definition}

Note that if $g_1, \ldots, g_k$ are correlated Gaussians with $\ExpOp[g_i] = \tilde{\zeta}(0,i)$,
$\ExpOp[g_i^2] =1 $ and
$\ExpOp[g_i g_j] = \tilde{\zeta}(0,i)\cdot \tilde{\zeta}(0,j)$ for $i \not= j$,
then these are independent with given means.

The entire argument in Section \ref{sdpthm:sec} can be repeated as is except for two changes: firstly,
the body $\tilde{C}(f)$ is now used throughout the argument. Secondly, in Section \ref{Lge0rnd:sbs}, towards designing
an algorithm, the $k$-round Sherali-Adams LP is solved instead of the basic (SDP) relaxation. The
SDP solution enables us to generate a (global)
sequence of correlated Gaussians, one for every CSP variable,
so that for every CSP constraint $C$, the $k$ Gaussians corresponding to that constraint have
first and second moments given by $\zeta = \zeta(\nu(C))$, where $\nu(C)$ is the local distribution on
that constraint. In the LP case however, we only have access to (globally consistent) first moments (i.e. biases)
of the local distributions $\nu(C)$. But we
can still generate a (global) sequence of correlated Gaussians as before whose first and second
moments corresponding to the constraint $C$ are $\tilde{\zeta}(\nu(C)))$. These are simply independent
unit $\ell_2$-norm Gaussians with first moments equal to the biases computed by the LP!

As before, depending on the
value of the limit $L$, we get a dichotomy, i.e. the following analogs of Theorems~\ref{sdp-algo:thm}
and~\ref{sdp-measure:thm} respectively. When $L > 0$, the predicate is approximable
via a $k$-round Sherali-Adams LP.

\begin{theorem}\label{lp-algo:thm}
If $L >0$, then there exists a $k$-round LP
rounding algorithm such that given an instance $\Phi$ with $\lpopt(\Phi) \ge 1-\eps$
(for sufficiently small $\eps > 0$), we have
$\ExpOp_{\psi}{\left[ \round_{\psi}(\Phi) \right]} \geq \rho(f) + L/2$.
\end{theorem}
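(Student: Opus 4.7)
The strategy is to mirror the proof of Theorem~\ref{sdp-algo:thm} almost verbatim, with the body ${\cal C}_\delta(f)$ replaced by (a noise-added version of) $\tilde{\cal C}(f)$, and with the ``source'' of correlated Gaussians changed from SDP vectors to independent Gaussians whose means are the LP biases. Concretely, one repeats the entire game-theoretic framework of Subsection~\ref{game:sbs} with \Dev playing distributions $\lambda \in \Rp$ supported on (a dense sequence of subsets of) $\tilde{\cal C}_\delta(f)$ and \Ang playing odd functions $\psi_q : \R^d \to V$ with $d=k+1$, using the same payoff definitions \eqref{payzpsi:eqn}-\eqref{payzpsi1:eqn}. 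The one observation needed is that for every $\zeta \in \tilde{\cal C}(f)$ the off-diagonal second moments factor as $\zeta(i,j)=\zeta(0,i)\cdot\zeta(0,j)$, so $\cal{N}_d(\zeta)$ is simply a product of $k$ independent $d$-dimensional Gaussians with means $\zeta(0,1),\ldots,\zeta(0,k)$ and identity covariance. Claims~\ref{payoff-bound:clm}, \ref{payoffs-close:clm}, \ref{eigenvalues-delta:clm} and Lemma~\ref{payoff-continuous:lm} carry over without change (the relevant covariance matrices are diagonal with entries bounded below by $\delta$ once we work in the noise-added body), and the existence of the limit $L\ge 0$ is identical.

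Assuming $L>0$, the first step is to pick $\beta>0$ and $p,q$ sufficiently large that there is a mixed strategy $\Gamma_q$ of \Ang satisfying $\opay(\lambda_p,\Gamma_q)\ge L-2\beta$ for every $\lambda_p\in\Rp$, exactly as in Equation~\eqref{eqn:algo-start}. Given an instance $\Phi$ with $\lpopt(\Phi)\ge 1-\eps$, I then solve the $k$-round Sherali-Adams LP, read off each bias $p_i=\vartwo{\{i\}}{1}-\vartwo{\{i\}}{-1}$, shrink it slightly by a $(1-\delta)$ factor so that the resulting single-variable distribution lies in a $\delta$-noised version of $[-1,1]$, and perform the following rounding: sample $\psi\sim\Gamma_q$, then independently for each $i\in[n]$ sample $\y_i\in\R^d$ with each of its $d$ coordinates i.i.d.\ $\cal{N}(p_i,1)$, and assign $x_i\leftarrow \psi(\y_i)$. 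Marginalizing over $\y_i$, this induces a distribution over odd functions $\psi_{\mathrm{LP}}:[-1,1]\to[-1,1]$ defined by $\psi_{\mathrm{LP}}(p)=\Ex{\y\sim\cal{N}(p\bfone,I_d)}{\psi(\y)}$, so the scheme is a valid $k$-round LP rounding algorithm in the sense of the paper's definition.

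Next I would show that the expected satisfaction of a typical constraint equals $\rho(f)+\opay(\tilde\zeta_C,\psi)$ where $\tilde\zeta_C$ is the $(k+1)\times(k+1)$ matrix of (first moments of) the folded local biases $b_{i_j}\cdot p_{i_j}$ together with the induced product second moments. For each constraint $C$ with $\sdpopt(C)\ge 1-\sqrt\eps$, the local distribution $\nu_C$ (read off from $\vartwo{S_C}{\cdot}$) is $O(\sqrt\eps)$-close in $\ell_1$ to a distribution supported on $f^{-1}(1)$, hence $\tilde\zeta_C$ is $O(\sqrt\eps)$-close in $\|\cdot\|_\infty$ to some matrix in $\tilde{\cal C}_\delta(f)$. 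Using the Lipschitz property of $\opay$ in the first argument, followed by approximating the empirical measure on these matrices by some $\lambda_p\in\Rp$ (possible because $R_p$ is dense in $\tilde{\cal C}_\delta(f)$), I obtain
\[
\Ex{\psi\sim\Gamma_q}{\round_\psi(\Phi)}-\rho(f) \;\ge\; \opay(\lambda_p,\Gamma_q)-O_{k,d,\delta}(\sqrt\eps)-\beta \;\ge\; L-3\beta-O_{k,d,\delta}(\sqrt\eps),
\]
which is at least $L/2$ for $\beta\le L/16$ and $\eps=o_{k,d,\delta}(L^2)$, matching the SDP statement.

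The only real work beyond a direct translation is verifying that all the Lipschitz and bounded-gradient estimates of Lemma~\ref{payoff-continuous:lm} and Lemma~\ref{smth:lm} remain valid when $\zeta$ ranges over $\tilde{\cal C}_\delta(f)$ instead of ${\cal C}_\delta(f)$; this should be the main (though still routine) technical check, since $\tilde{\cal C}(f)$ is not convex and one must confirm that the noise-addition $\tilde\zeta\mapsto (1-\delta)\tilde\zeta+\delta \mathbb{I}_{k+1}$ still produces covariance matrices with minimum eigenvalue $\ge \delta$ (it does: the diagonalization makes this immediate). Modulo this verification, everything else is a line-by-line rewriting of the proof of Theorem~\ref{sdp-algo:thm}, with SDP vectors replaced by independent Gaussian samples drawn directly from the LP biases.
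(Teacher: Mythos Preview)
Your proposal is correct and follows essentially the same route as the paper: the paper does not give a separate proof of Theorem~\ref{lp-algo:thm} but simply states that the entire argument of Section~\ref{sdpthm:sec} goes through with $\tilde{\cal C}(f)$ replacing $\cal{C}(f)$ and with the $k$-round Sherali-Adams LP supplying globally consistent biases from which one generates independent Gaussians (one per variable) in lieu of SDP-vector projections. Your write-up spells out exactly these two substitutions and correctly observes that the induced per-variable rounding $p\mapsto \Ex{\y}{\psi(\y)}$ is an odd function $[-1,1]\to[-1,1]$, making it a bona fide $k$-round LP rounding in the sense of the paper's definition.

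One small technical slip: when you write ``each of its $d$ coordinates i.i.d.\ $\cal{N}(p_i,1)$'', the variance should be $1-p_i^2$ (after any $\delta$-shrinkage) rather than $1$, since $\cal{N}_d(\zeta)$ requires $\E[(\y_i)_l^2]=\zeta(i,i)=1$; with unit variance the second moment would be $1+p_i^2$. Relatedly, after the noise addition $\tilde\zeta\mapsto(1-\delta)\tilde\zeta+\delta\mathbb{I}_{k+1}$ the covariance matrix is not literally diagonal (its off-diagonal entries are $\delta(1-\delta)\mu_i\mu_j$), so ``the diagonalization makes this immediate'' is a slight overstatement; the correct justification is that $\Sigma(\tilde\zeta)$ is diagonal PSD, and then the decomposition in Claim~\ref{eigenvalues-delta:clm} gives the eigenvalue lower bound $\delta$ verbatim. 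Neither point affects the structure of your argument.
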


When $L=0$, as in Theorem \ref{sdp-measure:thm}, we get a measure $\Lambda$ on the body
$\tilde{C}(f)$ that is vanishing in the sense therein. However we note that since the second
moments are just dummy, we might as well restrict everything to the body ${\cal C}^*(f)$ that
is the projection of $\tilde{C}(f)$ onto the first moments (and thus ${\cal C}^*(f)$ is
simply the convex hull of $f^{-1}(1)$). Denoting the measure on ${\cal C}^*(f)$ so obtained by
$\Lambda^*$, we get:




\begin{theorem}\label{lp-measure:thm}
If $L=0$, then there exists a probability measure $\Lambda^*$ on $\C^*(f)$
such that for all $t \in [k]$, and a uniformly random choice of $S$ with  $|S|=t$,
$\pi : [t]\to [t]$ and $b\in \pmone^{t}$, the following signed measure on
$t$-dimensional vectors,
\begin{equation}
\Lambda^{*,(t)} ~:=~
\ExpOp_{|S| = t} ~\ExpOp_{\pi : [t] \to [t]} ~\Ex{b \in \pmone^{t}}{ \hf(S) \cdot  \inparen{\prod_{i
    = 1}^t b_i}
  \cdot \Lambda_{S,\pi,b}^*}\label{mainlp2:eqn}
\end{equation}
is identically zero.
\end{theorem}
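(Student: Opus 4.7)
The plan is to run the entire argument of Theorem \ref{sdp-measure:thm} essentially verbatim, but with the convex body $\C_\delta(f)$ replaced by its LP analogue $\tilde{\C}_\delta(f) \defeq \inbraces{(1-\delta) \cdot \tilde{\zeta} + \delta \cdot \mathbb{I}_{k+1} \suchthat \tilde{\zeta} \in \tilde{\C}(f)}$, and then to transport the resulting vanishing measure from $\tilde{\C}(f)$ to $\C^*(f)$ by a continuous bijection. First I would redefine the two-player games $\G_{p,q}$ by letting \Dev's pure strategies be distributions supported on a fixed dense sequence of finite subsets of $\tilde{\C}_\delta(f)$, with \Ang's strategies and the payoff function $\pay(\lambda,\psi)$ defined exactly as in Section \ref{game:sbs}. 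The key observation is that $\tilde{\C}_\delta(f)$ is compact (being a continuous image of $\C^*(f)$), and every covariance matrix associated with a submatrix $\tilde{\zeta}_{S,\pi,b}$ for $\tilde{\zeta} \in \tilde{\C}_\delta(f)$ still has all eigenvalues at least $\delta$ by the same argument as Claim \ref{eigenvalues-delta:clm}; this is the only input needed to make Lemma \ref{payoff-continuous:lm} and Lemma \ref{smth:lm} go through with identical constants.

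Assuming $L = 0$, I would then repeat the chain of lemmas from Section \ref{sdpthm:sec}: Lemma \ref{optdis:lm} produces row-limit equilibrium distributions $\Lambda_p$; passing to a weak$^*$-limit as in Lemma \ref{eval-zero:lm} (using Theorem \ref{wk*:thm} applied to $\tilde{\C}_\delta(f)$) yields a probability measure $\tilde{\Lambda}$ on $\tilde{\C}_\delta(f)$ such that $\pay(\tilde{\Lambda}, \psi_q) \le 0$ for every step function $\psi_q$. The multilinear polynomial coefficient-extraction argument (Claims \ref{theta-invariance:clm}--\ref{payoff-theta:clm} and Lemma \ref{theta-zero:lm}) then forces the corresponding function $\theta^{(t)}$ to vanish on $(\bbox)^t$, and the power-series argument built around Claims \ref{integral-signed:clm}, \ref{realizable:clm}, \ref{coefficient-bound:clm} and Lemma \ref{measure-zero:lm} forces the signed measure $\tilde{\Lambda}^{(t)}$ on $(t+1)\times(t+1)$ matrices (defined exactly as $\Lambda^{(t)}$ in Equation \eqref{main:eqn} but using $\tilde{\Lambda}$) to vanish identically. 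Finally, Claim \ref{vanishing-measure:clm} lets us remove the $\delta$-noise and obtain a vanishing probability measure $\tilde{\Lambda}$ on $\tilde{\C}(f)$.

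The last step is to push $\tilde{\Lambda}$ forward to $\C^*(f)$. The projection $\Phi: \tilde{\C}(f) \to \C^*(f)$ sending $\tilde{\zeta}$ to its first-moment vector $(\tilde{\zeta}(0,1),\ldots,\tilde{\zeta}(0,k))$ is a continuous bijection, because in $\tilde{\C}(f)$ the off-diagonal second moments are \emph{defined} by $\tilde{\zeta}(i,j) = \tilde{\zeta}(0,i)\cdot\tilde{\zeta}(0,j)$ and the diagonal entries are all $1$. Define $\Lambda^* \defeq \Phi(\tilde{\Lambda})$. For any $S \subseteq [k]$, permutation $\pi$ and sign vector $b$, the projected measure $\Lambda^*_{S,\pi,b}$ on $\R^{|S|}$ is precisely the image of $\tilde{\Lambda}_{S,\pi,b}$ under the analogous first-moment projection, since restriction, permutation, and sign-flipping commute with $\Phi$. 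Hence the signed measure $\Lambda^{*,(t)}$ from Equation \eqref{mainlp2:eqn} is the image under $\Phi$ of the analogous signed measure built from $\tilde{\Lambda}$, and its vanishing is immediate from the vanishing of $\tilde{\Lambda}^{(t)}$.

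The main obstacle is verifying that the realizability argument of Claim \ref{realizable:clm} and the power-series argument of Lemma \ref{measure-zero:lm} continue to function with $\tilde{\C}(f)$ replacing $\C(f)$. The realizability proof only uses that $d \ge k+1$ and positive semidefiniteness of an explicit diagonally-dominant matrix built from $(X,Z)$, and makes no use of the extra constraint $\tilde{\zeta}(i,j) = \tilde{\zeta}(0,i)\tilde{\zeta}(0,j)$ that defines $\tilde{\C}(f)$; thus it goes through unchanged. Likewise, the bound on coefficients in Claim \ref{coefficient-bound:clm}, which relies only on the eigenvalue lower bound, is unaffected. All remaining steps are syntactic substitutions, so no genuinely new technical ingredient is needed beyond the bijective projection at the end.
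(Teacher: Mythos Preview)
Your proposal is correct and matches the paper's approach essentially verbatim: the paper states that the entire argument of Section~\ref{sdpthm:sec} is repeated with $\tilde{\C}(f)$ in place of $\C(f)$, and then the resulting vanishing measure is projected to $\C^*(f)$ via the first-moment map, exactly as you describe. Your additional verification that Claim~\ref{eigenvalues-delta:clm} still holds (because each $\tilde\zeta' \in \tilde{\C}(f)$ is the moment matrix of a product distribution on $\pmone^k$) and that Claim~\ref{realizable:clm} is independent of the special structure of $\tilde{\C}(f)$ fills in the only details the paper leaves implicit.
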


In the next section, we show how the existence of the measure $\Lambda^*$ leads to a
$(1-o(1), \rho(f) + o(1))$ integrality gap for a super-constant number of rounds of the
Sherali-Adams LP.

\subsection{The Integrality Gap Instance}
The integrality gap construction for the Sherali-Adams LP is rather different from that
for SDPs.
One important aspect of our construction is that unlike many previous
constructions, e.g. \cite{yury, delavega, schoen, madhur},
our construction requires a non-trivial proof
of {\em both} the soundness and completeness parts.
The proof of the soundness part is similar to
that in the SDP case and for the completeness part we generalize the
construction in de la Vega and Kenyon~\cite{delavega}.
A formal description of our instance follows.

Let $f:\{-1,1\}^k \mapsto \{0,1\}$ be any predicate with a measure $\Lambda^*$ as in
Theorem \ref{lp-measure:thm}.
Note that now $\C^{*}_{\delta}(f)$ is simply the body
$\{(1-\delta) \cdot \zeta ~|~ \zeta \in \C^{*}(f)\}$. Since it's just a scaling, there is a
vanishing measure over $\C^*(f)$ if and only if there is such a measure over $\C^*_{\delta}(f)$. We
will assume $\Lambda^*$ is over $\C^*_{\delta}(f)$ for reasons similar to the ones in the SDP integrality gap.
We will finally need to choose $\delta \geq \sqrt{\eps}$, for the constant $\eps$ below.

Fix a small enough $\eps > 0$ and let $s = \lceil \frac{1}{\epsilon} \rceil$.
Partition the interval $[0,1]$ into $s+1$ disjoint sets $I_0, I_1, \ldots, I_{s}$ where
$I_0  = \{ 0 \}$ and  $I_1,\ldots,I_s$ are contiguous equal length intervals that
partition $(0,1]$, each being open at its left endpoint and closed at the right endpoint.
For each interval $I_i$, we define a set (layer) of
$n$ variables $X_i$.  
Thus the total number of variables in the
CSP instance is $(s+1)\cdot n$. Our constraints are generated
by the following algorithm. 
\begin{enumerate}
\item Sample   
$\zeta\sim\Lambda^*$.
\item For each $j\in[k]$,
 let $i_j(\zeta)$ denote the index of the interval that contains $|\zeta(j)|$.
Sample uniformly a variable $x_{i_j}$ from the set $X_{i_j(\zeta)}$.
\item If $\zeta(j) < 0$ then negate $x_{i_j}$. If $\zeta(j)=0$ then negate $x_{i_j}$ with probability
$\frac{1}{2}$.
\item We have sampled a $k$-tuple of literals. Introduce a constraint $f$ on these literals.
\item Repeat the above procedure $m=\Delta(\eps)\cdot n$ times independently (where
$\Delta(\eps)$ is a sufficiently large constant) and thus generate $m$ constraints.
\end{enumerate}

This completes the description of our CSP$(f)$ instance. 

\Mnote{I replaced all $\opt$ below by $\sat(\psi)$.}

Let $\psi$ be any (global) $\{-1,1\}$-assignment to the above instance.
Denoting the fraction of constraints satisfied by $\psi$ by $\sat(\psi)$, we note that
$\ex{\sat(\psi)}$ is equal to the
the probability that a randomly chosen constraint as above
is satisfied by $\psi$.
We prove that this probability is precisely $\rho(f)$.
We can write $\ex{\sat(\psi)}$, which equals the probability of satisfying a random constraint as
above, as:
\[\ex{\sat(\psi)} ~=~ \Ex{\zeta\sim\Lambda^*,\atop{x_{i_j} \in X_{i_j(\zeta) }   }}{f({\sf sign}(\zeta(1))\cdot\psi(x_{i_1}),...,{\sf sign}(\zeta(k))\cdot\psi(x_{i_k}))}.\]
Here the function ${\sf sign}(\cdot)$ is $-1$ if its argument is strictly negative,  $+1$ if its argument
is strictly positive and ${\sf sign}(0)=0$.
Using the Fourier expansion of $f$,
\[\ex{\sat(\psi)} ~=~\Ex{\zeta\sim\Lambda^*,\atop{x_{i_j}\in X_{i_j(\zeta)}  }}{\sum_{S\subseteq[k]}\HF(S)\prod_{j\in S}\left({\sf sign}(\zeta(j))\cdot\psi(x_{i_j})\right)}.\]
Since $x_{i_j}$ is randomly chosen from the layer $X_{i_j(\zeta)}$, we can move the expectation over the
choice of $x_{i_j}$ inside and get
\[
\ex{\sat(\psi)} ~=~ \rho(f)+\sum_{t=1}^k\Ex{\zeta\sim\Lambda^*}{\sum_{|S|=t}\HF(S)\prod_{j\in
    S}\left({\sf sign}(\zeta(j))\cdot\Ex{\atop{x_{i_j} \in X_{i_j(\zeta)}
      }}{\psi(x_{i_j})}\right)}.
\]
The expectations inside are the average values of $\psi$ over the respective layers and hence in
$[-1,1]$.  Define a function $\tilde{\psi}: [-1,1] \mapsto [-1,1]$ that is
odd,  in particular $\tilde{\psi}(0)=0$ and for each $i \in [s]$, is
constant on the interval $I_i$ where it takes the value $\Ex{x_i \in X_i}{\psi(x_i)}$.
Thus the innermost expectation is really $\tilde{\psi}(|\zeta(j)|)$ and combining it with ${\sf sign}(\zeta(j))$ and using the oddness of $\tilde{\psi}$,
\begin{equation}\label{eqn:tildepsi}
\ex{\sat(\psi)} ~=~\rho(f)+\sum_{t=1}^k\Ex{\zeta\sim\Lambda^*}{\sum_{|S|=t}\HF(S)\prod_{j\in S}
\tilde{ \psi} (\zeta(j))
}.
\end{equation}

We observe that for every $t \in [k]$, the expectation above vanishes.
This is because, up to a multiplicative factor of $\binom{k}{t}$, the expectation is same as
$$\Ex{\zeta\sim\Lambda^*}{ \ExpOp_{|S|=t} ~\ExpOp_{\pi: S\mapsto S}
~\Ex{b\in \{-1,1\}^S} { \HF(S) \left( \prod_{j\in S} b_j \right) \left( \prod_{j\in S}
\tilde{ \psi} ( b_j \zeta(\pi(j))) \right) } },  $$
which in turn is same as
 $$ \int ~\left( \prod_{j=1}^t  \tilde{\psi} (\zeta'(j)) \right)~  d\Lambda^{*, (t)} (\zeta').  $$
This integral vanishes since $\Lambda^{*, (t)}$ vanishes identically and we  are done.

\Mnote{Every $\sat(\psi)$ above needs to be replaced by $\ex{\sat(\psi)}$.}

Now we prove the soundness property of the CSP instance.
Since each constraint is picked independently, a Chernoff bound
implies that the probability that $\sat(\psi)$ is outside $[\rho(f)-\eps, \rho(f)+\eps]$,
for any fixed $\{-1,1\}$ assignment $\psi$, decays exponentially in $m$.
For large enough $\Delta(\eps)$, one may then take a  union bound over all $2^{(s+1)\cdot n}$ assignments
and obtain the following claim.
\begin{lemma}\label{soundlp:lm}
For every $\eps > 0$,  there exists a sufficiently large constant $\Delta(\eps)$ such that w.h.p.
over the choice of the \maxkcsp$(f)$ instance, it holds that
for every assignment $\psi$ to the instance, $\sat(\psi) \in [\rho(f)-\eps,\rho(f)+\eps]$.
\end{lemma}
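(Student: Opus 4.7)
The plan is to combine the expectation calculation already performed just above the lemma statement with a standard Chernoff plus union bound argument. Fix any $\{-1,1\}$-assignment $\psi$ to the $(s+1)\cdot n$ variables. Since the $m = \Delta(\eps)\cdot n$ constraints are drawn independently from the same distribution, I write $\sat(\psi) = \frac{1}{m}\sum_{\ell=1}^m X_\ell$, where $X_\ell \in \{0,1\}$ is the indicator that constraint $C_\ell$ is satisfied by $\psi$. The $X_\ell$ are i.i.d.\ Bernoulli random variables, and their common mean is exactly $\ex{\sat(\psi)}$.

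The key input, already derived in Equation~\eqref{eqn:tildepsi} and the lines following it, is that $\ex{\sat(\psi)} = \rho(f)$ \emph{exactly}. Indeed, after Fourier-expanding $f$ and averaging over the random choice of $x_{i_j}\in X_{i_j(\zeta)}$, the only dependence on $\psi$ is through the odd function $\tilde\psi:[-1,1]\to[-1,1]$ obtained by assigning to each interval $I_i$ the average value of $\psi$ over the layer $X_i$. For each $t\in[k]$, the resulting sum equals (up to a factor of $\binom{k}{t}$) the integral of $\prod_{j=1}^t \tilde\psi(\zeta'(j))$ against the signed measure $\Lambda^{*,(t)}$ of Theorem~\ref{lp-measure:thm}, which vanishes identically. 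So every $t\geq 1$ contribution is $0$ and only the $t=0$ term $\rho(f)$ survives. No approximation is lost in this step, which is the whole point of having a vanishing measure rather than merely an almost-vanishing one.

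Given $\ex{\sat(\psi)} = \rho(f)$, Hoeffding's inequality applied to the i.i.d.\ $\{0,1\}$-valued random variables $X_1,\ldots,X_m$ yields
\[
\Pr\left[\,\abs{\sat(\psi) - \rho(f)} > \eps\,\right] ~\leq~ 2\exp(-2\eps^2 m)
~=~ 2\exp(-2\eps^2 \Delta(\eps)\, n)\mper
\]
Taking a union bound over all $2^{(s+1)n}$ assignments $\psi$, the probability that \emph{some} assignment deviates is at most
\[
2^{(s+1)n+1}\,\exp(-2\eps^2 \Delta(\eps)\,n)\mper
\]
Choosing $\Delta(\eps) \geq \frac{(s+1)\ln 2 + 1}{2\eps^2}$ (which is a valid constant since $s = \lceil 1/\eps\rceil$ depends only on $\eps$) makes this bound $\leq 2\exp(-n) = o(1)$, completing the proof.

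I do not expect any genuine obstacle here; the argument is essentially a textbook random-CSP soundness analysis. The only mildly subtle point is conceptual and has already been addressed in the text above the lemma: although the polarities and the variable blocks in which a constraint lands are \emph{not} uniform (they are correlated through the sample $\zeta\sim\Lambda^*$), the vanishing property of $\Lambda^{*,(t)}$ ensures that the per-constraint satisfaction probability under any fixed global assignment is still \emph{exactly} $\rho(f)$. This is what reduces the soundness claim to a routine Chernoff bound, and why no dependence of $\tilde\psi$ on $\psi$ creates any difficulty in the union bound.
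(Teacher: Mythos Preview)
Your proposal is correct and follows essentially the same approach as the paper: both use the exact identity $\ex{\sat(\psi)} = \rho(f)$ derived from the vanishing of $\Lambda^{*,(t)}$, then apply a Chernoff/Hoeffding bound to the i.i.d.\ constraints and a union bound over the $2^{(s+1)n}$ assignments, choosing $\Delta(\eps)$ large enough to beat the entropy term. Your write-up is in fact slightly more detailed than the paper's, which leaves the Chernoff and union-bound steps as a one-line remark.
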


Let $G$ denote the natural constraint vs variable bipartite graph
of our instance.
In other words, $G$ has a vertex for each constraint and each variable
and there is an edge between a constraint and a variable if and only
if the variable occurs in that constraint. Strictly speaking, $G$ is a multi-graph since
in a constraint, the same variable may appear twice or more.
We show that after deleting a small fraction of
vertices, $G$ has high girth, in particular eliminating cycles of length two, i.e. multiple edges.

\begin{lemma}\label{comb:lm}
The constraint vs variable graph $G$  has $(k\Delta)^{O(g)}$ cycles
of length at most $g$, in expectation.
\end{lemma}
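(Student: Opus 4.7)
The plan is to bound the expected number of cycles of each even length $2\ell$ (since $G$ is bipartite, all cycles have even length) in the constraint--variable incidence multigraph, and then sum over $\ell \leq g/2$. I will use a first-moment argument where I enumerate cyclic sequences of alternating constraints and variables and multiply by the probability that the randomly generated instance realizes the prescribed incidences.

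The key probabilistic estimate is this: for any two fixed variables $v, v'$ (possibly in different layers $X_i, X_{i'}$) and any one of the $m$ randomly generated constraints $C$, the probability that both $v$ and $v'$ occur among the $k$ literals of $C$ is at most $k(k-1)/n^2$. To see this, condition on the sampled $\zeta \sim \Lambda^*$; given $\zeta$, the $k$ variables of $C$ are drawn independently, the $j$-th one being uniform on the layer $X_{i_j(\zeta)}$ of size $n$. For each ordered pair of slots $(a,b) \in [k]^2$ with $a \neq b$, the probability that slot $a$ equals $v$ and slot $b$ equals $v'$ is $0$ unless $v \in X_{i_a(\zeta)}$ and $v' \in X_{i_b(\zeta)}$, in which case it is $1/n^2$. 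Removing the conditioning on $\zeta$ and summing over the $k(k-1)$ ordered pairs of slots yields the claimed bound.

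Now fix $\ell \leq g/2$. A cycle of length $2\ell$ in $G$ corresponds (up to $4\ell$ rotations and reflections of the base cycle) to a cyclic sequence $(v_1, C_1, v_2, C_2, \ldots, v_\ell, C_\ell)$ such that for each $i \in \Z/\ell\Z$ the variables $v_i$ and $v_{i+1}$ both appear in $C_i$. The number of ordered tuples of $\ell$ distinct variables is at most $((s+1)n)^\ell$, and the number of ordered tuples of $\ell$ distinct constraints from the $m = \Delta n$ constraints is at most $(\Delta n)^\ell$. Since the constraints are generated independently, the probability that a prescribed tuple of constraints simultaneously realizes all $\ell$ prescribed incidence pairs is at most $(k(k-1)/n^2)^\ell$. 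Putting these together and dividing by the $4\ell$-fold overcounting, the expected number of cycles of length exactly $2\ell$ is at most
\[
\frac{1}{4\ell} \cdot \bigl((s+1)n\bigr)^\ell \cdot (\Delta n)^\ell \cdot \left(\frac{k^2}{n^2}\right)^\ell
~=~ \frac{\bigl((s+1)\Delta k^2\bigr)^\ell}{4\ell}.
\]
Since $s = \lceil 1/\eps \rceil$ and $\eps$ are fixed constants, the factor $(s+1)$ is absorbed, and the bound becomes $(k\Delta)^{O(\ell)}$.

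Summing over $\ell = 1, 2, \ldots, \lfloor g/2 \rfloor$ yields an overall bound of $(k\Delta)^{O(g)}$ on the expected total number of cycles of length at most $g$, which is exactly the claim. The only subtle point, and the one requiring care, is the per-constraint incidence estimate above: one must avoid treating the $k$ slots as giving fully independent uniform draws over all $(s+1)n$ variables, because the choice of layer is coupled through $\zeta$; the argument is saved by the fact that once $\zeta$ is fixed the draw within each layer is uniform on $n$ elements, which is what produces the $1/n^2$ factor independently of how $\zeta$ is distributed.
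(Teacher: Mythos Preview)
Your proof is correct and follows essentially the same first-moment argument as the paper. The only difference is cosmetic: the paper first contracts the $s+1$ layers of variables into a single set of size $n$ (so that each of the $k$ slots in a constraint becomes uniform on $[n]$) and then bounds each edge probability by $k/n$, whereas you work directly with the layered model and absorb the resulting extra $(s+1)^{\ell}$ factor into the $(k\Delta)^{O(g)}$ bound.
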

\begin{proof}
Recall that the variable
vertices of $G$ correspond to the set $[n] \times \{0,1,\ldots,s\}$. We think of these as arranged
in an $n \times (s+1)$ array.
Suppose we contract the set of $s+1$ vertices in $j^{th}$ row
into a single vertex $x_j$ for $j \in [n]$.
We will get a 
bipartite multi-graph $G'$ such that the
set of variables of each of the $m$ constraints is picked uniformly
from the set of variables $\{x_j:j\in[n]\}$.
Note that under this operation there exists a unique cycle of length
at most $g$ in $G'$ for every cycle of length at most $g$ in $G$.
Moreover, the probability of obtaining that cycle in $G'$ is the
at most the probability of obtaining that cycle in $G$.
Hence, it will suffice to bound the expected number of cycles
of length at most $g$ in $G'$.
We have reduced our problem to obtaining a bound on the girth of $G$
to the following combinatorial problem.

We have a random bipartite multi-graph $H:=(U,V)$, where the
edge set $E(H)$ is selected by independent sampling (with repetition)
of $k$ vertices from $V$ ($|V|=n$), for each of the $m$ vertices in $U$.
We need a bound on the expected number of cycles of length at most $g$.

Consider any cycle $C(h)$ of length $2h$ in $H$.
Half the vertices in $C(h)$ come from $U$ and half come from $V$.
The probability that a given vertex in $U$ and given vertex in $V$ have an edge between them is at
most $k/n$.
Therefore, the expected number of cycles of length exactly $2h$ in $H$ is
bounded by:
\begin{equation}
n^h \cdot (\Delta\cdot n)^h \cdot \inparen{\frac{k}{n}}^{2h} ~\le~ (k\Delta)^{O(h)}.\label{cyccount:eqn}
\end{equation}

The above is a geometric progression in $h$, since $k$ and $\Delta$ are
constants. Hence, the expected number of cycles of length at most $2h$
in $H$ is also bounded by $(k\Delta)^{O(h)}$.
\end{proof}

For $g = c \cdot \log n$ for a sufficiently small constant $c$ depending on $k$ and  $\Delta$,
we may delete $o(n)$ constraints from our instance so as to eliminate all cycles of length at most $g$.
This still preserves the property that for every assignment $\psi$ to the instance $\sat(\psi)
\in [\rho(f) - \eps, \rho(f)+\eps]$, possibly with a negligible change in parameter $\eps$ that we
ignore.  Moreover, a union bound implies that with high probability every vertex in our constraint
bigraph $G$ has bounded degree.
Therefore, Lemmas~\ref{soundlp:lm} and~\ref{comb:lm} imply
the following lemma.
\begin{lemma}\label{finallpinst:lm}
For all large enough $n$ and every $\eps>0$, there exists a  \maxkcsp$(f)$ instance  with $n$
variables and  $m=\Delta n$ constraints  such that its constraint vs
variable graph $G$  has girth  $\Omega(\log n)$, every vertex in $G$ has
bounded degree and every assignment to the instance satisfies between $[\rho(f)-\eps, \rho(f)+\eps]$
fraction of the constraints.
\end{lemma}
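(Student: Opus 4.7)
The plan is to produce the desired instance by combining Lemma~\ref{soundlp:lm}, Lemma~\ref{comb:lm}, and a standard degree-concentration bound, via a union bound over three ``good'' events defined on the single random instance $\Phi$ drawn by the construction preceding these lemmas.

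First, I would fix $\eps > 0$ and apply Lemma~\ref{soundlp:lm} with parameter $\eps/2$ and a corresponding $\Delta = \Delta(\eps/2)$, so that with probability $1-o(1)$ over $\Phi$ every assignment $\psi$ satisfies $\sat(\psi) \in [\rho(f)-\eps/2,\, \rho(f)+\eps/2]$ (call this event $E_1$). Second, I would set $g := c\log n$ for a sufficiently small constant $c = c(k,\Delta)>0$; by Lemma~\ref{comb:lm}, the expected number of cycles of length at most $g$ in the constraint--variable bigraph $G$ is at most $(k\Delta)^{O(g)} = n^{O(c)} \le \sqrt{n}$, so Markov's inequality gives $\Pr[E_2] = 1-o(1)$, where $E_2$ is the event that $G$ has at most $\sqrt{n}$ short cycles. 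Third, I would show that the max degree of $G$ is small (event $E_3$). Constraint vertices trivially have degree exactly $k$; for a variable $x \in X_i$, each of the $m=\Delta n$ constraints selects $x$ with probability at most $k/n$ (each of the $k$ slots picks $x$ with probability $1/n$, conditional on its layer index being $i$), so $\deg(x)$ is a sum of independent indicators with mean $\le k\Delta$. A standard Chernoff bound then yields $\Pr[\deg(x) > D] \le n^{-2}$ for $D = D(k,\Delta) = O(\log n)$, and a union bound over the $(s{+}1)n$ variable vertices gives $\Pr[E_3] = 1-o(1)$.

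Intersecting $E_1 \cap E_2 \cap E_3$, which still has probability $1-o(1)$ for large $n$, I obtain a realization of $\Phi$ with all three properties. Fixing such a $\Phi$, I would then delete one constraint from each cycle of length at most $g$; this removes at most $\sqrt{n} = o(n)$ constraints, destroys every short cycle, and can only decrease the degrees (hence $E_3$ is preserved). Because deleting an $o(1)$ fraction of the $m=\Delta n$ constraints changes $\sat(\psi)$ by at most $o(1)$ uniformly in $\psi$, the $\eps/2$ slack from $E_1$ absorbs this error and yields $\sat(\psi) \in [\rho(f)-\eps,\, \rho(f)+\eps]$ for every assignment $\psi$. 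The resulting instance then has girth strictly greater than $g = \Omega(\log n)$, bounded degree, and the uniform soundness property, as required.

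The only mildly delicate point is the degree-concentration step: since each variable's expected degree is $O(1)$ but the tail for sums of rare Bernoullis decays only polynomially in the deviation, a uniform bound surviving the union bound over $n$ variables forces $D$ to grow like $\Theta(\log n)$. This is harmless for us because the subsequent Sherali--Adams construction requires only that the bigraph be locally tree-like on scale $g = \Omega(\log n)$ and that degrees be $n^{o(1)}$; if a constant degree bound were truly needed, one could instead thin the instance by randomly discarding constraints incident to high-degree variables, again at an $o(n)$ cost.
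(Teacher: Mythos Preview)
Your overall strategy---combine Lemma~\ref{soundlp:lm}, Lemma~\ref{comb:lm} with Markov, and a degree bound, then delete the $o(n)$ constraints on short cycles---is exactly what the paper does. The paper's own ``proof'' is the paragraph immediately preceding the lemma statement, and your version is in fact more careful than theirs.

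Where you and the paper diverge is on the degree bound, and here you are right to be uneasy but your resolution needs tightening. The lemma asserts \emph{bounded} (i.e., constant) degree, and the paper later genuinely uses this: Claim~\ref{setsize:clm} needs $|S|\cdot D^{d}\le g$, and the final parameter choice takes $d=\Omega_\eps(\log\log n)$ and $g=O(\log n)$, which forces $D=O_\eps(1)$. Your Chernoff-plus-union-bound argument only yields $D=O(\log n/\log\log n)$, and your first workaround---``degrees be $n^{o(1)}$ suffices''---is not enough: with $D=\Theta(\log n)$ the constraint $D^{d}\le g$ pins $d$ to a constant, collapsing the number of Sherali--Adams rounds to $O(1)$ rather than $\Omega(\log\log n)$. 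So that sentence should be dropped.

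Your second workaround is the correct one and should be promoted to the actual argument: for any fixed constant $D_0$ large relative to $k\Delta$, the expected number of variable vertices of degree exceeding $D_0$ is $o(n)$ (the per-vertex tail probability is $(O(k\Delta)/D_0)^{D_0}$, which can be made an arbitrarily small constant), and by Markov this holds with high probability. Deleting all constraints incident to those $o(n)$ vertices removes $o(n)$ constraints, preserves the girth bound, and leaves every remaining variable with degree at most $D_0$. This $o(n)$ deletion is absorbed by the same $\eps/2$ slack you already built in. The paper's one-line ``a union bound implies\ldots bounded degree'' is glossing over exactly this thinning step.
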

Also note that large girth in particular implies that any two constraints in our instance share at
most one variable.

For the remainder of this section, we assume that our $\maxkcsp(f)$
instance is given by some fixed constraint graph $G$, as in
Lemma~\ref{finallpinst:lm}.
Next, we need to show that the Sherali-Adams LP has an optimal solution with value
$1-o(1)$ for instance given by $G$.
Our task is to define locally consistent distributions over all subsets of variables of size at
most $r$ (we will finally be able to choose $r = \Omega(\log \log n)$).
To this end we will first define distributions which are
approximately consistent,
and then use a result by Raghavendra and Steurer~\cite{raghsteu2}
to make the distributions exactly consistent.

Recall that every constraint $C$ in our instance was generated using a $\zeta(C) \in
{\cal C}^*(f)$.
Let $\bnu(C)$ be a distribution on $f^{-1}(1)$ such that $\zeta(C) = \zeta(\bnu(C))$.
Note that $\bnu(C)$ is a distribution on the \emph{literals} involved in constraint $C$,
with the biases of the literals being $(\zeta(1),\ldots,\zeta(k))$.
If  a constraint $C$ is on variables in layers $i_1,\ldots,i_k$ respectively, then the
biases of these variables according to $\bnu(C)$ are
\[
(\abs{\zeta(1)},\ldots,\abs{\zeta(k)})
~=~
(p_{i_1}, \ldots, p_{i_k}) \mper
\]
respectively so that $p_{i_j}\in I_{i_j}$. The biases of the variables are always non-negative since
we negate the $j^{th}$ variable only if $\zeta(j) < 0$ (and with probability $1/2$ when $\zeta(j) = 0$).

The local distributions we define on sets of size $r$ will have the property that all variables in
the same layer $X_i$ have the same bias.
For each interval $I_i$ with $i \in \{0,\ldots,s\}$, choose an arbitrary point $t_i \in I_i$. We
will first modify the distributions $\bnu(C)$ such
 that  all the variables in layer $X_i$ have bias exactly $t_i$.
Since $p_{i_j} \in I_{i_j}$, we have $|p_{i_j} - t_{i_j}| \leq \eps$.
Thus we can change the biases of the variables as desired with a slight perturbation of the
distributions $\bnu(C)$. However this incurs a
slight loss in the completeness parameter: the resulting distribution $\bnu'(C)$ is now only
$(1-o(1))$-supported on $f^{-1}(1)$.

\newcommand{\sgn}{{\sf sign}}
\begin{claim}\label{bias-correction:clm}
Let the distribution $\bnu(C)$ be as above such that the biases for the literals in $C$ are given
by $(\zeta(1),\ldots, \zeta(k))$. Also, let $t_{i_1},\ldots,t_{i_k}$ as above be the desired biases
for the variables such that $\abs{t_{i_j} - \abs{\zeta(j)}} \leq \eps$.
Then there exists a distribution $\bnu'(C)$ on $\pmone^{k}$  such that
$\norm{\bnu(C)-\bnu'(C)}_1 = O(k \cdot \sqrt{\eps})$ and
\[
\forall j \in [k] \quad \Ex{z \sim \bnu'(C)}{z_j} = \sgn(\zeta(j)) \cdot t_{i_j} \mper
\]
Thus, the biases for the variables, when the
literals are sampled according to $\bnu'(C)$ are exactly $(t_{i_1},\ldots,t_{i_k})$ since the
$j^{th}$ variable is negated only if $\sgn(\zeta(j)) = -1$.
\end{claim}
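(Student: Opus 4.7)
The plan is to construct $\bnu'(C)$ from $\bnu(C)$ by a sequence of $k$ coordinate-wise local perturbations, one per coordinate, processed in some fixed order. The crucial structural point is that each perturbation will touch only a single coordinate of the sample (never any other), so it cannot affect the marginal distribution of any other coordinate. Consequently, once coordinate $j$'s marginal has been set to the target $\sgn(\zeta(j)) \cdot t_{i_j}$, no subsequent step disturbs it, and after all $k$ steps the final distribution $\bnu'(C)$ has the prescribed marginals simultaneously in every coordinate.

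Concretely, I will set $\bnu_0 := \bnu(C)$ and inductively define $\bnu_j$ from $\bnu_{j-1}$ as follows: sample $z \sim \bnu_{j-1}$ and then, with a conditional probability $q_j$ that depends only on the current value of $z_j \in \pmone$, flip the $j$-th bit of $z$. Because earlier steps left coordinate $j$ alone, the $j$-th marginal of $\bnu_{j-1}$ is still $\zeta(j)$, so it suffices to choose $q_j$ to move that marginal from $\zeta(j)$ to $\sgn(\zeta(j)) \cdot t_{i_j}$. If $t_{i_j} \geq |\zeta(j)|$ I flip $z_j$ from $-\sgn(\zeta(j))$ to $\sgn(\zeta(j))$ with probability $q_j = (t_{i_j} - |\zeta(j)|)/(1 - |\zeta(j)|)$; otherwise I flip in the opposite direction with probability $q_j = (|\zeta(j)| - t_{i_j})/(1 + |\zeta(j)|)$. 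A short computation verifies that the new bias is exactly $\sgn(\zeta(j)) \cdot t_{i_j}$, and that the total mass actually flipped in step $j$ equals $|t_{i_j} - |\zeta(j)||/2 \leq \eps/2$. Hence $\|\bnu_j - \bnu_{j-1}\|_1 \leq \eps$, and telescoping yields $\|\bnu(C) - \bnu'(C)\|_1 \leq k\eps$, which is in fact stronger than the claimed $O(k \cdot \sqrt{\eps})$.

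The only point requiring attention, and the only plausible obstacle, is verifying that $q_j \in [0,1]$ so that each step is an honest probabilistic operation. In the first case this reduces to $t_{i_j} - |\zeta(j)| \leq 1 - |\zeta(j)|$, i.e.\ $t_{i_j} \leq 1$; in the second case it reduces to $|\zeta(j)| - t_{i_j} \leq 1 + |\zeta(j)|$, i.e.\ $t_{i_j} \geq -1$. Both are automatic since $|\zeta(j)|, t_{i_j} \in [0,1]$. The boundary case $|\zeta(j)| = 1$ makes the first-case denominator vanish, but then the hypothesis $|t_{i_j} - |\zeta(j)|| \leq \eps$ forces $t_{i_j} \in [1-\eps, 1]$, and in the subcase $t_{i_j} < |\zeta(j)|$ one uses the second formula (whose denominator $1+|\zeta(j)|$ is never zero), while in the subcase $t_{i_j} = |\zeta(j)| = 1$ no flip is needed. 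So no genuine difficulty arises and the construction delivers the lemma.
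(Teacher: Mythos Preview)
Your proof is correct and, at the high level, follows the same sequential coordinate-by-coordinate scheme as the paper: set $\bnu_0=\bnu(C)$, fix the $j$-th marginal at step $j$ without disturbing the others, and telescope. The difference is in the local perturbation. The paper obtains $\bnu_j$ from $\bnu_{j-1}$ by mixing in, with weight $\tau_j$, a distribution $D_j$ that sets coordinate $j$ deterministically to $\sgn(r_j-\zeta(j))$ and resamples the remaining coordinates independently according to their current marginals; this forces $\|\bnu_j-\bnu_{j-1}\|_1\le 2\tau_j$ with $\tau_j\approx |r_j-\zeta(j)|/(1-|\zeta(j)|)$, and the paper then invokes $\zeta\in\C^{*}_{\delta}(f)$ with $\delta\ge\sqrt{\eps}$ to get $\tau_j=O(\sqrt{\eps})$, whence the stated $O(k\sqrt{\eps})$ bound. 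Your perturbation instead flips only the $j$-th bit with a conditional probability chosen so that the mass moved is exactly $|t_{i_j}-|\zeta(j)||/2\le\eps/2$; this gives $\|\bnu_j-\bnu_{j-1}\|_1\le\eps$ directly, for a final bound of $k\eps$, and it never needs the hypothesis $\delta\ge\sqrt{\eps}$ (your check that $q_j\in[0,1]$ is automatic). So your argument is a genuine simplification that also sharpens the conclusion; the paper's mixture route is a bit heavier-handed and is where the extra $\sqrt{\eps}$ loss and the dependence on $\delta$ come from.
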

\begin{proof}
Let $r_j = \sgn(\zeta(j)) \cdot t_{i_j}$ be the desired bias of the $j^{th}$ literal. Then,
$\abs{\zeta(j) - r_j} \leq \eps$ for all $j \in [k]$
We construct a sequence of distributions $\bnu_0,\ldots,\bnu_k$ such that $\bnu_0 = \bnu(C)$ and
$\bnu_k = \bnu'(C)$. In $\bnu_j$, the biases of the literals are $(r_1,\ldots,r_j,\zeta(j+1),\ldots,\zeta(k))$.

The biases in $\bnu_0$ satisfy the above by definition.
We think of the distributions over $z \in \pmone^k$.
We obtain $\bnu_{j}$ from $\bnu_{j-1}$ as,
\[
\bnu_{j} = (1-\tau_j) \cdot \bnu_{j-1} + \tau_j \cdot D_j \mcom
\]
where $D_j$ is the distribution in which all bits, except for the $j^{th}$ one, are set
independently according to their biases in $\bnu_{j-1}$. For the $j^{th}$ bit, we set it to
$\sgn(r_j-\zeta(j))$ (if $r_j-\zeta(j) = 0$, we can simply proceed with $\bnu_j = \bnu_{j-1}$).
The biases for all except for the $j^{th}$ bit are unchanged. For the $j^{th}$ bit, the bias now
becomes $r_j$ if
\[
r_j = (1-\tau_j) \cdot \zeta(j) + \tau_j \cdot \sgn(r_j-\zeta(j))
~\Longrightarrow~
\tau_j \cdot (\sgn(r_j-\zeta(j)) - r_j) = (1-\tau_j) \cdot (r_j - \zeta(j)) \mper
\]
Since $\zeta \in \C^*_{\delta}(f)$ for $\delta \geq \sqrt{\eps}$, we know that
$\abs{\sgn(r_j-\zeta(j)) - r_j} \geq O(\sqrt{\eps})$. Also, $\abs{r_j - \zeta(j))} \leq \eps$ by
assumption. Thus, we can choose $\tau_j = O(\sqrt{\eps})$ which gives that $\norm{\bnu_{j} -
  \bnu_{j-1}}_1 = O(\sqrt{\eps})$. The final bound then follows by triangle inequality.
\end{proof}
The distribution over the literals of $C$, given by the above claim also gives a distribution for
the variables in $S_C$.
We now refer to the distribution over $\pmone^{S_C}$ given by Claim \ref{bias-correction:clm} as
$\nu(C)$.
We will need to modify the distributions $\nu(C)$ a little further before we use them to define the
local distributions over sets of size $r$.

\begin{definition}
Given a constraint $C$ and $\eta >0$, let $U_C$ denote the following
distribution on $\pmone^{S_C}$
\[
U_C ~\defeq~ (1-\eta) \cdot \nu(C) + \eta \cdot U_k \mper
\]
where $U_k$ denotes the uniform distribution on $\pmone^k$.
For $\alpha$ a partial assignment to variables in $C$, let $U_{C,\alpha}$
denote the distribution $U_C$ conditioned according to $\alpha$.
\end{definition}

Recall that the distributions $\nu(C)$ are defined so that the variables in the layer $X_i$ have
bias exactly $t_i$. The following observation will be extremely useful.
\begin{remark}\label{bias:rem}
The bias of a variable in layer $X_i$ is exactly $(1-\eta) \cdot t_i$, when assigned according to
$U_C$, for any constraint $C$ containing that variable.
\end{remark}

\newcommand{\ball}{B^{(d)}}
\newcommand{\varg}{{\cal V}_G}
\newcommand{\constg}{{\cal C}_G}
\newcommand{\dist}{\mathsf{dist}_G}
\newcommand{\m}{m}

Let $\varg$ denote the set of variable vertices in the bipartite constraint-variable graph
$G$ and let $\constg$ be the set of constraint vertices.  Let $\dist(u,v)$ denote the shortest path
distance in $G$ between two vertices $u$ and $v$.
Given a set $S$ of variables in $G$ and an \emph{even} number $d \in \N$, we define
\[
\ball(S) ~:=~ \inbraces{u\in \varg \cup \constg ~:~ \dist(u,S) \le d} \mper
\]
We will choose $d$ to be sufficiently small so that $|\ball(S)| \leq girth(G)$ and hence the set
$\ball(S)$ is a forest.
Also, since $d$ is assumed to be even and $S \subseteq \varg$, the leaves of each component in
$\ball(S)$ are variable vertices in $G$.
Let $\ball(S) =\cup_i \ball(S_i)$, where each $\ball(S_i)$ is a maximal connected
component in $\ball(S)$. We now describe a probabilistic process, which will be
used to defined a probability distribution $\m_S$ on $\pm 1$ assignments to the set $S$. We will use
this process to generate a random assignment to all the variables in $\ball(S)$, and hence also in
$S$.

First, we fix an arbitrary ordering of all variables in $G$. This also gives an ordering of all the
constraints in $S$ (depending on the variables involved in each constraint). We generate an
assignment for all variables in $\varg \cap \ball(S)$. The assignment for each component $\ball(S_i)$ is
generated independently of the other components by the following process:

\begin{enumerate}
\item Pick the least variable $x \in S\cap \ball(S_i)$. If $x$ belongs to the layer
  $X_j$, assign it to be 1 with probability $(1+(1-\eta)\cdot t_j)/2$ and $-1$ with probability
  $(1-(1-\eta)\cdot t_j)/2$, so that the bias is $(1-\eta) \cdot t_j$.

\item Traverse $\ball(S_i)$ in a breadth-first manner, starting from the vertex corresponding to the
  least variable $x$ (and using the above ordering on variables and constraints).
\begin{itemize}
\item[-]
  When visiting a vertex corresponding to a constraint $C$, if $\alpha$ is the
  partial assignment to the variables assigned so far, generate an assignment for the remaining
  variables in $C$ according to $U_{C,\alpha}$.

\item[-]
  When visiting a vertex corresponding to a variable, its value is already assigned by its parent
  constraint-vertex. We simply proceed to its children, which are new constraint vertices.
\end{itemize}
%
\end{enumerate}
Note that since $\ball(S_i)$ is a tree, when visiting a constraint vertex $C$ we will have \emph{at most
  one} of the variables in $C$ assigned before. We will assign the remaining variables
according to $U_C$ conditioned on the value of this one variable.

This process above defines a probability distribution $\m_S$ on the $\pm 1$
assignments to the variables in $B(S)$, and hence also on $\pmone^S$ as long as $\ball(S)$ is a
forest. We can obtain a bound on the size of such sets $S$ in terms of the girth and the degree of
the constraint graph $G$.

\begin{claim} \label{setsize:clm}
Let the girth of the constraint graph $G$ be equal to $g$ and let the degree of every vertex in $G$
be at most $D$. Then the distribution $\m_S$ is well-defined for all sets $S$ with $|S| < g / D^d$.
\end{claim}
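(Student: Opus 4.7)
The plan is to show that the hypothesis $|S| < g/D^d$ forces $B^{(d)}(S)$ to be a forest, which is precisely the property needed for the BFS-based sampling process to produce a well-defined distribution. Once $B^{(d)}(S)$ is a forest, each maximal connected component $B^{(d)}(S_i)$ is a tree, and the crucial invariant in the process — namely that when the BFS visits a constraint vertex $C$, at most one variable of $C$ has already been assigned (via its unique parent) — is automatically satisfied. This lets us legitimately sample the remaining variables of $C$ from $U_{C,\alpha}$.

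The key step is a size bound on $B^{(d)}(S)$. Since every vertex of $G$ has degree at most $D$, a standard breadth-first count gives
\[
|B^{(d)}(\{v\})| ~\le~ 1 + D + D(D-1) + \cdots + D(D-1)^{d-1} ~\le~ D^d
\]
for any single vertex $v$ (using $D \ge 2$ and absorbing constants for simplicity). A union bound then yields
\[
|B^{(d)}(S)| ~\le~ |S| \cdot D^d ~<~ g
\]
under the assumption of the claim.

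Now suppose for contradiction that $B^{(d)}(S)$ contains a cycle. This cycle, being a subgraph of $G$, has length at least $g$ by the girth hypothesis, and in particular uses at least $g$ distinct vertices of $B^{(d)}(S)$. This contradicts $|B^{(d)}(S)| < g$. Hence $B^{(d)}(S)$ is a forest, as required.

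With $B^{(d)}(S)$ a forest, I would then verify that the sampling procedure described just before the claim is indeed well-posed: running BFS from the least variable $x \in S \cap B^{(d)}(S_i)$ in each component, each non-root constraint vertex $C$ is first reached through exactly one parent variable vertex already assigned, so conditioning $U_C$ on that one value to sample the remaining variables is unambiguous. The only step worth double-checking is the ball-size constant; if a tighter form is desired (e.g. matching $g/D^d$ exactly rather than up to an $O(1)$ factor) one uses $|B^{(d)}(\{v\})| \le 1 + (D-1) + (D-1)^2 + \cdots + (D-1)^d \le D^d$, which suffices. No genuine obstacle arises — the entire argument is a straightforward combinatorial check that the girth condition prevents local cycles in neighborhoods of $S$.
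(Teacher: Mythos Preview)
Your proposal is correct and follows essentially the same approach as the paper: bound $|B^{(d)}(S)| \le |S|\cdot D^d < g$ by a degree-based ball count, then observe that any cycle in $B^{(d)}(S)$ would require at least $g$ vertices, so $B^{(d)}(S)$ is a forest and $m_S$ is well-defined. The paper's proof is simply a terser version of what you wrote.
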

\begin{proof}
Since the degree of every vertex at most $D$, we have that
\[
\abs{\ball(S)} ~\leq~ |S| \cdot D^d ~<~ g \mper
\]
Hence, we have that $\ball(S)$ is a forest and the distribution $\m_S$ is well-defined.
\end{proof}

%

We need the following lemma to show that the objective value of our Sherali-Adams LP solution is
close to $1$.
\begin{lemma}\label{almsatsol:lm}
For every constraint $C$ supported on variables $S_C$,
the distribution $\m_{S_C}$, has at least $(1-\eta - O(k\sqrt{\eps}))$-fraction of its
probability mass on the accepting assignments of $C$.
\end{lemma}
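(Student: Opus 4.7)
The plan is to show that the marginal of $\m_{S_C}$ on $S_C$ is exactly the distribution $U_C$, and then invoke the properties of $U_C$ established earlier (namely $U_C = (1-\eta)\nu(C) + \eta\cdot U_k$ together with Claim~\ref{bias-correction:clm}) to lower bound the mass on $f^{-1}(1)$.

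First, I would identify the order in which variables of $S_C$ are assigned during the BFS that defines $\m_{S_C}$. Let $x \in S_C$ be the least variable chosen at the start of the process, lying in some layer $X_j$, and assigned a $\pm 1$ value with bias $(1-\eta)\cdot t_j$. By Remark~\ref{bias:rem}, this bias is exactly the marginal bias of $x$ under $U_C$, so $x$'s marginal distribution matches $U_C$. When the BFS subsequently visits a constraint vertex, it samples the yet-unassigned neighbors conditional on whatever has been assigned so far. The key observation is that when the BFS first reaches the vertex corresponding to $C$ itself, the only variable of $S_C$ that has been assigned is $x$. This is where the high-girth property of $G$ (Lemma~\ref{finallpinst:lm}) is crucial: any other variable $x' \in S_C \setminus \{x\}$ is at distance $2$ from $x$ via $C$, so the only way it could have been assigned earlier would be through some other path of length $2$ from $x$ through a different constraint $C'$. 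But that would force $C$ and $C'$ to share both $x$ and $x'$, contradicting the fact that the large-girth constraint graph has any two constraints sharing at most one variable.

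Given this, the conditional distribution of $S_C \setminus \{x\}$ at the moment $C$ is visited is $U_{C,x}$ by definition of the process. Combining with the fact that the marginal of $x$ itself matches $U_C$, the joint marginal on $S_C$ is a sample from $U_C$. Thus
\[
\Pr_{\alpha \sim \m_{S_C}}[\alpha \in f^{-1}(1)] ~=~ \Pr_{\alpha \sim U_C}[\alpha \in f^{-1}(1)] ~\ge~ (1-\eta) \cdot \Pr_{\alpha \sim \nu(C)}[\alpha \in f^{-1}(1)].
\]
Since $\nu(C)$ is obtained from $\bnu(C)$ (which is supported on $f^{-1}(1)$) by a perturbation of total variation $O(k\sqrt{\eps})$ per Claim~\ref{bias-correction:clm}, we have $\Pr_{\nu(C)}[f^{-1}(1)] \geq 1 - O(k\sqrt{\eps})$, yielding the desired bound $1 - \eta - O(k\sqrt{\eps})$.

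The main obstacle I anticipate is a careful verification of the claim ``only $x$ among $S_C$ is assigned when the BFS first visits $C$''. One must handle two subtleties: the BFS may visit several constraint vertices at distance $1$ from $x$ before reaching $C$, and each such visit assigns variables at distance $2$ from $x$; one needs to rule out that any of those distance-$2$ variables lies in $S_C$. This is exactly where the bound on shared variables between two constraints (implied by girth $\Omega(\log n)$) is used. Once this observation is formalized, the rest of the argument is an immediate chain of inequalities using Claim~\ref{bias-correction:clm} and the definition of $U_C$.
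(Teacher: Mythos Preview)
Your proposal is correct and follows essentially the same approach as the paper: both argue that the marginal of $\m_{S_C}$ on $S_C$ is exactly $U_C$ (because at most one variable of $S_C$ is fixed before the BFS visits $C$), and then bound the accepting mass of $U_C$ via Claim~\ref{bias-correction:clm}. Your write-up is actually more explicit than the paper's about \emph{why} only one variable of $S_C$ can be pre-assigned---you spell out the girth argument ruling out a second shared variable between $C$ and any earlier-visited constraint $C'$, whereas the paper compresses this into the single sentence ``at most 1 variable can be fixed by a partial assignment.''
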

\begin{proof}
Note that for any constraint $C$ at most 1 variable can be fixed by a partial assignment to some
other variables by our process for generating assignments.
At this point, we assign all variables in $C$ according to $U_C$
conditioned on the value of this one variable. Hence, the joint distribution of all the variables in
$C$ is always according to $U_C$.

Also, $U_C$ is obtained by taking $\nu(C)$ with probability $1-\eta$ and uniform with probability
$\eta$. By Claim \ref{bias-correction:clm}, $\nu(C)$ is $O(k\sqrt{\eps})$-close to a
distribution which corresponds to a point in $\C^{*}_{\delta}(f)$ and has mass at least $1-\delta$
over accepting assignments. Thus, $U_C$ has mass at least $1-\eta-\delta-O(k\sqrt{\eps})$ on
accepting assignments. Using $\delta = \sqrt{\eps}$ proves the bound.
%
%
\end{proof}
Note that the definition of $\m_S$ implicitly depends on the ordering of variables. The following
lemma shows that the distributions in fact \emph{do not} depend on the ordering.
\begin{lemma}\label{orddontmatter:lm}
Given a set $S \subseteq \varg$ and an ordering $\omega$ of all the variables in $\varg$, let
$\m_{S,\omega}$ denote the distribution $\m_S$ when defined according the ordering $\omega$. Then,
for any $\alpha \in \pmone^S$ and any two orderings $\omega$ and $\omega'$, we have that
\[
\m_{S,\omega} (\alpha)~=~ \m_{S,\omega'}(\alpha) \mper
\]
\end{lemma}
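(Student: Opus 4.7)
The plan is to reduce the claim to each connected component of $\ball(S)$ separately, and then show that on each component (which is a tree), the joint distribution generated by the BFS process depends only on the tree structure and the local constraint-distributions $U_C$, not on which variable was chosen as the root. Since different components of $\ball(S)$ are sampled independently, it suffices to fix one component and prove order-independence there.

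So fix a component, a tree $T$, and consider the BFS procedure rooted at some variable $x_0$. I would first write down the joint distribution on all variables in $T$ explicitly as a product. If $x_0$ has initial marginal $P(x_0) = (1+(1-\eta)t_{j_0}\cdot x_0)/2$ where $j_0$ is its layer, and if for each constraint-vertex $C_\ell$ in the tree we let $p_\ell$ denote its parent variable (the one already assigned when the BFS first visits $C_\ell$), then the joint distribution factorizes as
\[
P_{x_0}(\text{all vars}) \;=\; P(x_0) \cdot \prod_{\ell} \frac{U_{C_\ell}(x_{C_\ell})}{U_{C_\ell}(p_\ell)},
\]
where the product is over constraint-vertices $C_\ell$ in $T$, since at each such $C_\ell$ the remaining variables are sampled from $U_{C_\ell}$ conditioned on $p_\ell$.

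The key step is to show this expression is invariant under the choice of root $x_0$. Given any two candidate roots $u$ and $v$ in $T$, consider the unique path $u = v_0, C_1, v_1, C_2, \ldots, C_r, v_r = v$ connecting them. Changing the root from $u$ to $v$ changes only the factors along this path: the parent of each $C_\ell$ on the path flips from $v_{\ell-1}$ to $v_\ell$, and the root-marginal factor switches from $P(u)$ to $P(v)$; factors for constraints off the path are unchanged. Taking the ratio of the two expressions and canceling numerators, what remains to be verified is
\[
P(u) \cdot \prod_{\ell=1}^r U_{C_\ell}(v_\ell) \;=\; P(v) \cdot \prod_{\ell=1}^r U_{C_\ell}(v_{\ell-1}).
\]
The crucial invariance coming from Remark \ref{bias:rem} is that for any variable $v_\ell$ and any constraint $C$ containing $v_\ell$, the marginal $U_C(v_\ell)$ depends only on $v_\ell$'s layer (it equals $(1+(1-\eta)t_{j_\ell}\cdot v_\ell)/2$). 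In particular $U_{C_\ell}(v_\ell) = U_{C_{\ell+1}}(v_\ell)$ for $1 \le \ell \le r-1$, and $U_{C_1}(u) = P(u)$, $U_{C_r}(v) = P(v)$ by construction of the root-marginal. A telescoping cancellation across the path then collapses both sides to $P(u)\cdot P(v)$, establishing equality.

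The main obstacle is purely bookkeeping: being careful about what ``parent'' means after reorientation, and verifying that only path-constraints contribute to the discrepancy. Once the factorized form of the joint distribution is written down, the proof is essentially the Markov property on a tree combined with the layer-only dependence of the $U_C$-marginals guaranteed by Remark \ref{bias:rem}. Since any two orderings $\omega, \omega'$ can only differ in which vertex is selected as the root of a given component, invariance under change of root implies $\m_{S,\omega} = \m_{S,\omega'}$, as desired.
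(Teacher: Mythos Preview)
Your proposal is correct and follows essentially the same approach as the paper: reduce to one tree component, write the joint distribution on all variables of $\ball(S)$ in the factorized form $P(x_0)\prod_C U_C(\beta_{|C})/U_C(\text{parent of }C)$, and use Remark~\ref{bias:rem} to conclude that the $U_C$-marginal of a variable depends only on the variable (not on $C$). The only stylistic difference is the final step: the paper collects the denominator factors into $\prod_{x_j}(p_j(\beta))^{\deg(x_j)-1}$, obtaining a closed-form expression that is manifestly root-independent, whereas you compare two roots directly via a telescoping product along the connecting path. Both computations encode the same cancellation.

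One small point worth tightening: two orderings $\omega,\omega'$ can differ in more than the choice of root---they also affect the BFS visitation order among siblings. Your factorized formula already shows this does not matter (since in a tree the parent of each constraint is determined by the root alone), but you should say so explicitly rather than asserting that orderings ``can only differ in which vertex is selected as the root.''
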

\begin{proof}
Since the distributions in different components of $\ball(S)$ are independent, it is sufficient to
prove the lemma for the case when $\ball(S)$ is a tree (instead of a forest). We will, in fact,
prove that the probability for any assignment $\beta \in \pmone^{\varg \cap \ball(S)}$ is the same
regardless of the ordering $\omega$. Since $S \subseteq \varg \cap \ball(S)$, this implies the lemma.

Let $\m_{S,\omega}(\beta)$ denote the probability of the assignment $\beta \in \pmone^{\varg \cap
  \ball(S)}$. Note that since the leaves of $\ball(S)$ must correspond to variables (since $d$ is
even), for each constraint $C \in \ball(S)$, we must have that $S_C \sub \varg \cap \ball(S)$, where
$S_C$ denotes the set of variables involved in the constraint $C$.
For $C \in \ball(S)$, let $\beta_{|C}$ denote $\beta$ restricted the set $S_C$.

We now compute the probability for the assignment $\beta$. Suppose that
at some intermediate step in the breadth first traversal for $m_{S,\omega}$
one has fixed an assignment $\beta' \in \{\pm 1\}^R$ for a set $R \sub \varg \cap \ball(S)$,
where $\beta_{|R}=\beta'$. Let $C$ be the next constraint-vertex visited by the traversal.
Using $\beta_1 \circ \beta_2$ to denote the concatenation of two assignments $\beta_1$ and
$\beta_2$, we have
\[
\m_{S,\omega}(\beta' \circ \beta_{|C}) ~=~ \m_{S,\omega}(\beta') \cdot U_{C,\beta'}(\beta_{|C}) \mcom
\]
where $U_{C,\beta'}(\beta_{|C})$ is the probability
that constraint $C$ gets an assignment $\beta_{|C}$
conditioned on the event that variables in $R$ were
assigned according to $\beta'$.

Since $B(S)$ is a tree, there is exactly one variable in $R$, say $x_j$, which is also present in
$C$ (this variable is the parent vertex of $C$). We can then write the above as
\[
\m_{S,\omega}(\beta'\circ\beta_{|C}) ~=~ \m_{S,\omega}(\beta') \cdot
\frac{ U_{C}(\beta_{|C})}{U_C(\beta_{|j})} \mper
\]
By Remark \ref{bias:rem}, the quantity $U_C(\beta_{|j})$ is independent of the constraint $C$ and
only depends on the variable $x_j$ and the assignment $\beta_{|j}$. Denoting the quantity by
$p_j(\beta)$, we can write the above expression as
\[
\m_{S,\omega}(\beta'\circ\beta_{|C}) ~=~ \m_{S,\omega}(\beta') \cdot
\frac{ U_{C}(\beta_{|C})}{p_j(\beta)} \mper
\]
%
\newcommand{\degree}{\mathsf{deg}}

We can now inductively simplify the expression for $\m_{S,\omega}(\beta)$. Let $x_{j_0}$ be the
first variable in $S$ according to the ordering $\omega$. Since we visit each
constraint exactly once, the numerator equals
\[
p_{j_0}(\beta) \cdot \prod_{C \in \constg  \cap \ball(S)}U_C(\beta_{|C}) \mper\]
Also, each variable $x_j \in \varg \cap \ball(S)$, except for
$x_{j_0}$, has exactly $\degree(x_j)-1$ children in the tree (where $\degree(x_j)$ denotes its degree in
the tree $\ball(S)$). Thus, the term $p_j(\beta)$ appears exactly $\degree(x_j) - 1$ times in the
denominator, for each $x_j \in \varg \cap \ball(S) \setminus \{ x_{j_0}\}$. The term
$p_{j_0}(\beta)$ appears $\degree(x_{j_0})$ times since all the neighbors of $x_{j_0}$ are its
children in the tree. Thus, we get
\[
\m_{S, \omega}(\beta) ~=~ \frac{\prod_{C \in \constg  \cap \ball(S)}U_C(\beta_{|C})}{\prod_{x_j \in
    \varg \cap \ball(S)} \inparen{p_j(\beta)}^{\degree(x_j)-1} } \mcom
\]
which is independent of the ordering $\omega$.
%
%
%
%
\end{proof}

We now prove that the distributions $\m_S$ are \emph{locally consistent} \ie for any two sets $S_1$
and $S_2$, the distributions $\m_{S_1}$ and $\m_{S_2}$ agree on $S_1 \cap S_2$. It suffices to show
that for that for any two sets $S$ and $T$, with $S \sub T$, we have for all
any $\alpha \in \pmone^S$, $\m_S(\alpha) = \m_T(\alpha)$. Here $\m_T(\alpha)$ denotes the
probability that the variables in $S$ are assigned according to $\alpha$ in $\m_T$ when we
marginalize over the variables in $T \setminus S$. The distributions $\m_S$ will only satisfy this
approximately \ie we will be able to show that $\abs{\m_S(\alpha) - \m_T(\alpha)}$ is very
small. However, using a result of Raghavendra and Steurer \cite{raghsteu2}, we will be able to
correct the distributions $\inbraces{\m_S}$ to a family of distributions $\inbraces{\m_S'}$ such that
$\m_S'(\alpha) = \m_T'(\alpha)$ for all $\alpha$. We first prove the following.

\begin{lemma}[Approximate Local Consistency]\label{local-consistency:lm}
There exists a constant $c_0$ such that for any two sets $S \sub T \sub \varg$, with $|S| \leq |T|
\leq 2^{c_0 \cdot \eta d}$, we have
\[
\forall \alpha \in \pmone^S \qquad \abs{\m_S(\alpha) - \m_T(\alpha)} ~=~ 2^{-\Omega(\eta d)} \mcom
\]
when the distributions $\m_S$ and $\m_T$ are both well-defined.
\end{lemma}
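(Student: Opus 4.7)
My plan is a ``pruning plus correlation decay'' argument. By Lemma~\ref{orddontmatter:lm}, the BFS orderings are immaterial, so I align them so that the BFS generating $\m_T$ visits every vertex of $\ball(S)$ before any vertex of $\ball(T)\setminus\ball(S)$. The main tool is the closed-form expression obtained in the proof of Lemma~\ref{orddontmatter:lm}, which for $X\in\{S,T\}$ reads
\[
\m_X(\beta) \;=\; \frac{\prod_{C \in \constg \cap \ball(X)} U_C(\beta_{|C})}{\prod_{x_j \in \varg \cap \ball(X)} p_j(\beta)^{\degree(x_j) - 1}},
\]
together with Remark~\ref{bias:rem}, which says that $p_j(\beta)$ depends only on the layer of $x_j$ and not on the particular constraint $C$ used to define it.

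The first step is to show that whenever a connected component $K$ of $\ball(T)$ contains exactly one component $K_1$ of $\ball(S)$, marginalizing $\m_T$ over the variables in $K\setminus K_1$ yields $\m_S$ restricted to $K_1$ \emph{exactly}. For each leaf constraint $C$ of $\ball(T)$, summing $U_C(\beta_{|C})$ over its $k-1$ leaf-variable assignments produces precisely $p_{parent}(\beta)$, which, thanks to Remark~\ref{bias:rem}, cancels exactly one factor of $p_{parent}(\beta)$ from the denominator exponent $\degree(x_{parent})-1$ coming from the extra degree contributed by $C$ in $\ball(T)$. Iterating this pruning from the leaves inward removes the entire subtree of $K\setminus K_1$ and reproduces the $\m_S$ formula, with zero error.

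Thus the only source of error is a ``bridging event,'' in which a single component of $\ball(T)$ merges two or more components of $\ball(S)$ via some vertex of $T\setminus S$. Under $\m_S$ the merged components are independent, whereas under $\m_T$ they acquire a correlation mediated entirely by the bridging path in $\ball(T)\setminus\ball(S)$. The key estimate is a decay-of-correlations claim: any two variables $u,u'$ in distinct components of $\ball(S)$ but the same component of $\ball(T)$ lie at tree distance $\Omega(d)$ from one another through the bridge, and because $U_C=(1-\eta)\nu(C)+\eta U_k$ has $\eta$-weight on the fully uniform distribution, a Doeblin/coupling-style argument contracts the total-variation distance between the conditional and marginal distributions by a factor of $(1-\eta)$ on each constraint along the $u$--$u'$ path. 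This yields a per-bridging contribution of $(1-\eta)^{\Omega(d)}=2^{-\Omega(\eta d)}$ to the joint distribution on $S$.

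A union bound over the at most $\binom{|T|}{2}$ potential bridging pairs in $T$ then gives $|\m_S(\alpha)-\m_T(\alpha)|\le |T|^2\cdot 2^{-\Omega(\eta d)}$, which is $2^{-\Omega(\eta d)}$ whenever the constant $c_0$ in the hypothesis $|T|\le 2^{c_0\eta d}$ is chosen small enough. The main obstacle I anticipate is establishing the per-step contraction rigorously: the conditional distribution $U_{C,\alpha}$ is not literally ``uniform with probability $\eta$, else $\nu$ conditioned on $\alpha$,'' because the $\eta$-uniform component also shifts the normalizer $U_C(\alpha)$; hence one must either produce a direct estimate of $|U_C(\beta\mid\alpha)-U_C(\beta)|$ that cleanly extracts the factor $(1-\eta)$, or construct a coupling that explicitly tracks which constraints along the bridging path effectively ``reset'' the state via the uniform component of $U_C$.
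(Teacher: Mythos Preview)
Your plan is essentially the paper's proof: exact pruning when a component of $\ball(T)$ contains only one component of $\ball(S)$ (the paper's Case~1), and correlation decay along the bridging path when components merge (the paper's Case~2). The paper organizes things slightly differently---it first reduces to $T=S\cup\{v\}$ by triangle inequality and then splits on whether $v\in\ball(S)$, rather than a union bound over bridging pairs---but the substance is identical, and your pruning argument via the closed-form product is exactly right.

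The obstacle you flagged at the end is the only real missing piece, and the paper resolves it with a short explicit computation rather than a general Doeblin estimate. The claim is: for any constraint $C$, any $j\in S_C$, and any value $\beta\in\{-1,1\}$ of $x_j$, the conditional distribution $U_{C,\beta}$ on $\{-1,1\}^{S_C\setminus\{j\}}$ decomposes as
\[
U_{C,\beta} \;=\; \Bigl(1-\tfrac{\eta}{2}\Bigr)\, m_C^{(\beta)} \;+\; \tfrac{\eta}{2}\, U_{k-1}
\]
for some genuine distribution $m_C^{(\beta)}$. The proof just writes $U_{C,\beta}(\beta')=\bigl((1-\eta)\nu_C(\beta\circ\beta')+\eta\,2^{-k}\bigr)/p_\beta$, subtracts off $(\eta/2)\,2^{-(k-1)}$, and checks the remainder is nonnegative and sums to $1-\eta/2$. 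So despite the normalizer shift you worried about, the \emph{conditional} distribution still has an explicit $\eta/2$-weight uniform component. This turns your coupling intuition into a clean ``broken edge'' event: at each constraint along the bridging path the assignment resets to uniform with probability $\eta/2$, so a path of length $\ge d$ (hence $\ge d/2$ constraint vertices) survives unbroken with probability at most $(1-\eta/2)^{d/2}=2^{-\Omega(\eta d)}$. Conditioned on all bridging paths being broken, the components of $\ball(S)$ are independent under $\m_T$ and one lands back in the exact-pruning case; the union bound over paths (at most $|S|$ of them in the paper's single-vertex reduction, or $|T|^2$ in yours) is absorbed by the hypothesis $|T|\le 2^{c_0\eta d}$.
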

\begin{proof}
Note that it suffices to prove the above for the case when $T = S \cup \{v\}$, since then by
triangle inequality we will have that for any $T$,
$\abs{\m_S(\alpha) - \m_T(\alpha)} \leq \abs{T \setminus S} \cdot 2^{-\Omega(\eta d)} = 2^{-\Omega(\eta d)}$.

Since $\m_T$ is well defined, $\ball(T)$ must be a forest in the graph $G$. Also,
the distributions in different components of $\ball(T)$ are independent and the components in
$\ball(T)$ which do not contain $v$ are identical in $\ball(S)$. Hence, the distribution over them
would be identical according to $\m_S$ and $\m_T$. Thus, it suffices to consider the case when
$\ball(T)$ is a tree (\ie we restrict ourselves to the component $\ball(T_i)$ of $\ball(T)$ which
contains $v$).

Note that even though $\ball(T)$ is assumed to be a tree, we could still have that
$\ball(S)$ is a forest with more than one components, which get connected in $\ball(T) = \ball(S
\cup \{v\})$. We first consider a simple special case when $\ball(S)$ \emph{is also a tree}.

\paragraph{Case 1: $\ball(T)$ is a tree and $\ball(S)$ is also a tree.}
In this case, since $\ball(S) \sub \ball(T)$, we must have that any edge $(u,v)$ which is present in
$\ball(S)$ must also be present in $\ball(T)$. Thus, the vertices in $\ball(T) \setminus \ball(S)$
must form a collection of subtrees of the tree $\ball(S)$.
By Lemma \ref{orddontmatter:lm}, we can assume that the distributions $\m_T$ and $\m_S$ are
defined with the same starting vertex in $S$. Since the distribution $\m_T$ is defined by a
breadth-first traversal of the tree $\ball(T)$, the probability of any assignment to the vertices in
$\ball(S)$ will remain unchanged even if we remove the subtrees corresponding to the vertices in
$\ball(T) \setminus \ball(S)$. Thus, in this case, we have that
\[
\forall \alpha \in \pmone^S \quad \m_S(\alpha) ~=~ \m_T(\alpha) \mper
\]

\paragraph{Case 2: $\ball(T)$ is a tree but $\ball(S)$ is a forest with more than one components.}
In this case, we might have $\ball(S) = \cup_{i=1}^t \ball(S_i)$, where the components
$\ball(S_i)$ are disconnected in $\ball(S)$, but become connected in $\ball(T)$.
Thus, the distributions over the different components
will be independent according $\m_S$ but will become correlated when we consider
$\m_T$.

However, recall that in the distribution $U_C$, with probability $\eta$ we assign the variables in
$S_C$ according to the uniform distribution on $\pmone^{S_C}$. This breaks the correlation between any two
variables in the constraint $C$. Since for any $i \neq j$, $\ball(S_i)$ and $\ball(S_j)$ are
disconnected, any path between $S_i$ and $S_j$ in $\ball(T)$ must have length at least $d$. We will use this to
show that the correlation between the variables in $S_i$ and $S_j$ must be small since (with high
probability) at some constraint $C$ along the path, we must assign the variables in $C$ uniformly.

However, the above intuition is slightly incorrect since in defining the distributions $\m_S$, we do
not assign all the variables of the constraint together, but assign $k-1$ variables conditioned on
one variable which is the parent of $C$ in the tree.
The following claim shows that even when assigning the variables in a constraint $C$, conditioned on the
value of one of its variables, we break the correlation between the variables with probability at
least $\eta/2$ \ie even the conditional distribution can be viewed as being a convex combination of
the uniform distribution and some other distribution.
\begin{claim}
Let $C$ be a constraint and let $j \in S_C$ be the index of  a variable involved in $C$. Let $\beta \in \pmone$ be
an assignment to $x_j$. Then the distribution $U_{C,\beta}$ on $\pmone^{S_C \setminus\{j\}}$ can be
written as
\[
U_{C,\beta} ~=~ \inparen{1-\frac{\eta}{2}} \cdot \m_C^{(\beta)} ~+~ \inparen{\frac{\eta}{2}} \cdot U_{k-1} \mcom
\]
where $\m_C^{(\beta)}$ is a distribution on $\pmone^{S_C \setminus\{j\}}$ that depends on $\beta$ and
$U_{k-1}$ denotes the uniform distribution on $\pmone^{S_C \setminus\{j\}}$.
\end{claim}
\begin{proof}
Let $p_{\beta}$ denote the probability that $x$ is assigned the value $\beta$ according to the
distribution $U_C$.
For any assignment $\beta' \in \pmone^{S_C \setminus\{j\}}$, we can write
\begin{align*}
U_{C,\beta}(\beta')
~=~ \frac{U_C(\beta \circ \beta')}{p_{\beta}}
&~=~ \frac{(1-\eta) \cdot \nu_C(\beta \circ \beta') + \eta \cdot 2^{-k}}{p_{\beta}} \\
&~=~ \frac{(1-\eta) \cdot \nu_C(\beta \circ \beta') + \eta \cdot 2^{-k}}{p_{\beta}} - \eta \cdot
2^{-k} + \inparen{\frac{\eta}{2}} \cdot 2^{k-1}\\
&~=~ \frac{(1-\eta) \cdot \nu_C(\beta \circ \beta') + \eta \cdot (1-p_{\beta})\cdot 2^{-k}}{p_{\beta}} +
\inparen{\frac{\eta}{2}} \cdot U_{k-1}(\beta') \mper
\end{align*}
Let $T_{\beta}(\beta')$ to denote the first term above. We can say that $T_{\beta}(\beta') =
(1-\eta/2) \cdot \m_C^{(\beta)}(\beta')$ for some distribution $\m_{\beta}$ if $T_{\beta}(\beta') \geq 0$
for all $\beta'$ and $\sum_{\beta'}T_{\beta}(\beta') = 1-(\eta/2)$. The condition $T_{\beta}(\beta')
\geq 0$ follows from observing that both the terms in the numerator of $T_{\beta}(\beta')$ are
non-negative. The second condition follows from noting that
\[
\sum_{\beta'} T_{\beta}(\beta')
~=~ \sum_{\beta'} \inparen{U_{C,\beta}(\beta') - \inparen{\frac{\eta}{2}} \cdot U_{k-1}(\beta')}
~=~ 1 - \frac{\eta}{2} \mper
\]
This gives a distribution $\m_C^{(\beta)}$ on $\pmone^{S_C \setminus\{j\}}$ such that $U_{C,\beta}
= (1-(\eta/2)) \cdot \m_C^{(\beta)}+ (\eta/2) \cdot U_{k-1}$.
\end{proof}
Thus, in the definition of the distributions $\m_S$, the process of assigning the remaining $k-1$
variables in a constraint $C$ conditioned on an assignment $\beta$ to one of the variables, can be
viewed as assigning them from the distribution $\m_C^{(\beta)}$ with probability $1-(\eta/2)$ and from
$U_{k-1}$ with probability $\eta/2$. We can equivalently view the definition of the distribution
$\m_S$ as first making the choice for every $C \in \constg \cap \ball(S)$, whether conditioned on
the parent of $C$, the rest of the variables in
$C$ will be assigned according to $\m_C^{(\beta)}$ (which happens with probability $1-(\eta/2)$)
or according to $U_{k-1}$ (which happens with probability $\eta/2$).

For $S \sub \varg$,  we say an edge from a constraint $C \in \constg \cap \ball(S)$ to a variable
$x_{j'} \in \varg \cap \ball(S)$  is \emph{broken} in $\m_S$, if $x_{j'}$ is assigned according to
$U_{k-1}$, conditioned on some other variable $x_j$ which is the parent of $C$ in $\ball(S)$. Note
that conditioned on the event that the edge from $C$ to $x_{j'}$ is broken, the distributions of $x_j$
and $x_{j'}$ are independent. This is because for any assignment $\beta$ to $x_j$, $x_{j'}$ is
assigned uniformly in $\pmone$.

We say that a path between two variables $x_j, x_{j'} \in \varg \cap \ball(S)$ is broken if some
edge in the path is broken. Note that if the length of the path is $\ell$, then there $\ell/2$ edges
going from a constraint vertex to a variable vertex, and hence the path is broken with probability
at least $1-(1-\eta/2)^{\ell/2}$.
Also, we have as before that conditioned on the path between $x_j$ and $x_{j'}$ being broken, the
distributions of $x_{j}$ and $x_{j'}$ are independent.

Recall that we are considering the case when $\ball(S) = \cup_{i=1}^t \ball(S_i)$ is a forest but
$\ball(T) = \ball(S \cup \{v\})$ is a tree. Note that even though $v \notin S$, we can still have $v
\in \ball(S)$. We first present the argument for the case when this does not happen.

\newcommand{\cale}{{\cal E}}

\begin{itemize}
\item[-] \textbf{Case 2a:} $\mathbf{v \notin \ball(S)}$. Since $v \notin \ball(S)$,
any path from $v$ to $S_i$ for $i \in [t]$ must have length at least $d$. To
  analyze the distribution in this case, we first assume by Lemma \ref{orddontmatter:lm} that the
  starting vertex for defining the distribution is $v$. We define the following event for the distribution $\m_T$
\[
\cale ~:=~ \inbraces{\forall i \in [t], ~\text{all paths in $\ball(T)$ from $v$ to $S_i$ are
    broken}} \mper
\]
Since $\ball(T)$ is a tree, there is exactly one path from $v$ to a node in $S_i$. The probability
that the path is not broken is at most $(1-\eta/2)^{d/2}$. Hence,
\[
\prob{\bar{\cale}} ~\leq~ |S| \cdot (1-\eta/2)^{d/2} ~\leq~ 2^{-\Omega(\eta d)} \mcom
\]
for $|S| = O(\eta d)$. Also, if $\m_T(\alpha ~|~ \cale)$ denotes the probability of the assignment
$\alpha \in \pmone^S$ given than the event $\cale$ happens, then we can write
\begin{align*}
\m_T(\alpha)
&~=~ \prob{\cale} \cdot \m_T(\alpha ~|~ \cale) ~+~ \prob{\bar{\cale}} \cdot \m_T(\alpha
~|~ \bar{\cale}) \\
&~=~ \m_T(\alpha ~|~ \cale) ~\pm~ 2^{-\Omega(\eta d)} \mcom
\end{align*}
where we use $a = b \pm c$ to denote $\abs{a-b} \leq c$. Since the distribution of vertices
separated by a broken path is independent, conditioned on the event $\cale$, the distribution for
the sets $S_1, \ldots, S_r$ must be independent. Let $\alpha_i$ denote the restriction of the
assignment $\alpha$ to the set $S_i$. We then have by the independence that
\[ \m_T(\alpha ~|~ \cale) ~=~ \prod_{i=1}^t \inparen{\m_T(\alpha_i ~|~ \cale)}  \mper\]
Conditioned on the event $\cale$, the assignment for the set $S_i$ in the distribution $\m_T$ is
defined by considering a subtree of $\ball(T)$ which does not include any vertices from $S_j$ for $j
\neq i$. The distribution on this subtree will be identical if we instead define the assignment
according to the distribution $\m_{S_i \cup \{v\}}$ conditioned on all paths from $v$ to $S_i$ in
$\ball(S_i \cup \{v\})$ being broken. Let $\cale_i$ denote this event. Then, since $\prob{\cale_i}
\geq 1-2^{-\Omega(\eta d)}$, we have,
\[
\m_T(\alpha_i ~|~ \cale)
~=~ \m_{S_i \cup \{v\}}(\alpha_i ~|~ \cale_i)
~=~ \m_{S_i \cup \{v\}}(\alpha_i) ~\pm~ 2^{-\Omega(\eta d)} \mper
\]
However, $\ball(S_i)$ is a tree and $\ball(S_i \cup \{v\})$ is also a tree and hence by Case 1 we
have $\m_{S_i \cup \{v\}}(\alpha_i) = \m_{S_i}(\alpha_i)$. Combining the above and using the fact
that the components $\ball(S_i)$ are disconnected in $\ball(S)$, we get
\[
\m_T(\alpha)
~=~ \prod_{i=1}^t \inparen{\m_{S_i}(\alpha_i)} ~\pm~ 2^{-\Omega(\eta d)}
~=~ \m_S(\alpha) ~\pm~ 2^{-\Omega(\eta d)} \mper
\]
\item[-] \textbf{Case 2b:} $\mathbf{v \in \ball(S)}$. Without loss of generality, let $v \in
  \ball(S_1)$. The treatment for this case is almost identical except that we need to treat the set $S_1$
  more carefully since the paths from $v$ to $S_1$ may now be short. We now define the event $\cale$
  as
\[
\cale ~:=~ \inbraces{\forall i \in \{2,\ldots,t\}, ~\text{all paths in $\ball(T)$ from $v$ to $S_i$ are
    broken}} \mper
\]
As before, conditioned on the event $\cale$, the distributions on different sets $S_i$ are
independent and we can write
\[
\m_T(\alpha ~|~ \cale)
~=~ \m_T(\alpha ~|~ \cale) ~\pm~ 2^{-\Omega(\eta d)}
~=~ \prod_{i=1}^t \inparen{\m_T(\alpha_i ~|~ \cale)}  ~\pm~ 2^{-\Omega(\eta d)} \mper
\]
Again, we have that conditioned on the event $\cale$, the distribution for the assignment $\alpha_i$
is defined by considering a subtree not containing any set $S_j$ for $i \neq j$. For $i \geq 2$, letting $\cale_i$
denote the event that all paths between $v$ and $S_i$ in $\ball(S_i \cup \{v\})$ are broken, we
have
\[
\prod_{i=1}^t \inparen{\m_T(\alpha_i ~|~ \cale)}
~=~ \m_{S_1 \cup \{v\}}(\alpha_1) \cdot \prod_{i=2}^t \inparen{\m_{S_i \cup \{v\}}(\alpha_i ~|~
  \cale_i)} \mper
\]
As before $\m_{S_i \cup \{v\}}(\alpha_i ~|~ \cale_i)  = \m_{S_i \cup \{v\}}(\alpha_i) \pm
2^{-\Omega(\eta d)}$ for all $i \geq 2$ and $\m_{S_i \cup \{v\}}(\alpha_i) = \m_{S_i}(\alpha_i)$ for
all $i \in [t]$ by Case 1. Combining the above, we again get
\[
\m_T(\alpha)
~=~ \prod_{i=1}^t \inparen{\m_{S_i}(\alpha_i)} ~\pm~ 2^{-\Omega(\eta d)}
~=~ \m_S(\alpha) ~\pm~ 2^{-\Omega(\eta d)} \mper
\]
\end{itemize}
\end{proof}

Now that we have a family $\inbraces{\m_S}_{|S| \leq r}$ of approximately locally consistent
probability distributions, we can use it to define locally consistent distribution
$\inbraces{\m_S'}_{|S|\leq t}$ using a result by Raghavendra and Steurer ~\cite{raghsteu2}.
\begin{lemma}[\cite{raghsteu2}]\label{makecons:lm}
Let $\inbraces{m_S:\{-1,1\}^S\to\R_+}_{|S|\le r}$ be a family of probability distributions
such that for all $S \subseteq T$ and $\alpha\in\{-1,1\}^S$:
\[
\left|m_S(\alpha) -m_T(\alpha)\right| ~\le~ \eps_0.
\]
Then there exists a family of probability distributions $\inbraces{m_S' : \{-1,1\}^S\to\R_+}_{|S|\le r}$
such that for all $ S\subseteq T$ and $\alpha\in\{-1, 1\}^S$:
\[
m_S'(\alpha) ~=~ m_T'(\alpha) \mcom
\]
and for all $S$ with $|S|\le r$, we have $\norm{m_S-m_S'}_1\le O(2^r \cdot \eps_0)$.
\end{lemma}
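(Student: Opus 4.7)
The plan is to define all the corrected distributions from a single coherent collection of ``global'' Fourier coefficients and then patch up non-negativity by a small uniform perturbation. Specifically, for each nonempty $U$ with $|U|\le r$, set $\hat{m}(U) \defeq \Ex{\alpha\sim m_U}{\chi_U(\alpha)}$ (where $\chi_U(\alpha)=\prod_{i\in U}\alpha_i$), and set $\hat{m}(\emptyset)\defeq 1$. Then for each $S$ with $|S|\le r$ tentatively define
\[
\tilde{m}_S(\alpha) \defeq \frac{1}{2^{|S|}} \sum_{U\subseteq S} \hat{m}(U)\cdot \chi_U(\alpha).
\]
Exact consistency of the family $\{\tilde{m}_S\}$ is automatic: for $S\subseteq T$, summing $\tilde{m}_T(\alpha,\beta)$ over $\beta\in\{-1,1\}^{T\setminus S}$ kills all characters $\chi_U$ with $U\not\subseteq S$, so the marginal on $S$ is exactly $\tilde{m}_S$. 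Also $\hat{m}(\emptyset)=1$ ensures $\tilde{m}_S$ sums to $1$.

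Next I would bound $\|\tilde{m}_S - m_S\|_1$ via Fourier. The key observation is that $\hat{m}_S(U)$ depends only on the marginal $(m_S)|_U$, so
\[
|\hat{m}(U) - \hat{m}_S(U)| \;=\; |\hat{m}_U(U) - \hat{m}_S(U)| \;\le\; \|(m_S)|_U - m_U\|_1 \;\le\; 2^{|U|}\cdot \eps_0,
\]
using the given pointwise hypothesis $|m_S(\alpha)-m_T(\alpha)|\le\eps_0$. Summing $|\tilde{m}_S(\alpha)-m_S(\alpha)|$ over $\alpha\in\{-1,1\}^S$, the factor $2^{|S|}$ from the sum over $\alpha$ cancels the Fourier normalization and gives
\[
\|\tilde{m}_S - m_S\|_1 \;\le\; \sum_{U\subseteq S} |\hat{m}(U)-\hat{m}_S(U)| \;\le\; \sum_{\ell=0}^{|S|} \binom{|S|}{\ell} 2^\ell \eps_0 \;=\; 3^{|S|}\eps_0.
\]
A more careful level-by-level construction (building $m_S'$ inductively on $|S|$ by correcting one $(|S|-1)$-marginal at a time against the already-fixed lower levels, whereby each level at most doubles the error) can be used if one wants the sharper $O(2^r\eps_0)$ stated in the lemma, but the Fourier route already yields the qualitatively correct bound.

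The main obstacle will be that $\tilde{m}_S$ can take slightly negative values: the bound above only forces $\tilde{m}_S(\alpha)\ge m_S(\alpha)-O((3/2)^{|S|}\eps_0)$. To restore non-negativity without destroying consistency, I would set
\[
m_S'(\alpha) \;\defeq\; (1-\lambda)\cdot \tilde{m}_S(\alpha) \;+\; \lambda\cdot 2^{-|S|}
\]
with a single global parameter $\lambda = O(3^r\eps_0)$ chosen large enough to make $m_S'$ pointwise non-negative on every level. This uniform mixing commutes with taking marginals (since the uniform distribution on $\{-1,1\}^S$ marginalizes to the uniform on $\{-1,1\}^T$), so exact consistency is preserved, and the extra perturbation contributes only $O(\lambda) = O(3^r\eps_0)$ to $\|m_S-m_S'\|_1$. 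The hard part is precisely this non-negativity fix: one must verify that a single $\lambda$ works uniformly across all $S$ and that the accumulated $L_1$ cost (from the Fourier construction plus the uniform smoothing) stays within the claimed $O(2^r\eps_0)$ bound, which is where the inductive level-by-level variant of the argument, rather than the cleaner Fourier version, is needed to shave the base of the exponent from $3$ down to $2$.
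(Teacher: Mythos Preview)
The paper does not prove this lemma: it is quoted as a black box from Raghavendra--Steurer \cite{raghsteu2} and used without proof. So there is no ``paper's own proof'' to compare against; the relevant question is simply whether your argument is correct.

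Your Fourier-coefficient construction is correct and clean. The consistency of the family $\{\tilde m_S\}$ is indeed automatic from the marginalization identity for characters, and your pointwise bound $|\tilde m_S(\alpha)-m_S(\alpha)|\le (3/2)^{|S|}\eps_0$ (hence $\|\tilde m_S-m_S\|_1\le 3^{|S|}\eps_0$) is derived correctly from the hypothesis. The uniform-mixing step to restore non-negativity is also correct: since the uniform distribution marginalizes to the uniform distribution, consistency is preserved, and a single $\lambda=O(3^r\eps_0)$ suffices for every $S$ because your pointwise negativity is at most $(3/2)^{|S|}\eps_0$ and the uniform adds $2^{-|S|}\lambda$, so the needed inequality $\lambda\cdot 2^{-|S|}\ge (1-\lambda)(3/2)^{|S|}\eps_0$ reduces to $\lambda\gtrsim 3^{|S|}\eps_0\le 3^r\eps_0$.

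The only soft spot is the last paragraph: you assert that a level-by-level inductive variant shaves the base from $3$ to $2$, but you do not actually carry this out, and the claim ``each level at most doubles the error'' is not obvious without writing down the induction. For the purposes of this paper the distinction is immaterial: the lemma is applied with $\eps_0=2^{-\Omega(\eta d)}$ and $r=O(\eta d)$, so any bound of the form $C^r\eps_0$ for a constant $C$ is absorbed by adjusting the constant in the $\Omega(\cdot)$. If you want to match the stated $O(2^r\eps_0)$ exactly you would need to flesh out that inductive construction (or consult the original Raghavendra--Steurer argument); otherwise, stating and proving the $O(3^r\eps_0)$ version is entirely sufficient here.
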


Therefore, using Lemmas~\ref{local-consistency:lm}, \ref{makecons:lm},
and \ref{almsatsol:lm}, we can now prove the following theorem,
which also completes our proof of Theorem~\ref{lp:thm}.
\begin{theorem}
Let $f:\pmone^k \to \B$ be a predicate such that there exists a vanishing measure $\Lambda^*$ on
$\C^*(f)$. Then, for every $\eps > 0$, there is a constant $c_{\eps} > 0$, such that for all
large enough $n$, there exists an instance of $\maxkcsp(f)$ on $n$ variables satisfying the following:
\begin{itemize}
\item[-] For any integral assignment $\psi$, the fraction of the constraints satisfied is in the range
$[\rho(f)-\eps, \rho(f)+\eps]$.
\item[-] The optimum for the linear program obtained by $c_{\eps} \cdot \log\log n$ rounds of the
  Sherali-Adams hierarchy is at least $1 - O(k\cdot \sqrt{\eps})$.
\end{itemize}
\end{theorem}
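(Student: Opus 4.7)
The plan is to assemble the pieces already developed in the excerpt. First I would produce the instance by running the randomized construction that samples $\zeta \sim \Lambda^*$, selects one variable per layer $X_{i_j(\zeta)}$, and negates literals according to the signs of $\zeta(j)$ (with a coin flip when $\zeta(j) = 0$); repeat $\Delta(\eps) \cdot n$ times. The soundness property is exactly the content of Lemma \ref{soundlp:lm}: after expanding with the Fourier representation of $f$, one passes from $\psi$ to its layer-average $\tilde\psi$ and recognizes the resulting sum, for each level $t$, as $\int \prod_j \tilde\psi(\zeta'(j)) \, d\Lambda^{*,(t)}(\zeta')$, which vanishes because $\Lambda^{*,(t)}$ is the zero signed measure. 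A Chernoff bound plus union bound over $2^{(s+1)n}$ assignments yields $\sat(\psi) \in [\rho(f)-\eps,\rho(f)+\eps]$ deterministically after choosing $\Delta(\eps)$ large. Finally, by Lemma \ref{comb:lm}, deleting $o(n)$ constraints produces an instance whose constraint–variable graph $G$ has girth $g = \Omega(\log n)$ and bounded degree $D$.

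Next I would construct the Sherali–Adams solution. For each constraint $C$ I use Claim \ref{bias-correction:clm} to replace $\bnu(C)$ by a nearby distribution $\nu(C)$ whose marginal biases match the canonical bias $t_i$ of each layer, losing only $O(k\sqrt{\eps})$ in the satisfaction probability. Mixing $\nu(C)$ with a small $\eta$ fraction of the uniform distribution gives $U_C$, which has the crucial property (Remark \ref{bias:rem}) that the bias of a variable is determined by its layer alone, independent of $C$. For every set $S$ of size at most $r$ with $|S| \leq g/D^d$ (Claim \ref{setsize:clm}), I define $m_S$ by a BFS on the forest $B^{(d)}(S)$: assign the root via the layer bias, then at each constraint $C$ encountered, sample the remaining variables from $U_{C,\alpha}$ conditioned on the unique parent-variable's value. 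Lemma \ref{orddontmatter:lm} shows $m_S$ is independent of the BFS ordering, and Lemma \ref{almsatsol:lm} shows that when $S = S_C$ the mass on accepting assignments is at least $1 - \eta - O(k\sqrt{\eps})$.

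The main obstacle is verifying that the $\{m_S\}$ are approximately locally consistent (Lemma \ref{local-consistency:lm}). Here the key observation is to rewrite each conditional $U_{C,\beta}$ as $(1-\eta/2)\,m_C^{(\beta)} + (\eta/2)\,U_{k-1}$, so every edge in $B^{(d)}(S)$ from a constraint to a child variable is \emph{broken} independently with probability $\eta/2$, and a broken edge decouples the assignments on its two sides. For $S \subseteq T = S \cup \{v\}$, any path of length $\geq d$ between $v$ and any $S_i$ is broken except with probability $(1-\eta/2)^{d/2} = 2^{-\Omega(\eta d)}$; a union bound over $|S|$ such paths plus Case-1-style reduction to tree subcomponents gives $|m_S(\alpha) - m_T(\alpha)| = 2^{-\Omega(\eta d)}$. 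Telescoping gives the same for arbitrary $S \subseteq T$ up to a factor $|T|$.

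Finally I invoke the Raghavendra–Steurer correction (Lemma \ref{makecons:lm}) to produce an exactly consistent family $\{m_S'\}$ with $\|m_S - m_S'\|_1 \leq O(2^r \cdot 2^{-\Omega(\eta d)})$, which is $o(1)$ provided $r \ll \eta d$. Balancing the constraints $|S| \leq g/D^d$, $r < g/D^d$, and $r \ll \eta d$ with $g = \Omega(\log n)$ forces $d = \Theta(\log\log n)$ and $r = c_\eps \log\log n$ for a suitable $c_\eps > 0$. The LP value per constraint is then $1 - \eta - O(k\sqrt{\eps}) - o(1)$; choosing $\eta = \Theta(\sqrt{\eps})$ gives the claimed bound $1 - O(k\sqrt{\eps})$, while the soundness bound is unaffected, completing the proof.
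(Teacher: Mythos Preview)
Your proposal is correct and follows essentially the same approach as the paper: assemble the instance via Lemmas~\ref{soundlp:lm}--\ref{finallpinst:lm}, build the local distributions $m_S$ by BFS using the $\eta$-noised constraint distributions $U_C$, establish approximate consistency via the edge-breaking argument of Lemma~\ref{local-consistency:lm}, correct to exact consistency via Lemma~\ref{makecons:lm}, and finish by setting $\delta=\sqrt{\eps}$, $\eta=\Theta(k\sqrt{\eps})$, and $r,d=\Theta_\eps(\log\log n)$. Your parameter balancing and the sketch of each lemma's role match the paper's proof.
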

\begin{proof}
The proof follows simply from appropriate choices for the parameters $\eta, d, r$ and $\delta$.
Using Lemma \ref{finallpinst:lm} we obtain an instance such that the constraint graph $G$ has girth
$g = O(\log n)$, degree $D = O_{\eps}(1)$ and such that the fraction of constraints satisfied by
any integral assignment $\psi$ is between $\rho(f) - \eps$ and $\rho(f) + \eps$.

Using Claim \ref{setsize:clm}, we can define the distributions $\m_S$ for all sets of size at most
$r$, when $r \cdot D^d \leq g$. Setting $\delta = \sqrt{\eps}$ and $\eta = k\cdot\sqrt{\eps}$, we
get from Lemma \ref{almsatsol:lm} that the distributions $\m_S$ achieve LP value
$1-O(k\sqrt{\eps})$. Taking the error $2^{-\Omega(\eta d)}$ from Lemma \ref{local-consistency:lm} to be
$\eps_0$ in Lemma \ref{makecons:lm}, and $r$ to be such that $2^{r} \cdot \eps_0 = O(\sqrt{\eps})$, we get that
the LP value achieved by the the distributions $\m_S'$ is at least $O(1-k\sqrt{\eps})$. Since we
only need that $r \cdot D^d \leq g$ for defining the distributions and $2^r \cdot 2^{-\Omega(\eta
  d)} = O(\sqrt{\eps})$ for bounding the error, we can choose both $d$ and $r$ to be
$\Omega_{\eps}(\log g) = \Omega_{\eps}(\log \log n)$.
\end{proof}

\section*{Acknowledgements}
The first author would like to thank Per Austrin and Johan H\aa stad for many discussions over the
years on the topic of approximation resistance.

\bibliographystyle{plain}
\bibliography{characterization}
\end{document}